\newcommand{\lsp}{\hspace{0.1em}}
\newcommand{\equad}{\,\hphantom{=}\,}
\newcommand{\romanNum}[1]{\uppercase\expandafter{\romannumeral#1}}
\def\identity{\leavevmode\hbox{\small1\kern-3.8pt\normalsize1}}
\newtheorem{theorem}{Theorem}
\newtheorem{lemma}{Lemma}
\newtheorem{proposition}{Proposition}
\theoremstyle{plain}
\theoremstyle{remark}
\newcommand{\bbone}{\mathbbm{1}}
\newcommand{\Cl}{\mathrm{Cl}}
\newcommand{\MC}{\mathrm{MC}}
\newcommand{\CRM}{\mathrm{CRM}}
\newcommand{\lc}{\mathrm{LC}}
\newcommand{\THR}{\mathrm{THR}}
\newcommand{\haar}{\mathrm{Haar}}
\newcommand{\GHZ}{\mathrm{GHZ}}
\newcommand{\swap}{\mathrm{SWAP}}
\newcommand{\hrho}{\hat{\rho}}
\newcommand{\hO}{\hat{O}}
\newcommand{\hP}{\hat{P}}
\newcommand{\bepsilon}{\bar{\epsilon}}
\newcommand{\rme}{\operatorname{e}}
\newcommand{\rmi}{\mathrm{i}}
\newcommand{\rmC}{\mathrm{C}}
\newcommand{\rmH}{\mathrm{H}}
\newcommand{\rmM}{\mathrm{M}}
\newcommand{\rmT}{\mathrm{T}}
\newcommand{\rmU}{\mathrm{U}}
\newcommand{\SWAP}{\mathrm{SWAP}}
\newcommand{\bmv}{\bm{v}}
\newcommand{\bfp}{\mathbf{p}}
\newcommand{\bfh}{\mathbf{h}}
\newcommand{\bfs}{\mathbf{s}}
\newcommand{\bft}{\mathbf{t}}
\newcommand{\caD}{\mathcal{D}}
\newcommand{\caE}{\mathcal{E}}
\newcommand{\caG}{\mathcal{G}}
\newcommand{\caH}{\mathcal{H}}
\newcommand{\caL}{\mathcal{L}}
\newcommand{\caM}{\mathcal{M}}
\newcommand{\caO}{\mathcal{O}}
\newcommand{\caP}{\mathcal{P}}
\newcommand{\caS}{\mathcal{S}}
\newcommand{\caU}{\mathcal{U}}
\newcommand{\caW}{\mathcal{W}}
\newcommand{\bcaP}{\overline{\caP}}
\newcommand{\bbE}{\mathbb{E}}
\newcommand{\bbR}{\mathbb{R}}
\newcommand{\bbV}{\mathbb{V}}
\newcommand{\scrD}{\mathscr{D}}
\newcommand{\scrN}{\mathscr{N}}
\newcommand{\scrP}{\mathscr{P}}
\newcommand{\tXi}{\tilde{\Xi}}
\newcommand{\rank}{\operatorname{rank}}
\newcommand{\tr}{\operatorname{Tr}}
\newcommand{\Tensor}[2]{#1^{\otimes #2}}
\newcommand{\dagtensor}[2]{#1^{\dagger \otimes #2}}
\newcommand{\lref}[1]{Lemma~\ref{#1}}
\newcommand{\lsref}[1]{Lemmas~\ref{#1}}
\newcommand{\Lref}[1]{Lemma~\ref{#1}}
\newcommand{\Lsref}[2]{Lemmas~\ref{#1} and \ref{#2}}
\newcommand{\thref}[1]{Theorem~\ref{#1}}
\newcommand{\thsref}[2]{Theorems~\ref{#1} and~\ref{#2}}
\newcommand{\thssref}[3]{Theorems~\ref{#1}, \ref{#2}, and \ref{#3}}
\newcommand{\Thref}[1]{Theorem~\ref{#1}}
\newcommand{\Thsref}[2]{Theorems~\ref{#1} and \ref{#2}}
\newcommand{\pref}[1]{Proposition~\ref{#1}}
\newcommand{\psref}[1]{Propositions~\ref{#1}}
\newcommand{\Pref}[1]{Proposition~\ref{#1}}
\newcommand{\Psref}[1]{Propositions~\ref{#1}}
\def\eqref#1{\textup{(\ref{#1})}}
\newcommand{\eref}[1]{Eq.~\textup{(\ref{#1})}}
\newcommand{\eqsref}[2]{Eqs.~(\ref{#1}) and (\ref{#2})}
\newcommand{\eqssref}[3]{Eqs.~(\ref{#1}), (\ref{#2}), and (\ref{#3})}
\newcommand{\Eref}[1]{Equation~\textup{(\ref{#1})}}
\newcommand{\Eqsref}[2]{Equations~(\ref{#1}) and (\ref{#2})}
\newcommand{\sref}[1]{Sec.~\ref{#1}}
\newcommand{\ssref}[1]{Secs.~\ref{#1}}
\newcommand{\fref}[1]{Fig.~\ref{#1}}
\newcommand{\Fref}[1]{Figure~\ref{#1}}
\newcommand{\fsref}[2]{Figs.~\ref{#1} and \ref{#2}}
\newcommand{\Fsref}[1]{Figures~\ref{#1}}
\def\<{\langle}  
\def\>{\rangle}
\newcommand{\rcite}[1]{Ref.~\cite{#1}}
\newcommand{\rscite}[1]{Refs.~\cite{#1}}
\begin{document}

%=============================================================================
\title{High-Precision Fidelity Estimation with Common Randomized Measurements}
\author{Zhongyi Yang}
\affiliation{International Center for Quantum Materials, School of Physics, Peking University, Beijing 100871, China}
\author{Datong Chen}
\affiliation{State Key Laboratory of Surface Physics, Department of Physics, and Center for Field Theory and Particle Physics, Fudan University, Shanghai 200433, China}
\affiliation{Institute for Nanoelectronic Devices and Quantum Computing, Fudan University, Shanghai 200433, China}
\affiliation{Shanghai Research Center for Quantum Sciences, Shanghai 201315, China}

\author{Zihao Li}
\affiliation{State Key Laboratory of Surface Physics, Department of Physics, and Center for Field Theory and Particle Physics, Fudan University, Shanghai 200433, China}
\affiliation{Institute for Nanoelectronic Devices and Quantum Computing, Fudan University, Shanghai 200433, China}
\affiliation{Shanghai Research Center for Quantum Sciences, Shanghai 201315, China}
\affiliation{QICI Quantum Information and Computation Initiative, School of Computing and Data Science, The University of Hong Kong, Pokfulam Road, Hong Kong, China}

\author{Huangjun Zhu}
\email{zhuhuangjun@fudan.edu.cn}
\affiliation{State Key Laboratory of Surface Physics, Department of Physics, and Center for Field Theory and Particle Physics, Fudan University, Shanghai 200433, China}
\affiliation{Institute for Nanoelectronic Devices and Quantum Computing, Fudan University, Shanghai 200433, China}
\affiliation{Shanghai Research Center for Quantum Sciences, Shanghai 201315, China}

\date{\today}

\begin{abstract}
Efficient fidelity estimation of multiqubit quantum states is crucial to many  applications in quantum information processing. However, to estimate the infidelity $\epsilon$ with multiplicative precision, conventional estimation protocols  require (order) $1/\epsilon^2$ different circuits in addition to  $1/\epsilon^2$ samples, which is quite resource-intensive for high-precision fidelity estimation. Here we introduce an efficient estimation protocol by virtue of common randomized measurements (CRM)  integrated with shadow estimation based on the Clifford group, which only requires $1/\epsilon $ circuits. Moreover, in many scenarios of practical interest, in the presence of depolarizing or Pauli noise for example, our protocol only requires a constant number of circuits,  irrespective of the infidelity $\epsilon$ and the qubit number. For large and intermediate quantum systems, quite often one circuit is already sufficient. In the course of study,  we clarify the performance of CRM shadow estimation based on the Clifford group and 4-designs and highlight its advantages over standard  and thrifty shadow estimation. 
\end{abstract}

\maketitle

\emph{Introduction}---The ability to efficiently learn the properties of unknown quantum states is pivotal  for advancing quantum technologies  
\cite{Eisert2020Cert,Kliesch2021theo,Andreas2023ran,gebhart2023learning}. Fidelity estimation, as an archetypal learning task, plays a key role in quantum information processing, with great efforts dedicated to developing efficient estimation protocols based on simple measurements \cite{hayashi2006locc,Guhne2007Tool,Flammia2011Direct,Silva2011prac,pallister2018opt,Zhu2019opt,zhu2019general,Li2019eff,li2020opt,Cerezo2020var,Zhang2021direct,Yu2021stat,wang2022quantum,sun2025efficient,fang2025Optimal,zhang2020exp,jiang2020towards,Qin2024exp}. Notably, direct fidelity estimation (DFE) using Pauli measurements is effective 
for  important classes of quantum states like stabilizer states  \cite{Flammia2011Direct,Silva2011prac}.
A major breakthrough came with shadow estimation based on Clifford measurements~\cite{huang2020pred},  which can achieve a sample complexity independent of the qubit number~$n$. Similar results can also be generalized to qudit systems with odd prime local dimensions \cite{Mao2025qudit}. Subsequent refinements, including thrifty shadow estimation \cite{struchalin2021exp,Huggins2022Unb,helsen2023thrifty,zhou2023perf,chen2024nonstab} and common randomized measurements (CRM) \cite{garratt2023prob,vermersch2024enhanced,Vitale2024rob}, have further advanced the field.

Despite the progress mentioned above, to estimate the infidelity $\epsilon$ with multiplicative precision, existing protocols, including DFE \cite{Flammia2011Direct,Silva2011prac} and shadow estimation \cite{huang2020pred,helsen2023thrifty,zhou2023perf,chen2024nonstab},  usually require  $\caO\left(1/\epsilon^2\right)$ distinct measurement settings and samples. This overhead poses a significant challenge for high-precision fidelity estimation (HPFE), as adjusting experimental settings is resource-intensive for platforms like superconducting qubits and trapped ions. Overcoming this bottleneck is critical for practical applications in quantum information processing.

In this paper, we introduce an efficient
fidelity estimation protocol that integrates CRM with Clifford-based shadow
estimation. By deriving rigorous upper bounds on the variance in terms of the deviation between the system state and the prior state, we establish a robust performance guarantee for CRM shadow estimation. Our protocol achieves a significant resource reduction, requiring only $\caO(1/\epsilon)$ circuits to estimate the infidelity with multiplicative precision when the target state is chosen as the prior. Remarkably, in practical scenarios involving depolarizing,  Pauli, or certain coherent noise, our protocol requires only a constant number of circuits, irrespective of  the infidelity $\epsilon$ and the qubit number. Additionally, the nonstabilizerness of the target state can further enhance the efficiency.  For large and intermediate quantum systems, quite often we can even work with only one circuit. Compared with standard and thrifty shadow estimation, our protocol can reduce the circuit sample costs  thousands and even millions of times for HPFE with $\epsilon\leq 0.001$, making it particularly appealing for practical applications.

As benchmarks, we elucidate the performance of the alternative protocol
based on 4-designs and offer additional insights on Pauli measurements studied before \cite{vermersch2024enhanced}. Like in standard shadow estimation, 
Clifford and 4-design measurements may exhibit exponential advantages over  Pauli measurements with respect to the qubit number.
What is surprising is that Clifford measurements are not only simpler but also more efficient than  4-design measurements in HPFE for various practical situations. Our work represents a significant leap toward scalable and resource-efficient quantum state characterization.

\emph{CRM Shadow estimation}---Consider an $n$-qubit quantum system whose Hilbert space $\caH$ has dimension $d=2^n$; denote by $\bbone$ the identity operator on~$\caH$.   The basic idea of CRM shadow estimation introduced in \rscite{garratt2023prob,vermersch2024enhanced,Vitale2024rob} is illustrated in \fref{fig:procedure}. To estimate the expectation value of an observable (Hermitian operator) $O$  with respect to an unknown system state $\rho$ on $\caH$, we first choose a prior state $\sigma$ and a suitable unitary ensemble $\caU$ on $\caH$.  In each round, a random unitary $U$ is sampled from $\caU$  and applied to the  state $\rho$ followed by a measurement in the computational basis. After repeating this procedure (for the same $U$) $R$ times and obtaining measurement outcomes $\bfs_1, \bfs_2,\ldots, \bfs_R\in \{0,1\}^n$, a CRM  estimator for  $\rho$ is constructed as  $\hrho_{\sigma}=\hrho-\hat{\sigma}+\sigma$, where 
\begin{equation}\label{eq:CRMestimator}
\begin{aligned}
	\hrho&=\frac{1}{R}\sum_{k=1}^{R}\caM^{-1}\left(U^\dag|\bfs_k \> \< \bfs_k|U\right),\\
	\hat{\sigma}&=\sum_{\bfs}P_{\sigma}(\bfs \mid U)\caM^{-1}\left(U^\dag|\bfs \> \< \bfs|U\right).	
\end{aligned}
\end{equation}
Here $\caM^{-1}(\cdot)$ denotes the inverse of the measurement channel determined by $\caU$ and is known as the reconstruction map \cite{huang2020pred}, and $P_{\sigma}(\bfs \mid U) = \< \bfs | U \sigma U^\dag | \bfs \>$. Accordingly, $\tr(O\hrho_{\sigma})$ is an estimator for $\tr(O\rho)$, whose variance  is denoted by $\bbV_R(\hO)$. Note that 
$\hrho$ is the estimator for $\rho$ in thrifty (THR) shadow estimation \cite{helsen2023thrifty,zhou2023perf,vermersch2024enhanced}, and may be regarded as a special CRM estimator with $\sigma=0$. Henceforth the above CRM and THR estimation protocols are referred to as $\CRM(\caU,\sigma,R)$ and $\THR(\caU,R)$, respectively.

\begin{figure}[t]
    \centering
    \includegraphics[width=0.4\textwidth]{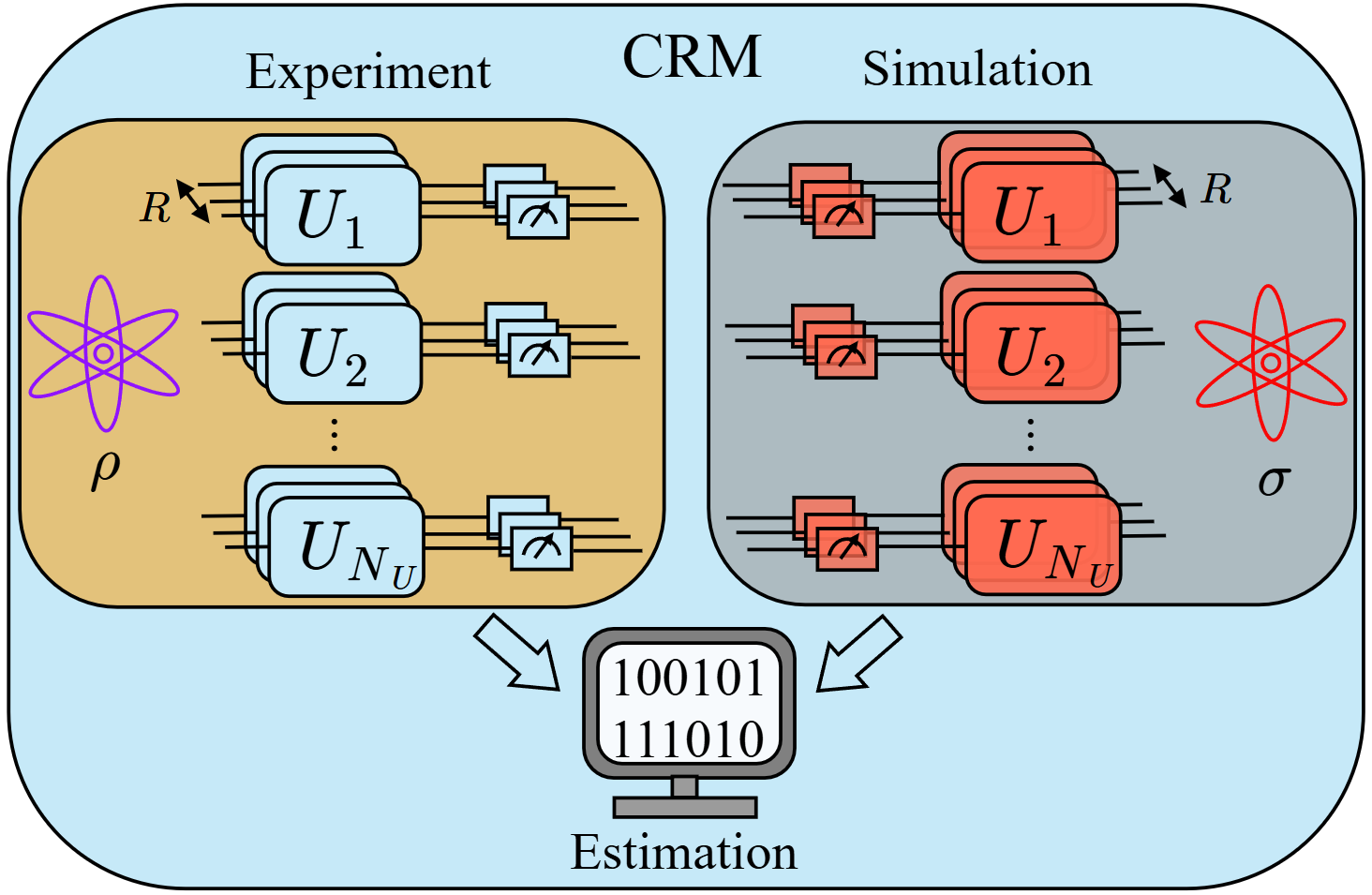}
    \caption{Schematic of CRM shadow estimation. Here $\rho$ is the unknown system state, $\sigma$ is the prior state stored in a classical computer,
   $R$ is the number of circuit reusing, and  $N_U$ is the number of circuits sampled. The procedure involving quantum measurements, shown in the orange box, is the same as in thrifty shadow estimation, while the prior state $\sigma$ is employed to simulate the outcomes of randomized measurements and to construct the CRM estimator, which can usually reduce the variance.}
    \label{fig:procedure}
\end{figure}

After repeating the above procedure $N_U$ times, $N_U$ estimators for $\tr(O\rho)$ can be constructed, corresponding to the  random unitaries $U_1, U_2, \ldots, U_{N_U}$ sampled in $N_U$ rounds.  Then  a more precise estimator can be constructed from the empirical mean. In conjunction with the median of means estimation we can further reduce the probability of large deviation. By virtue of a  similar reasoning employed in \rcite{huang2020pred}, it is straightforward to derive the following result on the sample cost. 
\begin{lemma}\label{lem:upperbound_samp}
Suppose $O$ is an observable on $\caH$. Then the  circuit sample cost required to estimate the expectation value of $O$ within error $\varepsilon$ and with probability at least $1-\delta$ is upper bounded by
	\begin{equation}
		N_U \le  \left\lceil\frac{68\bbV_R(\hO)}{\varepsilon^2} \ln\left(\frac{2}{\delta}\right)\right\rceil.\label{eq:upperbound_samp_gen}
	\end{equation}
\end{lemma}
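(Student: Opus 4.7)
The plan is to apply the standard median-of-means reduction as in \rcite{huang2020pred}. Each of the $N_U$ sampled unitaries $U_j$, together with its $R$ associated outcomes, yields one estimator $\hat{X}_j:=\tr(O\hrho_{\sigma,j})$; since distinct $U_j$ are drawn i.i.d.\ from $\caU$, these $N_U$ estimators are mutually independent. The first step I would carry out is to verify unbiasedness, $\bbE[\hat{X}_j]=\tr(O\rho)$: because $\caM^{-1}$ is the inverse of the measurement channel, $\bbE[\hrho]=\rho$ and $\bbE[\hat{\sigma}]=\sigma$, so the added and subtracted $\sigma$-terms in $\hrho_\sigma=\hrho-\hat{\sigma}+\sigma$ cancel in expectation. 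By the definition immediately preceding the lemma, each $\hat{X}_j$ has variance $\bbV_R(\hO)$.

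Next I would partition the $N_U$ estimators into $K$ batches of equal size $m=N_U/K$, form the batch means $\mu_1,\dots,\mu_K$, and report the median $\tilde{\mu}$. Chebyshev's inequality applied to each batch mean gives
\begin{equation}
\Pr\bigl[\,|\mu_k-\tr(O\rho)|>\varepsilon\,\bigr]\le \frac{\bbV_R(\hO)}{m\varepsilon^2}.
\end{equation}
Choosing $m\ge c_1\bbV_R(\hO)/\varepsilon^2$ forces the right-hand side below a fixed threshold $p<1/2$. The median $\tilde{\mu}$ then deviates from $\tr(O\rho)$ by more than $\varepsilon$ only if at least $K/2$ of the batch means do, an event that I would control by Hoeffding's inequality for the sum of the $K$ independent failure indicators, giving a bound of the form $\exp[-2K(1/2-p)^2]$. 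Requiring this to be at most $\delta$ yields $K\ge c_2\ln(1/\delta)$.

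Combining the two bounds gives $N_U=mK\le c_1 c_2\,\bbV_R(\hO)\varepsilon^{-2}\ln(2/\delta)$, and optimising the free parameters $p,c_1,c_2$ reproduces the prefactor $68$ in \eref{eq:upperbound_samp_gen}, while the ceiling accounts for integrality of $N_U$. I do not anticipate any serious obstacle: the only mildly delicate point is the unbiasedness check for the CRM estimator, after which the argument is a textbook instantiation of the median-of-means trick employed in \rcite{huang2020pred}. The substantive technical content of the paper -- namely, sharp upper bounds on $\bbV_R(\hO)$ for Clifford-based CRM estimation -- is orthogonal to this lemma and would be addressed in subsequent results.
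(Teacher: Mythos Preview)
Your proposal is correct and matches the paper's approach: the paper does not spell out a proof but simply states that \eref{eq:upperbound_samp_gen} follows ``by virtue of a similar reasoning employed in \rcite{huang2020pred},'' i.e., precisely the unbiasedness check plus median-of-means reduction you outline. Your observation that the only CRM-specific ingredient is the unbiasedness of $\hrho_\sigma$ (which follows from $\bbE[\hrho]=\rho$ and $\bbE_U[\hat{\sigma}]=\sigma$) is exactly the point, and the rest is the standard Chebyshev--Hoeffding argument.
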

Here the variance $\bbV_R(\hO)$ only depends on the traceless part of $O$. \Lref{lem:upperbound_samp} applies to both CRM and THR shadow estimation. 
From now on, we assume that $O$ is a traceless observable (Hermitian operator) on $\caH$, denote by $\Delta=\rho-\sigma$ the deviation of the system state $\rho$ from the prior state $\sigma$, and denote by $\epsilon=1-\lVert\sqrt{\rho}\sqrt{\sigma}\lsp\rVert_1^2$ the infidelity between $\rho$ and $\sigma$, where $\lVert\cdot\rVert_p$ ($p\geq 1$) denotes the Schatten $p$-norm. Note that $\epsilon=1-\tr(\rho \sigma)$ when $\sigma$ is a pure state.

\emph{High-precision fidelity estimation}---Efficient estimation of the fidelity between the system state $\rho$ and a given target pure state $|\phi\>$ is crucial to many applications \cite{Flammia2011Direct,Zhang2021direct,lee2025efficient}. Although shadow estimation proves to be useful for this task, the circuit sample cost is usually  inversely proportional to the mean square error, which is quite resource-intensive for HPFE. Here we will show that CRM can effectively circumvent this problem, as long as we choose the target pure state as the prior, that is, $\sigma=|\phi\>\<\phi|$. Our solution is based on two  simple but crucial observations: 1. the estimation error that can be tolerated in practical applications is usually proportional to the infidelity $\epsilon$; 2. the estimation error in CRM  with  the prior state $\sigma$ tends to decrease monotonically with  $\epsilon$. Now, the goal of HPFE is to estimate the fidelity $F=\<\phi|\rho|\phi\>$ or the expectation of the observable $O=\sigma-\bbone/d$, 
within error  $\varepsilon =r\epsilon$ and significance level $\delta$, where $r$ and $\delta$ are precision parameters determined by specific applications.

\emph{Variances in CRM shadow estimation}---To understand the performance of CRM shadow estimation and its application in HPFE, we need to evaluate the variance $\bbV_R(\hO)$. To this end, we first define the 4th cross moment operator introduced in \rcite{chen2024nonstab}:
\begin{equation}\label{eq:CrossMoment}
\Omega(\caU)=\sum_{\bfs,\bft}\bbE_{U\sim\caU}\dagtensor{U}{4}\left[\Tensor{\left(|\bfs\>\<\bfs|\right)}{2}\otimes\Tensor{\left(|\bft\>\<\bft|\right)}{2}\right]\Tensor{U}{4},
\end{equation}
where the summation runs over $\bfs, \bft\in \{0,1\}^n$. Then we can lay the groundwork for understanding CRM shadow estimation and its advantages over standard and THR shadow estimation. The following lemma is proved in  \sref{SM:VRCRMproof} in  Supplemental Material (SM), which includes \rscite{collins2006inte,roth2018recover,nielsen2020QCQI,Pfeuty1970,sachdev2007quantum,Oliviero2022Magic,Tarabunga2023Many,Tarabunga2024Non,Ding2025Eval}.
\begin{lemma}\label{lem:VRCRM}
The variance $\bbV_R(\hO)$	in  $\CRM(\caU,\sigma,R)$ reads
    \begin{equation}\label{eq:generalvar0}
    \bbV_R(\hO)=\bbV_*(O,\Delta)+\frac{\bbV(O,\rho)-\bbV_*(O,\rho)}{R},
\end{equation}
where $\bbV(O,\rho)$ denotes the variance in standard shadow estimation and
\begin{equation}\label{eq:V*}
    \bbV_*(O,\tau):=\tr\left\{\Omega(\caU)\Tensor{\left[\caM^{-1}(O)\otimes\tau\right]}{2}\right\}-[\tr(O\tau)]^2
\end{equation}
for $\tau=\Delta$ and $\tau=\rho$. 
\end{lemma}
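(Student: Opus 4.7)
The plan is to apply the law of total variance, conditioning on the random unitary $U$, and to identify each resulting term with a fourth-moment contraction of $\Omega(\caU)$. Since $\tr(O\sigma)$ is deterministic, $\bbV_R(\hO)$ coincides with the variance of $\tr(O\hrho)-\tr(O\hat\sigma)$. Given $U$, the correction $\hat\sigma$ is a deterministic function of $U$ while the single-round shadows $\xi_k:=\caM^{-1}(U^\dag|\bfs_k\>\<\bfs_k|U)$ are i.i.d.\ under the outcome law $P_\rho(\cdot\mid U)$, so the conditional variance reduces to $R^{-1}\bbV_{\bfs\mid U}[\tr(O\xi_1)]$. Writing $B(U):=\bbE_{\bfs\mid U}\tr(O\xi_1)$ and $A(U):=\tr(O\hat\sigma)$ (the analogous sum with $P_\sigma$ replacing $P_\rho$), the total-variance formula yields
\begin{equation}\label{eq:planvar}
\bbV_R(\hO)=\tfrac{1}{R}\bbE_U\bbV_{\bfs\mid U}\!\bigl[\tr(O\xi_1)\bigr]+\bbV_U\!\bigl[B(U)-A(U)\bigr].
\end{equation}

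Next I would express the outer variance in \eref{eq:planvar} through $\Omega(\caU)$. Using the self-adjointness of $\caM^{-1}$, one has $\tr[O\caM^{-1}(U^\dag|\bfs\>\<\bfs|U)]=\<\bfs|U\caM^{-1}(O)U^\dag|\bfs\>$, so that $B(U)-A(U)=\sum_\bfs\<\bfs|U\Delta U^\dag|\bfs\>\<\bfs|U\caM^{-1}(O)U^\dag|\bfs\>$. Squaring and introducing a dummy index $\bft$ for the second copy, the four scalar factors can be reorganized (using commutativity of the numerical scalars to match the prescribed slot ordering) into
$\tr\bigl\{[(|\bfs\>\<\bfs|)^{\otimes 2}\otimes(|\bft\>\<\bft|)^{\otimes 2}]U^{\otimes 4}[\caM^{-1}(O)\otimes\Delta]^{\otimes 2}U^{\dag\otimes 4}\bigr\}$; taking the expectation over $U$ and summing over $\bfs,\bft$ collapses this to $\tr\{\Omega(\caU)[\caM^{-1}(O)\otimes\Delta]^{\otimes 2}\}$. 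Unbiasedness of the standard shadow gives $\bbE_U[B(U)-A(U)]=\tr(O\Delta)$, hence $\bbV_U[B(U)-A(U)]=\bbV_*(O,\Delta)$.

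Finally, I would identify the conditional-variance term in \eref{eq:planvar} by applying the same total-variance decomposition to a single standard shadow, which gives $\bbV(O,\rho)=\bbE_U\bbV_{\bfs\mid U}[\tr(O\xi_1)]+\bbV_U[B(U)]$. Repeating the $\Omega(\caU)$ manipulation with $\rho$ in place of $\Delta$ yields $\bbV_U[B(U)]=\bbV_*(O,\rho)$, so $\bbE_U\bbV_{\bfs\mid U}[\tr(O\xi_1)]=\bbV(O,\rho)-\bbV_*(O,\rho)$, and substituting back into \eref{eq:planvar} produces \eref{eq:generalvar0}. The main conceptual step is to recognize that the CRM-specific outer variance and the outer variance hidden inside the standard-shadow variance share exactly the same $\Omega(\caU)$ contraction, differing only in whether $\Delta$ or $\rho$ is inserted; everything else is linearity of expectation together with careful bookkeeping of tensor-factor ordering, which is resolved by scalar commutativity.
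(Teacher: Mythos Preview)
Your proposal is correct and follows essentially the same route as the paper: the paper's explicit multinomial-moment computation is exactly the law of total variance conditioned on $U$, its function $f_{\rho,\sigma}(U)$ is your $B(U)-A(U)$, and its $g_\rho(U)$ is precisely your conditional variance $\bbV_{\bfs\mid U}[\tr(O\xi_1)]$. The only cosmetic difference is that the paper derives the $1/R$ split by writing out the multinomial second moments directly, whereas you invoke the i.i.d.\ structure and total-variance law; the identification of both outer variances with $\Omega(\caU)$ contractions is identical.
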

In the large-$R$ limit, we have $\bbV_R(\hO)=\bbV_*(O,\Delta)$ and $	N_U=\caO(\bbV_*(O,\Delta))$ by \lref{lem:upperbound_samp}, so $\bbV_*(O,\Delta)$ determines  the variance  and circuit sample cost.
By replacing $\sigma$  and $\Delta$ with 0 and $\rho$ in \eref{eq:generalvar0} we can reproduce the variance in THR shadow estimation \cite{helsen2023thrifty,chen2024nonstab}, which is   effective if $\bbV_*(O,\rho)\ll\bbV(O,\rho)$, but not so effective if $\bbV_*(O,\rho)\approx\bbV(O,\rho)$. In the latter case, CRM is particularly appealing as we will see.

In the rest of this paper, we are mainly interested in estimation protocols based on 3-designs. The following lemma is a simple corollary of \lsref{lem:upperbound_samp}, \ref{lem:VRCRM} above and \pref{pro:3designVar} in  Appendix A.
\begin{lemma}\label{lem:upperbound_samp2}
Suppose $\caU$ is a unitary 3-design.  Then the variance $\bbV_R(\hO)$ in  $\CRM(\caU,\sigma,R)$ satisfies
\begin{gather}
     \bbV_R(\hO)\leq \bbV_*(O,\Delta)+ \frac{2\lVert O\rVert_2^2}{R}.
\end{gather}
In HPFE with $O=\sigma-\bbone/d$, the circuit sample cost required to achieve precision $r\epsilon$ satisfies 
	\begin{gather}
		N_U \leq \left \lceil\frac{ 68\left[V_*(O,\Delta)+\frac{2}{R}\right]}{r^2\epsilon^2} \ln\left(\frac{2}{\delta}\right)\right\rceil.
	\end{gather}
\end{lemma}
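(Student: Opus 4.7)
The plan is to combine the three cited results in a single direct reduction. First, I would invoke the variance formula \eref{eq:generalvar0} of \lref{lem:VRCRM}, which gives
\begin{equation*}
    \bbV_R(\hO) = \bbV_*(O,\Delta) + \frac{\bbV(O,\rho) - \bbV_*(O,\rho)}{R}.
\end{equation*}
The first summand is already in the desired form, so the only remaining task is to bound the $R$-dependent term uniformly in $\rho$. This is precisely where the 3-design hypothesis enters: \pref{pro:3designVar} in Appendix~A is expected to supply the clean estimate $\bbV(O,\rho) - \bbV_*(O,\rho) \leq 2\lVert O\rVert_2^2$ valid for every state $\rho$ and every traceless observable $O$, from which the first inequality of the lemma follows immediately.

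For the HPFE part, I would specialize to $O = \sigma - \bbone/d$ with $\sigma$ a pure state. A direct computation gives
\begin{equation*}
    \lVert O\rVert_2^2 = \tr(\sigma^2) - \frac{2\tr(\sigma)}{d} + \frac{1}{d} = 1 - \frac{1}{d} < 1,
\end{equation*}
so the variance bound simplifies to $\bbV_R(\hO) \leq \bbV_*(O,\Delta) + 2/R$. Feeding this into the general sample-cost bound \eref{eq:upperbound_samp_gen} of \lref{lem:upperbound_samp} with target precision $\varepsilon = r\epsilon$ then reproduces the claimed ceiling expression for $N_U$, with the constant $68$ and the $\ln(2/\delta)$ factor inherited verbatim.

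Since the statement is advertised as a simple corollary, the main technical content really sits inside \pref{pro:3designVar}, and the reduction carried out above is essentially two algebraic substitutions. The only place I anticipate having to pay attention is verifying that the constant $2$ in front of $\lVert O\rVert_2^2$ survives for an arbitrary 3-design (not just for the Clifford group) and that the bound on $\bbV(O,\rho) - \bbV_*(O,\rho)$ is genuinely $\rho$-independent; standard shadow-norm estimates control $\bbV(O,\rho)$ alone by something like $3\lVert O\rVert_2^2$, so the sharper constant here presumably relies on a cancellation between $\bbV(O,\rho)$ and $\bbV_*(O,\rho)$ that \pref{pro:3designVar} pins down. Once that proposition is in hand, no further work is required to obtain \lref{lem:upperbound_samp2}.
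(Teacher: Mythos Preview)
Your reduction is correct and matches the paper's intended argument exactly. One clarification on your closing speculation: \pref{pro:3designVar} bounds $\bbV(O,\rho)$ itself by $2\lVert O\rVert_2^2$ for any 3-design (sharpening the standard $3\lVert O\rVert_2^2$), and the bound on the difference then follows simply from $\bbV_*(O,\rho)\geq 0$, which holds because $\bbV_*(O,\rho)$ is the variance of $f_{\rho,0}(U)$ over the unitary ensemble---no cancellation is involved.
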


\begin{figure*}
    \centering
    \includegraphics[width =0.8\linewidth]{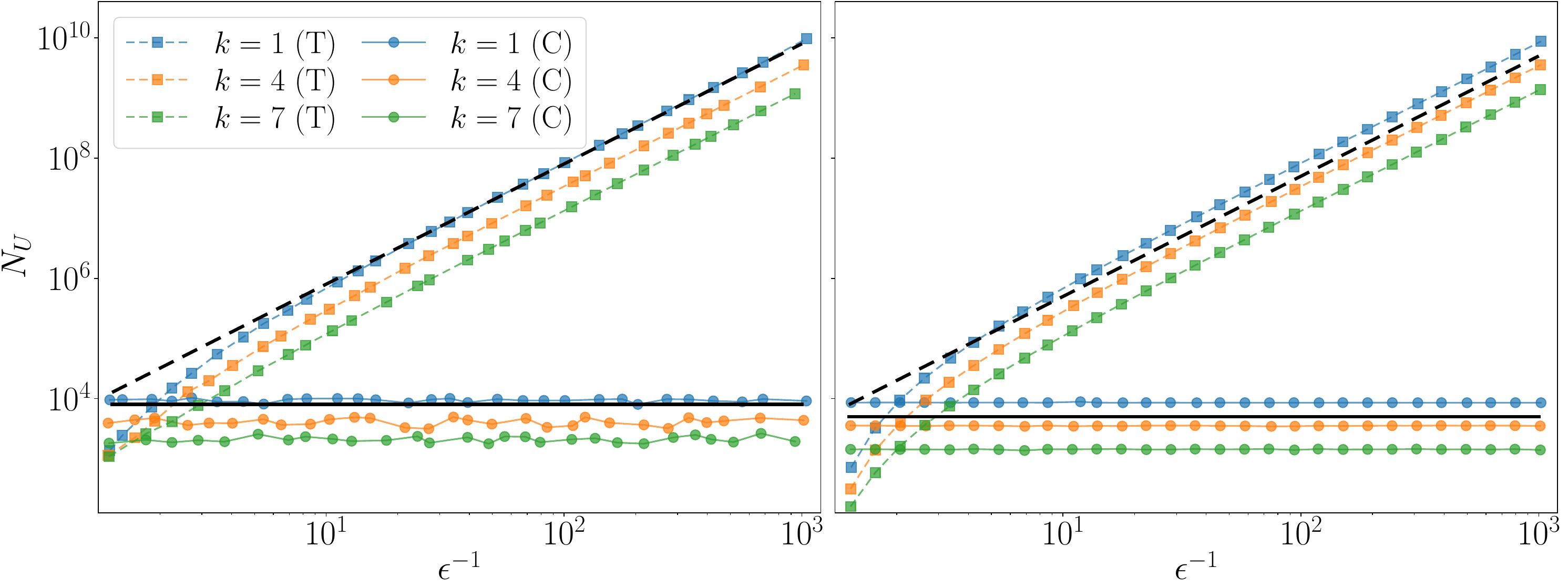}
    \caption{Circuit sample costs  $N_U$ required for HPFE in THR (T) and CRM (C) shadow estimation based on  Clifford  measurements. Here, $ r = 0.25 $, $ \delta = 0.01$, and $R = \lceil 10/\epsilon^2 \rceil$. The target and prior state has the form $\sigma=|S_{7,k}\> \<S_{7,k}|$ with $k=1,4,7$.  Different system  states $\rho$ are generated by applying  random  local rotations described in \sref{sec:DescriptRandomRotation} (left plot) and random Pauli channels described in \sref{sec:DescriptRandomPauli} (right plot) to the target state $\sigma$. The black solid and dashed lines represent $N_U =8000 $ and $N_U =8000 /\epsilon^2$, respectively, in the left plot, while they represent $N_U =5000$ and $N_U =5000/\epsilon^2$ in the right plot. }\label{fig:hpfe_together}
\end{figure*}

\emph{CRM based on 4-design measurements}---From now on we turn to CRM shadow estimation based on more concrete unitary ensembles. First, we clarify the performance of  4-design measurements. 
\begin{theorem}\label{thm:4designVstar}
    Suppose $\caU$ is a unitary 4-design. Then the variance  $\bbV_*(O,\Delta)$ in  $\CRM(\caU,\sigma,R)$ satisfies
 \begin{align}
 \frac{ \lVert\Delta\rVert_2^2\lVert O\rVert_2^2}{d+2}&\le\bbV_*(O,\Delta)\le\frac{4 \lVert\Delta\rVert_2^2\rVert O\lVert_2^2}{d}\leq \frac{8\epsilon\lVert O \rVert_2^2}{d}
\label{eq:4duppline1}.
\end{align}

\end{theorem}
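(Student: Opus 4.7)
The plan is to use the 4-design property to reduce $\bbV_*(O,\Delta)$ to an exact Haar average, evaluate the resulting 4th-order moment in closed form, and then extract the stated two-sided bound using elementary Schatten-norm inequalities. First I would unpack the definitions. For any unitary 3-design (hence 4-design) the reconstruction map satisfies $\caM^{-1}(O)=(d+1)O$ for a traceless $O$. Setting $P:=\sum_{\bfs}|\bfs,\bfs\>\<\bfs,\bfs|$ so that $\Omega(\caU)=\bbE_{U}U^{\dagger\otimes 4}(P\otimes P)U^{\otimes 4}$, one can rewrite
\[
\bbV_*(O,\Delta) = (d+1)^2\,\bbE_U\bigl[f(U)^2\bigr] - [\tr(O\Delta)]^2,\qquad f(U):=\sum_{\bfs}\<\bfs|UOU^\dag|\bfs\>\<\bfs|U\Delta U^\dag|\bfs\>.
\]
The integrand has total degree 4 in $U$ and in $U^\dag$, so by the 4-design property the expectation equals the Haar expectation. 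A short computation shows that $(d+1)\bbE_U f(U)=\tr(O\Delta)$, so equivalently $\bbV_*(O,\Delta)=(d+1)^2\,\mathrm{Var}_U[f(U)]\geq 0$.

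Next I would evaluate $\bbE_U[f(U)^2]$ by splitting the implicit sum into the diagonal part $\bfs=\bft$ (a single 4th Haar moment of the random pure state $U|\bfs\>$) and the off-diagonal part $\bfs\ne\bft$ (a joint moment on a Haar random orthonormal pair). Each piece can be reduced by the standard Weingarten formula on $S_4$ to rational functions of $d$ multiplying traces of $O^{\otimes 2}\otimes\Delta^{\otimes 2}$ against permutation operators. Crucially, since $\tr O=0$ and $\tr\Delta=\tr\rho-\tr\sigma=0$, every Weingarten pairing that leaves a lone index on either operator vanishes, and only two invariants survive: $\tr(O^2)\tr(\Delta^2)=\lVert O\rVert_2^2\lVert\Delta\rVert_2^2$ and $[\tr(O\Delta)]^2$, each carrying an explicit rational coefficient in $d$.

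Given this closed form, the $[\tr(O\Delta)]^2$ subtraction built into $\bbV_*$ cancels the leading $[\tr(O\Delta)]^2$ piece and leaves a residue whose dominant term is $\lVert O\rVert_2^2\lVert\Delta\rVert_2^2/(d+2)$. The lower bound is then read off by isolating this surviving positive contribution and showing that the remaining $[\tr(O\Delta)]^2$ remainder has the right sign (or can be absorbed via the variance identity $\bbV_*\ge 0$). The upper bound $4\lVert\Delta\rVert_2^2\lVert O\rVert_2^2/d$ follows by combining the coefficient estimates with Cauchy--Schwarz in the form $[\tr(O\Delta)]^2\le\lVert O\rVert_2^2\lVert\Delta\rVert_2^2$ and replacing rational factors of $d$ by uniform constants. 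The final inequality $\leq 8\epsilon\lVert O\rVert_2^2/d$ follows from
\[
\lVert\Delta\rVert_2^2 = \tr(\rho^2)-2\<\phi|\rho|\phi\>+1 \le 2(1-F) = 2\epsilon,
\]
valid when $\sigma=|\phi\>\<\phi|$ is pure.

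The main obstacle will be the bookkeeping in the 4th-order Weingarten expansion, where 24 permutations in $S_4$ contribute in principle. A convenient organisational device is to observe that the index pattern $(\bfs,\bfs,\bft,\bft)$ has stabiliser $H=\{e,(12),(34),(12)(34)\}\subset S_4$, so the permutations group into just a handful of double cosets $H\backslash S_4/H$; this reduces the computation to a few representatives, after which tracelessness of $O$ and $\Delta$ and a modest amount of algebra produce the required rational expression and make the cancellation against $[\tr(O\Delta)]^2$ transparent.
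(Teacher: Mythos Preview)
Your overall plan mirrors the paper's: use the 4-design property to replace $\Omega(\caU)$ by its Haar value, split into the diagonal ($\bfs=\bft$) and off-diagonal ($\bfs\neq\bft$) pieces, and exploit the stabiliser $H=\{e,(12),(34),(12)(34)\}$ to organise the Weingarten sum. But your key claim that ``only two invariants survive'' is wrong and would derail the computation. Tracelessness of $O$ and $\Delta$ kills only the permutations with a \emph{fixed point}; the six 4-cycles in $S_4$ have none and contribute two further invariants, $\tr(\Delta^2 O^2)$ and $\tr(\Delta O\Delta O)$. The paper's exact formula (\lref{lem:V*4design}) carries all four:
\[
\bbV_*(O,\Delta)=\frac{d^2+3d+4}{d(d+2)(d+3)}[\tr(\Delta O)]^2+\frac{2(d+1)^2}{d(d+2)(d+3)}\tr\bigl(\Delta^2 O^2\bigr)+\frac{d+1}{d(d+3)}\lVert\Delta\rVert_2^2\lVert O\rVert_2^2-\frac{2(d+1)}{d(d+2)(d+3)}\tr(\Delta O\Delta O).
\]
In particular the subtraction of $[\tr(O\Delta)]^2$ does not eliminate that term: its residual coefficient stays positive and of order $1/d$, comparable to the $\lVert\Delta\rVert_2^2\lVert O\rVert_2^2$ term rather than subleading to it.

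With the correct four-term expression the bounds go through as in the paper. For the lower bound, drop the nonnegative $[\tr(\Delta O)]^2$ term and use $|\tr(\Delta O\Delta O)|\le\tr(\Delta^2 O^2)$ so that the two 4-cycle contributions combine nonnegatively; what remains is $\bbV_*\ge\frac{d+1}{d(d+3)}\lVert\Delta\rVert_2^2\lVert O\rVert_2^2\ge\frac{1}{d+2}\lVert\Delta\rVert_2^2\lVert O\rVert_2^2$. For the upper bound, bound each of the four trace invariants by $\lVert\Delta\rVert_2^2\lVert O\rVert_2^2$ (Cauchy--Schwarz for the first and third, $\tr(\Delta^2O^2)\le\lVert\Delta\rVert_2^2\lVert O\rVert^2$ and $|\tr(\Delta O\Delta O)|\le\tr(\Delta^2 O^2)$ for the others) and sum the coefficients to obtain $\frac{2(2d^2+6d+5)}{d(d+2)(d+3)}\le\frac{4}{d}$. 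Your argument for $\lVert\Delta\rVert_2^2\le 2\epsilon$ when $\sigma$ is pure is correct and matches the paper's use of \pref{pro:Delta12NormIFsim}.
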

\Thref{thm:4designVstar} shows that CRM is very effective for enhancing the performance of shadow estimation based on 4-designs. Notably,  the asymptotic variance $\bbV_*(O,\Delta)$ is mainly determined by $\lVert\Delta\rVert_2^2$, which decreases monotonically as the system state $\rho$ and prior state $\sigma$ gets closer. 
For HPFE, the prior state $\sigma$ coincides with the target pure state and
we have $O=\sigma-\bbone/d$ and $\lVert O\rVert_2^2\leq 1$, which means  $\bbV_R(\hO)\le4 \lVert\Delta\rVert_2^2/d+2/R \le8\epsilon /d+2/R$. In this case, CRM can even improve the scaling behavior of the sample cost with respect to the infidelity. The following theorem is a simple corollary of \lref{lem:upperbound_samp2} and \thref{thm:4designVstar}.

\begin{theorem}\label{thm:4designHPFE}
 Suppose $\caU$ is a unitary 4-design. Then the circuit sample cost required to achieve  HPFE  in  $\CRM(\caU,\sigma,R)$ satisfies 	
\begin{equation}
	N_U  \le\left\lceil \frac{136\bigl (\frac{2 \lVert\Delta\rVert_2^2 }{d}+\frac{1}{R}\bigr)}{r^2 \epsilon^2}  \ln\left(\frac{2}{\delta}\right)\right\rceil.
\end{equation}
\end{theorem}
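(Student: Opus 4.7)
The plan is to derive the claim as a direct consequence of \lref{lem:upperbound_samp2} and \thref{thm:4designVstar}. First, I would observe that any unitary 4-design is automatically a unitary 3-design, so the HPFE bound in \lref{lem:upperbound_samp2} applies to $\caU$ and furnishes
\begin{equation*}
N_U \leq \left\lceil \frac{68\left[\bbV_*(O,\Delta) + 2/R\right]}{r^2\epsilon^2}\ln\!\left(\frac{2}{\delta}\right)\right\rceil,
\end{equation*}
which reduces the problem to producing a clean upper bound on $\bbV_*(O,\Delta)$ tailored to the HPFE observable.

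Next, I would invoke \thref{thm:4designVstar}, which gives $\bbV_*(O,\Delta)\leq 4\lVert\Delta\rVert_2^2\lVert O\rVert_2^2/d$ for any 4-design $\caU$. In the HPFE setting the prior $\sigma=|\phi\>\<\phi|$ is pure and the observable is $O=\sigma-\bbone/d$; a short computation then yields
\begin{equation*}
\lVert O\rVert_2^2 = \tr(\sigma^2) - \tfrac{2}{d}\tr(\sigma) + \tfrac{1}{d} = 1-\tfrac{1}{d}\leq 1.
\end{equation*}
Combining these two facts gives $\bbV_*(O,\Delta)\leq 4\lVert\Delta\rVert_2^2/d$, which is the only non-trivial input needed.

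Finally, I would substitute this bound into the displayed inequality from \lref{lem:upperbound_samp2} and pull a factor of $2$ out of the bracket $[4\lVert\Delta\rVert_2^2/d + 2/R]$; the prefactor $68$ then becomes $136$ and the bracket takes the stated form $[2\lVert\Delta\rVert_2^2/d + 1/R]$, yielding the theorem. There is no meaningful obstacle here: the argument is purely algebraic substitution, with the only points to verify being the computation of $\lVert O\rVert_2^2$ for the HPFE observable and the (trivial) implication that a 4-design is also a 3-design so that \lref{lem:upperbound_samp2} is applicable.
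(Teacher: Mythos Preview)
Your proposal is correct and follows exactly the same route as the paper, which states that the theorem is ``a simple corollary of \lref{lem:upperbound_samp2} and \thref{thm:4designVstar}'' and notes $\lVert O\rVert_2^2\leq 1$ for $O=\sigma-\bbone/d$. The only detail you make explicit that the paper leaves implicit is that a 4-design is automatically a 3-design; otherwise the arguments coincide.
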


Since  $\lVert \Delta\rVert_2^2\le2\epsilon$ thanks to \pref{pro:Delta12NormIFsim} in  Appendix B,  CRM based on 4-design measurements can achieve the scaling behavior $N_U=\caO(1/\epsilon)$  in the large-$R$ limit, which is much better than the counterpart  $\caO(1/\epsilon^2)$ for THR. Moreover, 
CRM can even achieve a constant scaling behavior whenever $\lVert\Delta\rVert_2^2=\caO(\epsilon^2)$, which holds for many scenarios of practical interest. Notably,  if $\sigma$ is a Haar-random pure state and $\rho=\scrP(\sigma)$ with $\scrP$ being a Pauli channel, then we have $\lVert\Delta\rVert_2^2=\caO(\epsilon^2)$ on average thanks to \pref{pro:pchannel_haarSim}
in Appendix B. For a sufficiently large system with $d \gg 1/(r\epsilon)^2$, the circuit sample cost may even drop to $N_U = 1$.

\emph{CRM based on Clifford measurements}---In real experiments, it is quite difficult to realize  4-designs. 
As a simpler alternative, the Clifford group~$\Cl_n$ forms a 3-design \cite{kueng2015qubit,webb2016clifford,Zhu20173Design} and can be regarded as the best approximation to a 4-design among discrete groups \cite{zhu2016clifford,bannai2020unitary}. Can we achieve a similar performance in CRM shadow estimation based on Clifford measurements instead of 4-design measurements?

To answer the above question, we first introduce some additional concepts. Let $\bcaP_n=\Tensor{\{I,X,Y,Z\}}{n}$ be the set of $n$-qubit Pauli operators. The characteristic function $\Xi_O$ of the observable $O$  is  defined  as $	\Xi_O(P) := \tr(OP)$ for $P \in \bcaP_n$,
which can be regarded as a vector of $d^2$ entries. The stabilizer 2-R\'{e}nyi entropy (2-SRE) of a pure state $\sigma=|\phi\>\<\phi|$ reads  $M_2(\phi)=M_2(\sigma)=\log_2\left(d/\rVert \Xi_\sigma\rVert_{4}^{4}\right)$
\cite{leone2022stab,Leone2024monotone},
where $\lVert\Xi_\sigma\rVert_{p}$ for $p\geq 1$ denotes the $\ell_p$-norm of $\Xi_\sigma$. The cross characteristic function $\Xi_{O_1,O_2}$ and twisted cross characteristic function $\tXi_{O_1,O_2}$ of two observables $O_1,O_2$ on $\caH$ are defined as \cite{chen2024nonstab}
\begin{equation}\label{eq:CrossChar}
\begin{aligned}
	\Xi_{O_1,O_2}(P) &:= \tr(O_1 P)\tr(O_2 P), \quad  P \in \bcaP_n ,\\
	\tXi_{O_1,O_2}(P) &:= \tr(O_1 P O_2 P), \quad  P \in \bcaP_n.
\end{aligned}
\end{equation}
Now, we can introduce an upper bound for $\bbV_*(O,\Delta)$:
 \begin{align}\label{eq:Vcirc}
\bbV_\circ(O,\Delta):=\frac{d+1}{d(d+2)}\left(\lVert\Xi_{\Delta,O}\rVert_2^2+\tXi_{\Delta,O}\cdot \Xi_{\Delta,O}\right),
 \end{align}
 where $\tXi_{\Delta,O}\cdot \Xi_{\Delta,O}$ denotes the inner product between two cross characteristic functions viewed as vectors.
 \Thsref{thm:CliffordVstar}{thm:CliffordHPFEpn} below are proved in SM \sref{SM:CliffordVar}. 
 \begin{theorem}\label{thm:CliffordVstar}
 	 The variance $\bbV_*(O,\Delta)$ associated with $\CRM(\Cl_n,\sigma,R)$ reads
 	\begin{align}
 		\bbV_*(O,\Delta)&=\bbV_\circ(O,\Delta)-\frac{\left[\tr\left(\Delta O\right)\right]^2}{d+2} \le\frac{2}{d}\lVert\Xi_{\Delta,O}\rVert_2^2\nonumber\\
 		&\le 2\lVert\Delta\rVert_1^2\lVert O\lVert_2^2\le 8\epsilon \lVert O \rVert_2^2. 
 		\label{eq:Clifford_variance}
 	\end{align}
 	If $\sigma$ is a pure state and $O=\sigma-\bbone/d$, then 
 	\begin{align}
 		\bbV_*(O,\Delta)& \le 2^{2-M_2(\sigma)/2}\min\left\{\sqrt{2}\lsp \epsilon, \lVert\Delta\rVert_2^2\right\}.
 		\label{eq:CliffordVarFE1}
 	\end{align}		
 \end{theorem}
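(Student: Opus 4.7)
My plan is to derive the equality in \eqref{eq:Clifford_variance} directly from the Clifford 4-fold moment formula of \rcite{chen2024nonstab}, and then climb the chain of upper bounds via successive Cauchy--Schwarz estimates, the Fuchs--van de Graaf inequality $\lVert\Delta\rVert_1\le 2\sqrt{\epsilon}$ for pure $\sigma$, and the 2-SRE identity $\lVert\Xi_\sigma\rVert_4^4=d\cdot 2^{-M_2(\sigma)}$ for the refined pure-state bound \eqref{eq:CliffordVarFE1}.

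For the equality, I would substitute the Clifford form of $\Omega(\Cl_n)$ into the definition \eqref{eq:V*}. Because $\Cl_n$ is only a 3-design, $\Omega(\Cl_n)$ differs from its Haar analogue by a Pauli-supported correction that, after contracting with $\Tensor{[\caM^{-1}(O)\otimes\Delta]}{2}=(d+1)^2\Tensor{[O\otimes\Delta]}{2}$ (using $\caM^{-1}(O)=(d+1)O$ for traceless $O$) and re-expressing in the Pauli basis, yields the prefactor $(d+1)/[d(d+2)]$ multiplying $\lVert\Xi_{\Delta,O}\rVert_2^2+\tXi_{\Delta,O}\cdot\Xi_{\Delta,O}$, while the $[\tr(\Delta O)]^2/(d+2)$ contribution absorbs the $-[\tr(O\Delta)]^2$ piece from \eqref{eq:V*}.

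For the chain of upper bounds, I first establish the Plancherel identity $\lVert\tXi_{\Delta,O}\rVert_2=\lVert\Xi_{\Delta,O}\rVert_2$: expanding $\tXi_{\Delta,O}(P)=d^{-1}\sum_Q\tr(\Delta Q)\tr(OQ)(-1)^{\langle P,Q\rangle}$ via $PQP=(-1)^{\langle P,Q\rangle}Q$ and using the orthogonality $\sum_P(-1)^{\langle P,Q\oplus Q'\rangle}=d^2\delta_{Q,Q'}$, so that Cauchy--Schwarz gives $\tXi_{\Delta,O}\cdot\Xi_{\Delta,O}\le\lVert\Xi_{\Delta,O}\rVert_2^2$; combined with $(d+1)/(d+2)\le 1$ and dropping the nonnegative subtracted term, this yields $(2/d)\lVert\Xi_{\Delta,O}\rVert_2^2$. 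The next step uses $\sum_P\tr(\Delta P)^2\tr(OP)^2\le\max_P\tr(\Delta P)^2\sum_P\tr(OP)^2$ with $|\tr(\Delta P)|\le\lVert\Delta\rVert_1$ and $\sum_P\tr(OP)^2=d\lVert O\rVert_2^2$ to reach $2\lVert\Delta\rVert_1^2\lVert O\rVert_2^2$, and the Fuchs--van de Graaf bound finishes $8\epsilon\lVert O\rVert_2^2$.

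For the SRE-refined bound, assuming $\sigma$ is pure and $O=\sigma-\bbone/d$, I would apply Cauchy--Schwarz in a different form: $\lVert\Xi_{\Delta,O}\rVert_2^2=\sum_P\tr(\Delta P)^2\tr(\sigma P)^2\le\lVert\Xi_\Delta\rVert_4^2\lVert\Xi_\sigma\rVert_4^2=\sqrt{d}\,2^{-M_2(\sigma)/2}\lVert\Xi_\Delta\rVert_4^2$, and $\lVert\Xi_\Delta\rVert_4^4\le\lVert\Xi_\Delta\rVert_\infty^2\lVert\Xi_\Delta\rVert_2^2\le d\lVert\Delta\rVert_1^2\lVert\Delta\rVert_2^2$ to obtain $\bbV_*(O,\Delta)\le 2\lVert\Delta\rVert_1\lVert\Delta\rVert_2\cdot 2^{-M_2(\sigma)/2}$. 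The $\sqrt{2}\epsilon$ branch of the min then follows from $\lVert\Delta\rVert_1\le 2\sqrt{\epsilon}$ and $\lVert\Delta\rVert_2^2\le 2\epsilon$ (\pref{pro:Delta12NormIFsim}). For the $\lVert\Delta\rVert_2^2$ branch I prove the auxiliary inequality $\lVert\Delta\rVert_1\le 2\lVert\Delta\rVert_2$ for pure $\sigma$: writing the Jordan decomposition $\Delta=\Delta_+-\Delta_-$, positivity of $\rho=\sigma+\Delta$ together with rank-one $\sigma$ forces $\Delta_-=s\sigma$ for some $s\ge 0$, so $\Delta_+$ sits on $\sigma^\perp$ with $\tr\Delta_+=s$; hence $\lVert\Delta_+\rVert_2^2\ge s^2/(d-1)$, giving $\lVert\Delta\rVert_2\ge s$ while $\lVert\Delta\rVert_1=2s$. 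I expect the main obstacle to be the initial Clifford 4-moment bookkeeping needed to reach the compact $(\Xi,\tXi)$ form of \eqref{eq:Vcirc}; the subsequent inequalities are standard Cauchy--Schwarz moves, and the structural lemma is a short positivity argument.
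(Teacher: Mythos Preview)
Your overall strategy matches the paper's closely: the equality is obtained by contracting the Clifford fourth moment against $(d+1)^2(O\otimes\Delta)^{\otimes 2}$ and rewriting in the Pauli basis, the Cauchy--Schwarz step $\tXi_{\Delta,O}\cdot\Xi_{\Delta,O}\le\lVert\Xi_{\Delta,O}\rVert_2^2$ (via the Plancherel identity $\lVert\tXi_{\Delta,O}\rVert_2=\lVert\Xi_{\Delta,O}\rVert_2$) gives the $(2/d)\lVert\Xi_{\Delta,O}\rVert_2^2$ bound, and the $\ell^4$--$\ell^4$ Cauchy--Schwarz combined with $\lVert\Xi_\sigma\rVert_4^2=\sqrt{d}\,2^{-M_2(\sigma)/2}$ yields the SRE-refined estimate. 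The paper reaches the equality by a slightly different route---it first derives an explicit Pauli-sum formula (\lref{lem:V*CliffordAlt}) using only the Pauli 2-mixing property of $\Cl_n$ (\lref{lem:mixing}) and then matches it to $\bbV_\circ(O,\Delta)$---whereas you invoke the full $\Omega(\Cl_n)$ from \rcite{chen2024nonstab}; both work.

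There is one concrete error in your auxiliary argument for $\lVert\Delta\rVert_1\le 2\lVert\Delta\rVert_2$. You assert that positivity of $\rho=\sigma+\Delta$ together with $\rank\sigma=1$ forces $\Delta_-=s\sigma$. This is false: take $\sigma=|0\rangle\langle 0|$ and $\rho=|+\rangle\langle+|$; then the negative eigenvector of $\Delta$ is not $|0\rangle$. What positivity of $\rho$ does force is that $\Delta$ has at most one negative eigenvalue (if the negative eigenspace were $\ge 2$-dimensional it would contain a vector orthogonal to the range of $\sigma$, contradicting $\rho\ge 0$), so $\Delta_-=s|w\rangle\langle w|$ for some unit $|w\rangle$, not necessarily $|\phi\rangle$. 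With this correction your conclusion stands immediately: $\lVert\Delta\rVert_1=2s$ while $\lVert\Delta\rVert_2^2\ge s^2$ from the $\Delta_-$ term alone, so $\lVert\Delta\rVert_1\le 2\lVert\Delta\rVert_2$. (Your $s^2/(d-1)$ estimate for $\lVert\Delta_+\rVert_2^2$ is then unnecessary.) This is exactly the content of \pref{pro:Delta12NormIF} in the paper. Also note that the Fuchs--van de Graaf bound $\lVert\Delta\rVert_1^2\le 4\epsilon$ holds for general $\sigma$ (\pref{pro:Delta12NormIFgen}), which is what the first chain \eqref{eq:Clifford_variance} needs since purity of $\sigma$ is only assumed for \eqref{eq:CliffordVarFE1}.
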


 \Thref{thm:CliffordVstar} shows that the variance $\bbV_*(O,\Delta)$ tends to decrease monotonically when the system state $\rho$ and the prior state $\sigma$ gets closer as in CRM shadow estimation based on 4-designs. This fact is particularly important to HPFE. As a simple corollary of \lref{lem:upperbound_samp2} and \thref{thm:CliffordVstar}, we can deduce the following theorem.

\begin{theorem}\label{thm:CliffordHPFE}
 Suppose $\sigma$ is a pure state on $\caH$. Then the  circuit sample cost $N_U$ required for HPFE in $\CRM(\Cl_n,\sigma,R)$ satisfies
	\begin{align}
	&N_U\le  \left\lceil\frac{136\left[2^{1-M_2(\sigma)/2}\lVert\Delta\rVert_2^2+\frac{1}{R}\right]}{r^2\epsilon^2}\ln\left(\frac{2}{\delta}\right)\right\rceil.\label{eq:hpfeUppGeneral}
	\end{align}    
\end{theorem}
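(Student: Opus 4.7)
The plan is to deduce the claim as an immediate corollary of \Lref{lem:upperbound_samp2} and \Thref{thm:CliffordVstar}, since the Clifford group $\Cl_n$ is a unitary 3-design (so \Lref{lem:upperbound_samp2} applies with $\caU=\Cl_n$) and \Thref{thm:CliffordVstar} provides the bound on $\bbV_*(O,\Delta)$ tailored to the Clifford case. No new machinery should be needed; the argument is essentially a substitution followed by a clean arithmetic rearrangement.

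First, I would specialize \Lref{lem:upperbound_samp2} to the HPFE setting. With $O=\sigma-\bbone/d$ and $\sigma$ a pure state, a direct calculation gives $\lVert O\rVert_2^2=\tr(\sigma^2)-1/d=1-1/d\leq 1$, which is precisely the input used to collapse the $2\lVert O\rVert_2^2/R$ contribution in the variance bound into the clean $2/R$ term appearing in \Lref{lem:upperbound_samp2}. Hence
\begin{equation*}
N_U\le\left\lceil\frac{68\bigl[\bbV_*(O,\Delta)+\tfrac{2}{R}\bigr]}{r^2\epsilon^2}\ln\!\left(\frac{2}{\delta}\right)\right\rceil.
\end{equation*}

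Second, I would apply \eref{eq:CliffordVarFE1} from \Thref{thm:CliffordVstar}, discarding the $\sqrt{2}\lsp\epsilon$ branch of the minimum since the $\lVert\Delta\rVert_2^2$ branch is the one that yields the stated form (and is generally tighter in the regimes of interest). This gives $\bbV_*(O,\Delta)\le 2^{2-M_2(\sigma)/2}\lVert\Delta\rVert_2^2=2\cdot 2^{1-M_2(\sigma)/2}\lVert\Delta\rVert_2^2$, so $68\lsp\bbV_*(O,\Delta)\le 136\cdot 2^{1-M_2(\sigma)/2}\lVert\Delta\rVert_2^2$ and $68\cdot 2/R=136/R$. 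Factoring out $136$ from the numerator yields exactly \eref{eq:hpfeUppGeneral}.

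Because the proof reduces to plugging one bound into another, there is no substantive obstacle. The only points requiring minor care are (i) verifying the 3-design status of $\Cl_n$ so that \Lref{lem:upperbound_samp2} is legitimately applicable, and (ii) selecting the correct branch of the $\min$ in \eref{eq:CliffordVarFE1}; all remaining work is bookkeeping of numerical constants.
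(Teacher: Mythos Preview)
Your proposal is correct and matches the paper's own argument essentially verbatim: the paper states the theorem as ``a simple corollary of \lref{lem:upperbound_samp2} and \thref{thm:CliffordVstar}'' and derives it by substituting the $\lVert\Delta\rVert_2^2$ branch of \eref{eq:CliffordVarFE1} into the HPFE sample-cost bound of \lref{lem:upperbound_samp2}, exactly as you do. The arithmetic factoring $68\bigl[2^{2-M_2(\sigma)/2}\lVert\Delta\rVert_2^2+\tfrac{2}{R}\bigr]=136\bigl[2^{1-M_2(\sigma)/2}\lVert\Delta\rVert_2^2+\tfrac{1}{R}\bigr]$ is the entire content, and your remarks about the 3-design property of $\Cl_n$ and the computation $\lVert O\rVert_2^2=1-1/d\le1$ are the right justifications for the two inputs.
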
 

Thanks to  \thsref{thm:CliffordVstar}{thm:CliffordHPFE} and \psref{pro:Delta12NormIFsim}-\ref{pro:CharPropSim} in Appendix B, CRM shadow estimation based on Clifford measurements is very effective for fidelity estimation even when the target state is a stabilizer state, for which THR is useless. In the large-$R$ limit,  the circuit sample cost $N_U$  tends to decrease exponentially with  2-SRE.  In addition,   $N_U$ increases at most linearly with $1/\epsilon$ instead of $1/\epsilon^2$, unlike THR. Moreover, constant scaling can be achieved whenever 
$\lVert\Delta\rVert_2^2=\caO(\epsilon^2)$ or $\lVert\Xi_{\Delta,O}\rVert_2^2/d=\caO(\epsilon^2)$, which holds in many scenarios of practical interest. Notably, this is the case when $\rho$ is related to the target state via a depolarizing or Pauli channel by \pref{pro:CharPropSim}.
\begin{theorem}\label{thm:CliffordHPFEpn}
    Suppose $\sigma$ is a pure state on $\caH$ and  $\rho=\scrP(\sigma)$, where $\scrP$ is a Pauli channel. Then the  circuit sample cost $N_U$ required for HPFE in $\CRM(\Cl_n,\sigma,R)$ satisfies
    \begin{equation}
        N_U\le\left\lceil136\ln\left(\frac{2}{\delta}\right)\left(\frac{2}{r^2}+\frac{1}{Rr^2\epsilon^2}\right)\right\rceil.\label{eq:hpfe_pauligen}
    \end{equation}
    If $\scrP$ is a depolarizing channel, then 
    	\begin{equation}
		N_U\le\left\lceil136\ln\left(\frac{2}{\delta}\right)\left[\frac{2^{-M_2(\sigma)}}{r^2}+\frac{1}{Rr^2\epsilon^2}\right]\right\rceil.\label{eq:hpfe_depo}
	\end{equation}
\end{theorem}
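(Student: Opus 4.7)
The plan is to specialize Lemma~\ref{lem:upperbound_samp2} to the two channel families by deriving sharp estimates of the asymptotic variance $\bbV_*(O,\Delta)$. In both cases I would start from the $\caO(1/d)$ upper bound $\bbV_*(O,\Delta)\le (2/d)\lVert\Xi_{\Delta,O}\rVert_2^2$ supplied by \Thref{thm:CliffordVstar}, reducing the problem to establishing $\lVert\Xi_{\Delta,O}\rVert_2^2\le\caO(d\epsilon^2)$ for each channel, with a $2^{-M_2(\sigma)}$ enhancement in the depolarizing case. Once $\bbV_*(O,\Delta)$ is controlled by a constant multiple of $\epsilon^2$ (times $2^{-M_2(\sigma)}$ for depolarizing), substituting into Lemma~\ref{lem:upperbound_samp2} immediately reproduces the two claimed bounds \eqref{eq:hpfe_pauligen} and \eqref{eq:hpfe_depo}; the $1/R$ terms of the stated bounds arise directly from the $2/R$ term of Lemma~\ref{lem:upperbound_samp2}.

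For the general Pauli-channel part, the Heisenberg action $\scrP^\dag(P)=\lambda_P P$ with $\lambda_P\in[-1,1]$ gives $\tr(\Delta P)=-(1-\lambda_P)\Xi_\sigma(P)$, so that $\lVert\Xi_{\Delta,O}\rVert_2^2=\sum_{P\neq I}(1-\lambda_P)^2\Xi_\sigma(P)^4$ using $\Xi_O(P)=\Xi_\sigma(P)$ for $P\neq I$. The complementary identity $d\epsilon=\sum_{P\neq I}(1-\lambda_P)\Xi_\sigma(P)^2$, obtainable from the Pauli expansion of $\tr(\rho\sigma)$ together with $\sum_P \Xi_\sigma(P)^2=d$, is the structural relation that feeds into \pref{pro:CharPropSim} of Appendix B, which delivers the sharp estimate $\lVert\Xi_{\Delta,O}\rVert_2^2\le 2d\epsilon^2$. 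Combined with \Thref{thm:CliffordVstar} this yields $\bbV_*(O,\Delta)\le 4\epsilon^2$, which after insertion into Lemma~\ref{lem:upperbound_samp2} produces \eqref{eq:hpfe_pauligen}.

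For the depolarizing specialization one exploits the proportionality $\Delta=-pO$ with $p=\epsilon/(1-1/d)$. By quadratic homogeneity of $\bbV_*$ in its second argument, $\bbV_*(O,\Delta)=p^2\bbV_*(O,O)$, and a direct closed-form evaluation is possible: $\Xi_{O,O}(P)=\Xi_\sigma(P)^2$ and $\tXi_{O,O}(P)=\Xi_\sigma(P)^2-1/d$ for $P\neq I$, so the two Pauli sums appearing in \eqref{eq:Vcirc} reduce to $\lVert\Xi_\sigma\rVert_4^4=d\cdot 2^{-M_2(\sigma)}$ and $(d-1)/d$. After algebraic simplification one obtains $\bbV_*(O,O)=[2(d+1)\cdot 2^{-M_2(\sigma)}-4]/(d+2)$, and substituting $p^2=\epsilon^2 d^2/(d-1)^2$ gives $\bbV_*(O,\Delta)\le 2\epsilon^2\cdot 2^{-M_2(\sigma)}$ for every $d\ge 2$, which via Lemma~\ref{lem:upperbound_samp2} yields \eqref{eq:hpfe_depo}.

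The main obstacle is the Pauli-channel bound $\lVert\Xi_{\Delta,O}\rVert_2^2\le 2d\epsilon^2$. A naive Cauchy--Schwarz or maximum-times-sum estimate on $\sum_{P\neq I}(1-\lambda_P)^2\Xi_\sigma(P)^4$ using the constraint $\sum_{P\neq I}(1-\lambda_P)\Xi_\sigma(P)^2=d\epsilon$ only yields an $\caO(\epsilon)$ scaling, since $(1-\lambda_P)\le 2$ is not itself controlled by $\epsilon$ (for instance, channel errors that stabilize $\sigma$ inflate $1-p_I$ without contributing to $\epsilon$). The improvement to the tight $\caO(\epsilon^2)$ scaling requires exploiting both the Pauli commutation identity $\sum_P\Xi_\sigma(P)^2(-1)^{[P,Q]}=d\Xi_\sigma(Q)^2$ (valid for pure $\sigma$) and the non-negativity of $(1-\lambda_P)\Xi_\sigma(P)^2$, which is the content of \pref{pro:CharPropSim}; this is the non-trivial ingredient that the rest of the proof builds upon.
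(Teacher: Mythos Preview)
Your proposal is correct and follows essentially the same route as the paper: for the general Pauli channel you combine \Thref{thm:CliffordVstar} with \pref{pro:CharPropSim} (the paper invokes the SM version, \lref{lem:InfidCrossCharUB}) to get $\bbV_*(O,\Delta)\le 4\epsilon^2$, and for the depolarizing channel your direct computation of $\bbV_*(O,\Delta)$ via $\Delta=-pO$ and \eref{eq:Vcirc} reproduces exactly the content of \lref{lem:depo_terms}; both conclusions are then plugged into \lref{lem:upperbound_samp2} as you describe.
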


\emph{Numerical illustration}---To demonstrate the advantages  of CRM over THR in HPFE,  we consider an example in which the target state is $|S_{n,k}\>:=|0\>^{\otimes(n-k)}\otimes |T\>^{\otimes k}$, where $|T\>:=\left(|0\>+\rme^{\pi \rmi/4}|1\>\right)/\sqrt{2}$ is a magic state and $M_2\left(|S_{n,k}\>\right)=k\log_2(4/3)$ \cite{chen2024nonstab}.
Different system  states are generated by applying random local rotations (coherent noise) or  random Pauli channels (incoherent noise) to $\sigma$. \Fref{fig:hpfe_together} shows the circuit sample costs required in THR and CRM as functions of the infidelity~$\epsilon$. In both cases, THR requires $\caO(1/\epsilon^2)$ circuits, while CRM only requires a constant number of circuits, which is dramatically better. For HPFE with $\epsilon\le 0.001$, CRM can reduce the circuit sample cost millions of times.  As shown in SM \sref{SM:NumericsAdd}, for certain special coherent noise, CRM requires $\caO(1/\epsilon)$ circuits, which is suboptimal but still much better than THR.  In addition,  the 2-SRE of the target state can significantly enhance the performance of  Clifford-based CRM in HPFE, which is similar to THR as shown in \rcite{chen2024nonstab}. See SM \sref{SM:NumericsAdd} for more numerical results.

\begin{figure}[t]
    \centering
    \includegraphics[width=0.4\textwidth]{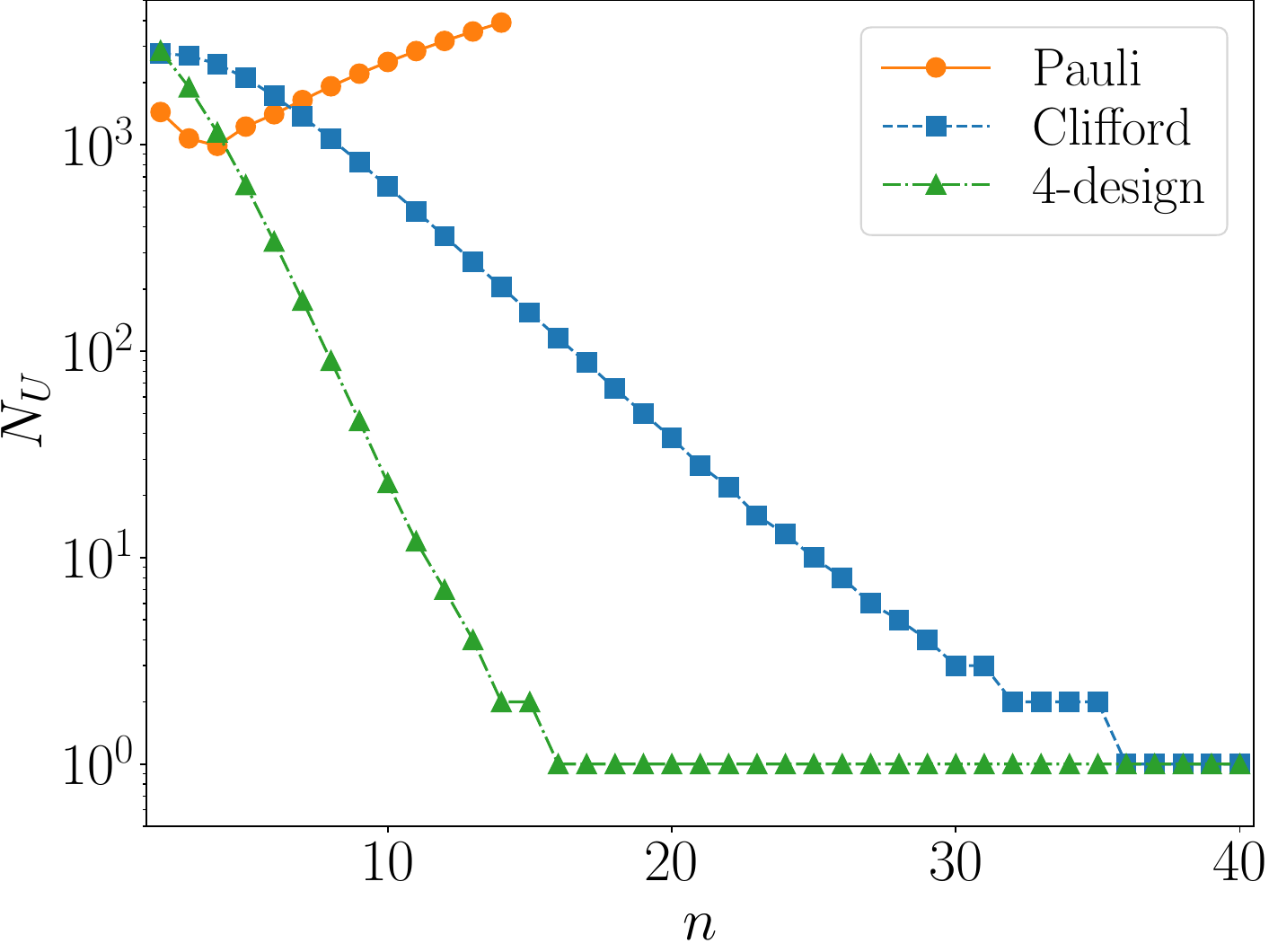}
    \caption{Circuit sample costs $ N_U $ required for HPFE in CRM shadow estimation based on three measurement ensembles, plotted as  functions of the qubit number $n$.
Here $r=0.25,\delta=0.01$, $\epsilon=0.001$, and $R=2\times10^{10}$. The target state $\sigma=|\MC_n\>\<\MC_n|$ is the $n$-qubit magic cluster state with $n=2,3,\ldots,40$. Different system states $\rho$ are generated by applying depolarizing noise to the target state: $\rho =(1-p)\sigma +p\bbone/d$ with $p=\epsilon/(1-d^{-1})$. }
    \label{fig:three_ensembles_cluster}
\end{figure}

Next, we compare the performance of 4-design, Clifford, and Pauli measurements in HPFE based on CRM shadow estimation. Here we choose the $n$-qubit magic cluster state  $|\MC_n\>:=\prod_{i=1}^{n-1}\text{CZ}_{i,i+1}|S_{n,n}\>$ as the target state, and the system state is affected by depolarizing noise. The circuit sample costs $N_U$ tied to the three measurement ensembles are shown in \fref{fig:three_ensembles_cluster}. For 4-design and Clifford  measurements, $N_U$ decrease exponentially with the qubit number $n$ until $N_U=1$, which is expected according to \thsref{thm:4designHPFE}{thm:CliffordHPFEpn} given that $M_2(\MC_n)=M_2(S_{n,n})$ increases linearly with $n$. For Pauli measurements, by contrast, $N_U$ increases exponentially with $n$ for $n\geq 4$. We can expect similar results when $|\MC_n\>$ is replaced by  generic multiqubit pure states, which have nearly maximal entanglement and nonstabilizerness.

In the above example, 4-design measurements are the most efficient. In certain situations, surprisingly, Clifford measurements can achieve a  better scaling behavior with respect to the infidelity, as shown in Appendix C.  According  to {\thsref{thm:4designVstar}{thm:CliffordHPFE}}, intuitively, this peculiar phenomenon would occur when $\lVert\Delta\rVert_2^2=\caO(\epsilon)$, but   $\lVert\Xi_{\Delta,O}\rVert_2^2/d=\caO(\epsilon^2)$. This fact is a blessing for practical quantum information processing since it is much easier to realize Clifford measurements than 4-design measurements.

\emph{Summary}---We introduced a general and highly efficient fidelity estimation protocol by integrating CRM with Clifford-based shadow estimation. To estimate the infidelity $\epsilon$ with multiplicative precision, our protocol requires only $\caO(1/\epsilon)$ circuits instead of $\caO(1/\epsilon^2)$ circuits required in conventional approaches, including standard and THR shadow estimation. Moreover, in practical scenarios affected by depolarizing, Pauli, or certain coherent  noise, our protocol demands only a constant number of circuits, irrespective of the infidelity $\epsilon$ and the qubit number~$n$. For typical  target pure states of large and intermediate quantum systems, only one circuit is required.  For HPFE with $\epsilon\leq 0.001$, our protocol can reduce the circuit sample cost by factors of thousands to millions, which is instrumental to practical applications. In the course of study, we 
derived general analytical formulas for the variances in CRM shadow estimation based on  4-design, Clifford, and Pauli measurements, which are of interest beyond fidelity estimation. Our work marks a substantial step forward in developing  scalable learning protocols for high-precision applications, with broad implications for  quantum information processing. In the future, it would be desirable to generalize our results to qudits and measurement ensembles based on shallow circuits.

\emph{Acknowledgments}---This work is supported by the National Natural Science Foundation
of China (Grant No.~92365202 and No.~12475011), Shanghai Science and Technology Innovation Action Plan (Grant No.~24LZ1400200), Shanghai Municipal Science and Technology Major Project (Grant No.~2019SHZDZX01), and the National Key Research and Development Program of China (Grant No.~2022YFA1404204).

\let\oldaddcontentsline\addcontentsline
\renewcommand{\addcontentsline}[3]{}
\bibliography{ref}

%apsrev4-2.bst 2019-01-14 (MD) hand-edited version of apsrev4-1.bst
%Control: key (0)
%Control: author (8) initials jnrlst
%Control: editor formatted (1) identically to author
%Control: production of article title (0) allowed
%Control: page (0) single
%Control: year (1) truncated
%Control: production of eprint (0) enabled
\begin{thebibliography}{49}%
\makeatletter
\providecommand \@ifxundefined [1]{%
 \@ifx{#1\undefined}
}%
\providecommand \@ifnum [1]{%
 \ifnum #1\expandafter \@firstoftwo
 \else \expandafter \@secondoftwo
 \fi
}%
\providecommand \@ifx [1]{%
 \ifx #1\expandafter \@firstoftwo
 \else \expandafter \@secondoftwo
 \fi
}%
\providecommand \natexlab [1]{#1}%
\providecommand \enquote  [1]{``#1''}%
\providecommand \bibnamefont  [1]{#1}%
\providecommand \bibfnamefont [1]{#1}%
\providecommand \citenamefont [1]{#1}%
\providecommand \href@noop [0]{\@secondoftwo}%
\providecommand \href [0]{\begingroup \@sanitize@url \@href}%
\providecommand \@href[1]{\@@startlink{#1}\@@href}%
\providecommand \@@href[1]{\endgroup#1\@@endlink}%
\providecommand \@sanitize@url [0]{\catcode `\\12\catcode `\$12\catcode
  `\&12\catcode `\#12\catcode `\^12\catcode `\_12\catcode `\%12\relax}%
\providecommand \@@startlink[1]{}%
\providecommand \@@endlink[0]{}%
\providecommand \url  [0]{\begingroup\@sanitize@url \@url }%
\providecommand \@url [1]{\endgroup\@href {#1}{\urlprefix }}%
\providecommand \urlprefix  [0]{URL }%
\providecommand \Eprint [0]{\href }%
\providecommand \doibase [0]{https://doi.org/}%
\providecommand \selectlanguage [0]{\@gobble}%
\providecommand \bibinfo  [0]{\@secondoftwo}%
\providecommand \bibfield  [0]{\@secondoftwo}%
\providecommand \translation [1]{[#1]}%
\providecommand \BibitemOpen [0]{}%
\providecommand \bibitemStop [0]{}%
\providecommand \bibitemNoStop [0]{.\EOS\space}%
\providecommand \EOS [0]{\spacefactor3000\relax}%
\providecommand \BibitemShut  [1]{\csname bibitem#1\endcsname}%
\let\auto@bib@innerbib\@empty
%</preamble>
\bibitem [{\citenamefont {{Eisert}}\ \emph {et~al.}(2020)\citenamefont
  {{Eisert}}, \citenamefont {{Hangleiter}}, \citenamefont {{Walk}},
  \citenamefont {{Roth}}, \citenamefont {{Markham}}, \citenamefont {{Parekh}},
  \citenamefont {{Chabaud}},\ and\ \citenamefont {{Kashefi}}}]{Eisert2020Cert}%
  \BibitemOpen
  \bibfield  {author} {\bibinfo {author} {\bibfnamefont {J.}~\bibnamefont
  {{Eisert}}}, \bibinfo {author} {\bibfnamefont {D.}~\bibnamefont
  {{Hangleiter}}}, \bibinfo {author} {\bibfnamefont {N.}~\bibnamefont
  {{Walk}}}, \bibinfo {author} {\bibfnamefont {I.}~\bibnamefont {{Roth}}},
  \bibinfo {author} {\bibfnamefont {D.}~\bibnamefont {{Markham}}}, \bibinfo
  {author} {\bibfnamefont {R.}~\bibnamefont {{Parekh}}}, \bibinfo {author}
  {\bibfnamefont {U.}~\bibnamefont {{Chabaud}}},\ and\ \bibinfo {author}
  {\bibfnamefont {E.}~\bibnamefont {{Kashefi}}},\ }\bibfield  {title} {\bibinfo
  {title} {{Quantum certification and benchmarking}},\ }\href
  {https://doi.org/10.1038/s42254-020-0186-4} {\bibfield  {journal} {\bibinfo
  {journal} {Nat. Rev. Phys.}\ }\textbf {\bibinfo {volume} {2}},\ \bibinfo
  {pages} {382} (\bibinfo {year} {2020})}\BibitemShut {NoStop}%
\bibitem [{\citenamefont {Kliesch}\ and\ \citenamefont
  {Roth}(2021)}]{Kliesch2021theo}%
  \BibitemOpen
  \bibfield  {author} {\bibinfo {author} {\bibfnamefont {M.}~\bibnamefont
  {Kliesch}}\ and\ \bibinfo {author} {\bibfnamefont {I.}~\bibnamefont {Roth}},\
  }\bibfield  {title} {\bibinfo {title} {Theory of quantum system
  certification},\ }\href {https://doi.org/10.1103/PRXQuantum.2.010201}
  {\bibfield  {journal} {\bibinfo  {journal} {PRX Quantum}\ }\textbf {\bibinfo
  {volume} {2}},\ \bibinfo {pages} {010201} (\bibinfo {year}
  {2021})}\BibitemShut {NoStop}%
\bibitem [{\citenamefont {Elben}\ \emph {et~al.}(2023)\citenamefont {Elben},
  \citenamefont {Flammia}, \citenamefont {Huang}, \citenamefont {Kueng},
  \citenamefont {Preskill}, \citenamefont {Vermersch},\ and\ \citenamefont
  {Zoller}}]{Andreas2023ran}%
  \BibitemOpen
  \bibfield  {author} {\bibinfo {author} {\bibfnamefont {A.}~\bibnamefont
  {Elben}}, \bibinfo {author} {\bibfnamefont {S.~T.}\ \bibnamefont {Flammia}},
  \bibinfo {author} {\bibfnamefont {H.-Y.}\ \bibnamefont {Huang}}, \bibinfo
  {author} {\bibfnamefont {R.}~\bibnamefont {Kueng}}, \bibinfo {author}
  {\bibfnamefont {J.}~\bibnamefont {Preskill}}, \bibinfo {author}
  {\bibfnamefont {B.}~\bibnamefont {Vermersch}},\ and\ \bibinfo {author}
  {\bibfnamefont {P.}~\bibnamefont {Zoller}},\ }\bibfield  {title} {\bibinfo
  {title} {{The randomized measurement toolbox}},\ }\href
  {https://doi.org/10.1038/s42254-022-00535-2} {\bibfield  {journal} {\bibinfo
  {journal} {Nat. Rev. Phys.}\ }\textbf {\bibinfo {volume} {5}},\ \bibinfo
  {pages} {9} (\bibinfo {year} {2023})}\BibitemShut {NoStop}%
\bibitem [{\citenamefont {Gebhart}\ \emph {et~al.}(2023)\citenamefont
  {Gebhart}, \citenamefont {Santagati}, \citenamefont {Gentile}, \citenamefont
  {Gauger}, \citenamefont {Craig}, \citenamefont {Ares}, \citenamefont
  {Banchi}, \citenamefont {Marquardt}, \citenamefont {Pezz{\`e}},\ and\
  \citenamefont {Bonato}}]{gebhart2023learning}%
  \BibitemOpen
  \bibfield  {author} {\bibinfo {author} {\bibfnamefont {V.}~\bibnamefont
  {Gebhart}}, \bibinfo {author} {\bibfnamefont {R.}~\bibnamefont {Santagati}},
  \bibinfo {author} {\bibfnamefont {A.~A.}\ \bibnamefont {Gentile}}, \bibinfo
  {author} {\bibfnamefont {E.~M.}\ \bibnamefont {Gauger}}, \bibinfo {author}
  {\bibfnamefont {D.}~\bibnamefont {Craig}}, \bibinfo {author} {\bibfnamefont
  {N.}~\bibnamefont {Ares}}, \bibinfo {author} {\bibfnamefont {L.}~\bibnamefont
  {Banchi}}, \bibinfo {author} {\bibfnamefont {F.}~\bibnamefont {Marquardt}},
  \bibinfo {author} {\bibfnamefont {L.}~\bibnamefont {Pezz{\`e}}},\ and\
  \bibinfo {author} {\bibfnamefont {C.}~\bibnamefont {Bonato}},\ }\bibfield
  {title} {\bibinfo {title} {Learning quantum systems},\ }\href
  {https://doi.org/10.1038/s42254-022-00552-1} {\bibfield  {journal} {\bibinfo
  {journal} {Nat. Rev. Phys.}\ }\textbf {\bibinfo {volume} {5}},\ \bibinfo
  {pages} {141} (\bibinfo {year} {2023})}\BibitemShut {NoStop}%
\bibitem [{\citenamefont {Hayashi}\ \emph {et~al.}(2006)\citenamefont
  {Hayashi}, \citenamefont {Matsumoto},\ and\ \citenamefont
  {Tsuda}}]{hayashi2006locc}%
  \BibitemOpen
  \bibfield  {author} {\bibinfo {author} {\bibfnamefont {M.}~\bibnamefont
  {Hayashi}}, \bibinfo {author} {\bibfnamefont {K.}~\bibnamefont {Matsumoto}},\
  and\ \bibinfo {author} {\bibfnamefont {Y.}~\bibnamefont {Tsuda}},\ }\bibfield
   {title} {\bibinfo {title} {A study of {LOCC}-detection of a maximally
  entangled state using hypothesis testing},\ }\href@noop {} {\bibfield
  {journal} {\bibinfo  {journal} {J. Phys. A: Math. Gen.}\ }\textbf {\bibinfo
  {volume} {39}},\ \bibinfo {pages} {14427} (\bibinfo {year}
  {2006})}\BibitemShut {NoStop}%
\bibitem [{\citenamefont {G\"uhne}\ \emph {et~al.}(2007)\citenamefont
  {G\"uhne}, \citenamefont {Lu}, \citenamefont {Gao},\ and\ \citenamefont
  {Pan}}]{Guhne2007Tool}%
  \BibitemOpen
  \bibfield  {author} {\bibinfo {author} {\bibfnamefont {O.}~\bibnamefont
  {G\"uhne}}, \bibinfo {author} {\bibfnamefont {C.-Y.}\ \bibnamefont {Lu}},
  \bibinfo {author} {\bibfnamefont {W.-B.}\ \bibnamefont {Gao}},\ and\ \bibinfo
  {author} {\bibfnamefont {J.-W.}\ \bibnamefont {Pan}},\ }\bibfield  {title}
  {\bibinfo {title} {Toolbox for entanglement detection and fidelity
  estimation},\ }\href {https://doi.org/10.1103/PhysRevA.76.030305} {\bibfield
  {journal} {\bibinfo  {journal} {Phys. Rev. A}\ }\textbf {\bibinfo {volume}
  {76}},\ \bibinfo {pages} {030305} (\bibinfo {year} {2007})}\BibitemShut
  {NoStop}%
\bibitem [{\citenamefont {Flammia}\ and\ \citenamefont
  {Liu}(2011)}]{Flammia2011Direct}%
  \BibitemOpen
  \bibfield  {author} {\bibinfo {author} {\bibfnamefont {S.~T.}\ \bibnamefont
  {Flammia}}\ and\ \bibinfo {author} {\bibfnamefont {Y.-K.}\ \bibnamefont
  {Liu}},\ }\bibfield  {title} {\bibinfo {title} {Direct fidelity estimation
  from few {Pauli} measurements},\ }\href
  {https://doi.org/10.1103/PhysRevLett.106.230501} {\bibfield  {journal}
  {\bibinfo  {journal} {Phys. Rev. Lett.}\ }\textbf {\bibinfo {volume} {106}},\
  \bibinfo {pages} {230501} (\bibinfo {year} {2011})}\BibitemShut {NoStop}%
\bibitem [{\citenamefont {da~Silva}\ \emph {et~al.}(2011)\citenamefont
  {da~Silva}, \citenamefont {Landon-Cardinal},\ and\ \citenamefont
  {Poulin}}]{Silva2011prac}%
  \BibitemOpen
  \bibfield  {author} {\bibinfo {author} {\bibfnamefont {M.~P.}\ \bibnamefont
  {da~Silva}}, \bibinfo {author} {\bibfnamefont {O.}~\bibnamefont
  {Landon-Cardinal}},\ and\ \bibinfo {author} {\bibfnamefont {D.}~\bibnamefont
  {Poulin}},\ }\bibfield  {title} {\bibinfo {title} {Practical characterization
  of quantum devices without tomography},\ }\href
  {https://doi.org/10.1103/PhysRevLett.107.210404} {\bibfield  {journal}
  {\bibinfo  {journal} {Phys. Rev. Lett.}\ }\textbf {\bibinfo {volume} {107}},\
  \bibinfo {pages} {210404} (\bibinfo {year} {2011})}\BibitemShut {NoStop}%
\bibitem [{\citenamefont {Pallister}\ \emph {et~al.}(2018)\citenamefont
  {Pallister}, \citenamefont {Linden},\ and\ \citenamefont
  {Montanaro}}]{pallister2018opt}%
  \BibitemOpen
  \bibfield  {author} {\bibinfo {author} {\bibfnamefont {S.}~\bibnamefont
  {Pallister}}, \bibinfo {author} {\bibfnamefont {N.}~\bibnamefont {Linden}},\
  and\ \bibinfo {author} {\bibfnamefont {A.}~\bibnamefont {Montanaro}},\
  }\bibfield  {title} {\bibinfo {title} {Optimal verification of entangled
  states with local measurements},\ }\href
  {https://doi.org/10.1103/PhysRevLett.120.170502} {\bibfield  {journal}
  {\bibinfo  {journal} {Phys. Rev. Lett.}\ }\textbf {\bibinfo {volume} {120}},\
  \bibinfo {pages} {170502} (\bibinfo {year} {2018})}\BibitemShut {NoStop}%
\bibitem [{\citenamefont {Zhu}\ and\ \citenamefont
  {Hayashi}(2019{\natexlab{a}})}]{Zhu2019opt}%
  \BibitemOpen
  \bibfield  {author} {\bibinfo {author} {\bibfnamefont {H.}~\bibnamefont
  {Zhu}}\ and\ \bibinfo {author} {\bibfnamefont {M.}~\bibnamefont {Hayashi}},\
  }\bibfield  {title} {\bibinfo {title} {Optimal verification and fidelity
  estimation of maximally entangled states},\ }\href
  {https://doi.org/10.1103/PhysRevA.99.052346} {\bibfield  {journal} {\bibinfo
  {journal} {Phys. Rev. A}\ }\textbf {\bibinfo {volume} {99}},\ \bibinfo
  {pages} {052346} (\bibinfo {year} {2019}{\natexlab{a}})}\BibitemShut
  {NoStop}%
\bibitem [{\citenamefont {Zhu}\ and\ \citenamefont
  {Hayashi}(2019{\natexlab{b}})}]{zhu2019general}%
  \BibitemOpen
  \bibfield  {author} {\bibinfo {author} {\bibfnamefont {H.}~\bibnamefont
  {Zhu}}\ and\ \bibinfo {author} {\bibfnamefont {M.}~\bibnamefont {Hayashi}},\
  }\bibfield  {title} {\bibinfo {title} {General framework for verifying pure
  quantum states in the adversarial scenario},\ }\href
  {https://doi.org/10.1103/PhysRevA.100.062335} {\bibfield  {journal} {\bibinfo
   {journal} {Phys. Rev. A}\ }\textbf {\bibinfo {volume} {100}},\ \bibinfo
  {pages} {062335} (\bibinfo {year} {2019}{\natexlab{b}})}\BibitemShut
  {NoStop}%
\bibitem [{\citenamefont {Li}\ \emph {et~al.}(2019)\citenamefont {Li},
  \citenamefont {Han},\ and\ \citenamefont {Zhu}}]{Li2019eff}%
  \BibitemOpen
  \bibfield  {author} {\bibinfo {author} {\bibfnamefont {Z.}~\bibnamefont
  {Li}}, \bibinfo {author} {\bibfnamefont {Y.-G.}\ \bibnamefont {Han}},\ and\
  \bibinfo {author} {\bibfnamefont {H.}~\bibnamefont {Zhu}},\ }\bibfield
  {title} {\bibinfo {title} {Efficient verification of bipartite pure states},\
  }\href {https://doi.org/10.1103/PhysRevA.100.032316} {\bibfield  {journal}
  {\bibinfo  {journal} {Phys. Rev. A}\ }\textbf {\bibinfo {volume} {100}},\
  \bibinfo {pages} {032316} (\bibinfo {year} {2019})}\BibitemShut {NoStop}%
\bibitem [{\citenamefont {Li}\ \emph {et~al.}(2020)\citenamefont {Li},
  \citenamefont {Han},\ and\ \citenamefont {Zhu}}]{li2020opt}%
  \BibitemOpen
  \bibfield  {author} {\bibinfo {author} {\bibfnamefont {Z.}~\bibnamefont
  {Li}}, \bibinfo {author} {\bibfnamefont {Y.-G.}\ \bibnamefont {Han}},\ and\
  \bibinfo {author} {\bibfnamefont {H.}~\bibnamefont {Zhu}},\ }\bibfield
  {title} {\bibinfo {title} {Optimal verification of
  {Greenberger-Horne-Zeilinger} states},\ }\href
  {https://doi.org/10.1103/PhysRevApplied.13.054002} {\bibfield  {journal}
  {\bibinfo  {journal} {Phys. Rev. Appl.}\ }\textbf {\bibinfo {volume} {13}},\
  \bibinfo {pages} {054002} (\bibinfo {year} {2020})}\BibitemShut {NoStop}%
\bibitem [{\citenamefont {Cerezo}\ \emph {et~al.}(2020)\citenamefont {Cerezo},
  \citenamefont {Poremba}, \citenamefont {Cincio},\ and\ \citenamefont
  {Coles}}]{Cerezo2020var}%
  \BibitemOpen
  \bibfield  {author} {\bibinfo {author} {\bibfnamefont {M.}~\bibnamefont
  {Cerezo}}, \bibinfo {author} {\bibfnamefont {A.}~\bibnamefont {Poremba}},
  \bibinfo {author} {\bibfnamefont {L.}~\bibnamefont {Cincio}},\ and\ \bibinfo
  {author} {\bibfnamefont {P.~J.}\ \bibnamefont {Coles}},\ }\bibfield  {title}
  {\bibinfo {title} {Variational {Q}uantum {F}idelity {E}stimation},\ }\href
  {https://doi.org/10.22331/q-2020-03-26-248} {\bibfield  {journal} {\bibinfo
  {journal} {{Quantum}}\ }\textbf {\bibinfo {volume} {4}},\ \bibinfo {pages}
  {248} (\bibinfo {year} {2020})}\BibitemShut {NoStop}%
\bibitem [{\citenamefont {Zhang}\ \emph {et~al.}(2021)\citenamefont {Zhang},
  \citenamefont {Luo}, \citenamefont {Wen}, \citenamefont {Feng}, \citenamefont
  {Pang}, \citenamefont {Luo},\ and\ \citenamefont {Zhou}}]{Zhang2021direct}%
  \BibitemOpen
  \bibfield  {author} {\bibinfo {author} {\bibfnamefont {X.}~\bibnamefont
  {Zhang}}, \bibinfo {author} {\bibfnamefont {M.}~\bibnamefont {Luo}}, \bibinfo
  {author} {\bibfnamefont {Z.}~\bibnamefont {Wen}}, \bibinfo {author}
  {\bibfnamefont {Q.}~\bibnamefont {Feng}}, \bibinfo {author} {\bibfnamefont
  {S.}~\bibnamefont {Pang}}, \bibinfo {author} {\bibfnamefont {W.}~\bibnamefont
  {Luo}},\ and\ \bibinfo {author} {\bibfnamefont {X.}~\bibnamefont {Zhou}},\
  }\bibfield  {title} {\bibinfo {title} {Direct fidelity estimation of quantum
  states using machine learning},\ }\href
  {https://doi.org/10.1103/PhysRevLett.127.130503} {\bibfield  {journal}
  {\bibinfo  {journal} {Phys. Rev. Lett.}\ }\textbf {\bibinfo {volume} {127}},\
  \bibinfo {pages} {130503} (\bibinfo {year} {2021})}\BibitemShut {NoStop}%
\bibitem [{\citenamefont {Yu}\ \emph {et~al.}(2022)\citenamefont {Yu},
  \citenamefont {Shang},\ and\ \citenamefont {G{\"u}hne}}]{Yu2021stat}%
  \BibitemOpen
  \bibfield  {author} {\bibinfo {author} {\bibfnamefont {X.-D.}\ \bibnamefont
  {Yu}}, \bibinfo {author} {\bibfnamefont {J.}~\bibnamefont {Shang}},\ and\
  \bibinfo {author} {\bibfnamefont {O.}~\bibnamefont {G{\"u}hne}},\ }\bibfield
  {title} {\bibinfo {title} {Statistical methods for quantum state verification
  and fidelity estimation},\ }\href
  {https://onlinelibrary.wiley.com/doi/full/10.1002/qute.202100126} {\bibfield
  {journal} {\bibinfo  {journal} {Adv. Quantum Technol.}\ }\textbf {\bibinfo
  {volume} {5}},\ \bibinfo {pages} {2100126} (\bibinfo {year}
  {2022})}\BibitemShut {NoStop}%
\bibitem [{\citenamefont {Wang}\ \emph {et~al.}(2022)\citenamefont {Wang},
  \citenamefont {Zhang}, \citenamefont {Chen}, \citenamefont {Guan},
  \citenamefont {Fang}, \citenamefont {Liu},\ and\ \citenamefont
  {Ying}}]{wang2022quantum}%
  \BibitemOpen
  \bibfield  {author} {\bibinfo {author} {\bibfnamefont {Q.}~\bibnamefont
  {Wang}}, \bibinfo {author} {\bibfnamefont {Z.}~\bibnamefont {Zhang}},
  \bibinfo {author} {\bibfnamefont {K.}~\bibnamefont {Chen}}, \bibinfo {author}
  {\bibfnamefont {J.}~\bibnamefont {Guan}}, \bibinfo {author} {\bibfnamefont
  {W.}~\bibnamefont {Fang}}, \bibinfo {author} {\bibfnamefont {J.}~\bibnamefont
  {Liu}},\ and\ \bibinfo {author} {\bibfnamefont {M.}~\bibnamefont {Ying}},\
  }\bibfield  {title} {\bibinfo {title} {Quantum algorithm for fidelity
  estimation},\ }\href@noop {} {\bibfield  {journal} {\bibinfo  {journal} {IEEE
  Trans. Inf. Theory}\ }\textbf {\bibinfo {volume} {69}},\ \bibinfo {pages}
  {273} (\bibinfo {year} {2022})}\BibitemShut {NoStop}%
\bibitem [{\citenamefont {Sun}\ \emph {et~al.}(2025)\citenamefont {Sun},
  \citenamefont {Waite}, \citenamefont {Bremner},\ and\ \citenamefont
  {Ferrie}}]{sun2025efficient}%
  \BibitemOpen
  \bibfield  {author} {\bibinfo {author} {\bibfnamefont {M.}~\bibnamefont
  {Sun}}, \bibinfo {author} {\bibfnamefont {G.}~\bibnamefont {Waite}}, \bibinfo
  {author} {\bibfnamefont {M.}~\bibnamefont {Bremner}},\ and\ \bibinfo {author}
  {\bibfnamefont {C.}~\bibnamefont {Ferrie}},\ }\href
  {https://arxiv.org/abs/2510.08155} {\bibinfo {title} {Efficient fidelity
  estimation with few local {Pauli} measurements}} (\bibinfo {year} {2025}),\
  \Eprint {https://arxiv.org/abs/2510.08155} {arXiv:2510.08155} \BibitemShut
  {NoStop}%
\bibitem [{\citenamefont {Fang}\ and\ \citenamefont
  {Wang}(2025)}]{fang2025Optimal}%
  \BibitemOpen
  \bibfield  {author} {\bibinfo {author} {\bibfnamefont {W.}~\bibnamefont
  {Fang}}\ and\ \bibinfo {author} {\bibfnamefont {Q.}~\bibnamefont {Wang}},\
  }\bibfield  {title} {\bibinfo {title} {{Optimal Quantum Algorithm for
  Estimating Fidelity to a Pure State}},\ }in\ \href
  {https://doi.org/10.4230/LIPIcs.ESA.2025.4} {\emph {\bibinfo {booktitle}
  {33rd Annual European Symposium on Algorithms (ESA 2025)}}},\ Vol.\ \bibinfo
  {volume} {351}\ (\bibinfo  {publisher} {Schloss Dagstuhl -- Leibniz-Zentrum
  für Informatik},\ \bibinfo {year} {2025})\ pp.\ \bibinfo {pages}
  {4:1--4:12}\BibitemShut {NoStop}%
\bibitem [{\citenamefont {Zhang}\ \emph {et~al.}(2020)\citenamefont {Zhang},
  \citenamefont {Zhang}, \citenamefont {Chen}, \citenamefont {Peng},
  \citenamefont {Xu}, \citenamefont {Yin}, \citenamefont {Yu}, \citenamefont
  {Ye}, \citenamefont {Han}, \citenamefont {Xu}, \citenamefont {Chen},
  \citenamefont {Li},\ and\ \citenamefont {Guo}}]{zhang2020exp}%
  \BibitemOpen
  \bibfield  {author} {\bibinfo {author} {\bibfnamefont {W.-H.}\ \bibnamefont
  {Zhang}}, \bibinfo {author} {\bibfnamefont {C.}~\bibnamefont {Zhang}},
  \bibinfo {author} {\bibfnamefont {Z.}~\bibnamefont {Chen}}, \bibinfo {author}
  {\bibfnamefont {X.-X.}\ \bibnamefont {Peng}}, \bibinfo {author}
  {\bibfnamefont {X.-Y.}\ \bibnamefont {Xu}}, \bibinfo {author} {\bibfnamefont
  {P.}~\bibnamefont {Yin}}, \bibinfo {author} {\bibfnamefont {S.}~\bibnamefont
  {Yu}}, \bibinfo {author} {\bibfnamefont {X.-J.}\ \bibnamefont {Ye}}, \bibinfo
  {author} {\bibfnamefont {Y.-J.}\ \bibnamefont {Han}}, \bibinfo {author}
  {\bibfnamefont {J.-S.}\ \bibnamefont {Xu}}, \bibinfo {author} {\bibfnamefont
  {G.}~\bibnamefont {Chen}}, \bibinfo {author} {\bibfnamefont {C.-F.}\
  \bibnamefont {Li}},\ and\ \bibinfo {author} {\bibfnamefont {G.-C.}\
  \bibnamefont {Guo}},\ }\bibfield  {title} {\bibinfo {title} {Experimental
  optimal verification of entangled states using local measurements},\ }\href
  {https://doi.org/10.1103/PhysRevLett.125.030506} {\bibfield  {journal}
  {\bibinfo  {journal} {Phys. Rev. Lett.}\ }\textbf {\bibinfo {volume} {125}},\
  \bibinfo {pages} {030506} (\bibinfo {year} {2020})}\BibitemShut {NoStop}%
\bibitem [{\citenamefont {Jiang}\ \emph {et~al.}(2020)\citenamefont {Jiang},
  \citenamefont {Wang}, \citenamefont {Qian}, \citenamefont {Chen},
  \citenamefont {Chen}, \citenamefont {Lu}, \citenamefont {Xia}, \citenamefont
  {Song}, \citenamefont {Zhu},\ and\ \citenamefont {Ma}}]{jiang2020towards}%
  \BibitemOpen
  \bibfield  {author} {\bibinfo {author} {\bibfnamefont {X.}~\bibnamefont
  {Jiang}}, \bibinfo {author} {\bibfnamefont {K.}~\bibnamefont {Wang}},
  \bibinfo {author} {\bibfnamefont {K.}~\bibnamefont {Qian}}, \bibinfo {author}
  {\bibfnamefont {Z.}~\bibnamefont {Chen}}, \bibinfo {author} {\bibfnamefont
  {Z.}~\bibnamefont {Chen}}, \bibinfo {author} {\bibfnamefont {L.}~\bibnamefont
  {Lu}}, \bibinfo {author} {\bibfnamefont {L.}~\bibnamefont {Xia}}, \bibinfo
  {author} {\bibfnamefont {F.}~\bibnamefont {Song}}, \bibinfo {author}
  {\bibfnamefont {S.}~\bibnamefont {Zhu}},\ and\ \bibinfo {author}
  {\bibfnamefont {X.}~\bibnamefont {Ma}},\ }\bibfield  {title} {\bibinfo
  {title} {Towards the standardization of quantum state verification using
  optimal strategies},\ }\href {https://doi.org/10.1038/s41534-020-00317-7}
  {\bibfield  {journal} {\bibinfo  {journal} {npj Quantum Inf.}\ }\textbf
  {\bibinfo {volume} {6}},\ \bibinfo {pages} {90} (\bibinfo {year}
  {2020})}\BibitemShut {NoStop}%
\bibitem [{\citenamefont {Qin}\ \emph {et~al.}(2024)\citenamefont {Qin},
  \citenamefont {Che}, \citenamefont {Wei}, \citenamefont {Xu}, \citenamefont
  {Huang},\ and\ \citenamefont {Xin}}]{Qin2024exp}%
  \BibitemOpen
  \bibfield  {author} {\bibinfo {author} {\bibfnamefont {H.}~\bibnamefont
  {Qin}}, \bibinfo {author} {\bibfnamefont {L.}~\bibnamefont {Che}}, \bibinfo
  {author} {\bibfnamefont {C.}~\bibnamefont {Wei}}, \bibinfo {author}
  {\bibfnamefont {F.}~\bibnamefont {Xu}}, \bibinfo {author} {\bibfnamefont
  {Y.}~\bibnamefont {Huang}},\ and\ \bibinfo {author} {\bibfnamefont
  {T.}~\bibnamefont {Xin}},\ }\bibfield  {title} {\bibinfo {title}
  {Experimental direct quantum fidelity learning via a data-driven approach},\
  }\href {https://doi.org/10.1103/PhysRevLett.132.190801} {\bibfield  {journal}
  {\bibinfo  {journal} {Phys. Rev. Lett.}\ }\textbf {\bibinfo {volume} {132}},\
  \bibinfo {pages} {190801} (\bibinfo {year} {2024})}\BibitemShut {NoStop}%
\bibitem [{\citenamefont {Huang}\ \emph {et~al.}(2020)\citenamefont {Huang},
  \citenamefont {Kueng},\ and\ \citenamefont {Preskill}}]{huang2020pred}%
  \BibitemOpen
  \bibfield  {author} {\bibinfo {author} {\bibfnamefont {H.-Y.}\ \bibnamefont
  {Huang}}, \bibinfo {author} {\bibfnamefont {R.}~\bibnamefont {Kueng}},\ and\
  \bibinfo {author} {\bibfnamefont {J.}~\bibnamefont {Preskill}},\ }\bibfield
  {title} {\bibinfo {title} {Predicting many properties of a quantum system
  from very few measurements},\ }\href
  {https://doi.org/https://doi.org/10.1038/s41567-020-0932-7} {\bibfield
  {journal} {\bibinfo  {journal} {Nat. Phys.}\ }\textbf {\bibinfo {volume}
  {16}},\ \bibinfo {pages} {1050} (\bibinfo {year} {2020})}\BibitemShut
  {NoStop}%
\bibitem [{\citenamefont {Mao}\ \emph {et~al.}(2025)\citenamefont {Mao},
  \citenamefont {Yi},\ and\ \citenamefont {Zhu}}]{Mao2025qudit}%
  \BibitemOpen
  \bibfield  {author} {\bibinfo {author} {\bibfnamefont {C.}~\bibnamefont
  {Mao}}, \bibinfo {author} {\bibfnamefont {C.}~\bibnamefont {Yi}},\ and\
  \bibinfo {author} {\bibfnamefont {H.}~\bibnamefont {Zhu}},\ }\bibfield
  {title} {\bibinfo {title} {Qudit shadow estimation based on the {Clifford}
  group and the power of a single magic gate},\ }\href
  {https://doi.org/10.1103/PhysRevLett.134.160801} {\bibfield  {journal}
  {\bibinfo  {journal} {Phys. Rev. Lett.}\ }\textbf {\bibinfo {volume} {134}},\
  \bibinfo {pages} {160801} (\bibinfo {year} {2025})}\BibitemShut {NoStop}%
\bibitem [{\citenamefont {Struchalin}\ \emph {et~al.}(2021)\citenamefont
  {Struchalin}, \citenamefont {Zagorovskii}, \citenamefont {Kovlakov},
  \citenamefont {Straupe},\ and\ \citenamefont {Kulik}}]{struchalin2021exp}%
  \BibitemOpen
  \bibfield  {author} {\bibinfo {author} {\bibfnamefont {G.~I.}\ \bibnamefont
  {Struchalin}}, \bibinfo {author} {\bibfnamefont {Y.~A.}\ \bibnamefont
  {Zagorovskii}}, \bibinfo {author} {\bibfnamefont {E.~V.}\ \bibnamefont
  {Kovlakov}}, \bibinfo {author} {\bibfnamefont {S.~S.}\ \bibnamefont
  {Straupe}},\ and\ \bibinfo {author} {\bibfnamefont {S.~P.}\ \bibnamefont
  {Kulik}},\ }\bibfield  {title} {\bibinfo {title} {Experimental estimation of
  quantum state properties from classical shadows},\ }\href
  {https://doi.org/10.1103/PRXQuantum.2.010307} {\bibfield  {journal} {\bibinfo
   {journal} {PRX Quantum}\ }\textbf {\bibinfo {volume} {2}},\ \bibinfo {pages}
  {010307} (\bibinfo {year} {2021})}\BibitemShut {NoStop}%
\bibitem [{\citenamefont {Huggins}\ \emph {et~al.}(2022)\citenamefont
  {Huggins}, \citenamefont {O’Gorman}, \citenamefont {Rubin}, \citenamefont
  {Reichman}, \citenamefont {Babbush},\ and\ \citenamefont
  {Lee}}]{Huggins2022Unb}%
  \BibitemOpen
  \bibfield  {author} {\bibinfo {author} {\bibfnamefont {W.~J.}\ \bibnamefont
  {Huggins}}, \bibinfo {author} {\bibfnamefont {B.~A.}\ \bibnamefont
  {O’Gorman}}, \bibinfo {author} {\bibfnamefont {N.~C.}\ \bibnamefont
  {Rubin}}, \bibinfo {author} {\bibfnamefont {D.~R.}\ \bibnamefont {Reichman}},
  \bibinfo {author} {\bibfnamefont {R.}~\bibnamefont {Babbush}},\ and\ \bibinfo
  {author} {\bibfnamefont {J.}~\bibnamefont {Lee}},\ }\bibfield  {title}
  {\bibinfo {title} {{Unbiasing fermionic quantum Monte Carlo with a quantum
  computer}},\ }\href {https://doi.org/10.1038/s41586-021-04351-z} {\bibfield
  {journal} {\bibinfo  {journal} {Nature}\ }\textbf {\bibinfo {volume} {603}},\
  \bibinfo {pages} {416–420} (\bibinfo {year} {2022})}\BibitemShut {NoStop}%
\bibitem [{\citenamefont {Helsen}\ and\ \citenamefont
  {Walter}(2023)}]{helsen2023thrifty}%
  \BibitemOpen
  \bibfield  {author} {\bibinfo {author} {\bibfnamefont {J.}~\bibnamefont
  {Helsen}}\ and\ \bibinfo {author} {\bibfnamefont {M.}~\bibnamefont
  {Walter}},\ }\bibfield  {title} {\bibinfo {title} {Thrifty shadow estimation:
  Reusing quantum circuits and bounding tails},\ }\href
  {https://doi.org/10.1103/PhysRevLett.131.240602} {\bibfield  {journal}
  {\bibinfo  {journal} {Phys. Rev. Lett.}\ }\textbf {\bibinfo {volume} {131}},\
  \bibinfo {pages} {240602} (\bibinfo {year} {2023})}\BibitemShut {NoStop}%
\bibitem [{\citenamefont {Zhou}\ and\ \citenamefont
  {Liu}(2023)}]{zhou2023perf}%
  \BibitemOpen
  \bibfield  {author} {\bibinfo {author} {\bibfnamefont {Y.}~\bibnamefont
  {Zhou}}\ and\ \bibinfo {author} {\bibfnamefont {Q.}~\bibnamefont {Liu}},\
  }\bibfield  {title} {\bibinfo {title} {Performance analysis of multi-shot
  shadow estimation},\ }\href {https://doi.org/10.22331/q-2023-06-29-1044}
  {\bibfield  {journal} {\bibinfo  {journal} {{Quantum}}\ }\textbf {\bibinfo
  {volume} {7}},\ \bibinfo {pages} {1044} (\bibinfo {year} {2023})}\BibitemShut
  {NoStop}%
\bibitem [{\citenamefont {Chen}\ and\ \citenamefont
  {Zhu}(2024)}]{chen2024nonstab}%
  \BibitemOpen
  \bibfield  {author} {\bibinfo {author} {\bibfnamefont {D.}~\bibnamefont
  {Chen}}\ and\ \bibinfo {author} {\bibfnamefont {H.}~\bibnamefont {Zhu}},\
  }\href {https://arxiv.org/abs/2410.23977} {\bibinfo {title}
  {Nonstabilizerness enhances thrifty shadow estimation}} (\bibinfo {year}
  {2024}),\ \Eprint {https://arxiv.org/abs/2410.23977} {arXiv:2410.23977}
  \BibitemShut {NoStop}%
\bibitem [{\citenamefont {Garratt}\ and\ \citenamefont
  {Altman}(2024)}]{garratt2023prob}%
  \BibitemOpen
  \bibfield  {author} {\bibinfo {author} {\bibfnamefont {S.~J.}\ \bibnamefont
  {Garratt}}\ and\ \bibinfo {author} {\bibfnamefont {E.}~\bibnamefont
  {Altman}},\ }\bibfield  {title} {\bibinfo {title} {Probing postmeasurement
  entanglement without postselection},\ }\href
  {https://doi.org/10.1103/PRXQuantum.5.030311} {\bibfield  {journal} {\bibinfo
   {journal} {PRX Quantum}\ }\textbf {\bibinfo {volume} {5}},\ \bibinfo {pages}
  {030311} (\bibinfo {year} {2024})}\BibitemShut {NoStop}%
\bibitem [{\citenamefont {Vermersch}\ \emph {et~al.}(2024)\citenamefont
  {Vermersch}, \citenamefont {Rath}, \citenamefont {Sundar}, \citenamefont
  {Branciard}, \citenamefont {Preskill},\ and\ \citenamefont
  {Elben}}]{vermersch2024enhanced}%
  \BibitemOpen
  \bibfield  {author} {\bibinfo {author} {\bibfnamefont {B.}~\bibnamefont
  {Vermersch}}, \bibinfo {author} {\bibfnamefont {A.}~\bibnamefont {Rath}},
  \bibinfo {author} {\bibfnamefont {B.}~\bibnamefont {Sundar}}, \bibinfo
  {author} {\bibfnamefont {C.}~\bibnamefont {Branciard}}, \bibinfo {author}
  {\bibfnamefont {J.}~\bibnamefont {Preskill}},\ and\ \bibinfo {author}
  {\bibfnamefont {A.}~\bibnamefont {Elben}},\ }\bibfield  {title} {\bibinfo
  {title} {Enhanced estimation of quantum properties with common randomized
  measurements},\ }\href {https://doi.org/10.1103/PRXQuantum.5.010352}
  {\bibfield  {journal} {\bibinfo  {journal} {PRX Quantum}\ }\textbf {\bibinfo
  {volume} {5}},\ \bibinfo {pages} {010352} (\bibinfo {year}
  {2024})}\BibitemShut {NoStop}%
\bibitem [{\citenamefont {Vitale}\ \emph {et~al.}(2024)\citenamefont {Vitale},
  \citenamefont {Rath}, \citenamefont {Jurcevic}, \citenamefont {Elben},
  \citenamefont {Branciard},\ and\ \citenamefont {Vermersch}}]{Vitale2024rob}%
  \BibitemOpen
  \bibfield  {author} {\bibinfo {author} {\bibfnamefont {V.}~\bibnamefont
  {Vitale}}, \bibinfo {author} {\bibfnamefont {A.}~\bibnamefont {Rath}},
  \bibinfo {author} {\bibfnamefont {P.}~\bibnamefont {Jurcevic}}, \bibinfo
  {author} {\bibfnamefont {A.}~\bibnamefont {Elben}}, \bibinfo {author}
  {\bibfnamefont {C.}~\bibnamefont {Branciard}},\ and\ \bibinfo {author}
  {\bibfnamefont {B.}~\bibnamefont {Vermersch}},\ }\bibfield  {title} {\bibinfo
  {title} {Robust estimation of the quantum {Fisher} information on a quantum
  processor},\ }\href {https://doi.org/10.1103/PRXQuantum.5.030338} {\bibfield
  {journal} {\bibinfo  {journal} {PRX Quantum}\ }\textbf {\bibinfo {volume}
  {5}},\ \bibinfo {pages} {030338} (\bibinfo {year} {2024})}\BibitemShut
  {NoStop}%
\bibitem [{\citenamefont {Lee}\ \emph {et~al.}(2025)\citenamefont {Lee},
  \citenamefont {Yuan}, \citenamefont {Chen}, \citenamefont {Tsubouchi},\ and\
  \citenamefont {Jiang}}]{lee2025efficient}%
  \BibitemOpen
  \bibfield  {author} {\bibinfo {author} {\bibfnamefont {S.-u.}\ \bibnamefont
  {Lee}}, \bibinfo {author} {\bibfnamefont {M.}~\bibnamefont {Yuan}}, \bibinfo
  {author} {\bibfnamefont {S.}~\bibnamefont {Chen}}, \bibinfo {author}
  {\bibfnamefont {K.}~\bibnamefont {Tsubouchi}},\ and\ \bibinfo {author}
  {\bibfnamefont {L.}~\bibnamefont {Jiang}},\ }\href
  {https://arxiv.org/abs/2505.09687} {\bibinfo {title} {Efficient benchmarking
  of logical magic state}} (\bibinfo {year} {2025}),\ \Eprint
  {https://arxiv.org/abs/2505.09687} {arXiv:2505.09687} \BibitemShut {NoStop}%
\bibitem [{\citenamefont {Collins}\ and\ \citenamefont
  {{\'S}niady}(2006)}]{collins2006inte}%
  \BibitemOpen
  \bibfield  {author} {\bibinfo {author} {\bibfnamefont {B.}~\bibnamefont
  {Collins}}\ and\ \bibinfo {author} {\bibfnamefont {P.}~\bibnamefont
  {{\'S}niady}},\ }\bibfield  {title} {\bibinfo {title} {Integration with
  respect to the {H}aar measure on unitary, orthogonal and symplectic group},\
  }\href {https://doi.org/https://doi.org/10.1007/s00220-006-1554-3} {\bibfield
   {journal} {\bibinfo  {journal} {Commun. Math. Phys.}\ }\textbf {\bibinfo
  {volume} {264}},\ \bibinfo {pages} {773} (\bibinfo {year}
  {2006})}\BibitemShut {NoStop}%
\bibitem [{\citenamefont {Roth}\ \emph {et~al.}(2018)\citenamefont {Roth},
  \citenamefont {Kueng}, \citenamefont {Kimmel}, \citenamefont {Liu},
  \citenamefont {Gross}, \citenamefont {Eisert},\ and\ \citenamefont
  {Kliesch}}]{roth2018recover}%
  \BibitemOpen
  \bibfield  {author} {\bibinfo {author} {\bibfnamefont {I.}~\bibnamefont
  {Roth}}, \bibinfo {author} {\bibfnamefont {R.}~\bibnamefont {Kueng}},
  \bibinfo {author} {\bibfnamefont {S.}~\bibnamefont {Kimmel}}, \bibinfo
  {author} {\bibfnamefont {Y.-K.}\ \bibnamefont {Liu}}, \bibinfo {author}
  {\bibfnamefont {D.}~\bibnamefont {Gross}}, \bibinfo {author} {\bibfnamefont
  {J.}~\bibnamefont {Eisert}},\ and\ \bibinfo {author} {\bibfnamefont
  {M.}~\bibnamefont {Kliesch}},\ }\bibfield  {title} {\bibinfo {title}
  {Recovering quantum gates from few average gate fidelities},\ }\href
  {https://doi.org/10.1103/PhysRevLett.121.170502} {\bibfield  {journal}
  {\bibinfo  {journal} {Phys. Rev. Lett.}\ }\textbf {\bibinfo {volume} {121}},\
  \bibinfo {pages} {170502} (\bibinfo {year} {2018})}\BibitemShut {NoStop}%
\bibitem [{\citenamefont {Nielsen}\ and\ \citenamefont
  {Chuang}(2010)}]{nielsen2020QCQI}%
  \BibitemOpen
  \bibfield  {author} {\bibinfo {author} {\bibfnamefont {M.~A.}\ \bibnamefont
  {Nielsen}}\ and\ \bibinfo {author} {\bibfnamefont {I.~L.}\ \bibnamefont
  {Chuang}},\ }\href
  {https://www.cambridge.org/highereducation/books/quantum-computation-and-quantum-information/01E10196D0A682A6AEFFEA52D53BE9AE#overview}
  {\emph {\bibinfo {title} {Quantum Computation and Quantum Information: 10th
  Anniversary Edition}}}\ (\bibinfo  {publisher} {Cambridge University Press},\
  \bibinfo {year} {2010})\BibitemShut {NoStop}%
\bibitem [{\citenamefont {Pfeuty}(1970)}]{Pfeuty1970}%
  \BibitemOpen
  \bibfield  {author} {\bibinfo {author} {\bibfnamefont {P.}~\bibnamefont
  {Pfeuty}},\ }\bibfield  {title} {\bibinfo {title} {The one-dimensional
  {I}sing model with a transverse field},\ }\href
  {https://doi.org/10.1016/0003-4916(70)90270-8} {\bibfield  {journal}
  {\bibinfo  {journal} {Ann. Phys.}\ }\textbf {\bibinfo {volume} {57}},\
  \bibinfo {pages} {79} (\bibinfo {year} {1970})}\BibitemShut {NoStop}%
\bibitem [{\citenamefont {Sachdev}(2007)}]{sachdev2007quantum}%
  \BibitemOpen
  \bibfield  {author} {\bibinfo {author} {\bibfnamefont {S.}~\bibnamefont
  {Sachdev}},\ }\bibfield  {title} {\bibinfo {title} {Quantum phase
  transitions},\ }\href@noop {} {\bibfield  {journal} {\bibinfo  {journal}
  {Handb. Magn. Adv. Magn. Mater.}\ }\textbf {\bibinfo {volume} {2}},\ \bibinfo
  {pages} {1} (\bibinfo {year} {2007})}\BibitemShut {NoStop}%
\bibitem [{\citenamefont {Oliviero}\ \emph {et~al.}(2022)\citenamefont
  {Oliviero}, \citenamefont {Leone},\ and\ \citenamefont
  {Hamma}}]{Oliviero2022Magic}%
  \BibitemOpen
  \bibfield  {author} {\bibinfo {author} {\bibfnamefont {S.~F.~E.}\
  \bibnamefont {Oliviero}}, \bibinfo {author} {\bibfnamefont {L.}~\bibnamefont
  {Leone}},\ and\ \bibinfo {author} {\bibfnamefont {A.}~\bibnamefont {Hamma}},\
  }\bibfield  {title} {\bibinfo {title} {Magic-state resource theory for the
  ground state of the transverse-field {Ising} model},\ }\href
  {https://doi.org/10.1103/PhysRevA.106.042426} {\bibfield  {journal} {\bibinfo
   {journal} {Phys. Rev. A}\ }\textbf {\bibinfo {volume} {106}},\ \bibinfo
  {pages} {042426} (\bibinfo {year} {2022})}\BibitemShut {NoStop}%
\bibitem [{\citenamefont {Tarabunga}\ \emph {et~al.}(2023)\citenamefont
  {Tarabunga}, \citenamefont {Tirrito}, \citenamefont {Chanda},\ and\
  \citenamefont {Dalmonte}}]{Tarabunga2023Many}%
  \BibitemOpen
  \bibfield  {author} {\bibinfo {author} {\bibfnamefont {P.~S.}\ \bibnamefont
  {Tarabunga}}, \bibinfo {author} {\bibfnamefont {E.}~\bibnamefont {Tirrito}},
  \bibinfo {author} {\bibfnamefont {T.}~\bibnamefont {Chanda}},\ and\ \bibinfo
  {author} {\bibfnamefont {M.}~\bibnamefont {Dalmonte}},\ }\bibfield  {title}
  {\bibinfo {title} {Many-body magic via {Pauli-Markov} chains---from
  criticality to gauge theories},\ }\href
  {https://doi.org/10.1103/PRXQuantum.4.040317} {\bibfield  {journal} {\bibinfo
   {journal} {PRX Quantum}\ }\textbf {\bibinfo {volume} {4}},\ \bibinfo {pages}
  {040317} (\bibinfo {year} {2023})}\BibitemShut {NoStop}%
\bibitem [{\citenamefont {Tarabunga}\ \emph {et~al.}(2024)\citenamefont
  {Tarabunga}, \citenamefont {Tirrito}, \citenamefont {Ba\~nuls},\ and\
  \citenamefont {Dalmonte}}]{Tarabunga2024Non}%
  \BibitemOpen
  \bibfield  {author} {\bibinfo {author} {\bibfnamefont {P.~S.}\ \bibnamefont
  {Tarabunga}}, \bibinfo {author} {\bibfnamefont {E.}~\bibnamefont {Tirrito}},
  \bibinfo {author} {\bibfnamefont {M.~C.}\ \bibnamefont {Ba\~nuls}},\ and\
  \bibinfo {author} {\bibfnamefont {M.}~\bibnamefont {Dalmonte}},\ }\bibfield
  {title} {\bibinfo {title} {Nonstabilizerness via matrix product states in the
  {Pauli} basis},\ }\href {https://doi.org/10.1103/PhysRevLett.133.010601}
  {\bibfield  {journal} {\bibinfo  {journal} {Phys. Rev. Lett.}\ }\textbf
  {\bibinfo {volume} {133}},\ \bibinfo {pages} {010601} (\bibinfo {year}
  {2024})}\BibitemShut {NoStop}%
\bibitem [{\citenamefont {Ding}\ \emph {et~al.}(2025)\citenamefont {Ding},
  \citenamefont {Wang},\ and\ \citenamefont {Yan}}]{Ding2025Eval}%
  \BibitemOpen
  \bibfield  {author} {\bibinfo {author} {\bibfnamefont {Y.-M.}\ \bibnamefont
  {Ding}}, \bibinfo {author} {\bibfnamefont {Z.}~\bibnamefont {Wang}},\ and\
  \bibinfo {author} {\bibfnamefont {Z.}~\bibnamefont {Yan}},\ }\bibfield
  {title} {\bibinfo {title} {Evaluating many-body stabilizer {R}\'enyi entropy
  by sampling reduced {Pauli} strings: Singularities, volume law, and nonlocal
  magic},\ }\href {https://doi.org/10.1103/pyzr-jmvw} {\bibfield  {journal}
  {\bibinfo  {journal} {PRX Quantum}\ }\textbf {\bibinfo {volume} {6}},\
  \bibinfo {pages} {030328} (\bibinfo {year} {2025})},\ \bibinfo {note} {an
  open dataset is available at
  \url{https://github.com/msquare1998/Data-SRE-QMC}.}\BibitemShut {Stop}%
\bibitem [{\citenamefont {Kueng}\ and\ \citenamefont
  {Gross}(2015)}]{kueng2015qubit}%
  \BibitemOpen
  \bibfield  {author} {\bibinfo {author} {\bibfnamefont {R.}~\bibnamefont
  {Kueng}}\ and\ \bibinfo {author} {\bibfnamefont {D.}~\bibnamefont {Gross}},\
  }\href {https://doi.org/10.48550/arXiv.2404.18751} {\bibinfo {title} {{Qubit
  stabilizer states are complex projective 3-designs}}} (\bibinfo {year}
  {2015}),\ \Eprint {https://arxiv.org/abs/1510.02767} {arXiv:1510.02767}
  \BibitemShut {NoStop}%
\bibitem [{\citenamefont {Webb}(2016)}]{webb2016clifford}%
  \BibitemOpen
  \bibfield  {author} {\bibinfo {author} {\bibfnamefont {Z.}~\bibnamefont
  {Webb}},\ }\bibfield  {title} {\bibinfo {title} {{The {Clifford} group forms
  a unitary 3-design}},\ }\href@noop {} {\bibfield  {journal} {\bibinfo
  {journal} {Quantum Inf. Comput.}\ }\textbf {\bibinfo {volume} {16}},\
  \bibinfo {pages} {1379} (\bibinfo {year} {2016})}\BibitemShut {NoStop}%
\bibitem [{\citenamefont {Zhu}(2017)}]{Zhu20173Design}%
  \BibitemOpen
  \bibfield  {author} {\bibinfo {author} {\bibfnamefont {H.}~\bibnamefont
  {Zhu}},\ }\bibfield  {title} {\bibinfo {title} {Multiqubit {Clifford} groups
  are unitary 3-designs},\ }\href {https://doi.org/10.1103/PhysRevA.96.062336}
  {\bibfield  {journal} {\bibinfo  {journal} {Phys. Rev. A}\ }\textbf {\bibinfo
  {volume} {96}},\ \bibinfo {pages} {062336} (\bibinfo {year}
  {2017})}\BibitemShut {NoStop}%
\bibitem [{\citenamefont {Zhu}\ \emph {et~al.}(2016)\citenamefont {Zhu},
  \citenamefont {Kueng}, \citenamefont {Grassl},\ and\ \citenamefont
  {Gross}}]{zhu2016clifford}%
  \BibitemOpen
  \bibfield  {author} {\bibinfo {author} {\bibfnamefont {H.}~\bibnamefont
  {Zhu}}, \bibinfo {author} {\bibfnamefont {R.}~\bibnamefont {Kueng}}, \bibinfo
  {author} {\bibfnamefont {M.}~\bibnamefont {Grassl}},\ and\ \bibinfo {author}
  {\bibfnamefont {D.}~\bibnamefont {Gross}},\ }\bibfield  {title} {\bibinfo
  {title} {The {Clifford} group fails gracefully to be a unitary 4-design},\
  }\href {https://arxiv.org/abs/1609.08172} {\bibfield  {journal} {\bibinfo
  {journal} {arXiv:1609.08172}\ } (\bibinfo {year} {2016})}\BibitemShut
  {NoStop}%
\bibitem [{\citenamefont {Bannai}\ \emph {et~al.}(2020)\citenamefont {Bannai},
  \citenamefont {Navarro}, \citenamefont {Rizo},\ and\ \citenamefont
  {Tiep}}]{bannai2020unitary}%
  \BibitemOpen
  \bibfield  {author} {\bibinfo {author} {\bibfnamefont {E.}~\bibnamefont
  {Bannai}}, \bibinfo {author} {\bibfnamefont {G.}~\bibnamefont {Navarro}},
  \bibinfo {author} {\bibfnamefont {N.}~\bibnamefont {Rizo}},\ and\ \bibinfo
  {author} {\bibfnamefont {P.~H.}\ \bibnamefont {Tiep}},\ }\bibfield  {title}
  {\bibinfo {title} {Unitary $t$-groups},\ }\href@noop {} {\bibfield  {journal}
  {\bibinfo  {journal} {J. Math. Soc. Japan}\ }\textbf {\bibinfo {volume}
  {72}},\ \bibinfo {pages} {909–921} (\bibinfo {year} {2020})}\BibitemShut
  {NoStop}%
\bibitem [{\citenamefont {Leone}\ \emph {et~al.}(2022)\citenamefont {Leone},
  \citenamefont {Oliviero},\ and\ \citenamefont {Hamma}}]{leone2022stab}%
  \BibitemOpen
  \bibfield  {author} {\bibinfo {author} {\bibfnamefont {L.}~\bibnamefont
  {Leone}}, \bibinfo {author} {\bibfnamefont {S.~F.~E.}\ \bibnamefont
  {Oliviero}},\ and\ \bibinfo {author} {\bibfnamefont {A.}~\bibnamefont
  {Hamma}},\ }\bibfield  {title} {\bibinfo {title} {Stabilizer {R}\'enyi
  entropy},\ }\href {https://doi.org/10.1103/PhysRevLett.128.050402} {\bibfield
   {journal} {\bibinfo  {journal} {Phys. Rev. Lett.}\ }\textbf {\bibinfo
  {volume} {128}},\ \bibinfo {pages} {050402} (\bibinfo {year}
  {2022})}\BibitemShut {NoStop}%
\bibitem [{\citenamefont {Leone}\ and\ \citenamefont
  {Bittel}(2024)}]{Leone2024monotone}%
  \BibitemOpen
  \bibfield  {author} {\bibinfo {author} {\bibfnamefont {L.}~\bibnamefont
  {Leone}}\ and\ \bibinfo {author} {\bibfnamefont {L.}~\bibnamefont {Bittel}},\
  }\bibfield  {title} {\bibinfo {title} {Stabilizer entropies are monotones for
  magic-state resource theory},\ }\href
  {https://doi.org/10.1103/PhysRevA.110.L040403} {\bibfield  {journal}
  {\bibinfo  {journal} {Phys. Rev. A}\ }\textbf {\bibinfo {volume} {110}},\
  \bibinfo {pages} {L040403} (\bibinfo {year} {2024})}\BibitemShut {NoStop}%
\end{thebibliography}%
\let\addcontentsline\oldaddcontentsline
\clearpage

\newpage
\let\oldaddcontentsline\addcontentsline
\renewcommand{\addcontentsline}[3]{}
\appendix

Across this Appendix, $\caH$ denotes  a $d$-dimensional Hilbert space on which all operators act, and  $\bbone$ denotes the identity operator as in the main text.   In addition, $O$ denotes a traceless observable, $\rho,\sigma$ denote two quantum states, $\Delta=\rho-\sigma$, and $\epsilon=1-\lVert\sqrt{\rho}\sqrt{\sigma}\lsp\rVert_1^2$ denotes the infidelity between $\rho$ and $\sigma$, which can be simplified as $\epsilon=1-\tr(\rho\sigma)$
when $\sigma$ is a pure state. In CRM shadow estimation, $\rho$ and $\sigma$ denote the system state and prior state, respectively; in HPFE, $\sigma$ also denotes the target pure state.

\section{Variance in shadow estimation based on 3-designs}
Here we offer a universal upper bound for the variance in shadow estimation based on 3-designs, which is nearly tight. The following proposition is proved in SM \sref{pro:3designVarProof}. 
\begin{proposition}\label{pro:3designVar}
	Suppose $\caU$ is a unitary 3-design on $\caH$. Then
	\begin{align}
	\bbV(O,\rho)&=\frac{d+1}{d+2}\left[\tr\left(O^2\right)+2\tr\left(\rho O^2\right)\right]-\left[\tr(\rho O)\right]^2\nonumber\\
	&\le 2\rVert O\rVert_2^2.\label{eq:3designVar}
	\end{align}
\end{proposition}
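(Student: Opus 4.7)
The plan is to derive the closed-form expression for $\bbV(O,\rho)$ via a standard Weingarten/Schur--Weyl computation enabled by the 3-design property, and then to extract the uniform bound $\leq 2\|O\|_2^2$ by a careful analysis of the resulting formula.

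For the formula, I would start from the single-shot definition
\[
\bbV(O,\rho) = \bbE_{U\sim\caU}\sum_\bfs \<\bfs|U\rho U^\dag|\bfs\>\bigl[\tr\bigl(O\caM^{-1}(U^\dag|\bfs\>\<\bfs|U)\bigr)\bigr]^2 - [\tr(\rho O)]^2.
\]
Since $\caM^{-1}(X) = (d+1)X - \tr(X)\bbone$ for any 3-design and $\tr O = 0$, the bracketed trace reduces to $(d+1)\<\bfs|UOU^\dag|\bfs\>$. Recasting the expectation as $\tr\bigl[\Omega_3\, \bbE_U U^{\otimes 3}(\rho\otimes O\otimes O) U^{\dag\otimes 3}\bigr]$ with $\Omega_3 := \sum_\bfs (|\bfs\>\<\bfs|)^{\otimes 3}$, I would invoke the Weingarten expansion of the Haar third moment on permutation operators $P_\pi$ ($\pi\in S_3$), together with the identities $\tr(\Omega_3 P_\pi) = d$ for every $\pi\in S_3$ and $\sum_{\pi\in S_3}\mathrm{Wg}(\pi,d) = 1/[d(d+1)(d+2)]$. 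Because $\tr O = 0$, only the three permutations $(23),(123),(132)$ contribute nontrivial traces in $\sum_\tau\tr[P_\tau(\rho\otimes O\otimes O)]$, yielding $\tr(O^2) + 2\tr(\rho O^2)$. Collecting factors gives the claimed closed form for $\bbV(O,\rho)$.

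For the bound $\bbV(O,\rho)\leq 2\|O\|_2^2$, the crude estimate $\tr(\rho O^2)\leq\tr(O^2)$ yields only a factor $3$, so I plan to establish the sharper auxiliary inequality
\[
2\tr(\rho O^2) - [\tr(\rho O)]^2 \leq \tr(O^2).
\]
Diagonalizing $O = \sum_k a_k|v_k\>\<v_k|$ and setting $p_k := \<v_k|\rho|v_k\>$, the left side becomes $f(p) := 2\sum_k p_k a_k^2 - (\sum_k p_k a_k)^2$, a concave quadratic on the probability simplex. A KKT analysis shows that the maximizer is supported on at most two eigenvalues, and an explicit two-point calculation with critical weight $q^\ast = a_j/(a_j - a_k)$ evaluates to $f(p^\ast) = a_j^2 + a_k^2 \leq \sum_l a_l^2 = \tr(O^2)$. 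Writing $A := \tr(O^2)$, $B := \tr(\rho O)$, $C := \tr(\rho O^2)$ and $c := (d+1)/(d+2)$, the auxiliary inequality reads $2C\leq A + B^2$, and substituting into the closed form yields
\[
\bbV(O,\rho) = cA + 2cC - B^2 \leq cA + c(A+B^2) - B^2 = 2cA - (1-c)B^2 \leq 2cA \leq 2\|O\|_2^2,
\]
where the last two inequalities use $(1-c)B^2\geq 0$ and $c\leq 1$.

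The main obstacle is the auxiliary inequality $2C - B^2 \leq A$. Naive eigenvalue bounds such as $\tr(\rho O^2)\leq\|O\|_\infty^2$ or $\tr(\rho O^2)\leq\tr(O^2)$ do not produce the tight constant $2$, because they fail to exploit the coupling $C\geq B^2$ (Cauchy--Schwarz) that forces $B^2$ to be large precisely when $C$ approaches $\tr(O^2)$. The KKT reduction to a two-eigenvalue support, followed by the cancellation $2C - B^2 = a_j^2 + a_k^2$ at the optimum, is what delivers the sharp constant. The Weingarten computation in the first step is otherwise a routine application of Schur--Weyl duality and poses no essential difficulty.
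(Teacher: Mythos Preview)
Your proposal is correct. Both you and the paper isolate the same auxiliary inequality $2\tr(\rho O^2)\leq \tr(O^2)+[\tr(\rho O)]^2$ as the crux of the bound, and both begin by diagonalizing $O$ and reducing $\rho$ to its diagonal part. From there the arguments diverge: the paper proves the inequality via a short algebraic identity rather than optimization. Writing $\rho=\sum_i\gamma_i|i\>\<i|$ and $O=\sum_i q_i|i\>\<i|$, it introduces the rank-one state $\varrho=\sum_{i,j}\sqrt{\gamma_i\gamma_j}\,|i\>\<j|$ and the vector $|v\>=\sum_i q_i|i\>\otimes|i\>$, then verifies directly that
\[
\tr(O^2)+[\tr(\rho O)]^2-2\tr(\rho O^2)=\<v|(\bbone-\varrho)\otimes(\bbone-\varrho)|v\>\geq 0,
\]
with positivity immediate since $\bbone-\varrho\geq 0$. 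Your KKT reduction to a two-point support followed by the explicit evaluation $f(p^\ast)=a_j^2+a_k^2$ reaches the same conclusion; it is more hands-on but has the mild side benefit of identifying the extremizers, whereas the tensor-product trick yields the inequality in a single line without any optimization. For the closed-form expression itself, the paper simply cites the original shadow-estimation reference rather than rederiving it; your Weingarten computation is standard and fine.
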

The equality in \eref{eq:3designVar} was derived in \rcite{huang2020pred}, which also proved a weaker upper bound, that is,  $\bbV(O,\rho)\leq 3\rVert O\rVert_2^2$.
When $O=|\phi\>\<\phi|-\bbone/d$ with $|\phi\>$ being a pure state in $\caH$, we have \cite{chen2024nonstab}
\begin{equation}
	\bbV(O,\rho)=-F^2+\frac{d(2F+1)}{d+2}<2-\epsilon^2\le2,\label{eq:v_vstar2}
\end{equation}
where $F = \<\phi|\rho|\phi\>$ is the fidelity between $\rho $ and $|\phi\>$ and $\epsilon=1-F$ is the infidelity. This result shows that the inequality in \eref{eq:3designVar} is nearly tight.

\section{Relations between $\lVert\Delta\rVert_2^2$, $\lVert\Delta\rVert_1^2$, $\Vert\Xi_{\Delta,O}\rVert_2^2$, and the infidelity $\epsilon$}\label{app:NormsInfid}
\begin{figure*}
	\centering
	\includegraphics[width=0.8\linewidth]{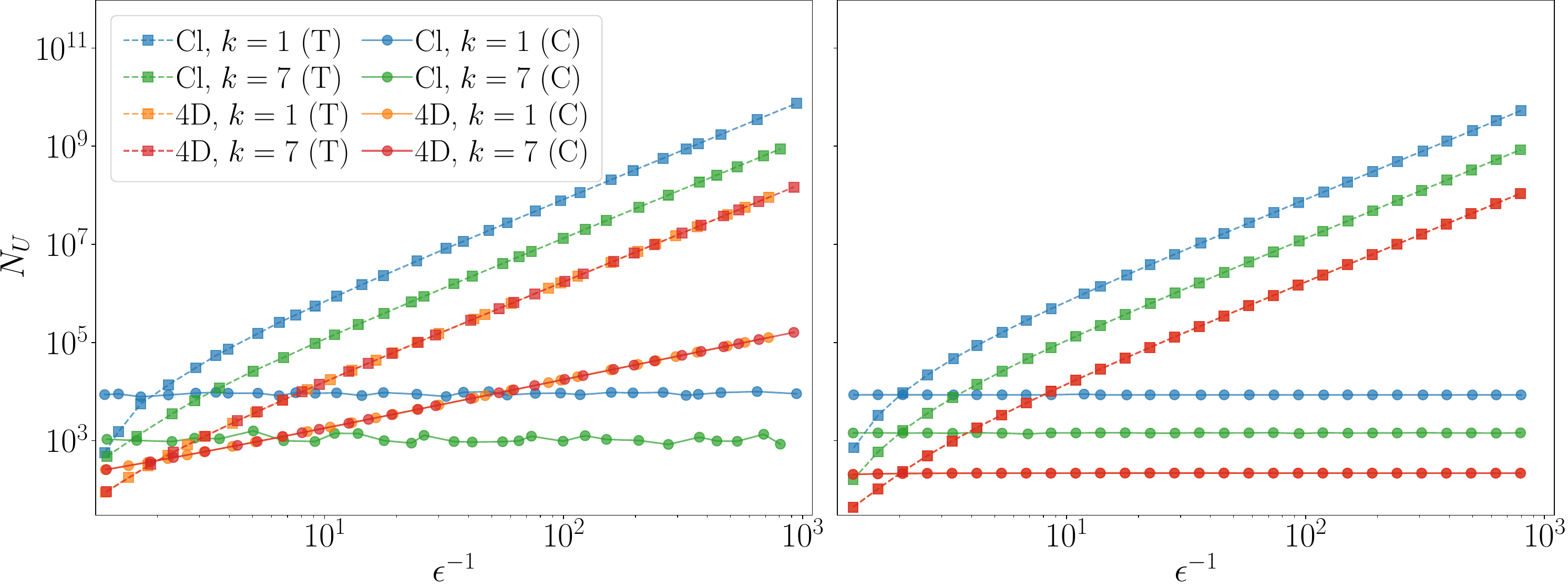}
	\caption{Circuit sample costs  $N_U$ required for HPFE in THR (T) and CRM (C) shadow estimation based on Clifford (Cl)  and  4-design (4D) measurements. Here, $ r = 0.25 $, $ \delta = 0.01$, and $R = \lceil d/\epsilon^2 \rceil$. The target and prior state has the form $\sigma=|S_{7,k}\> \<S_{7,k}|$ with $k=1,7$, and different system  states $\rho$ are generated by applying  random  local rotations described in SM \sref{sec:DescriptRandomRotation} (left plot) and random Pauli channels described in SM \sref{sec:DescriptRandomPauli} (right plot) to the target state $\sigma$. The results on  4-design measurements are almost independent of $k$, especially in the right plot. 
	}\label{fig:hpfe_4dcom}
\end{figure*}

To better understand the performance of HPFE based on CRM shadow estimation, here we clarify the relations between $\lVert\Delta\rVert_2^2$, $\lVert\Delta\rVert_1^2$, $\Vert\Xi_{\Delta,O}\rVert_2^2$, and $\epsilon$, assuming that $\sigma$ is a pure state; see SM \ssref{SM:NormsInfid} and \ref{SM:CharProp} for additional results.

\begin{proposition}\label{pro:Delta12NormIFsim} 
	The following inequalities hold:
	\begin{gather}
		2\epsilon^2\le 	2\lVert\Delta\rVert_2^2\le \lVert\Delta\rVert_1^2\le \min\left\{4\epsilon,4\lVert\Delta\rVert_2^2\right\}; \label{eq:Delta12NormIF}
	\end{gather}
the inequality 	$\lVert\Delta\rVert_2^2\leq 2\epsilon$ is saturated if and only if $\rho$ is a pure state. 
\end{proposition}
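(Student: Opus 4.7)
My plan is to exploit the rank-one structure of $\sigma = |\phi\rangle\langle\phi|$ via a block decomposition. In an orthonormal basis with $|\phi\rangle$ as the first vector, $\rho$ takes the block form $\rho = \begin{pmatrix} F & v^\dagger \\ v & R \end{pmatrix}$ with $F = 1-\epsilon$, $R \succeq 0$, $\tr R = \epsilon$, and (from the Schur complement condition for $\rho \succeq 0$) $\lVert v\rVert^2 \le F\epsilon$. A direct computation yields $\lVert\Delta\rVert_2^2 = \epsilon^2 + 2\lVert v\rVert^2 + \tr(R^2)$, which, together with the elementary bound $\tr(R^2) \le (\tr R)^2 = \epsilon^2$ for PSD $R$, already gives the implicit inequality $\lVert\Delta\rVert_2^2 \le 2\epsilon$.

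Three of the four claimed inequalities are then straightforward. The lower bound $\epsilon^2 \le \lVert\Delta\rVert_2^2$ follows from Cauchy--Schwarz applied to the identity $\epsilon = -\tr(\sigma\Delta)$ together with $\lVert\sigma\rVert_2 = 1$. The inequality $2\lVert\Delta\rVert_2^2 \le \lVert\Delta\rVert_1^2$ is a general property of traceless Hermitian operators: letting $p$ denote the common absolute sum of the positive (resp.\ negative) eigenvalues of $\Delta$, the sum of squares of either sign class is bounded by $p^2$, so $\lVert\Delta\rVert_2^2 \le 2p^2 = \lVert\Delta\rVert_1^2/2$. The bound $\lVert\Delta\rVert_1^2 \le 4\epsilon$ is the Fuchs--van de Graaf inequality specialized to a pure reference state, for which $\lVert\sqrt{\rho}\sqrt{\sigma}\,\rVert_1^2 = \langle\phi|\rho|\phi\rangle = 1-\epsilon$.

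The step where purity of $\sigma$ really enters, and which I view as the main conceptual obstacle, is $\lVert\Delta\rVert_1^2 \le 4\lVert\Delta\rVert_2^2$; the analogous inequality fails for generic traceless Hermitians. Decomposing $\Delta = \Delta_+ - \Delta_-$ into its positive and negative parts, the constraint $\rho = \sigma + \Delta \succeq 0$ together with the orthogonality of $\mathrm{supp}(\Delta_+)$ and $\mathrm{supp}(\Delta_-)$ forces $\Delta_- \le P_-\sigma P_-$, where $P_-$ is the projector onto $\mathrm{supp}(\Delta_-)$; since $\sigma$ has rank one, so does the right-hand side, hence $\Delta_- = p\,|\psi\rangle\langle\psi|$ for some $p \ge 0$. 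Then tracelessness of $\Delta$ gives $\lVert\Delta\rVert_1 = 2p$, while $\lVert\Delta\rVert_2^2 \ge \lVert\Delta_-\rVert_2^2 = p^2$, yielding the bound. Finally, the equality case $\lVert\Delta\rVert_2^2 = 2\epsilon$ is handled by tracking the estimates in the block calculation: saturation requires simultaneously $\lVert v\rVert^2 = F\epsilon$ and $\tr(R^2) = (\tr R)^2$, the latter forcing $\mathrm{rank}(R) \le 1$ since $R$ is PSD; writing out $\rho$ under these constraints shows it equals $|\chi\rangle\langle\chi|$ with $|\chi\rangle = \sqrt{F}\,|\phi\rangle + e^{i\theta}\sqrt{\epsilon}\,|\psi\rangle$, and the converse direction is an immediate calculation.
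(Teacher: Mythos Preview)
Your proof is correct and aligns closely with the paper's argument in Propositions~S3 and~S4: both hinge on the structural fact that $\Delta_-$ has rank at most one when $\sigma$ is pure, and both invoke Fuchs--van de Graaf for $\lVert\Delta\rVert_1^2\le 4\epsilon$. The main difference is in how you handle $\lVert\Delta\rVert_2^2\le 2\epsilon$ and its saturation: the paper expands $\lVert\Delta\rVert_2^2 = 2\epsilon - 1 + \tr(\rho^2)$ directly, reducing both the bound and the equality case to the trivial $\tr(\rho^2)\le 1$, whereas you work through the block decomposition and Schur complement. Your route is more computational but has the merit of being self-contained; the paper's identity is slicker for the saturation analysis. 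Your Cauchy--Schwarz argument for $\epsilon^2\le\lVert\Delta\rVert_2^2$ is actually more direct than the paper's detour through the purity lower bound in Lemma~S3.
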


\begin{proposition}\label{pro:pchannel_haarSim}
	Suppose $\sigma$ is a Haar-random pure state on $\caH$ and $\rho=\scrP(\sigma)$
	with $\scrP$ being a Pauli channel. Then 
	\begin{gather}
		\overline{\epsilon^2}\le\overline{\lVert\Delta\rVert_2^2}\le\frac{2(d+1)}{d}\bepsilon^2,\;\;
		4\lsp\overline{\epsilon^2}\le\overline{\lVert\Delta\rVert_1^2}\le 8\bepsilon^2.\label{eq:MeanDelta1NormIFpauli}
	\end{gather}
\end{proposition}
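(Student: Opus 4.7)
The plan is to derive the two lower bounds from pointwise inequalities and the two upper bounds from a Pauli-diagonal expansion combined with the second Haar moment of pure states.

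For the lower bounds, I would average the pointwise estimate $\epsilon^2\le\lVert\Delta\rVert_2^2$ from \pref{pro:Delta12NormIFsim} to obtain $\overline{\epsilon^2}\le\overline{\lVert\Delta\rVert_2^2}$. For $4\lsp\overline{\epsilon^2}\le\overline{\lVert\Delta\rVert_1^2}$ I would first establish the pointwise bound $\lVert\Delta\rVert_1\ge 2\epsilon$ for any pure $\sigma=|\psi\rangle\langle\psi|$: since $\langle\psi|\Delta|\psi\rangle=-\epsilon$, the smallest eigenvalue of the Hermitian operator $\Delta$ is at most $-\epsilon$; combined with $\tr(\Delta)=0$, this forces both the positive and negative spectral parts of $\Delta$ to have trace at least $\epsilon$, so $\lVert\Delta\rVert_1\ge 2\epsilon$. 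Squaring and averaging then gives the claim.

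For the upper bounds, I would exploit the Pauli-diagonal structure. Setting $x_P=\langle\psi|P|\psi\rangle$, so that $\sigma=d^{-1}\sum_P x_P P$, and using $\scrP(P)=\lambda_P P$ with $\lambda_I=1$ for a Pauli channel $\scrP(\rho)=\sum_P p_P P\rho P$, one obtains $\Delta=d^{-1}\sum_{Q\neq I}(\lambda_Q-1)x_Q Q$ and
\begin{equation*}
\lVert\Delta\rVert_2^2=\frac{1}{d}\sum_{Q\neq I}(1-\lambda_Q)^2 x_Q^2.
\end{equation*}
The standard second Haar moment $\overline{x_Q^2}=1/(d+1)$ for non-identity Pauli $Q$, derived from $\overline{\sigma^{\otimes 2}}=(\bbone+\SWAP)/[d(d+1)]$, then yields
\begin{equation*}
\overline{\lVert\Delta\rVert_2^2}=\frac{1}{d(d+1)}\sum_{Q\neq I}(1-\lambda_Q)^2,\qquad \bepsilon=\frac{d(1-p_I)}{d+1},
\end{equation*}
where the expression for $\bepsilon$ uses the double-counting identity that each non-identity Pauli anticommutes with exactly $d^2/2$ Paulis.

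The decisive step is controlling $\sum_{Q\neq I}(1-\lambda_Q)^2=4\sum_{Q\neq I}a_Q^2$, where $a_Q:=\sum_{P:\,[P,Q]\neq 0}p_P$. Combining $\sum_{Q\neq I}a_Q=d^2(1-p_I)/2$ with the trivial bound $a_Q\le 1-p_I$ gives $\sum_Q a_Q^2\le d^2(1-p_I)^2/2$, which translates into $\overline{\lVert\Delta\rVert_2^2}\le 2(d+1)\bepsilon^2/d$. For the trace-norm upper bound, I would expand $\Delta=\sum_{P\neq I}p_P(P\sigma P-\sigma)$, use $\lVert P\sigma P-\sigma\rVert_1=2\sqrt{1-x_P^2}$ (valid since $\sigma$ is pure), and apply the triangle inequality followed by Cauchy--Schwarz together with $\epsilon=\sum_{P\neq I}p_P(1-x_P^2)$ to get $\lVert\Delta\rVert_1^2\le 4(1-p_I)\epsilon$. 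Averaging and substituting the formula for $\bepsilon$ then gives $\overline{\lVert\Delta\rVert_1^2}\le 4(d+1)\bepsilon^2/d\le 8\bepsilon^2$ for $d\ge 1$. The main obstacle is the sharp combinatorial estimate on $\sum_Q a_Q^2$; the remaining steps reduce to standard Haar-moment and Pauli-commutation bookkeeping.
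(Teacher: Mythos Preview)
Your proof is correct and follows the same overall architecture as the paper (pointwise inequalities for the lower bounds, the second Haar moment plus Pauli combinatorics for the upper bounds), but the two upper bounds are computed along somewhat different paths. For $\overline{\lVert\Delta\rVert_2^2}$ the paper expands $\Delta$ in the Kraus picture, obtains the exact identity $\overline{\lVert\Delta\rVert_2^2}=\frac{d}{d+1}\bigl(\lVert\bfp\rVert_1^2+\lVert\bfp\rVert_2^2\bigr)$, and then uses $\lVert\bfp\rVert_2\le\lVert\bfp\rVert_1$; your Heisenberg-picture (channel-eigenvalue) expansion together with $\sum_Q a_Q^2\le(1-p_I)\sum_Q a_Q$ is the dual of that same step and yields the identical bound. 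For $\overline{\lVert\Delta\rVert_1^2}$ the paper routes through the pointwise inequality $\lVert\Delta\rVert_1^2\le\frac{4(d-1)}{d}\lVert\Delta\rVert_2^2$, arriving at $\frac{8(d^2-1)}{d^2}\bepsilon^2$, whereas your direct triangle-inequality plus Cauchy--Schwarz argument on the Kraus sum gives the slightly sharper constant $\frac{4(d+1)}{d}\bepsilon^2$; both are below $8\bepsilon^2$. Your pointwise derivation of $\lVert\Delta\rVert_1\ge 2\epsilon$ is exactly the argument the paper records in the proof of \pref{pro:Delta12NormIF}.
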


\begin{proposition}\label{pro:CharPropSim}
	Suppose $O=\sigma-\bbone/d$; then 
	\begin{align}
		\lVert \Xi_{\Delta,O}\rVert_2^2&= \lVert \Xi_{\Delta,\sigma}\rVert_2^2 \le 2^{[3-M_2(\sigma)]/2}d\epsilon. \label{eq:CharPropFEsim}
	\end{align}		
	If $\rho=\scrP(\sigma)$ with $\scrP$ being a Pauli channel, then 
	\begin{align}
		\lVert \Xi_{\Delta,O}\rVert_2^2=\lVert\Xi_{\Delta,\sigma}\lVert_2^2\leq 2d\epsilon^2. \label{eq:CharPropPauliFEsim}
	\end{align}
\end{proposition}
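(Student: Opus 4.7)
My plan is to first establish the equality $\lVert\Xi_{\Delta,O}\rVert_2^2 = \lVert\Xi_{\Delta,\sigma}\rVert_2^2$ by showing the characteristic vectors are identical entry-wise, then to bound $\lVert\Xi_{\Delta,\sigma}\rVert_2^2$ using two rather different ingredients for the two parts. The equality is immediate: for every $P\in\bcaP_n$ with $P\neq\bbone$, tracelessness of $P$ gives $\tr(OP)=\tr(\sigma P)-\tr(P)/d=\tr(\sigma P)$; and for $P=\bbone$ both entries of the cross characteristic function vanish because $\tr(\Delta)=0$.

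For the first bound (involving $M_2(\sigma)$), I would apply an $\ell_4$--$\ell_4$ Cauchy--Schwarz, $\sum_P [\tr(\Delta P)]^2[\tr(\sigma P)]^2 \le \lVert\Xi_\Delta\rVert_4^2\lsp\lVert\Xi_\sigma\rVert_4^2$. The second factor is exactly $\sqrt{d}\lsp 2^{-M_2(\sigma)/2}$ by the very definition of the 2-SRE. For the first factor, I would use the elementary inequality $\lVert\Xi_\Delta\rVert_4^4\le \lVert\Xi_\Delta\rVert_\infty^2\lVert\Xi_\Delta\rVert_2^2$, together with the Hölder estimate $\lVert\Xi_\Delta\rVert_\infty=\max_P|\tr(\Delta P)|\le\lVert\Delta\rVert_1$ (since $\lVert P\rVert_\infty=1$) and the Pauli Parseval identity $\lVert\Xi_\Delta\rVert_2^2=d\lVert\Delta\rVert_2^2$. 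Substituting the estimates $\lVert\Delta\rVert_1^2\le 4\epsilon$ and $\lVert\Delta\rVert_2^2\le 2\epsilon$ from \pref{pro:Delta12NormIFsim} yields $\lVert\Xi_\Delta\rVert_4^2\le 2\sqrt{2d}\lsp\epsilon$, and combining the two factors produces the stated $2^{(3-M_2(\sigma))/2}d\epsilon$.

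For the Pauli-channel bound, I would exploit the Pauli-transfer structure $\tr(\Delta P)=(\chi_P-1)\tr(\sigma P)=-2q_P\tr(\sigma P)$, where $\chi_P=1-2q_P$ and $q_P:=\sum_{Q:\{P,Q\}=0}p_Q$ is the total weight of Paulis anticommuting with $P$. This rewrites $\lVert\Xi_{\Delta,\sigma}\rVert_2^2=4\sum_P q_P^2[\tr(\sigma P)]^4$, while $\epsilon=d^{-1}\sum_P(1-\chi_P)[\tr(\sigma P)]^2$ gives $\sum_P q_P[\tr(\sigma P)]^2=d\epsilon/2$ via Pauli Parseval. The task is therefore to establish the pointwise bound $q_P[\tr(\sigma P)]^2\le\epsilon$ for every $P$, from which $\sum_P q_P^2[\tr(\sigma P)]^4\le \epsilon\cdot d\epsilon/2$ and the bound $2d\epsilon^2$ follow.

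The main obstacle is proving this pointwise bound, which rests on an uncertainty-type inequality: for any two anticommuting Hermitian involutions $P,Q$ and any pure state $\sigma$, $[\tr(\sigma P)]^2+[\tr(\sigma Q)]^2\le 1$. I would prove this by noting that $P$, $Q$, and $\rmi PQ$ form three pairwise anticommuting Hermitian involutions, whence their expectation triple sits on a Bloch-sphere-like norm-$\le 1$ locus. Applying it term-by-term to $\epsilon=\sum_Q p_Q(1-[\tr(\sigma Q)]^2)\ge \sum_{Q:\{P,Q\}=0} p_Q[\tr(\sigma P)]^2=q_P[\tr(\sigma P)]^2$ (using that the identity never anticommutes) yields the needed estimate. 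Without this cancellation the naive argument only gives a $\caO(d\epsilon)$ bound, losing the factor of $\epsilon$ that is essential for the constant-circuit scaling advertised in the main text.
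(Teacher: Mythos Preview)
Your argument is correct in both parts. For \eref{eq:CharPropFEsim} your proof coincides with the paper's: both apply Cauchy--Schwarz on the $\ell_4$ norms, bound $\lVert\Xi_\Delta\rVert_4^4\le\lVert\Delta\rVert_1^2\cdot d\lVert\Delta\rVert_2^2$ via $|\tr(\Delta P)|\le\lVert\Delta\rVert_1$ and Parseval, and identify $\lVert\Xi_\sigma\rVert_4^2=\sqrt{d}\,2^{-M_2(\sigma)/2}$.

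For \eref{eq:CharPropPauliFEsim} your route differs from the paper's. The paper rewrites $\lVert\Xi_{\Delta,\sigma}\rVert_2^2=4\sum_{i,j}p_ip_j K_\sigma(P_i,P_j)$ with $K_\sigma(P_i,P_j):=\sum_{k:\{P_k,P_i\}=\{P_k,P_j\}=0}[\tr(\sigma P_k)]^4$, and proves (in a separate lemma, via the eigenspace decomposition of $|\phi\rangle$ with respect to $P_i$) that $K_\sigma(P_i,P_j)\le d\eta_i\eta_j/2$; summing then yields $2d\bigl(\sum_i p_i\eta_i\bigr)^2=2d\epsilon^2$. You instead keep the sum over $P$, derive the pointwise estimate $q_P[\tr(\sigma P)]^2\le\epsilon$ directly from the anticommutator uncertainty $[\tr(\sigma P)]^2+[\tr(\sigma Q)]^2\le 1$, and combine with $\sum_P q_P[\tr(\sigma P)]^2=d\epsilon/2$. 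The two approaches share the same underlying uncertainty ingredient but package it differently: yours is shorter and avoids introducing $K_\sigma$; the paper's isolates a reusable intermediate inequality with an explicit saturation analysis (stabilizer $\sigma$, single-error Pauli channel), which it also uses to show the bound $2d\epsilon^2$ is tight.
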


\Psref{pro:Delta12NormIFsim} and  \ref{pro:pchannel_haarSim} follow from \psref{pro:Delta12NormIF} and \ref{pro:pchannel_haar} in SM \sref{SM:NormsInfid}, respectively. \Pref{pro:CharPropSim} follows from \lsref{lem:CharProp} and \ref{lem:InfidCrossCharUB} in SM \sref{SM:CharProp}.

\section{Comparison of Clifford measurements and 4-design measurements}

In this appendix, we show that Clifford measurements are not only simpler but also more efficient than 4-design measurements in HPFE in the presence of certain coherent noise of practical interest.

As a complement to \fref{fig:hpfe_together}, which shows the circuit sample cost of HPFE based on  Clifford measurements, here we consider the counterpart based on  4-design measurements. The target and prior state has the form $\sigma=|S_{7,k}\> \<S_{7,k}|$ with $k=1,7$.  Both random local rotations and random Pauli noise are considered as before. The sample costs required  in  THR and CRM shadow estimation can be determined by virtue of  \lsref{lem:upperbound_samp} and \ref{lem:VRCRM} together with the variance formulas for $\bbV(O,\rho)$, $\bbV_*(O,\rho)$, and $\bbV_*(O,\Delta)$ presented in \eref{eq:3designVar} and \lref{lem:V*4design} in SM \sref{SM:4_design}, and the results are shown in \fref{fig:hpfe_4dcom}.

For THR shadow estimation based on 4-design measurements, the sample cost scales with $1/\epsilon^2$ for both noise types.
For CRM shadow estimation based on 4-design measurements, by contrast, the sample cost scales with $1/\epsilon$  in the first case (random local rotations), but is almost independent of $\epsilon$ in the second case (random Pauli noise). Although CRM can significantly improve the scaling behavior, the improvement in the first case is not so dramatic compared with the counterpart for Clifford measurements, which can achieve a constant scaling behavior. Therefore, Clifford measurements are not only simpler but also more efficient  for HPFE in the presence of such coherent noise. In addition,  2-SRE (or the parameter $k$) has little influence on the sample costs for both THR and CRM shadow estimation based on 4-design measurements, unlike the counterparts based on Clifford measurements.

Next, we offer another example to further demonstrate the advantages of Clifford measurements over 4-design measurements.  
Here  the target and prior state $\sigma$ is a pure state, and the system state has the form $\rho=(\sigma+ Z_1\sigma Z_1)/2$, which means 
\begin{align}
	\Delta=\frac{1}{2} (Z_1\sigma Z_1-\sigma),\;\; \lVert \Delta\rVert _2^2=\epsilon=1-\tr(\rho\sigma). 
\end{align}
According to \thref{thm:4designVstar}, the variance $\bbV_*(O,\Delta)$ in CRM shadow estimation based on 4-design measurements satisfies
\begin{align}
	\frac{(d-1)\epsilon}{d(d+2)}&\le\bbV_*(O,\Delta)\le\frac{4 \epsilon}{d}.
\end{align}
By contrast, thanks to 
\thref{thm:CliffordVstar} and \pref{pro:CharPropSim}, the variance $\bbV_*(O,\Delta)$  tied to Clifford  measurements satisfies $\bbV_*(O,\Delta)\leq 4\epsilon^2$, which means a constant scaling behavior of the  circuit sample cost $N_U$ with respect to the target precision. So Clifford  measurements become more and more advantageous as the target precision rises.

In the numerical simulation, each target state is a seven-qubit pure state of the form $\sigma=|\phi\>\<\phi|$ with
\begin{align}\label{eq:SampleCliffordAdv4design}
	|\phi\>=\alpha |\phi_0\>\otimes |0\>+\sqrt{1-\alpha^2} \lsp |\phi_1\>\otimes |1\>,
\end{align}
where $|\phi_0\>$ and $|\phi_1\>$ are six-qubit Haar-random pure states and $\alpha$ is a random real number such that $\ln \alpha$ is uniformly distributed in the interval $[-5,0]$. The sample costs required for HPFE are shown in \fref{fig:Cliffordvs4designHPFE}. Notably,  4-design measurements require $\caO(1/\epsilon)$ circuits, while Clifford measurements only require a constant number of circuits.
\begin{figure}
	\centering
	\includegraphics[width=0.4\textwidth]{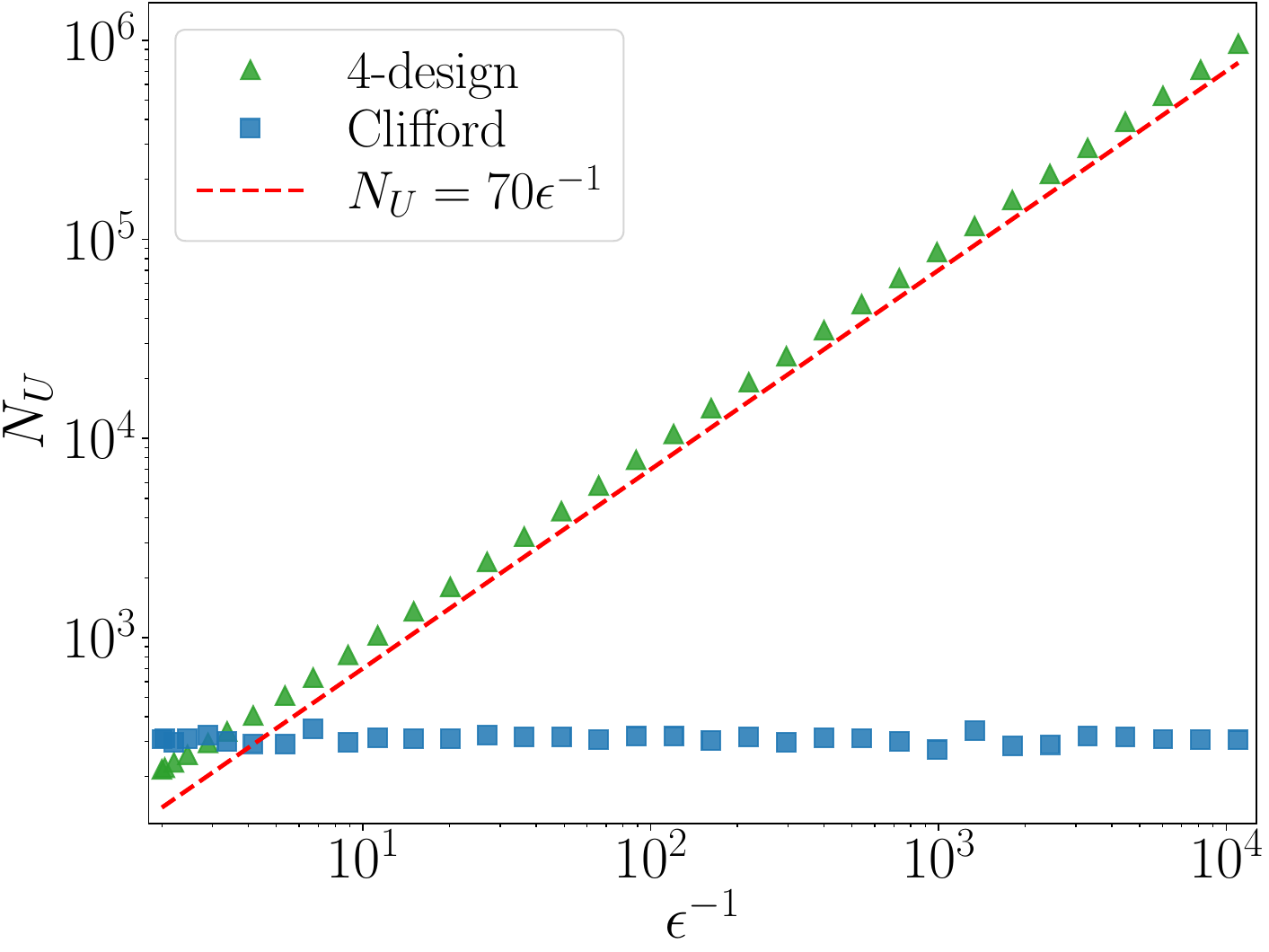}
	\caption{Circuit sample costs  $N_U$ required for HPFE in CRM shadow estimation based on  4-design and Clifford measurements. Here,  $r = 0.25 $, $\delta = 0.01 $, and $ R = \lceil d/\epsilon^2 \rceil$. The target and prior states $\sigma$ are random seven-qubit pure states that have the form in \eref{eq:SampleCliffordAdv4design}, and the system states have the form $\rho=\frac{1}{2}\sigma+\frac{1}{2} Z_1\sigma Z_1$, where $Z_1$ denotes the Pauli $Z$ operator acting on the first qubit.}\label{fig:Cliffordvs4designHPFE}
\end{figure}

\let\addcontentsline\oldaddcontentsline

\clearpage

\newpage

\setcounter{equation}{0}
\setcounter{figure}{0}
\setcounter{table}{0}
\setcounter{theorem}{0}
\setcounter{lemma}{0}
\setcounter{section}{0}
\counterwithout{equation}{section}

\setcounter{page}{1}

\renewcommand{\theequation}{S\arabic{equation}}
\renewcommand{\thefigure}{S\arabic{figure}}
\renewcommand{\thetable}{S\arabic{table}}
\renewcommand{\thetheorem}{S\arabic{theorem}}
\renewcommand{\thelemma}{S\arabic{lemma}}
\renewcommand{\theproposition}{S\arabic{proposition}}
\renewcommand{\thecorollary}{S\arabic{corollary}}
\renewcommand{\thesection}{S\arabic{section}}

\makeatletter
\renewcommand{\theHtheorem}{S\arabic{theorem}}
\renewcommand{\theHlemma}{S\arabic{lemma}}
\renewcommand{\theHproposition}{S\arabic{proposition}}
\renewcommand{\theHcorollary}{S\arabic{corollary}}
\renewcommand{\theHequation}{S\arabic{equation}}
\renewcommand{\theHfigure}{S\arabic{figure}}
\renewcommand{\theHtable}{S\arabic{table}}
\renewcommand{\theHsection}{S\arabic{section}}
\makeatother

\onecolumngrid	
\begin{center}
	\textbf{\large High-Precision Fidelity Estimation with Common Randomized Measurements:
Supplemental Material}
\end{center}

\tableofcontents

\bigskip

In this Supplemental Material (SM), we prove the results presented in Appendix A and the main text, including  \pref{pro:3designVar}, \lref{lem:VRCRM}, and  \thssref{thm:4designVstar}{thm:CliffordVstar}{thm:CliffordHPFEpn}. In addition, we clarify the average performance of CRM shadow estimation based on 2-designs and provide additional results on CRM shadow estimation based on 4-design, Clifford, and Pauli measurements. To this end, we 
explicate the relations between $\rVert\Delta\rVert_2^2$, $\lVert\Delta\rVert_1^2$, and the infidelity $\epsilon$ and elucidate the  properties of cross characteristic functions, which are crucial to  understanding HPFE based on CRM shadow estimation. Finally, we provide extensive numerical results on the performance of CRM shadow estimation based on Clifford, Pauli, and 4-designs measurements, and explicate the dependence of the circuit sample costs on the nonstabilizerness and the number $R$ of circuit reusing.

Throughout this SM, $\caH$ denotes  a $d$-dimensional Hilbert space with $d=2^n$, and  $\bbone$ denotes the identity operator on $\caH$ as in the main text.  We also denote by $\caL(\caH)$, $\caL^\rmH(\caH)$, and $\caD(\caH)$ the sets of linear operators, Hermitian operators, and  density operators on $\caH$,  respectively, and denote by $\rmU(\caH)$ the group of unitary operators on $\caH$. In addition, $O, Q\in \caL(\caH)$ and $O$ is traceless; 
$\rho,\sigma\in \caD(\caH)$, $\Delta=\rho-\sigma$, and $\epsilon=1-\lVert\sqrt{\rho}\sqrt{\sigma}\lsp\rVert_1^2$ denotes the infidelity between $\rho$ and $\sigma$, which can be simplified as $\epsilon=1-\tr(\rho\sigma)$
when $\sigma$ is a pure state. In CRM shadow estimation, $\rho$ and $\sigma$ denote the system state and prior state, respectively; in HPFE, $\sigma$ also denotes the target pure state. 
 Furthermore, we denote by $\Tensor{\bbone}{2}$ and $\SWAP$ the identity operator and swap operator on $\Tensor{\caH}{2}$.

\section{\label{pro:3designVarProof}Proof of \pref{pro:3designVar}}

The equality in \eref{eq:3designVar} of \pref{pro:3designVar} was derived in \rcite{huang2020pred}, and 
the inequality follows from \lref{lem:var_exp_ineq} below.

\begin{lemma}\label{lem:var_exp_ineq}
	Suppose $Q\in\caL^\rmH(\caH)$ and $\rho\in \caD(\caH)$. Then 
	\begin{equation}\label{eq:var_exp_ineq}
		\tr\left(Q^2\right)+[\tr(\rho Q)]^2\ge2\tr\left(\rho Q^2\right).
	\end{equation}
\end{lemma}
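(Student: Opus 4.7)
The plan is to reduce \eref{eq:var_exp_ineq} to a scalar inequality via the spectral decomposition of $\rho$ and then to establish the scalar statement by a pairwise positivity argument. Working in an eigenbasis $\{|k\>\}$ of $\rho$ (with eigenvalues $\lambda_k\ge 0$ and $\sum_k\lambda_k=1$) and writing $Q_{kl}=\<k|Q|l\>$, I would expand $\tr(Q^2)=\sum_{k,l}|Q_{kl}|^2$, $\tr(\rho Q)=\sum_k\lambda_k Q_{kk}$, and $\tr(\rho Q^2)=\sum_{k,l}\lambda_k|Q_{kl}|^2$, and then form the target combination $\tr(Q^2)+[\tr(\rho Q)]^2-2\tr(\rho Q^2)$. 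Symmetrizing the off-diagonal contribution under $(k,l)\leftrightarrow(l,k)$ rewrites it as $\sum_{k<l}2(1-\lambda_k-\lambda_l)|Q_{kl}|^2$, which is nonnegative because $\lambda_k+\lambda_l\le\sum_j\lambda_j=1$. It therefore suffices to prove the reduced diagonal inequality $\sum_k(1-2p_k)q_k^2+(\sum_k p_k q_k)^2\ge 0$ with $p_k=\lambda_k$ and $q_k=Q_{kk}\in\bbR$.

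For this scalar inequality I would use the identity $\sum_k(1-p_k)q_k^2=\frac{1}{2}\sum_{k\ne l}(p_l q_k^2+p_k q_l^2)$ together with the sum-of-pairs form of the variance $\sum_k p_k q_k^2-(\sum_k p_k q_k)^2=\frac{1}{2}\sum_{k\ne l}p_k p_l(q_k-q_l)^2$. Substituting both identities reduces the scalar inequality to the pairwise claim that, for every $k\ne l$, $p_l q_k^2+p_k q_l^2\ge p_k p_l(q_k-q_l)^2$.

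The final and only delicate step is to verify this pairwise claim. Expanding the difference yields $p_l(1-p_k)q_k^2+2p_k p_l q_k q_l+p_k(1-p_l)q_l^2$, which is the quadratic form in $(q_k,q_l)$ associated with the $2\times 2$ matrix $\bigl(\begin{smallmatrix}p_l(1-p_k)&p_k p_l\\p_k p_l&p_k(1-p_l)\end{smallmatrix}\bigr)$. Its diagonal entries are nonnegative and its determinant equals $p_k p_l(1-p_k-p_l)$, which is itself nonnegative precisely because $p_k+p_l\le\sum_j p_j=1$. This normalization constraint $\sum_j p_j=1$ is the crux of the whole argument: a single $p_i>1/2$ would make the corresponding diagonal term $1-2p_i$ of the scalar form negative and could a priori jeopardize the bound, but the pairwise bound $p_k+p_l\le 1$ is exactly what compensates any such negative contribution, completing the proof.
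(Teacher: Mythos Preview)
Your proof is correct but follows a genuinely different route from the paper. The paper first diagonalizes $Q$ (by unitary invariance) and observes that then only the diagonal part of $\rho$ matters, so both may be taken diagonal; it then dispatches the resulting scalar inequality in one stroke by constructing the pure state $\varrho=\sum_{i,j}\sqrt{\gamma_i\gamma_j}\,|i\>\<j|$ and the vector $|v\>=\sum_i q_i|i\>\otimes|i\>$ and recognizing the difference as $\<v|(\bbone-\varrho)\otimes(\bbone-\varrho)|v\>\ge 0$. You instead diagonalize $\rho$, handle the off-diagonal contribution of $Q$ by the symmetrization $\sum_{k<l}2(1-\lambda_k-\lambda_l)|Q_{kl}|^2\ge 0$, and then prove the diagonal scalar inequality by a pairwise $2\times 2$ positive-semidefiniteness check with determinant $p_kp_l(1-p_k-p_l)\ge 0$. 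The paper's tensor-product identity is slicker and more conceptual; your argument is entirely elementary and makes explicit, in two separate places, that the constraint $\lambda_k+\lambda_l\le 1$ is the crux. Both are valid, and your treatment of the off-diagonal terms is a small extra step that the paper's reduction to simultaneously diagonal $Q$ and $\rho$ avoids.
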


\begin{proof}[Proof of \lref{lem:var_exp_ineq}]
Note that both sides in \eref{eq:var_exp_ineq} are invariant under  simultaneous unitary transformations of $Q$ and $\rho$. So  we can assume that $Q$ is diagonal in the computational basis without loss of generality. Then $\tr(\rho Q)$  and $\tr\left(\rho Q^2\right)$ only depend on the diagonal part of  $\rho$, so we can assume that $\rho$ is also diagonal without loss of generality. Now $Q$ and $\rho$ can be expressed as follows:
\begin{align}
	Q= \sum_i q_i |i\> \< i|,\quad \rho= \sum_i \gamma_i |i\> \< i|,
\end{align}
where $q_i$ ($\gamma_i$) is the eigenvalue of $Q$ ($\rho$) corresponding to the eigenvector $|i\>$. 

Next, we introduce
\begin{gather}
\varrho=\sum_{i,j}\sqrt{\gamma_i\gamma_j}\lsp |i\>\<j|,\quad  \quad|v\>=\sum_i q_i|i\>\otimes|i\>,
 \end{gather}
where $\varrho\in \caD(\caH)$ is a pure state  and $|v\>\in \Tensor{\caH}{2}$  is not necessarily normalized. Then  
\begin{equation}
\begin{aligned}
	\tr\left(Q^2\right)&=\sum_iq_i^2=\<v|v\>=    \<v|\bbone\otimes\bbone|v\>,\\
	\tr\left(\rho Q^2\right)&= \sum_i \gamma_i q_i^2=
	\<v|\varrho\otimes\bbone|v\>=\<v|\bbone\otimes\varrho|v\>,\\
	[\tr(\rho Q)]^2 &=\left(\sum_i \gamma_i q_i\right)^2 = \<v|\varrho\otimes\varrho|v\>.
\end{aligned}
\end{equation} 
    As a simple corollary, we have 
    \begin{align}
        \tr\left(Q^2\right)+\left[\tr(\rho Q)\right]^2-2\tr\left(\rho Q^2\right)&=\<v|(\bbone-\varrho)\otimes (\bbone-\varrho)|v\>\ge0,
    \end{align}
    where the inequality holds because $\bbone-\varrho$ is positive semidefinite given that $\varrho$ is a density operator. This result implies \eref{eq:var_exp_ineq} and completes the proof of \lref{lem:var_exp_ineq}. 
\end{proof}

\section{\label{SM:VRCRMproof}Proof of \lref{lem:VRCRM}}

\begin{proof}[Proof of \lref{lem:VRCRM}]
To achieve our goal, we first derive a general expression for the variance $\bbV_R(\hO)$ in CRM shadow estimation following \rcite{vermersch2024enhanced}. Note that the estimator $\hrho$ in  \eref{eq:CRMestimator} can also be expressed as follows: 
\begin{equation}
	\hrho=\sum_{\bfs}\hP_\rho(\bfs\mid U)\caM^{-1}\left(U^\dag|\bfs\>\<\bfs|U\right),
\end{equation}
where $\hP_\rho(\bfs\mid U)=R_{\bfs}/R$ denotes the experimentally estimated probability of 
outcome $\bfs$ after performing computational-basis measurements  on the state $U\rho U^{\dag}$. Here, $(R_\bfs)_{\bfs\in \{0,1\}^n}$ obeys the multinomial distribution $\text{Multinomial}\left(R;(p_\bfs)_{\bfs\in \{0,1\}^n}\right)$, where 
$p_\bfs=P_\rho(\bfs\mid U)=\<\bfs|U\rho U^\dag|\bfs\>$ denotes  the  theoretical probability of outcome $\bfs$ when performing a computational-basis measurement on the state $U\rho U^\dag$. Accordingly, the CRM estimator for $\tr(O \rho)$ reads
\begin{align}
	\hO &= \tr(O \hrho)=\sum_{\bfs}\left[\hP_\rho(\bfs\mid U)-P_\sigma(\bfs\mid U)\right]\tr\left[O\caM^{-1}\left(U^{\dag}|\bfs\>\<\bfs|U\right)\right]+\tr(O\sigma)\notag\\
	&=\sum_\bfs \left[\hP_\rho(\bfs\mid U)-P_\sigma(\bfs\mid U)\right]\left[\caM^{-1}(O)\right](U,\bfs)+\tr(O\sigma),
\end{align}
where $\left[\caM^{-1}(O)\right](U,\bfs):=\tr\left[O\caM^{-1}\left (U^{\dag}|\bfs\>\<\bfs|U\right)\right]=\tr\left[\caM^{-1}(O)U^{\dag}|\bfs\>\<\bfs|U\right]$. The variance of this estimator reads
\begin{align}
	\bbV_R(\hO)&=\bbV\left\{\sum_\bfs \left[\hP_\rho(\bfs\mid U)-P_\sigma(\bfs\mid U)\right]\left[\caM^{-1}(O)\right](U,\bfs)\right\}\notag\\
	&=\bbE_U\left(\sum_{\bfs,\bft}\bbE_\rmM\left\{ \left[\hP_\rho(\bfs\mid U)-P_\sigma(\bfs\mid U)\right]\left[\hP_\rho(\bft\mid U)-P_\sigma(\bft\mid U)\right]\right\}\left[\caM^{-1}(O)\right](U,\bfs)\left[\caM^{-1}(O)\right](U,\bft)\right)\notag\\
	&\quad-\left\{\tr\left[(\rho-\sigma)O\right]\right\}^2,\label{eq:VarianceCRMDer}
\end{align}
where $\bbE_\rmM$ represents the average over the multinomial distribution. 

According to well-known results on the multinomial distribution, we have
\begin{align}
	\bbE_\rmM(R_\bfs R_{\bft})&=\begin{cases}
		R(R-1)P_\rho(\bfs\mid U)P_\rho(\bft\mid U),&\bfs\neq\bft,\\
		R^2\left[P_\rho(\bfs\mid U)\right]^2+RP_\rho(\bfs\mid U)\left[1-P_\rho(\bfs\mid U)\right], &\bfs=\bft, \end{cases}\\
	\bbE_\rmM\left[\hP_\rho(\bfs\mid U)\hP_\rho(\bft\mid U)\right]&=P_\rho(\bfs\mid U)P_\rho(\bft\mid U)+\frac{\delta_{\bfs,\bft}P_\rho(\bfs\mid U)-P_\rho(\bfs\mid U)P_\rho(\bft\mid U)}{R}.\label{eq:MultJoint2ndMoment}
\end{align}
By virtue of \eref{eq:MultJoint2ndMoment}, the variance $\bbV_R(\hO)$ in  \eref{eq:VarianceCRMDer} can be expressed as follows \cite{vermersch2024enhanced}:
\begin{equation}
	\bbV_R(\hO)=\bbV_U\left[f_{\rho,\sigma}(U)\right]+\frac{\bbE_U\left[g_\rho (U)\right]}{R},
\end{equation}
where
\begin{equation}
\begin{aligned}
	f_{\rho,\sigma}(U)&:=\sum_{\bfs}\left[P_{\rho}(\bfs\mid U)-P_{\sigma}(\bfs\mid U)\right]\left[\caM^{-1}(O)\right](U,\bfs)=\sum_{\bfs}P_{\Delta}(\bfs\mid U)\left[\caM^{-1}(O)\right](U,\bfs),  \\
	g_{\rho}(U)&:=\sum_{\bfs}P_{\rho}(\bfs\mid U)\left\{\left[\caM^{-1}(O)\right](U,\bfs)\right\}^2-\left\{\sum_{\bfs}P_{\rho}(\bfs\mid U)\left[\caM^{-1}(O)\right](U,\bfs)\right\}^2,
\end{aligned}
\end{equation}
with $\Delta=\rho-\sigma$ and $P_{\Delta}(\bfs\mid U)=\<\bfs|U\Delta U^\dag|\bfs\>$.
To complete the proof of \lref{lem:VRCRM}, it remains to prove the following equalities:
\begin{align}
	\bbV_U[f_{\rho,\sigma}(U)]&=\bbV_*(O,\Delta),\quad 
	\bbE_U\left[g_\rho (U)\right]=\bbV(O,\rho)-\bbV_*(O,\rho).\label{eq:CRMvarfg}
\end{align}

By the definition of $f_{\rho,\sigma}(U)$ we can deduce that	
\begin{equation}
\begin{aligned}
	\bbE_U[f_{\rho ,\sigma}(U)]&=\bbE_U\left\{\sum_{\bfs}\left[ P_{\rho}(\bfs\mid U)-P_{\sigma}(\bfs\mid U) \right]\left[\caM^{-1}(O)\right](U,\bfs)\right\} =\tr[(\rho-\sigma) O]=\tr(\Delta O),\\
	\bbE_U\left\{[f_{\rho,\sigma}(U)]^2\right\}&=\sum_{\bfs,\bft}\bbE_U\left\{[\caM^{-1}(O)](U,\bfs)P_{\Delta}(\bfs\mid U)[\caM^{-1}(O)](U,\bft) P_{\Delta}(\bft\mid U)\right\}\\
	&=\sum_{\bfs,\bft}\bbE_{U}\left[\<\bfs|U \caM^{-1}(O) U^\dag|\bfs\> \<\bfs|U \Delta U^\dag|\bfs\> \<\bft|U \caM^{-1}(O) U^\dag|\bft\>\<\bft|U \Delta U^\dag|\bft\>\right] \\
	&=\tr\left\{\Omega(\caU)\Tensor{\left[\caM^{-1}(O)\otimes\Delta\right]}{2}\right\},
\end{aligned}
\end{equation}
where the last equality follows from the definition of $\Omega(\caU)$ in \eref{eq:CrossMoment}. 
	Therefore,
	\begin{equation}
		\bbV_U[f_{\rho,\sigma}(U)]=\bbE_U\left\{[f_{\rho,\sigma}(U)]^2\right\}-\left\{\bbE_U[f_{\rho ,\sigma}(U)]\right\}^2=\tr\left\{\Omega(\caU)\Tensor{\left[\caM^{-1}(O)\otimes\Delta\right]}{2}\right\}-\left[\tr(\Delta O)\right]^2=\bbV_*(O,\Delta),\label{eq:ProofTwirl2}
	\end{equation}
	which confirms the first equality in \eref{eq:CRMvarfg}. Here the last equality follows directly from the definition of $\bbV_*(O,\rho)$ in \eref{eq:V*}. Similarly, the second equality in \eref{eq:CRMvarfg} can be proved as follows:
	\begin{align}
		\bbE_U[g_\rho(U)]&=\bbE_U\left(\sum_{\bfs}P_{\rho}(\bfs\mid U)\left\{\left[\caM^{-1}(O)\right](U,\bfs)\right\}^2\right)-\left\{\bbE_U\sum_{\bfs}P_{\rho}(\bfs\mid U)\left[\caM^{-1}(O)\right](U,\bfs)\right\}^2\notag\\
		&\equad-\bbE_U\left(\left\{\sum_{\bfs}P_{\rho}(\bfs\mid U)\left[\caM^{-1}(O)\right](U,\bfs)\right\}^2\right)+\left\{\bbE_U\sum_{\bfs}P_{\rho}(\bfs\mid U)\left[\caM^{-1}(O)\right](U,\bfs)\right\}^2\notag\\
		&=\bbV(O,\rho)-\tr\left\{\Omega(\caU)\Tensor{\left[\caM^{-1}(O)\otimes\rho\right]}{2}\right\}+[\tr(\rho O)]^2=\bbV(O,\rho)-\bbV_*(O,\rho),
	\end{align}
	where we have employed the definitions of $\Omega(\caU)$ and $\bbV_*(O,\rho)$ in \eqsref{eq:CrossMoment}{eq:V*}, respectively. This observation completes the proof of \lref{lem:VRCRM}. 	
\end{proof}

\section{Average performance of CRM shadow estimation based on 2-designs}
In this section, we clarify the average  performance of CRM shadow estimation based on 2-designs. For  $Q\in \caL^\rmH(\caH)$, define the following ensemble of observables:
\begin{equation}
\caE(Q):=\left\{UQU^{\dag}\,|\, U\in \rmU(\caH)\right\}\label{eq:ObEnsemble},
\end{equation}
which inherits the Haar measure on the unitary group $\rmU(\caH)$. We then define $\bbV(\caE(Q),\rho)$, $\bbV_*(\caE(Q),\rho)$, and $\bbV_*(\caE(Q),\Delta)$ as the average variances over the  ensemble $\caE(Q)$, where $\Delta=\rho-\sigma$, and $\rho$ and $\sigma$ denote the system state and prior state, respectively, as in the main text.

\begin{proposition}\label{pro:average}
	Suppose $\caU$ is a unitary 2-design on $\caH$ and $O$ is a traceless observable on $\caH$. Then the average variances $\bbV(\caE(O),\rho)$, $\bbV_*(\caE(O),\rho)$, and $\bbV_*(\caE(O),\Delta)$ in $\CRM(\caU,\sigma,R)$ read
	\begin{align}
	\bbV(\caE(O),\rho) &= \left[1+\frac{d-\tr(\rho^2)}{d^2-1}\right]\rVert O\rVert_2^2, \label{eq:VOrhoAvg}\\
	\bbV_*(\caE(O),\rho) &= \frac{d\tr(\rho^2)-1}{d^2-1}\lVert O \rVert_2^2<\frac{\lVert O \rVert_2^2}{d},  \label{eq:V*OrhoAvg}\\
	\bbV_*(\caE(O),\Delta)&=\frac{d \lVert\Delta\rVert_2^2\lVert O \rVert_2^2}{d^2-1}. \label{eq:V*EODelta}
	\end{align}
\end{proposition}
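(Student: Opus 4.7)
The plan is to perform the Haar twirl over the random observable $O_V := VOV^\dag$ first, reducing the three variances to expressions that can then be evaluated using the 2-design property of $\caU$. The core tool is the Schur--Weyl formula: since $O$ is traceless,
\begin{equation*}
\bbE_V (VOV^\dag)^{\otimes 2} = -\frac{\lVert O\rVert_2^2}{d(d^2-1)}\,\bbone^{\otimes 2}+\frac{\lVert O\rVert_2^2}{d^2-1}\,\SWAP,
\end{equation*}
and the 2-design hypothesis gives $\caM^{-1}(O_V) = (d+1)\lsp O_V$ for any traceless $O$. After the twirl, each variance collapses into a scalar multiple of $\lVert O\rVert_2^2$ times a factor depending only on $\tr(\rho^2)$ or $\lVert\Delta\rVert_2^2$.

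For the first formula, I would start from the standard identity $\bbV(O,\rho)=\bbE_U\sum_\bfs P_\rho(\bfs|U)[\caM^{-1}(O)](U,\bfs)^2 - [\tr(O\rho)]^2$. The inner square becomes $(d+1)^2\<\bfs|UO_VU^\dag|\bfs\>^2$, and the twirl on $O_V^{\otimes 2}$ together with $U^{\otimes 2}\SWAP\lsp U^{\dag\otimes 2}=\SWAP$ yields the $U$- and $\bfs$-independent constant $\lVert O\rVert_2^2/[d(d+1)]$; the $U$-average then collapses trivially using $\sum_\bfs P_\rho(\bfs|U)=1$, producing $(d+1)\lVert O\rVert_2^2/d$. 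Subtracting $\bbE_V[\tr(O_V\rho)]^2=[d\tr(\rho^2)-1]\lVert O\rVert_2^2/[d(d^2-1)]$ and simplifying reproduces \eref{eq:VOrhoAvg}.

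For the second and third formulas, I would expand $\tr\{\Omega(\caU)[\caM^{-1}(O_V)\otimes\tau]^{\otimes 2}\}$ with $\tau\in\{\rho,\Delta\}$, pull out the factor $(d+1)^2$, and use cyclicity to move the $U^{\otimes 4}$ conjugation onto the outer operator. Twirling $O_V$ on factors $1$ and $3$ of the four-fold tensor produces an $\bbone_{13}$ piece and a $\SWAP_{13}$ piece. The $\bbone_{13}$ piece factorizes deterministically into $[\tr(\tau)]^2$, equal to $1$ for $\tau=\rho$ and $0$ for $\tau=\Delta$. The $\SWAP_{13}$ piece commutes with $U^{\otimes 2}$ on those same factors, so the $U$-conjugation passes through; the trace against $|\bfs\>\<\bfs|\otimes|\bft\>\<\bft|$ enforces $\bfs=\bft$, and the expression reduces to $\bbE_U\sum_\bfs[P_\tau(\bfs|U)]^2$. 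A second application of the Schur--Weyl twirl on $(U\tau U^\dag)^{\otimes 2}$, which needs only the 2-design property of $\caU$, evaluates this to $[(\tr\tau)^2+\tr(\tau^2)]/(d+1)$. Assembling both pieces and subtracting $\bbE_V[\tr(O_V\tau)]^2$ gives \eref{eq:V*OrhoAvg} and \eref{eq:V*EODelta} after a short partial-fraction simplification.

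The main obstacle is purely bookkeeping: ensuring that the twirl over $O_V$ produces $\SWAP_{13}$ rather than $\SWAP_{12}$ on the four-fold tensor, and recognizing that the commutativity of this particular swap with $U^{\otimes 2}$ on factors $1,3$ is exactly what collapses the naive fourth-moment demand on $\caU$ implicit in $\Omega(\caU)$ down to a second-moment (2-design) calculation. Once the tensor indices are tracked correctly, no step beyond two applications of Schur--Weyl is needed.
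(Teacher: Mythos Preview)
Your proposal is correct and follows essentially the same route as the paper. Both arguments twirl over the random observable $VOV^\dag$ first via Schur--Weyl, observe that the resulting $\bbone$/$\SWAP$ combination on the $O$-factors is $U$-invariant, and thereby collapse what looks like a fourth-moment requirement on $\caU$ down to a second-moment (2-design) computation on $(U\tau U^\dag)^{\otimes 2}$. The paper keeps the two pieces bundled as $d\,\SWAP-\bbone$ and evaluates $\sum_{\bfs,\bft}(d\delta_{\bfs,\bft}-1)^2$ directly, while you separate the $\bbone_{13}$ and $\SWAP_{13}$ contributions and track $(\tr\tau)^2$ versus $\bbE_U\sum_\bfs P_\tau(\bfs|U)^2$; this is a bookkeeping difference, not a methodological one. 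Your version is slightly more self-contained in that you also rederive \eref{eq:VOrhoAvg} and \eref{eq:V*OrhoAvg}, which the paper cites from \rcite{chen2024nonstab}.
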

The results in \eqsref{eq:VOrhoAvg}{eq:V*OrhoAvg} were derived in \rcite{chen2024nonstab}. On average, $\bbV_*(\caE(O),\rho)$ is already exponentially smaller than $\bbV(\caE(O),\rho)$, which means THR shadow estimation is efficient on average. Nevertheless, incorporating an accurate prior state can further reduce the variance  and enhance the overall performance significantly. Before proving \pref{pro:average}, we introduce an auxiliary lemma proved in \rcite{chen2024nonstab}.
\begin{lemma}\label{lem:UnitaryAvg}
	Suppose $\caU$ is a unitary 2-design on  $\caH$ and $A,B\in\caL(\caH)$. Then 
	\begin{gather}
	\bbE_{U\sim\caU}U A U^\dag=\frac{\tr(A)}{d}\bbone,  \label{eq:TwirlAvg} \\
	\bbE_{U\sim\caU} \left[\left(U A U^\dag\right)\otimes \left(U A U^\dag\right)^\dag\right] = \frac{d|\tr(A)|^2 -\lVert A \rVert_2^2}{d(d^2-1)}\Tensor{\bbone}{2} + \frac{d\lVert A\rVert_2^2-|\tr(A)|^2}{d(d^2-1)}\SWAP,\label{eq:TensorAvg}\\
	\bbE_{U\sim\caU} \left|\tr\left(UAU^\dag B\right)\right|^2=\frac{d|\tr(A)|^2|\tr(B)|^2+d\lVert A\rVert_2^2\lVert B\rVert_2^2-|\tr(A)|^2\lVert B\rVert_2^2-\lVert A\rVert_2^2|\tr(B)|^2}{d(d^2-1)}. \label{eq:TraceAvg}
	\end{gather}
\end{lemma}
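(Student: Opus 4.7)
The plan is to prove the three identities by exploiting the fact that, by definition, a unitary 2-design reproduces the first and second moments of the Haar measure, so $\bbE_{U\sim\caU}$ on polynomials of degree $\leq 2$ in $U$ and $\leq 2$ in $U^\dag$ can be replaced with the Haar average. The whole computation then reduces to Schur--Weyl duality for $\rmU(\caH)^{\otimes 2}$ acting on $\Tensor{\caH}{2}$, whose commutant is spanned by $\Tensor{\bbone}{2}$ and $\SWAP$.

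For \eref{eq:TwirlAvg}, I would use the 1-design property: since $U\mapsto UAU^\dag$ is a polynomial of degree $1$ in $U$ and $1$ in $U^\dag$, the average coincides with the Haar twirl, and Schur's lemma forces it to be a scalar multiple of $\bbone$. Taking a trace on both sides pins the scalar down as $\tr(A)/d$.

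For \eref{eq:TensorAvg}, observe that $(UAU^\dag)\otimes(UAU^\dag)^\dag=(U\otimes U)(A\otimes A^\dag)(U^\dag\otimes U^\dag)$, a quadratic polynomial in $U$ and in $U^\dag$, so the 2-design average equals the Haar one. By Schur--Weyl duality the result lies in $\mathrm{span}\{\Tensor{\bbone}{2},\SWAP\}$, so it has the form $\alpha\Tensor{\bbone}{2}+\beta\SWAP$. To determine $\alpha,\beta$, I would pair both sides with $\Tensor{\bbone}{2}$ and with $\SWAP$ under the Hilbert--Schmidt inner product. Using $\tr(\SWAP(X\otimes Y))=\tr(XY)$, the two right-hand sides evaluate to $|\tr(A)|^2$ and $\tr(AA^\dag)=\lVert A\rVert_2^2$ respectively (and are manifestly $U$-independent), while the Gram matrix of $\{\Tensor{\bbone}{2},\SWAP\}$ is $\bigl(\begin{smallmatrix}d^2 & d\\ d & d^2\end{smallmatrix}\bigr)$. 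Inverting this $2\times2$ system yields the stated coefficients.

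Finally, \eref{eq:TraceAvg} drops out as a corollary of \eref{eq:TensorAvg}. The identity
\begin{equation}
|\tr(UAU^\dag B)|^2=\tr\!\bigl[\bigl((UAU^\dag)\otimes(UAU^\dag)^\dag\bigr)(B\otimes B^\dag)\bigr]
\end{equation}
lets me take the expectation inside and substitute \eref{eq:TensorAvg}; then $\tr[\Tensor{\bbone}{2}(B\otimes B^\dag)]=|\tr(B)|^2$ and $\tr[\SWAP(B\otimes B^\dag)]=\tr(BB^\dag)=\lVert B\rVert_2^2$ reproduce the stated formula. There is no real obstacle here beyond bookkeeping; the only subtle point is checking that $(UAU^\dag)\otimes(UAU^\dag)^\dag$ is genuinely balanced of degree $(2,2)$ in $(U,U^\dag)$ so that the 2-design property applies, which is immediate from writing out $(UAU^\dag)^\dag=UA^\dag U^\dag$.
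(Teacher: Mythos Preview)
Your proof is correct and follows the standard route via Schur--Weyl duality. The paper does not actually supply its own proof of this lemma; it merely cites \rcite{chen2024nonstab} for the result, so there is nothing to compare against beyond noting that your argument is the canonical one.
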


\begin{proof}[Proof of \pref{pro:average}]
	\Eqsref{eq:VOrhoAvg}{eq:V*OrhoAvg} were derived in \rcite{chen2024nonstab}, so it remains to prove \eref{eq:V*EODelta}. 
	
	By assumption $\caU$ is a unitary 2-design  and $O$ is traceless, so the reconstruction map $\caM^{-1}(\cdot)$  has a simple form:  $\caM^{-1}(O)=(d+1)O$ \cite{huang2020pred}. In conjunction with the definition of $\bbV_*(O,\tau)$ in  \eref{eq:V*} we can deduce that	
	\begin{align}
	&\bbV_*(\caE(O),\Delta)= \bbE_{W\sim \haar}\left\{(d+1)^2\tr\left[\Omega(\caU)\left(WOW^{\dag}\otimes\Delta\otimes WOW^{\dag}\otimes\Delta\right)\right]-\left[\tr\left(WOW^{\dag}\Delta\right)\right]^2\right\}\notag\\
	&=(d+1)^2\bbE_{W\sim\haar}\bbE_{U\sim \caU}\sum_{\bfs,\bft}\tr\left[(|\bfs\>\<\bfs|\otimes|\bft\>\<\bft|)\left(UWOW^{\dag}U^{\dag}\right)^{\otimes 2}\right]\tr\left[(|\bfs\>\<\bfs|\otimes |\bft\>\<\bft|)\left(U\Delta U^\dag\right)^{\otimes 2}\right]\notag\\
	&\equad-\frac{\lVert O \rVert_2^2 \lVert\Delta\rVert_2^2}{d^2-1},
	\end{align}
	where the second equality follows from the definition of $\Omega(\caU)$ in \eref{eq:CrossMoment}. Since $O$ and $\Delta$ are traceless, according to \lref{lem:UnitaryAvg}, we have
	\begin{equation}
	\begin{aligned}
	\bbE_{W\sim \haar}\left(UWOW^{\dag}U^\dag\right)^{\otimes 2}&=\frac{\lVert O \rVert_2^2}{d(d^2-1)}U^{\otimes 2}(d\lsp \swap-\bbone)U^{\dag\otimes 2}=\frac{\lVert O \rVert_2^2}{d(d^2-1)}(d\lsp \swap-\bbone),\\
	\bbE_{U\sim \caU}\left(U\Delta U^\dag\right)^{\otimes 2}&=\frac{\lVert \Delta \rVert_2^2}{d(d^2-1)}(d\lsp \swap-\bbone).	
\end{aligned}
	\end{equation}
	Based on these results, the above expression of $\bbV_*(\caE(O),\Delta)$ can be simplified as follows:
	\begin{align}
	&\bbV_*(\caE(O),\Delta)=(d+1)^2\frac{\lVert O \rVert_2^2}{d(d^2-1)}\bbE_{U\sim\caU}\sum_{\bfs,\bft}\Bigl\{
	\left(d\delta_{\bfs,\bft}-1\right)\tr\left[(|\bfs\>\<\bfs|\otimes|\bft\>\<\bft|)\left(U\Delta U^\dag\right)^{\otimes 2}\right]\Bigr\}-\frac{\lVert O \rVert_2^2 \lVert\Delta\rVert_2^2}{d^2-1}\notag\\
	&		
	=\frac{\lVert O \rVert_2^2\lVert \Delta \rVert_2^2}{d^2(d-1)^2}	\sum_{\bfs,\bft}\left(d\delta_{\bfs,\bft}-1\right)^2-\frac{\lVert O \rVert_2^2 \lVert\Delta\rVert_2^2}{d^2-1}
	=\left(\frac{1}{d-1}-\frac{1}{d^2-1}\right)\lVert O \rVert_2^2 \lVert\Delta\rVert_2^2
	=\frac{d}{d^2-1}\lVert O \rVert_2^2 \lVert\Delta\rVert_2^2,
	\end{align}
	which confirms \eref{eq:V*EODelta} and completes the proof of \pref{pro:average}. 	
\end{proof}

\section{\label{SM:4_design}Variances in CRM shadow estimation based on 4-design measurements}

In this section, we derive  exact formulas for the 4th cross moment operator and the variances $\bbV_*(O,\Delta)$ and $\bbV_*(O,\rho)$ in  CRM shadow estimation  based on 4-design  measurements (see \lsref{lem:CrossMom4design} and \ref{lem:V*4design} below). On this basis we prove  \thref{thm:4designVstar} in the main text.

\subsection{The 4th cross moment operator}

Given an ensemble  $\caU$ of unitary transformations on $\caH$, according to the definition in \eref{eq:CrossMoment}, the 4th cross moment operator $\Omega(\caU)$ reads 
\begin{align}\label{eq:OmegaDecom}
	\Omega(\caU)&=\sum_{\bfs,\bft}\bbE_{U\sim\caU}\dagtensor{U}{4}\left[\Tensor{\left(|\bfs\>\<\bfs|\right)}{2}\otimes\Tensor{\left(|\bft\>\<\bft|\right)}{2}\right]\Tensor{U}{4}=\Omega_1(\caU)+\Omega_2(\caU),  
\end{align}
where 
\begin{equation}\label{eq:omega12}
	\begin{aligned}
		\Omega_1(\caU)&:=\sum_{\bfs}\bbE_{U\sim\caU}\dagtensor{U}{4}\left[\Tensor{\left(|\bfs\>\<\bfs|\right)}{4}\right]\Tensor{U}{4},\\
		\Omega_2(\caU)&:=\sum_{\bfs\neq\bft}\bbE_{U\sim\caU}\dagtensor{U}{4}\left[\Tensor{\left(|\bfs\>\<\bfs|\right)}{2}\otimes\Tensor{\left(|\bft\>\<\bft|\right)}{2}\right]\Tensor{U}{4}.
	\end{aligned}
\end{equation}

Now, suppose $\caU$ is a unitary 4-design. To simplify the expressions of $\Omega_1(\caU)$, $\Omega_2(\caU)$, and $\Omega(\caU)$, we need to introduce some auxiliary results. According to Schur-Weyl duality, the tensor power $\Tensor{\caH}{4}$ can be decomposed as follows:
\begin{align}
	\Tensor{\caH}{4}=\bigoplus_\lambda\caH_\lambda=\bigoplus_\lambda\caW_\lambda\otimes \caS_\lambda,
\end{align}
where $\lambda$ denotes a nonincreasing partition of 4 into no more than $d$ parts, while $\caW_\lambda$ and $\caS_\lambda$ denote the corresponding  Weyl module and  Specht module, which carry  irreducible representations of the unitary group $\rmU(\caH)$ and  permutation group $S_4$, respectively. Let $D_\lambda=\dim(\caW_\lambda)$ and $d_\lambda=\dim(\caS_\lambda)$. The projector onto $\caH_\lambda$ can be expressed as follows:
\begin{equation}
	P_\lambda=\frac{d_\lambda}{4!}\sum_{\pi\in S_{4}}\chi^{\lambda}(\pi)W_{\pi},\label{eq:projection_irrep}
\end{equation}
where $W_{\pi}$ denotes the permutation operator on $\Tensor{\caH}{4}$ associated with the permutation $\pi$, and $\chi^{\lambda}(\pi)$ denotes the character of $\pi$ corresponding to the representation $\lambda$.

\begin{lemma}\label{lem:prep}
	Suppose $\caU$ is a unitary 4-design on $\caH$ and  $Q\in\caL^\rmH(\Tensor{\caH}{4})$. Then
	\begin{equation}
		\bbE_{U\sim\caU}\tr\left(\Tensor{U}{4}Q\dagtensor{U}{4}\right)=\frac{1}{4!}\sum_{\pi\in S_{4}}\tr\left(QW_{\pi}\right)W_{\pi^{-1}}\sum_\lambda\frac{d_\lambda}{D_\lambda}P_\lambda.
	\end{equation}
\end{lemma}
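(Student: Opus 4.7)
\emph{Proof proposal.} The plan is to exploit Schur--Weyl duality together with the 4-design hypothesis. Since $\caU$ is a 4-design, the twirl $T(Q):=\bbE_{U\sim\caU}\Tensor{U}{4}Q\dagtensor{U}{4}$ agrees with the Haar twirl and therefore commutes with $\Tensor{V}{4}$ for every $V\in\rmU(\caH)$. By Schur--Weyl duality its commutant is spanned by $\{W_\pi:\pi\in S_4\}$, so one may write $T(Q)=\sum_{\pi\in S_4}c_\pi W_\pi$, and the task reduces to computing the coefficients $c_\pi$ in a form compatible with the projector identity \eref{eq:projection_irrep}.

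To extract the coefficients, I would pass to the Schur decomposition $\Tensor{\caH}{4}=\bigoplus_\lambda \caW_\lambda\otimes\caS_\lambda$, in which $\Tensor{U}{4}$ acts as $\pi_\lambda(U)\otimes\bbone_{\caS_\lambda}$ and $W_\pi$ as $\bbone_{\caW_\lambda}\otimes\sigma_\lambda(\pi)$. Applying Schur's lemma to the partial twirl over the Weyl factor shows that the block of $T(Q)$ inside $\caH_\lambda$ takes the form $\bbone_{\caW_\lambda}\otimes A_\lambda$ with $A_\lambda=\tr_{\caW_\lambda}(P_\lambda Q P_\lambda)/D_\lambda$. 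Matching this against the block restriction of $\sum_\pi c_\pi W_\pi$ produces, in each block $\lambda$, the constraint $A_\lambda=\sum_\pi c_\pi\lsp \sigma_\lambda(\pi)$.

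The next step is Fourier inversion in the group algebra of $S_4$. Pairing each constraint with $\sigma_\lambda(\tau^{-1})$, taking the trace, summing over $\lambda$ weighted by $d_\lambda$, and invoking the orthogonality relation $\sum_\lambda d_\lambda \chi^\lambda(\rho)=4!\,\delta_{\rho,e}$ yields $c_\tau=\frac{1}{4!}\sum_\lambda (d_\lambda/D_\lambda)\tr(Q\lsp W_{\tau^{-1}}P_\lambda)$. Substituting back into $T(Q)=\sum_\tau c_\tau W_\tau$, relabeling $\tau\leftrightarrow\pi^{-1}$, and regrouping the $\lambda$-sum into the central element $Z:=\sum_\lambda(d_\lambda/D_\lambda)P_\lambda$---which commutes with every $W_\pi$ because each $P_\lambda$ is central in the group algebra---produces the displayed identity after moving $Z$ outside the trace and attaching it to $W_{\pi^{-1}}$.

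I expect the main obstacle to be the bookkeeping of conventions: tracking whether group elements appear as $\pi$ or $\pi^{-1}$, whether projectors sit inside or outside the trace, and whether reindexings of summation variables respect the correct class-function identities. The saving grace is the centrality of $Z$ together with the reality of the characters $\chi^\lambda$ of $S_4$, which together collapse asymmetric-looking intermediate expressions such as $\chi^\lambda(\pi\tau)$ versus $\chi^\lambda(\tau\pi)$ into the compact symmetric form on the right-hand side.
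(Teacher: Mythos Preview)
Your argument is correct and is exactly the Schur--Weyl plus Fourier-inversion route underlying the Weingarten calculus; the paper itself omits the proof and simply defers to \rscite{collins2006inte,roth2018recover,zhou2023perf}, so there is no in-paper argument to compare with beyond noting that yours is the standard one those references develop.

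One step is glossed over and is worth making explicit. After Fourier inversion you arrive at $c_\tau=\tfrac{1}{4!}\sum_\lambda (d_\lambda/D_\lambda)\,\tr(Q W_{\tau^{-1}}P_\lambda)$ with $P_\lambda$ sitting \emph{inside} the trace, whereas the target has $Z=\sum_\lambda(d_\lambda/D_\lambda)P_\lambda$ attached to $W_{\pi^{-1}}$ \emph{outside}. Centrality of $Z$ alone does not let you pull it through the trace; the honest mechanism is to expand $P_\lambda$ via \eref{eq:projection_irrep} on both sides and then invoke $\chi^\lambda(\pi\tau)=\chi^\lambda(\tau\pi)$---the class-function property you allude to (reality of the characters is incidental). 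A slicker alternative avoids computing $c_\tau$ at all: both sides lie in $\operatorname{span}\{W_\tau\}$, and tracing against each $W_\tau$ gives $\tr(T(Q)W_\tau)=\tr(QW_\tau)$ on the left (since $W_\tau$ commutes with $U^{\otimes 4}$), while on the right $\tr(P_\lambda W_\sigma)=D_\lambda\chi^\lambda(\sigma)$ together with $\sum_\lambda d_\lambda\chi^\lambda(\sigma)=4!\,\delta_{\sigma,e}$ closes immediately. Note also that your inversion step uses the orthogonality relation summed over \emph{all} irreps of $S_4$, which tacitly assumes $d\geq 4$ so that every partition appears in $\caH^{\otimes 4}$; for $d=2$ the trace-pairing argument, together with non-degeneracy of the trace form on the commutant, is the clean fix.
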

We omit the proof of \lref{lem:prep} here; interested readers may refer to \rscite{collins2006inte,roth2018recover,zhou2023perf} for rigorous proofs and further discussions.

Next, it is convenient to introduce the special subgroup $G=\{e,(12),(34),(12)(34)\}$  of $S_4$, where $e$ denotes the identity permutation.  Denote by $P_{G}=\sum_{\pi\in G}W_{\pi}/4$  the projector onto the
subspace of $\caH^{\otimes 4}$ that carries the trivial representation of 
$G$. Note that $P_G$ commutes with each projector $P_\lambda$  in \eref{eq:projection_irrep}. 
\begin{lemma}\label{lem:CrossMom4design}
	Suppose $\caU$ is a unitary 4-design on $\caH$. Then  $\Omega_1(\caU)$, $\Omega_2(\caU)$, and the cross moment operator $\Omega(\caU)$ read
	\begin{align}
		\Omega_1(\caU)&=\frac{24P_{[4]}}{(d+1)(d+2)(d+3)},\quad 
		\Omega_2(\caU)=\frac{1}{6}\left(d^2-d\right)\sum_\lambda\frac{d_\lambda}{D_\lambda}P_{G}P_\lambda,  \label{eq:Omega12Usim}\\
		\Omega(\caU)&=\Omega_1(\caU)+\Omega_2(\caU)=
		\frac{24P_{[4]}}{(d+1)(d+2)(d+3)}+\frac{1}{6}\left(d^2-d\right)\sum_\lambda\frac{d_\lambda}{D_\lambda}P_{G}P_\lambda.  \label{eq:CrossMom4design}
	\end{align}
\end{lemma}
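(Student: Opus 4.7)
The plan is to treat the two pieces $\Omega_1(\caU)$ and $\Omega_2(\caU)$ separately, using Schur's lemma for the first and a character-theoretic verification for the second. Throughout, I will exploit the fact that for a unitary 4-design, $\bbE_{U\sim\caU}\dagtensor{U}{4}(\cdot)\Tensor{U}{4}$ coincides with the Haar twirl, which projects onto the commutant of $\rmU(\caH)^{\otimes 4}$; by Schur--Weyl this commutant is spanned by the permutation operators $\{W_\pi\}_{\pi\in S_4}$.

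For $\Omega_1(\caU)$, the key observation is that $\Tensor{(|\bfs\>\<\bfs|)}{4}$ is the rank-one projector onto $|\bfs\>^{\otimes 4}\in\caH_{[4]}=\mathrm{Sym}^4(\caH)$. Since $\caH_{[4]}$ is an irreducible $\rmU(\caH)$-module, the twirl is $U^{\otimes 4}$-invariant and supported on $\caH_{[4]}$, so Schur's lemma forces it to equal $\frac{1}{D_{[4]}}P_{[4]}$ (trace normalization on $\caH_{[4]}$). Summing the $d$ values of $\bfs$ and using $D_{[4]}=\binom{d+3}{4}$ yields the first formula in \eref{eq:Omega12Usim}. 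This step is routine.

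For $\Omega_2(\caU)$, the main work is in the cross terms $X_{\bfs,\bft}:=\Tensor{(|\bfs\>\<\bfs|)}{2}\otimes\Tensor{(|\bft\>\<\bft|)}{2}$ with $\bfs\neq\bft$. I would first write the twirl as $T(X)=\sum_{\sigma\in S_4}c_\sigma W_\sigma$, where $X=\sum_{\bfs\neq\bft}X_{\bfs,\bft}$, and fix the coefficients via the consistency relations $\tr[T(X)W_\sigma]=\tr[XW_\sigma]$ for every $\sigma$ (valid because each $W_\sigma$ commutes with $\Tensor{U}{4}$ and is therefore $T$-invariant). Using the cycle-product formula $\tr[W_\sigma\bigotimes_i A_i]=\prod_{\text{cycles }c}\tr\bigl(\prod_{i\in c}A_i\bigr)$ with $A_1=A_2=|\bfs\>\<\bfs|$ and $A_3=A_4=|\bft\>\<\bft|$, one checks that any cycle mixing $\{1,2\}$ with $\{3,4\}$ contributes a factor $|\<\bfs|\bft\>|^{2k}$ that vanishes upon summation over $\bfs\neq\bft$. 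The permutations $\sigma$ with no such mixing cycle are precisely those in $G=\{e,(12),(34),(12)(34)\}$, and each contributes $1$ per pair, giving
\begin{equation}
\tr\left[X W_\sigma\right]=(d^2-d)\,\mathbb{1}[\sigma\in G].
\end{equation}

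To match this with the claimed expression, I would verify that the right-hand side $\Omega_2^{\mathrm{RHS}}=\frac{d^2-d}{6}\sum_\lambda\frac{d_\lambda}{D_\lambda}P_G P_\lambda$ produces the same traces $\tr[\Omega_2^{\mathrm{RHS}}W_\sigma]$. Using $P_G=\frac{1}{4}\sum_{g\in G}W_g$, the centrality of $P_\lambda$, and the identity $\tr(P_\lambda W_\pi)=D_\lambda\chi^\lambda(\pi)$ (which follows from the Schur--Weyl decomposition together with $W_\pi$ acting as $I_{\caW_\lambda}\otimes\pi|_{\caS_\lambda}$ on the isotypic block), one arrives at
\begin{equation}
\tr\left[\Omega_2^{\mathrm{RHS}}W_\sigma\right]=\frac{d^2-d}{24}\sum_{g\in G}\sum_\lambda d_\lambda\chi^\lambda(\sigma g).
\end{equation}
The inner sum collapses by the column orthogonality of $S_4$ characters, $\sum_\lambda d_\lambda\chi^\lambda(\pi)=|S_4|\,\mathbb{1}[\pi=e]=24\,\mathbb{1}[\pi=e]$, so only $g=\sigma^{-1}$ survives (and only when $\sigma\in G$), giving $(d^2-d)\mathbb{1}[\sigma\in G]$ as required. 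Since the permutation-operator expansion with prescribed traces is unique (the $W_\pi$ span the commutant where $T(X)$ lives), this concludes that $\Omega_2(\caU)=\Omega_2^{\mathrm{RHS}}$. Adding $\Omega_1$ and $\Omega_2$ gives \eref{eq:CrossMom4design}.

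The main conceptual obstacle will be justifying cleanly that the $\{W_\pi\}$ give a basis (with the usual invertible Gram matrix $d^{c(\pi\sigma)}$) so that matching $\tr[\cdot W_\sigma]$ on both sides determines the operator; beyond this, the computation is bookkeeping. I expect minor care will also be needed to handle the conventions in \lref{lem:prep} (which as stated contains a spurious $\tr$); I would sidestep this by appealing directly to the uniqueness of the expansion in the commutant rather than invoking the stated formula verbatim.
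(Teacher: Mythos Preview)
Your argument is correct and shares the paper's core computation for $\Omega_2$: identifying $\tr[XW_\sigma]=(d^2-d)\,\mathbb{1}[\sigma\in G]$ from the cycle structure. The packaging differs slightly. The paper feeds this trace data directly into the Weingarten-type formula of \lref{lem:prep} (whose left-hand side indeed carries a spurious $\tr$) to obtain $\frac{1}{24}(d^2-d)\sum_{\pi\in G}W_\pi\sum_\lambda\frac{d_\lambda}{D_\lambda}P_\lambda$ in one stroke. You instead take the claimed expression and verify that its traces against each $W_\sigma$ agree, using the regular-character identity $\sum_\lambda d_\lambda\chi^\lambda(\pi)=24\,\mathbb{1}[\pi=e]$ and uniqueness in the commutant. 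Your route avoids invoking \lref{lem:prep} as stated, at the cost of one extra character identity; the paper's is a line shorter once the lemma is trusted. One small remark: your caveat about the Gram matrix is slightly overcautious, since even when the $W_\pi$ are linearly dependent (e.g.\ $d<4$), matching Hilbert--Schmidt inner products against a spanning set of the commutant still determines an element of that commutant uniquely.
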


\begin{proof}[Proof of \lref{lem:CrossMom4design}]
	According to \eref{eq:omega12} we have
	\begin{equation}
		\Omega_1(\caU)=\sum_{\bfs}\bbE_{U\sim\caU}\dagtensor{U}{4}\left[\Tensor{\left(|\bfs\>\<\bfs|\right)}{4}\right]\Tensor{U}{4}=\frac{24P_{[4]}}{(d+1)(d+2)(d+3)},
	\end{equation}
	where the second equality follows from Schur-Weyl duality given that $\caU$ is a unitary 4-design by assumption. In addition, by virtue of  \lref{lem:prep} we can deduce that
	\begin{align}
		\Omega_2(\caU)&=\sum_{\bfs\neq\bft}\bbE_{U\sim\caU}\dagtensor{U}{4}\left[\Tensor{\left(|\bfs\>\<\bfs|\right)}{2}\otimes\Tensor{\left(|\bft\>\<\bft|\right)}{2}\right]\Tensor{U}{4}\nonumber\\
		&=\frac{1}{4!}\sum_{\pi\in S_{4}}\sum_{\bfs \neq \bft}\tr\left[\Tensor{(|\bfs\>\<\bfs|)}{2}\otimes\Tensor{(|\bft\>\<\bft|)}{2}W_{\pi^{-1}}\right]W_{\pi}\sum_\lambda\frac{d_\lambda}{D_\lambda}P_\lambda \nonumber\\
		& =\frac{1}{4!}\left(d^2-d\right)\sum_{\pi\in G}W_{\pi}\sum_\lambda\frac{d_\lambda}{D_\lambda}P_\lambda=\frac{1}{6}\left(d^2-d\right)\sum_\lambda\frac{d_\lambda}{D_\lambda}P_{G}P_\lambda,  \label{eq:omega2}
	\end{align}
	where the third equality holds because 
	$\tr\left[\Tensor{(|\bfs\>\<\bfs|)}{2}\otimes\Tensor{(|\bft\>\<\bft|)}{2}W_{\pi^{-1}}\right]$ is equal to 1 (0) when $\pi\in G$ ($\pi\notin G$), while the last equality follows from the definition of $P_G$. The above two equations together confirm \eref{eq:Omega12Usim}. Finally, \eref{eq:CrossMom4design} follows from \eqsref{eq:OmegaDecom}{eq:Omega12Usim}, which completes the proof of \lref{lem:CrossMom4design}. 
\end{proof}

\subsection{Exact formula for the variance $\bbV_*(O,\Delta)$}

\begin{lemma}\label{lem:V*4design}
	Suppose $\caU$ is a unitary 4-design on $\caH$, $O$ is a traceless observable on $\caH$, and $\rho\in \caD(\caH)$ is the system state. Then, the variances $\bbV_*(O,\Delta)$ and $\bbV_*(O,\rho)$ in $\CRM(\caU,\sigma,R)$ read
	\begin{align}
		\bbV_*(O,\Delta)&=\frac{d^2+3d+4}{d(d+2)(d+3)}[\tr(\Delta O)]^2+\frac{2(d+1)^2}{d(d+2)(d+3)}\tr\left(\Delta^2 O^2\right)\notag\\
		&\equad+\frac{d+1}{d(d+3)}\tr\left(\Delta^2\right)\tr\left(O^2\right)-\frac{2(d+1)}{d(d+2)(d+3)}\tr(\Delta O \Delta O).\label{eq:V*4design}\\
        \bbV_*(O,\rho)&=\frac{d^2+3d+4}{d(d+2)(d+3)}[\tr(\rho O)]^2+\frac{2(d+1)^2}{d(d+2)(d+3)}\tr\left(\rho^2 O^2\right)\notag\\
		&\equad+\frac{d+1}{d(d+3)}\tr\left(\rho^2\right)\tr\left(O^2\right)-\frac{2(d+1)}{d(d+2)(d+3)}\tr(\rho O \rho O)-\frac{4(d+1)}{d(d+2)(d+3)}\tr\left(\rho O^2\right).\label{eq:V*_rho_4design}
	\end{align}
\end{lemma}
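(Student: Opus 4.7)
The plan is to evaluate the defining trace
\[
\bbV_*(O,\tau)=\tr\!\bigl\{\Omega(\caU)\,[\caM^{-1}(O)\otimes\tau]^{\otimes 2}\bigr\}-[\tr(O\tau)]^2
\]
from \eref{eq:V*} for $\tau\in\{\Delta,\rho\}$ by plugging in the explicit form of $\Omega(\caU)$ from \lref{lem:CrossMom4design}. Since a unitary 4-design is in particular a 2-design and $O$ is traceless, the reconstruction map reduces to $\caM^{-1}(O)=(d+1)O$. Setting $X:=O\otimes\tau\otimes O\otimes\tau\in\caL(\Tensor{\caH}{4})$, with $O$ at positions $1,3$ and $\tau$ at positions $2,4$, the task becomes computing $(d+1)^2\tr[\Omega(\caU)X]-[\tr(O\tau)]^2$.

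Next, I would split $\Omega(\caU)=\Omega_1(\caU)+\Omega_2(\caU)$. The piece $\Omega_1=\frac{1}{(d+1)(d+2)(d+3)}\sum_{\pi\in S_4}W_\pi$ is handled by the standard cycle-trace identity
\[
\tr(W_\pi\,A_1\otimes A_2\otimes A_3\otimes A_4)=\prod_{c\in\mathrm{cyc}(\pi)}\tr\!\left(\prod_{i\in c}A_i\right),
\]
summed over the $24$ permutations grouped by cycle type. For $\tau=\Delta$ (traceless), only the three double-transpositions and the six 4-cycles survive, contributing $2[\tr(O\Delta)]^2+\tr(O^2)\tr(\Delta^2)$ and $4\tr(O^2\Delta^2)+2\tr(O\Delta O\Delta)$ respectively. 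For $\tau=\rho$ (trace one), the transposition $(1,3)$ and the four 3-cycles whose singleton sits at position $2$ or $4$ also survive, yielding the extra terms $\tr(O^2)$ and $4\tr(\rho O^2)$.

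For $\Omega_2=\frac{d(d-1)}{6}\sum_\lambda(d_\lambda/D_\lambda)P_G P_\lambda$, I would expand $P_G=\frac{1}{4}\sum_{\sigma\in G}W_\sigma$ and $P_\lambda=\frac{d_\lambda}{24}\sum_{\pi\in S_4}\chi^\lambda(\pi)W_\pi$, then collapse the $\lambda$-sum via the Weingarten identity $\sum_\lambda d_\lambda^2\chi^\lambda(\pi)/D_\lambda=(4!)^2\mathrm{Wg}(\pi,d)$ to reach
\[
\tr(\Omega_2 X)=d(d-1)\sum_{\sigma\in G}\sum_{\pi\in S_4}\mathrm{Wg}(\pi,d)\,\tr(W_{\sigma\pi}X).
\]
Applying the cycle-trace identity to each $\tr(W_{\sigma\pi}X)$ and using the multiplication $W_\sigma W_\pi=W_{\sigma\pi}$ reduces everything to the same short list of quartic invariants, namely $[\tr(O\tau)]^2$, $\tr(O^2)\tr(\tau^2)$, $\tr(O^2\tau^2)$, $\tr(O\tau O\tau)$, and (for $\tau=\rho$) also $\tr(\rho O^2)$ and $\tr(O^2)$, each weighted by tabulated values of the $S_4$ Weingarten function in dimension $d$.

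Finally, I would add the $\Omega_1$ and $\Omega_2$ contributions, multiply by $(d+1)^2$, subtract $[\tr(O\tau)]^2$, and collect coefficients to recover the rational prefactors in \eqsref{eq:V*4design}{eq:V*_rho_4design}. I expect the main obstacle to be the $\Omega_2$ bookkeeping: the Weingarten weights must conspire so that in the $\tau=\rho$ case the isolated $\tr(O^2)$ generated by $\Omega_1$ is absorbed into the coefficient of $\tr(\rho^2)\tr(O^2)$, and the positive $4\tr(\rho O^2)$ is flipped into the claimed $-4(d+1)\tr(\rho O^2)/[d(d+2)(d+3)]$. Once these sign and degree cancellations are verified, the rest is systematic polynomial algebra in $d$.
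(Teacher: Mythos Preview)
Your proposal is correct and follows essentially the same route as the paper: split $\Omega(\caU)=\Omega_1+\Omega_2$ via \lref{lem:CrossMom4design}, evaluate the $\Omega_1$ piece by summing the cycle-trace identity over $S_4$ (yielding exactly the terms you list), and reduce the $\Omega_2$ piece to the same list of invariants. The only cosmetic difference is that for $\Omega_2$ you package the $\lambda$-sum $\sum_\lambda d_\lambda^2\chi^\lambda(\pi)/D_\lambda$ as $(4!)^2\mathrm{Wg}(\pi,d)$ and then use tabulated Weingarten values, whereas the paper keeps the character/dimension form and cites a reference for the explicit $d_\lambda,D_\lambda,\chi^\lambda$; these are of course two names for the same bookkeeping.
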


\begin{proof}[Proof of \lref{lem:V*4design}]
	By assumption $\caU$ is a unitary 4-design  and $O$ is traceless, so the reconstruction map $\caM^{-1}(\cdot)$  has a simple form:  $\caM^{-1}(O)=(d+1)O$ \cite{huang2020pred}. In conjunction with \eref{eq:V*} we can deduce that
	\begin{align}
		\bbV_*(O,\Delta)&=\tr\left\{\Omega(\caU)\Tensor{\left[\caM^{-1}(O)\otimes\Delta\right]}{2}\right\}-\left[\tr(O\Delta)\right]^2=(d+1)^2\tr\left[\Omega(\caU)\Tensor{\left(O\otimes\Delta\right)}{2}\right]-\left[\tr(O\Delta)\right]^2\notag\\
		&=(d+1)^2\tr\left[\Omega_1(\caU)\Tensor{\left(O\otimes\Delta\right)}{2}\right]+(d+1)^2\tr\left[\Omega_2(\caU)\Tensor{\left(O\otimes\Delta\right)}{2}\right]-\left[\tr(O\Delta)\right]^2,\label{eq:4design_3terms}
	\end{align}
	where $\Omega(\caU)$ is defined in \eref{eq:CrossMoment} and the last equality holds because  $\Omega(\caU)=\Omega_1(\caU)+\Omega_2(\caU)$ according to \eref{eq:OmegaDecom}. 
	Thanks to \lref{lem:CrossMom4design},
	the first term in \eref{eq:4design_3terms} can be evaluated as follows:
	\begin{align}
		&(d+1)^2\tr\left[\Omega_1(\caU)\Tensor{\left(O\otimes\Delta\right)}{2}\right]
		=\frac{24(d+1)}{(d+2)(d+3)}\tr\left[P_{[4]}\Tensor{\left(O\otimes\Delta\right)}{2}\right]	
		=\frac{d+1}{(d+2)(d+3)}\sum_{\pi \in S_4}\tr\left[W_{\pi}\Tensor{\left(O\otimes\Delta\right)}{2}\right]\notag \\
		&=\frac{d+1}{(d+2)(d+3)}\left\{2\left[\tr(\Delta O) \right]^2+\tr\left(\Delta^2\right)\tr\left(O^2\right)+2\tr(\Delta O \Delta O)+4\tr\left(\Delta^2O^2\right)\right\}.\label{eq:firstterm_exact}
	\end{align}
	Similarly, the second term in \eref{eq:4design_3terms} can be evaluated as follows:
	\begin{align}
		&(d+1)^2\tr\left[\Omega_2(\caU)\Tensor{\left(O\otimes\Delta\right)}{2}\right]=\frac{1}{6}\left(d^2-d\right)(d+1)^2\sum_\lambda\frac{d_\lambda}{D_\lambda}\tr\left[\left(O \otimes \Delta \otimes O\otimes \Delta\right) P_{G}P_\lambda\right]\notag\\
		&=\frac{1}{6}\left(d^2-d\right)(d+1)^2\sum_\lambda\frac{d_\lambda^2}{D_\lambda}\frac{1}{4\times4!}\sum_{\zeta \in G, \pi\in S_{4}}\chi^{\lambda}(\pi)\tr\left[\left(O \otimes \Delta \otimes O\otimes \Delta\right) W_{\zeta \pi}\right],\label{eq:sum_twirling}
	\end{align}
	where the second equality follows from the definition $P_{G}=\sum_{\zeta\in G}W_{\zeta}/4$ and the formula for $P_\lambda$ in  \eref{eq:projection_irrep}. In addition, the dimensions $d_\lambda, D_\lambda$ and characters $\chi^{\lambda}(\pi)$ featured in the above equation can be found in \rcite{zhu2016clifford}. After careful calculations of individual terms in \eref{eq:sum_twirling} we can derive the following result:
	\begin{align}
		(d+1)^2\tr\left[\Omega_2(\caU)\Tensor{\left(O\otimes\Delta \right)}{2}\right]&=\frac{(d+1)(d^2+3d+4)}{d(d+2)(d+3)}\left[\tr(\Delta O)\right]^2-\frac{2(d+1)(d-1)}{d(d+2)(d+3)}\tr\left(\Delta^2 O^2\right)\notag\\
		&\equad+\frac{2(d+1)}{d(d+2)(d+3)}\tr\left(\Delta^2\right)\tr\left(O^2\right)-\frac{2(d+1)^2}{d(d+2)(d+3)}\tr(\Delta O \Delta O)\label{eq:twirl_2nd}.
	\end{align}
Combining \eqssref{eq:4design_3terms}{eq:firstterm_exact}{eq:twirl_2nd}, we get
	\begin{align}
		\bbV_*(O,\Delta)&=\frac{d^2+3d+4}{d(d+2)(d+3)}\left[\tr(\Delta O)\right]^2+\frac{2(d+1)^2}{d(d+2)(d+3)}\tr\left(\Delta^2 O^2\right)\notag\\
		&\equad+\frac{d+1}{d(d+3)}\tr\left(\Delta^2\right)\tr\left(O^2\right)-\frac{2(d+1)}{d(d+2)(d+3)}\tr(\Delta O \Delta O),
	\end{align}
	which confirms \eref{eq:V*4design}.

Finally, \eref{eq:V*_rho_4design} can be proved in a similar way, and the  additional term compared with \eref{eq:V*4design} is tied to the fact that $\tr(\rho) \neq 0$. This observation completes the proof of \lref{lem:V*4design}.
\end{proof}

\subsection{Proof of \thref{thm:4designVstar}}
Before proving \thref{thm:4designVstar}, we need to introduce another auxiliary lemma.
\begin{lemma}\label{lem:ABAB}
	Suppose $A,B\in\caL^\rmH(\caH)$ are traceless Hermitian operators on  $\caH$, and  $W_{\pi}$ is the permutation operator  on $\Tensor{\caH}{4}$ associated with  some permutation  $\pi$ in $S_4$. Then 
	\begin{gather}
		[\tr(AB)]^2\le\lVert A\rVert_2^2\lVert B\rVert_2^2,\quad 
		\left|\tr(ABAB)\right|\le\tr\left(A^2B^2\right)\le \lVert A\rVert_2^2\lVert B\rVert^2\le \lVert A\rVert_2^2\lVert B\rVert_2^2, \label{eq:ABAB} \\ 	
		\left|\tr\left[W_{\pi}\Tensor{(A\otimes B)}{2}\right]\right|\le\lVert A\rVert_2^2\lVert B\rVert_2^2.\label{eq:ABABRpi}
	\end{gather}
\end{lemma}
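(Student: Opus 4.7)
The plan is to verify the three displayed inequalities in \lref{lem:ABAB} separately. The first, $[\tr(AB)]^2\le\lVert A\rVert_2^2\lVert B\rVert_2^2$, is just the Cauchy--Schwarz inequality applied to the Hilbert--Schmidt inner product $\langle A,B\rangle_{\mathrm{HS}}=\tr(A^\dagger B)=\tr(AB)$, which is real since both operators are Hermitian.

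For the chain $|\tr(ABAB)|\le\tr(A^2B^2)\le\lVert A\rVert_2^2\lVert B\rVert^2\le\lVert A\rVert_2^2\lVert B\rVert_2^2$, I would proceed in three steps. The leftmost bound exploits $(AB\pm BA)^\dagger=BA\pm AB$, so
\begin{equation}
0\le\tr\bigl[(AB\pm BA)^\dagger(AB\pm BA)\bigr]=2\tr(A^2B^2)\pm 2\tr(ABAB),
\end{equation}
noting that $\tr(ABAB)$ is real by cyclic invariance combined with Hermitian conjugation. The middle step uses the operator inequality $B^2\le\lVert B\rVert^2\bbone$ to conclude $\tr(A^2B^2)=\tr(AB^2A)\le\lVert B\rVert^2\tr(A^2)=\lVert B\rVert^2\lVert A\rVert_2^2$. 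The last step $\lVert B\rVert\le\lVert B\rVert_2$ is the standard domination of the operator norm by the Schatten 2-norm.

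The remaining inequality $|\tr[W_\pi(A\otimes B)^{\otimes 2}]|\le\lVert A\rVert_2^2\lVert B\rVert_2^2$ I would handle by cycle decomposition. For any $\pi\in S_4$ one has
\begin{equation}
\tr\bigl[W_\pi(X_1\otimes X_2\otimes X_3\otimes X_4)\bigr]=\prod_{c}\tr\Bigl(\prod_{i\in c}X_i\Bigr),
\end{equation}
where the outer product ranges over the cycles $c$ of $\pi$ and within each cycle the $X_i$ are multiplied in cyclic order. Specializing to $(X_1,X_2,X_3,X_4)=(A,B,A,B)$, any cycle of length one contributes a factor $\tr(A)=0$ or $\tr(B)=0$; hence only permutations without fixed points can give a nonzero trace, namely the three double transpositions and six 4-cycles of $S_4$. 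Enumerating the induced cyclic products, the only values that appear are $[\tr(AB)]^2$, $\tr(A^2)\tr(B^2)=\lVert A\rVert_2^2\lVert B\rVert_2^2$, $\tr(A^2B^2)$, and $\tr(ABAB)$, each bounded in absolute value by $\lVert A\rVert_2^2\lVert B\rVert_2^2$ via the inequalities just proved.

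The only real piece of bookkeeping lies in the cycle enumeration in the final step; the algebraic content is elementary, with the tracelessness of $A,B$ providing the crucial cancellation that reduces the $S_4$ sum to the four nonzero cases already handled.
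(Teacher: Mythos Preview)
Your proof is correct and follows essentially the same route as the paper's: Cauchy--Schwarz for the first inequality, an elementary positivity argument (equivalent to the paper's ``operator Cauchy--Schwarz'') for $|\tr(ABAB)|\le\tr(A^2B^2)$, the bound $B^2\le\lVert B\rVert^2\bbone$ (the paper invokes H\"older) for the middle step, and the same cycle-decomposition case analysis for \eref{eq:ABABRpi}. The only differences are cosmetic---you spell out the positivity trick $(AB\pm BA)^\dagger(AB\pm BA)\ge 0$ explicitly rather than citing named inequalities.
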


\begin{proof}[Proof of \lref{lem:ABAB}]
	The first two inequalities in \eref{eq:ABAB} follow from the operator Cauchy-Schwarz inequality, the third inequality follows from the operator H\"older  inequality, and the last inequality holds because $\lVert B\rVert\le \lVert B\rVert_2$. 
	
	Next, we turn to \eref{eq:ABABRpi}. Recall that any permutation can be expressed as a product of disjoint cycles. Since $A,B$ are traceless by assumption, $\tr\left[W_{\pi}\Tensor{(A\otimes B)}{2}\right]$ vanishes whenever $\pi$ contains a trivial cycle (or has a fixed point). If $\pi$ does not contain any trivial cycle, then  $\pi$ is composed of either one cycle of length 4 or  two cycles of length 2. In the former case, $\tr\left[W_{\pi}\Tensor{(A\otimes B)}{2}\right]$ is equal to either $\tr(ABAB)$ or $\tr\left(A^2B^2\right)$. In the latter case, $\tr\left[W_{\pi}\Tensor{(A\otimes B)}{2}\right]$ is equal to either $[\tr(AB)]^2$ or $\tr(A^2)\tr(B^2)=\lVert A\rVert
    _2^2\lVert B\rVert_2^2$. So \eref{eq:ABABRpi} is a simple corollary of \eref{eq:ABAB}. 
\end{proof}

\begin{proof}[Proof of \thref{thm:4designVstar}]
	By virtue of \Lsref{lem:V*4design}{lem:ABAB} we can deduce that 
	\begin{align}
		\bbV_*(O,\Delta)&=\frac{d^2+3d+4}{d(d+2)(d+3)}\left[\tr(\Delta O)\right]^2+\frac{2(d+1)^2}{d(d+2)(d+3)}\tr\left(\Delta^2 O^2\right)\notag\\
		&\equad+\frac{d+1}{d(d+3)}\tr\left(\Delta^2\right)\tr\left(O^2\right)-\frac{2(d+1)}{d(d+2)(d+3)}\tr(\Delta O \Delta O)\nonumber\\
		&\ge\frac{2(d+1)^2}{d(d+2)(d+3)}\tr\left(\Delta^2O^2\right)+\frac{d+1}{d(d+3)}\tr\left(\Delta^2\right)\tr\left(O^2\right)-\frac{2(d+1)}{d(d+2)(d+3)}\tr\left(\Delta ^2  O^2\right)\notag\\
		&\ge \frac{d+1}{d(d+3)}\tr\left(\Delta^2\right)\tr\left(O^2\right)\ge\frac{ \lVert\Delta\rVert_2^2\lVert O \rVert_2^2}{d+2},
	\end{align}
	which confirms the first inequality in \eref{eq:4duppline1}. On the other hand, we have
	\begin{align}
		\bbV_*(O,\Delta)&\le 
		\left[\frac{d^2+3d+4}{d(d+2)(d+3)}+\frac{2(d+1)^2}{d(d+2)(d+3)}+\frac{d+1}{d(d+3)}+\frac{2(d+1)}{d(d+2)(d+3)}\right]	\lVert O \rVert_2^2 \lVert\Delta\rVert_2^2\nonumber\\
		&=\frac{2(2d^2+6d+5)}{d(d+2)(d+3)}	\lVert O \rVert_2^2 \lVert\Delta\rVert_2^2 \le\frac{4}{d}\lVert O \rVert_2^2 \lVert\Delta\rVert_2^2\label{eq:V*4designUB},
	\end{align}	
	which confirms the second inequality in \eref{eq:4duppline1}. The last inequality in \eref{eq:4duppline1} follows from \pref{pro:Delta12NormIFsim}, which completes the proof of 
	\thref{thm:4designVstar}.	
	\end{proof}

\section{Noise channels} 
\subsection{Depolarizing and Pauli channels}
In this section, we review some noise channels that are relevant to the current study. A depolarizing channel $\scrN$ is defined via the following action:
\begin{align}\label{eq:Depolarizing}
	\scrN(\sigma)=(1-p)\sigma+p\frac{\bbone}{d}, \quad \sigma\in \caD(\caH),
\end{align}
where $0\le p\le 1$ quantifies the strength of  noise. Note that $\scrN(\sigma)$ always commutes with $\sigma$.

Recall that $\bcaP_n=\Tensor{\{I,X,Y,Z\}}{n}$ denotes the set of $n$-qubit Pauli operators. We can label the elements in $\bcaP_n$ using integers $0,1,\ldots, d^2-1$, which leads to the expression $\bcaP_n=\{P_i\}_{i=0}^{d^2-1}$. By convention $P_0=\bbone=\Tensor{I}{n}$, but the specific labeling of other elements is not essential to us. An $n$-qubit Pauli channel $\scrP$ has the form:
\begin{align}
	\scrP(\sigma) =\sum_{P\in\bcaP_n}p(P)P\sigma P 
	=\left(1-\sum_{i=1}^{d^2-1}p_i\right)\sigma+\sum_{i=1}^{d^2-1}p_i P_i \sigma P_i, \quad  \sigma\in \caD(\caH), \label{eq:PauliChannel}
\end{align}
where $p_i\geq 0$ for $i=0, 1,\ldots, d^2-1$ and $\sum_{i=0}^{d^2-1}p_i=1$. Note that $\scrP$ is completely characterized by the vector $\bfp=(p_1, p_2, \ldots, p_{d^2-1})$. Let 
\begin{align}
	\lVert\bfp\rVert_a=\left(\sum_{i=1}^{d^2-1}p_i^a\right)^{1/a}, \quad a\geq 1.
\end{align}
Then the above Pauli channel can also be expressed as follows:
\begin{align}
	\scrP(\sigma)=\left(1-\lVert\bfp\rVert_1\right)\sigma+\sum_{i=1}^{d^2-1}p_i P_i \sigma P_i.  \label{eq:PauliChannel2}
\end{align}
If there is only one nonzero element $p_k$ in $\bfp$, then the channel reduces to a single-error Pauli channel:
\begin{equation}
    \scrP(\sigma) = (1-p_k)\sigma+p_k P_k \sigma P_k.\label{eq:def_SingleError}
\end{equation}

In general, $\scrP(\sigma)$ does not necessarily commute with $\sigma$. Nevertheless, if  $\sigma$ is a stabilizer state, then $P_i \sigma P_i$ is also a stabilizer state for each $i=0,1,\ldots, d^2-1$; moreover, each $P_i \sigma P_i$ is either identical to $\sigma$ or orthogonal to $\sigma$, so  $\scrP(\sigma)$ commutes with $\sigma$.

\subsection{Sampling random Pauli channels}\label{sec:DescriptRandomPauli}
In this section, we explain the numerical simulation procedure for sampling a random Pauli channel as employed in the main text. We begin by generating a random real vector $\bfp = (p_1, p_2, \ldots, p_{d^2-1})$ of dimension $d^2 - 1$, where each entry $p_i$ is independently sampled from the uniform distribution over $[0,1]$. Next, we choose a parameter $\beta$ with $0\leq \beta \leq 1$ and  renormalize the vector  such that $\lVert\bfp\rVert_1 = \beta$. Then the resulting vector $\bfp$ can be used to define a Pauli channel according to \eref{eq:PauliChannel2}. On average we have $\bepsilon\sim \lVert\bfp\rVert_1=\beta$ by  \eref{eq:MeanIFpauli} in \sref{SM:NormsInfid}. By adjusting the parameter $\beta$ we can control the average infidelity $\bepsilon$ or generate a desired distribution.

\subsection{Sampling random local rotations}\label{sec:DescriptRandomRotation}

A rotation operator for a single qubit is a unitary operator of the form
\begin{equation}
	 \exp\left(-\rmi\frac{\theta}{2} \bmv \cdot \bm{\sigma} \right), 
\end{equation}
where $\bm{\sigma}$ denotes the vector composed of the three Pauli operators, $\bmv$ is a normalized real unit vector in dimension 3 that denotes the rotation axis, and $\theta\in [0,2\pi)$ denotes the rotation angle. A local rotation operator on $\caH$   is a unitary operator  that can be expressed as a tensor product of $n$ rotation operators for a single qubit:
\begin{equation}\label{eq:Utheta}
	U=\bigotimes_k U_k,  \quad U_k= \exp\left(-\rmi\frac{\theta_k}{2} \bmv_k \cdot \bm{\sigma} \right)=\cos\left(\frac{\theta_k}{2} \right)I-\rmi\sin\left(\frac{\theta_k}{2} \right)(\bmv_k \cdot \bm{\sigma}), 
\end{equation}
where $\bmv_k$ denotes the rotation axis of qubit $k$ and $\theta_k\in [0,2\pi)$ denotes the corresponding rotation angle. Here we are interested in local rotation operators because they can model local coherent noise.

The following proposition clarifies the average gate infidelity of the above rotation operator, which is instructive for sampling random unitary operators in numerical simulations.

\begin{proposition}\label{pro:SingeRotationAverage}
    Suppose $U$ is the rotation operator presented in \eref{eq:Utheta}, and $|\phi\>$ is a Haar-random pure state in $\caH$. Then the average gate infidelity $\bepsilon$ of $U$ reads 
    \begin{equation}
     \bepsilon=   1-\bbE_{|\phi\>\sim \haar}\tr\left[|\phi\>\<\phi| U|\phi\>\<\phi| U^{\dag}\right]=\frac{d\left(1-\xi\right)}{d+1},\label{eq:SingleRotationAverage}
    \end{equation}
    where $\xi=\prod_k\left[\cos(\theta_k/2)\right]^2$.
\end{proposition}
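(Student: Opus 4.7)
The plan is to reduce the average infidelity to a trace involving $U$ via the Haar second moment, and then exploit the tensor-product structure of the local rotation operator to evaluate that trace.

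First, I would rewrite the quantity of interest as
\begin{equation}
\bepsilon = 1 - \bbE_{|\phi\>\sim\haar}|\<\phi|U|\phi\>|^2
= 1 - \bbE_{|\phi\>\sim\haar}\tr\!\left[(U\otimes U^\dag)\left(|\phi\>\<\phi|\right)^{\otimes 2}\right].
\end{equation}
Invoking the standard Haar second-moment formula $\bbE_{|\phi\>\sim\haar}(|\phi\>\<\phi|)^{\otimes 2} = (\Tensor{\bbone}{2} + \SWAP)/[d(d+1)]$ (which follows from \lref{lem:UnitaryAvg} applied to the rank-one projector $|\phi\>\<\phi|$, or equivalently from Schur–Weyl), I would then compute the two resulting traces: $\tr[(U\otimes U^\dag)\Tensor{\bbone}{2}] = |\tr(U)|^2$ and $\tr[(U\otimes U^\dag)\SWAP] = \tr(UU^\dag) = d$. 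Substituting gives the compact intermediate identity
\begin{equation}
\bbE_{|\phi\>\sim\haar}|\<\phi|U|\phi\>|^2 = \frac{|\tr(U)|^2 + d}{d(d+1)},
\qquad
\bepsilon = \frac{d^2 - |\tr(U)|^2}{d(d+1)}.
\end{equation}

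Next, I would use the factorization $U = \bigotimes_k U_k$ to compute $\tr(U) = \prod_k \tr(U_k)$. From the explicit form in \eref{eq:Utheta}, and using that each $\bmv_k\cdot\bm{\sigma}$ is traceless, one gets $\tr(U_k) = 2\cos(\theta_k/2)$, so $|\tr(U)|^2 = 2^{2n}\prod_k[\cos(\theta_k/2)]^2 = d^2\xi$. Plugging into the boxed formula collapses everything to $\bepsilon = d(1-\xi)/(d+1)$, which is \eref{eq:SingleRotationAverage}.

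There is no substantive obstacle here; the entire argument is a two-step calculation (Haar averaging followed by a tensor-product trace). The only thing to be careful about is invoking the correct normalization of the Haar measure on pure states in the second-moment identity, and noting that the identity really only requires a state 2-design, matching the $\caU$-type lemmas cited earlier. Once that is in place, the rest is a one-line simplification.
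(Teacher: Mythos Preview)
Your proposal is correct and mirrors the paper's proof essentially line for line: the paper also invokes the Haar second-moment identity $\bbE_{|\phi\>\sim\haar}(|\phi\>\<\phi|)^{\otimes 2}=(\Tensor{\bbone}{2}+\SWAP)/[d(d+1)]$, reduces the average fidelity to $(d+|\tr(U)|^2)/[d(d+1)]$, and then uses $\tr(U_k)=2\cos(\theta_k/2)$ to obtain $|\tr(U)|^2=d^2\xi$. There is nothing to add or correct.
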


By virtue of \eref{eq:Utheta} and \pref{pro:SingeRotationAverage} we can introduce a method for sampling a random rotation operator with a given average gate infidelity $\bepsilon$. First, we randomly choose $n$ rotation axes $\bmv_1, \bmv_2,\ldots, \bmv_n$ uniformly and independently  on the Bloch sphere.  Then, we generate a random real vector $\bfh=(h_1,h_2,\ldots,h_n)\in \bbR^n$, where each entry is sampled from the uniform distribution over $(0,1]$ and renormalize the vector such that
\begin{equation}
	\prod_k h_k = \xi=1-\frac{(d+1)\bepsilon}{d}.
\end{equation}
Finally, the rotation angle for each qubit 
can be chosen as follows:
\begin{equation}
    \theta_k = \arccos(2h_k-1),\quad k=1,2,\ldots, n.
\end{equation}
By virtue of the axes $\bmv_k$ and angles $\theta_k$ for $k=1,2,\ldots, n$ we can construct a local rotation operator as in \eref{eq:Utheta}. Based on this method, it is also straightforward to sample a random rotation operator such that  $1/\bepsilon$ is uniformly distributed in a certain interval.

\begin{proof}[Proof of \pref{pro:SingeRotationAverage}]
    According to Schur–Weyl duality, we have 
	\begin{equation}
		\bbE_{|\phi\>\sim\haar}(|\phi\>\<\phi|\otimes|\phi\>\<\phi|)=\frac{\bbone+\SWAP}{d(d+1)}.
	\end{equation}
    Therefore,
    \begin{equation}
        \bbE_{|\phi\>\sim \haar}\tr\left[|\phi\>\<\phi| U|\phi\>\<\phi| U^{\dag}\right]=\bbE_{|\phi\>\sim \haar}\tr\left[\left(|\phi\>\<\phi| \otimes|\phi\>\<\phi|\right) \left(U\otimes U^{\dag}\right)\right]=\frac{d+|\tr(U)|^2}{d(d+1)}=\frac{1+d\xi}{d+1},
    \end{equation}
which implies \eref{eq:SingleRotationAverage}. Here the last equality holds because $\tr(U)=d\prod_k\cos(\theta_k/2)=d\sqrt{\xi}$.
\end{proof}

\section{Relations between $\lVert\Delta\rVert_2^2$, $\lVert\Delta\rVert_1^2$, and the infidelity $\epsilon$}\label{SM:NormsInfid}

\subsection{Basic relations}

According to the discussions in the main text, the deviation $\Delta=\rho-\sigma$ of  the system state $\rho$ from the prior  state $\sigma$ plays a crucial role in CRM shadow estimation. 
In this section, we focus on the case in which 
 $\sigma$ is a pure state and
clarify the relations between $\lVert\Delta\rVert_2^2$, $\lVert\Delta\rVert_1^2$, and the infidelity $\epsilon$, which are crucial to understanding the performance of HPFE. 
The following three propositions are proved in \sref{SM:NormsInfidProof}, in which \psref{pro:Delta12NormIF} and \ref{pro:pchannel_haar}  strengthen
\psref{pro:Delta12NormIFsim} and \ref{pro:pchannel_haarSim} in the main text.

\begin{proposition}\label{pro:Delta12NormIFgen}
	Suppose $\sigma,\rho\in \caD(\caH)$, $\Delta=\rho-\sigma$, and $\epsilon$ is the infidelity between $\rho$ and $\sigma$. Then 
	\begin{gather}
		2\lVert\Delta\rVert_2^2\le \lVert\Delta\rVert_1^2\le 4\epsilon, \label{eq:Delta12NormIFgen}
	\end{gather}
	where the first inequality is saturated if and only if (iff) $\rank(\Delta)\le 2$, and the second inequality is saturated iff both $\rho$ and $\sigma$ are pure states.
\end{proposition}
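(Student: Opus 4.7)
The plan is to prove the two bounds separately, reading off the saturation conditions as we go.

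For the first inequality $2\lVert\Delta\rVert_2^2 \le \lVert\Delta\rVert_1^2$, I would use the fact that $\Delta$ is Hermitian and traceless. Taking the Jordan decomposition $\Delta = \Delta_+ - \Delta_-$ with $\Delta_\pm \succeq 0$ and $\Delta_+\Delta_- = 0$, the condition $\tr(\Delta)=0$ forces $\tr(\Delta_+) = \tr(\Delta_-) =: s$, so that $\lVert\Delta\rVert_1 = 2s$ and $\lVert\Delta\rVert_2^2 = \tr(\Delta_+^2) + \tr(\Delta_-^2)$. The elementary inequality $\tr(A^2) \le [\tr(A)]^2$ for any positive semidefinite $A$ (which holds because $\sum_i a_i^2 \le (\sum_i a_i)^2$ whenever all $a_i \ge 0$, with equality iff at most one $a_i$ is nonzero) then yields $\lVert\Delta\rVert_2^2 \le 2s^2 = \lVert\Delta\rVert_1^2/2$. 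Equality forces $\rank(\Delta_+), \rank(\Delta_-) \le 1$; combined with the observation that $\tr(\Delta)=0$ rules out $\rank(\Delta) = 1$, this is equivalent to $\rank(\Delta) \in \{0,2\}$, i.e., $\rank(\Delta) \le 2$.

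For the second inequality $\lVert\Delta\rVert_1^2 \le 4\epsilon$, I would invoke the standard Fuchs-van de Graaf bound $\tfrac{1}{2}\lVert\rho-\sigma\rVert_1 \le \sqrt{1 - F(\rho,\sigma)^2}$, with $F(\rho,\sigma) = \lVert\sqrt{\rho}\sqrt{\sigma}\lsp\rVert_1$. Squaring both sides and substituting $\epsilon = 1 - F^2$ gives the stated bound in one line.

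The only subtle point is the saturation condition for Fuchs-van de Graaf, which I expect to be the main item requiring care. Sufficiency is a direct spectral computation: for $\rho = |\psi\>\<\psi|$ and $\sigma = |\phi\>\<\phi|$, the operator $\rho - \sigma$ has exactly two nonzero eigenvalues $\pm\sqrt{1-|\<\psi|\phi\>|^2}$, so $\tfrac{1}{2}\lVert\rho-\sigma\rVert_1 = \sqrt{1-|\<\psi|\phi\>|^2} = \sqrt{1-F^2}$. Necessity follows from the standard argument via Uhlmann's theorem: lifting $\rho,\sigma$ to purifications that attain the fidelity saturates the bound at the purified level, and strict monotonicity of the trace distance under partial trace then forces any non-pure input to give a strict inequality. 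This step can simply be cited rather than reproved. No machinery beyond the spectral decomposition of $\Delta$ and the fidelity inequality is required.
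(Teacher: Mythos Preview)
Your argument for the first inequality and its saturation condition is essentially identical to the paper's: both use the Jordan decomposition $\Delta=\Delta_+-\Delta_-$ with $\tr(\Delta_+)=\tr(\Delta_-)$, then compare $\lVert\Delta_\pm\rVert_2$ with $\lVert\Delta_\pm\rVert_1$ and read off that equality forces $\rank(\Delta_\pm)\le 1$. For the bound $\lVert\Delta\rVert_1^2\le 4\epsilon$ and the sufficiency of purity for saturation, you also match the paper exactly---both simply invoke the Fuchs--van de Graaf inequality and the standard pure-state computation (the paper cites Nielsen--Chuang and stops there).

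The one place where you go beyond the paper is the necessity direction of the second saturation condition, and here your sketch has a genuine gap. Your claim that ``strict monotonicity of the trace distance under partial trace'' forces strict inequality for any non-pure input is false: the trace distance is monotone but not strictly monotone under partial trace. Concretely, take $\rho=|0\rangle\langle 0|$ and $\sigma=\tfrac{1}{2}(|1\rangle\langle 1|+|2\rangle\langle 2|)$; these have orthogonal supports, so $F=0$, $\epsilon=1$, and $\lVert\Delta\rVert_1=2$, giving $\lVert\Delta\rVert_1^2=4\epsilon$ even though $\sigma$ is mixed. (The trivial case $\rho=\sigma$ mixed also saturates with both sides zero.) So the Uhlmann-purification argument you outline cannot be made to yield the ``only if'' as stated, and the paper's own proof does not attempt necessity either---it only asserts sufficiency via citation. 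If you want a correct necessity statement you will need to add the orthogonal-support case (compare the paper's treatment of the analogous clause in its next proposition, where $\sigma$ is assumed pure and the saturation condition explicitly includes ``$\rho$ orthogonal to $\sigma$'').
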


\begin{proposition}\label{pro:Delta12NormIF}
	Suppose $\sigma,\rho\in \caD(\caH)$, where $\sigma$ is a pure state, $\Delta=\rho-\sigma$, and $\epsilon$ is the infidelity between $\rho$ and $\sigma$. Then 
	\begin{gather}
		2\lVert\Delta\rVert_2^2\le \lVert\Delta\rVert_1^2\le \frac{4(d-1)}{d}\lVert\Delta\rVert_2^2, \label{eq:Delta12Norm}\\
		\frac{d}{d-1}\epsilon^2\le \lVert\Delta\rVert_2^2\le2\epsilon, \label{eq:Delta2NormIF}	\\	
		4\epsilon^2\le \lVert\Delta\rVert_1^2\le4\epsilon. \label{eq:Delta1NormIF}
	\end{gather}
	Here the first inequality in \eref{eq:Delta12Norm} is saturated iff $\rank(\Delta)\le 2$, and the second inequality is saturated iff all eigenvalues of $\Delta$, except for the smallest one, are equal. The first inequality in \eref{eq:Delta2NormIF} is saturated iff $\rho =(1-\epsilon)\sigma+\epsilon (\bbone-\sigma)/(d-1)$, and the second inequality is saturated iff $\rho$ is a pure state. The first inequality in \eref{eq:Delta1NormIF} is saturated iff $\rho$ and $\sigma$ commute, and the second inequality is
 saturated iff $\rho$ is a pure state or  is  orthogonal to $\sigma$.
\end{proposition}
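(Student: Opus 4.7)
The plan hinges on two structural properties of $\Delta = \rho - \sigma$ when $\sigma = |\phi\>\<\phi|$ is pure: $\Delta$ is traceless, and on the $(d-1)$-dimensional subspace orthogonal to $|\phi\>$ one has $\<\psi|\Delta|\psi\> = \<\psi|\rho|\psi\>\ge 0$, so the min-max principle forces $\Delta$ to have at most one strictly negative eigenvalue, which I will write as $-t$ with $t\ge 0$. Tracelessness then yields $\lVert\Delta\rVert_1 = 2t$, and the (at most $d-1$) nonnegative eigenvalues of $\Delta$ sum to $t$. This spectral parametrization is the backbone I will use for all three displayed inequalities.

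For \eref{eq:Delta12Norm}, the lower bound $2\lVert\Delta\rVert_2^2 \le \lVert\Delta\rVert_1^2$ follows from $\lambda_i^2 \le t|\lambda_i|$ (since $|\lambda_i| \le t$), which sums to $\lVert\Delta\rVert_2^2 \le t\cdot 2t = \lVert\Delta\rVert_1^2/2$; equality forces $|\lambda_i|\in\{0,t\}$ for every $i$, which combined with tracelessness and the at-most-one-negative-eigenvalue property forces $\rank(\Delta)\le 2$. The upper bound $\lVert\Delta\rVert_1^2 \le \frac{4(d-1)}{d}\lVert\Delta\rVert_2^2$ follows from Cauchy-Schwarz applied to the at most $d-1$ nonnegative eigenvalues: $t^2 \le (d-1)\sum_i(\lambda_i^+)^2$, so $\lVert\Delta\rVert_2^2 = \sum_i(\lambda_i^+)^2 + t^2 \ge d t^2/(d-1)$; equality requires all $d-1$ nonnegative eigenvalues to equal $t/(d-1)$, matching the stated spectral condition.

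For \eref{eq:Delta2NormIF}, the identity $\lVert\Delta\rVert_2^2 = \tr(\rho^2) - 1 + 2\epsilon$ reduces both bounds to bounds on $\tr(\rho^2)$. The upper bound is immediate from $\tr(\rho^2)\le 1$, with equality iff $\rho$ is pure. For the lower bound, I would write $\rho$ in block form with $|\phi\>$ as the first basis vector, $\rho = \begin{pmatrix} 1-\epsilon & v^\dag \\ v & A \end{pmatrix}$ with $A\ge 0$ and $\tr A = \epsilon$; then $\tr(\rho^2) = (1-\epsilon)^2 + 2\lVert v\rVert^2 + \tr(A^2)$, and the elementary bound $\tr(A^2)\ge (\tr A)^2/(d-1) = \epsilon^2/(d-1)$ yields $\lVert\Delta\rVert_2^2 \ge d\epsilon^2/(d-1)$; equality forces $v=0$ and $A = \epsilon(\bbone-\sigma)/(d-1)$, giving exactly the stated $\rho$.

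For \eref{eq:Delta1NormIF}, the upper bound is Fuchs--van de Graaf specialized to pure $\sigma$: $\lVert\Delta\rVert_1 \le 2\sqrt{1-F^2} = 2\sqrt{\epsilon}$, with the equality characterization (either $\rho$ pure or $\rho\perp\sigma$) extracted by tracking simultaneous saturation in the Uhlmann-purification proof. The lower bound follows from the variational expression of trace distance at the projector $\sigma$, $\lVert\Delta\rVert_1/2 \ge |\tr(\Delta\sigma)| = \epsilon$. Equivalently, $\<\phi|\Delta|\phi\> = -\epsilon$ together with the min-max principle forces $t\ge\epsilon$, so $\lVert\Delta\rVert_1 = 2t \ge 2\epsilon$; equality requires $|\phi\>$ itself to be an eigenvector of the negative eigenvalue, i.e., $\rho|\phi\> = (1-\epsilon)|\phi\>$, which is equivalent to $[\rho,\sigma]=0$. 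The main obstacle I anticipate is verifying the equality characterization in the Fuchs--van de Graaf step cleanly, since one must reconcile the pure-$\sigma$ constraint with the simultaneous saturation conditions for data processing and Uhlmann's inequality, rather than ruling out more exotic mixed $\rho$.
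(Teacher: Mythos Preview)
Your proposal is correct and follows essentially the same spectral/block-decomposition route as the paper for \eref{eq:Delta12Norm}, \eref{eq:Delta2NormIF}, and the lower bound in \eref{eq:Delta1NormIF}. The one point of genuine difference is the upper bound $\lVert\Delta\rVert_1^2\le 4\epsilon$ and its saturation condition. You invoke Fuchs--van de Graaf as a black box and (rightly) flag the equality characterization as the sticking point. The paper instead re-derives the bound from scratch: take any convex decomposition $\rho=\sum_l q_l|\psi_l\>\<\psi_l|$, apply the triangle inequality together with the pure-state identity $\lVert\,|\psi_l\>\<\psi_l|-\sigma\,\rVert_1=2\sqrt{\epsilon_l}$, and then Cauchy--Schwarz $(\sum_l q_l\sqrt{\epsilon_l})^2\le \sum_l q_l\epsilon_l=\epsilon$. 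Saturation forces all $\epsilon_l$ equal, and since this must hold for \emph{every} convex decomposition, every pure state in the support of $\rho$ has the same overlap with $\sigma$; this is possible only if $\rho$ is pure or $\rho\perp\sigma$. This elementary argument sidesteps the Uhlmann/data-processing analysis you anticipated and gives the equality case directly. (Minor notational remark: in the paper's convention $F=\<\phi|\rho|\phi\>$, so the FvdG bound reads $\lVert\Delta\rVert_1\le 2\sqrt{1-F}=2\sqrt{\epsilon}$, not $2\sqrt{1-F^2}$.)
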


Now, suppose $\scrN$ is a noise channel and $\rho=\scrN(\sigma)$. If  $\scrN$ is induced by  coherent noise, then $\rho$ is a pure state and $ \lVert\Delta\rVert_2^2 = 2\epsilon$, which saturates the upper bound in \eref{eq:Delta2NormIF}. If $\scrN$ is the depolarizing channel defined in \eref{eq:Depolarizing}, then $ \lVert\Delta\rVert_2^2=d\epsilon^2/(d-1)$, which saturates the lower bound in \eref{eq:Delta2NormIF}.

Next, suppose $\scrN=\scrP$ is the Pauli channel characterized by the vector $\bfp=(p_1, p_2, \ldots, p_{d^2-1})$ as in \eref{eq:PauliChannel2}. Then 
\begin{align}
	\Delta&=\scrP(\sigma)-\sigma=\sum_{i=1}^{d^2-1} p_i(P_i \sigma P_i-\sigma),\quad 
	\lVert\Delta\rVert_1\le \sum_{i=1}^{d^2-1} p_i\lVert P_i \sigma P_i-\sigma\rVert_1\le 2\lVert\bfp\rVert_1. 
\end{align}
If in addition $\sigma$ is a pure state, then the infidelity $\epsilon$ between $\rho$ and $\sigma$ reads 
\begin{align}
	\epsilon=1-\tr[\scrP(\sigma)\sigma]=\sum_{i=1}^{d^2-1}p_i[1- \tr(P_i \sigma P_i\sigma)]\le \lVert\bfp\rVert_1. 
\end{align}
If in addition $\sigma$ is a stabilizer state, then $\rho$ commutes with $\sigma$, which means 	$2\lVert\Delta\rVert_2^2\le 	\lVert\Delta\rVert_1^2= 4\epsilon^2$ .  To be concrete, we have $\lVert\Delta\rVert_1=2\epsilon =2\sum_i' p_i$,
where the summation runs over all $i$ such that $\tr(P_i \sigma P_i\sigma)=0$, that is, $P_i$ does not commute with all stabilizer operators of $\sigma$. If instead $\sigma$ is a Haar-random pure state, then $\lVert\Delta\rVert_2$, $\lVert\Delta\rVert_1$, and $\epsilon$ usually have the same order of magnitude as $\lVert\bfp\rVert_1$ according  to the following proposition.

\begin{proposition}\label{pro:pchannel_haar}
	Suppose $\sigma = |\phi\>\<\phi|$ is a Haar-random pure state on $\caH$, $\scrP$ is the Pauli channel characterized by the vector $\bfp=(p_1,p_2, \ldots, p_{d^2-1})$, $\rho=\scrP(\sigma)$,  $\Delta=\rho-\sigma$, and $\epsilon$ is the infidelity between $\rho$ and $\sigma$. Then 
	\begin{gather}
		\bepsilon=1-\bbE_{|\phi\>\sim \haar}\tr\left[\sigma\scrP(\sigma)\right]=\frac{d}{d+1}\lVert\bfp\rVert_1,\label{eq:MeanIFpauli}\\
		\bbE_{|\phi\>\sim \haar}\lVert\Delta\rVert_2^2=\frac{d}{d+1}\lVert\bfp\rVert_1^2+\frac{d}{d+1}\lVert\bfp\rVert_2^2,\label{eq:MeanDelta2NormPauli}\\
		\bepsilon^2\le 	\overline{\epsilon^2}\le\frac{d}{d-1} \overline{\epsilon^2}\le\bbE_{|\phi\>\sim \haar}\lVert\Delta\rVert_2^2\le\frac{2(d+1)}{d}\bepsilon^2,\label{eq:MeanDelta2NormIFpauli}\\
	4\bepsilon^2\le4 \overline{\epsilon^2}\le\bbE_{|\phi\>\sim \haar}\lVert\Delta\rVert_1^2\le\frac{8(d^2-1)}{d^2}\bepsilon^2\le 8\bepsilon^2.\label{eq:MeanDelta1NormIFpaulitt}
	\end{gather}
\end{proposition}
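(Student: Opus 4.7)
\emph{Proof proposal.} The plan is to first reduce both $\bepsilon$ and $\bbE\lVert\Delta\rVert_2^2$ to 2-fold Haar moments via the Schur--Weyl identity $\bbE_{|\phi\>\sim\haar}(|\phi\>\<\phi|)^{\otimes 2}=(\bbone+\SWAP)/[d(d+1)]$, and then package the remaining inequalities by Haar-averaging the pointwise bounds already established in \pref{pro:Delta12NormIF}. The essential inputs from the 2-design average will be $\bbE|\<\phi|P_i|\phi\>|^2=1/(d+1)$ for $i\ge 1$, together with $\bbE|\<\phi|P_iP_j|\phi\>|^2=(d\lsp\delta_{ij}+1)/(d+1)$ for $i,j\ge 1$; both follow directly from the Schur--Weyl formula combined with the trace orthogonality $\tr(P_iP_j)=d\lsp\delta_{ij}$ inside $\bcaP_n$ and the fact that $P_i^2=\bbone$.

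To obtain \eref{eq:MeanIFpauli}, I would expand $\epsilon=1-\tr[\sigma\scrP(\sigma)]=\sum_{i\ge 1}p_i\bigl(1-|\<\phi|P_i|\phi\>|^2\bigr)$ after absorbing $p_0=1-\lVert\bfp\rVert_1$, and then apply the first Haar identity. For \eref{eq:MeanDelta2NormPauli}, I would square $\Delta=\sum_{i\ge 1}p_i(P_i\sigma P_i-\sigma)$ in the Hilbert--Schmidt norm, so that every summand reduces to one of the traces $1$, $|\<\phi|P_i|\phi\>|^2$, or $|\<\phi|P_iP_j|\phi\>|^2$. Substituting the Haar values and separating the diagonal part $i=j$, which contributes a residue proportional to $\lVert\bfp\rVert_2^2$, from the off-diagonal part, which combines with the diagonal into $\lVert\bfp\rVert_1^2$, yields the clean split with the common prefactor $d/(d+1)$.

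With these two closed forms in hand, the chain in \eref{eq:MeanDelta2NormIFpauli} needs no further Haar computation. The bound $\bepsilon^2\le\overline{\epsilon^2}$ is Jensen, and $\frac{d}{d-1}\overline{\epsilon^2}\le\bbE\lVert\Delta\rVert_2^2$ follows by Haar-averaging the pointwise bound $\lVert\Delta\rVert_2^2\ge\frac{d}{d-1}\epsilon^2$ from \eref{eq:Delta2NormIF}; the upper bound $\bbE\lVert\Delta\rVert_2^2\le\frac{2(d+1)}{d}\bepsilon^2$ follows by using $\lVert\bfp\rVert_2\le\lVert\bfp\rVert_1$ in \eref{eq:MeanDelta2NormPauli} and then substituting \eref{eq:MeanIFpauli}. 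The chain in \eref{eq:MeanDelta1NormIFpaulitt} then comes from Haar-averaging the pointwise inequalities $\lVert\Delta\rVert_1^2\ge 4\epsilon^2$ and $\lVert\Delta\rVert_1^2\le\frac{4(d-1)}{d}\lVert\Delta\rVert_2^2$ of \pref{pro:Delta12NormIF}, and plugging in the bounds on $\overline{\epsilon^2}$ and $\bbE\lVert\Delta\rVert_2^2$ just obtained. The only delicate step I foresee is the bookkeeping in the $\lVert\Delta\rVert_2^2$ expansion: the four cross terms arising from $(P_i\sigma P_i-\sigma)(P_j\sigma P_j-\sigma)$ must be regrouped so that the diagonal surplus cleanly isolates the $\lVert\bfp\rVert_2^2$ piece while the rest recombines into $\lVert\bfp\rVert_1^2$; once that regrouping is organized, everything else reduces to mechanical calculation or direct invocation of \pref{pro:Delta12NormIF}.
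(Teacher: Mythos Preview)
Your proposal is correct and matches the paper's proof essentially line by line: the paper also invokes the 2-design identity $\bbE(\sigma\otimes\sigma)=(\bbone+\SWAP)/[d(d+1)]$ to obtain $\bbE\tr(P_i\sigma P_j\sigma)=\delta_{ij}/(d+1)$ and $\bbE\tr(P_iP_j\sigma P_jP_i\sigma)=(d\delta_{ij}+1)/(d+1)$ for $i,j\ge 1$, expands $\epsilon$ and $\lVert\Delta\rVert_2^2$ in the Pauli basis to get \eref{eq:MeanIFpauli} and \eref{eq:MeanDelta2NormPauli}, and then derives the two inequality chains by combining $\lVert\bfp\rVert_2\le\lVert\bfp\rVert_1$ with the pointwise bounds of \pref{pro:Delta12NormIF} exactly as you describe. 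The regrouping you flag as delicate is precisely the same bookkeeping the paper carries out, and presents no obstacle.
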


\begin{figure}
	\centering
	\includegraphics[width=0.9\columnwidth]{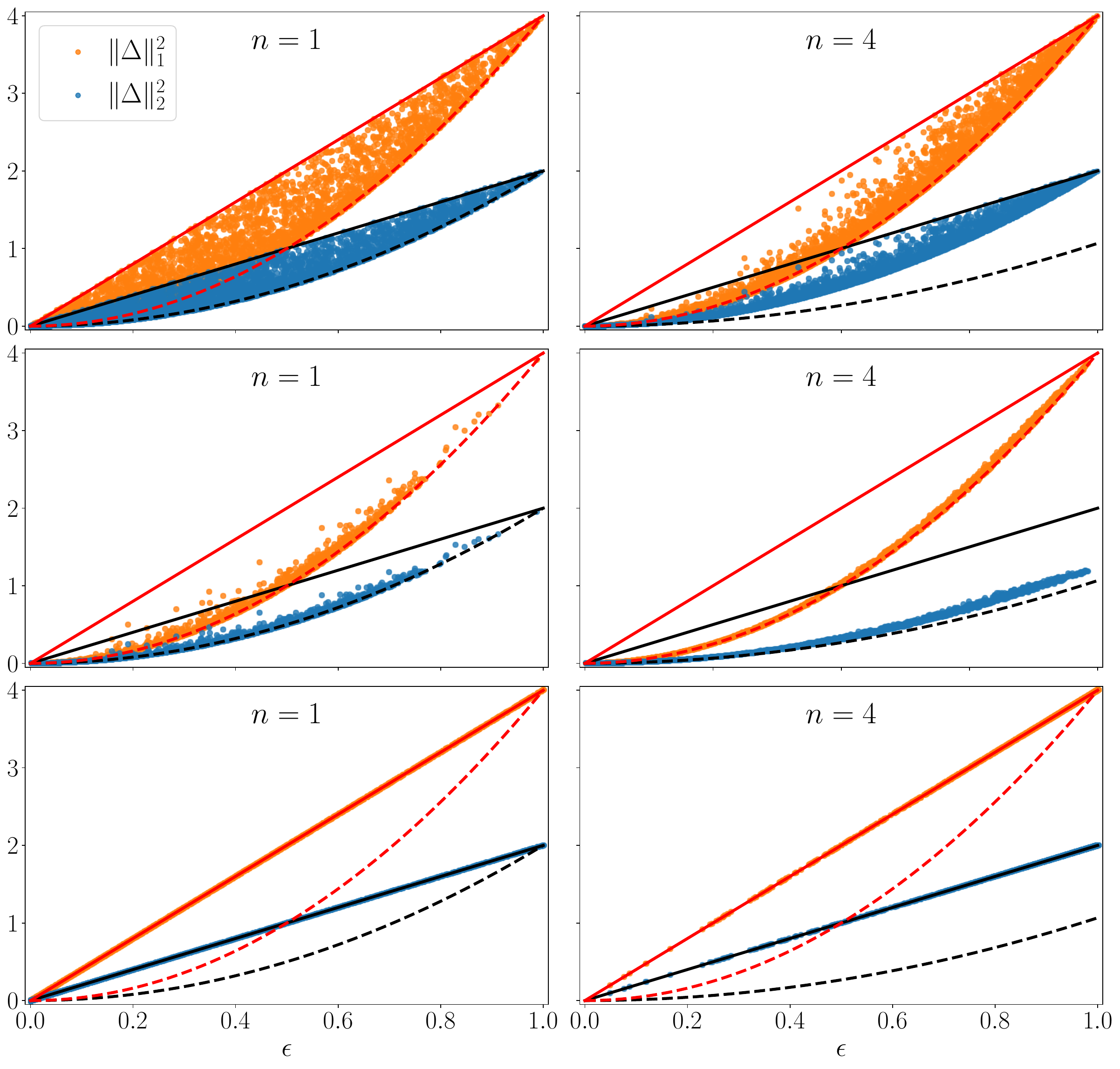}
	\caption{Scatter plots on the relations  between  $\lVert \Delta \rVert_1^2$, $\lVert \Delta \rVert_2^2$, and the infidelity $\epsilon$, where $\Delta=\rho-\sigma$.  Here each $\sigma$ is an $n$-qubit Haar-random pure state with $n=1,4$ and $\rho=\scrN(\sigma)$, where $\scrN$ denotes a random single-error Pauli channel for the first row, a random Pauli channel  (described in \sref{sec:DescriptRandomPauli}) for the second row, and a
	random coherent channel [random local rotations with rotation axes randomly oriented on the Bloch sphere and rotation angles uniformly distributed in the interval $[0,2\pi)$] for the third row. For each plot, 2000 Haar-random pure states together with 2000 random channels are generated. The red solid and dashed lines represent the upper and lower bounds for $\lVert \Delta \rVert_1^2$ presented  in \eref{eq:Delta1NormIF}, while the black solid and dashed lines represent the upper and lower bounds for $\lVert \Delta \rVert_2^2$ presented in \eref{eq:Delta2NormIF}.
}\label{fig:compare_three_cases}
\end{figure}

\Fref{fig:compare_three_cases} illustrates the relations between $\lVert\Delta\rVert_2^2$, $\lVert\Delta\rVert_1^2$, and the infidelity $\epsilon$ under three different noise models, namely, single-error Pauli noise, general Pauli noise, and coherent noise. In the first two cases, $\lVert \Delta \rVert_2^2$ and $\lVert \Delta \rVert_1^2$ are usually close to their lower bounds in \eqsref{eq:Delta2NormIF}{eq:Delta1NormIF} (see also \pref{pro:pchannel_haar}). Especially for the second case, the scaling behaviors of $\lVert \Delta \rVert_2^2$ and $\lVert \Delta \rVert_1^2$ are similar to the counterparts under depolarizing noise.
In the third case, by contrast, both  $\lVert \Delta \rVert_2^2$ and $\lVert \Delta \rVert_1^2$ coincide with their respective upper bounds in  \eqsref{eq:Delta2NormIF}{eq:Delta1NormIF}
because $\scrN(\sigma)$ remains a pure state under coherent noise.

\subsection{\label{SM:NormsInfidProof}Technical proofs}

\begin{proof}[Proof of \pref{pro:Delta12NormIFgen}]
	By definition $\Delta$ is traceless and can be expressed as $\Delta=\Delta_+-\Delta_-$, where 	$\Delta_+$ and $\Delta_-$ are positive semidefinite operators that have orthogonal supports and satisfy $\tr(\Delta_+)=\tr(\Delta_-)$. Therefore,
	\begin{gather}
		\lVert\Delta\rVert_1^2=		\left(\lVert\Delta_+\rVert_1+\lVert\Delta_-\rVert_1\right)^2= 2\lVert\Delta_+\rVert_1^2+2\lVert\Delta_-\rVert_1^2
		\geq 2\lVert\Delta_+\rVert_2^2+2\lVert\Delta_-\rVert_2^2	=	
		2\lVert\Delta\rVert_2^2,
	\end{gather}
	where the inequality is saturated iff  $\rank(\Delta_+)=\rank(\Delta_-)\le 1$. This result  confirms the first inequality in \eref{eq:Delta12NormIFgen} and shows that this inequality is saturated iff $\rank(\Delta)\le 2$. 

The second inequality in \eref{eq:Delta12NormIFgen} follows from the well-known relation between the trace distance and fidelity, and is known to be saturated when $\rho$ and $\sigma$ are pure states \cite{nielsen2020QCQI}.
\end{proof}

\begin{proof}[Proof of \pref{pro:Delta12NormIF}]By assumption $\rho$ is a density operator and $\sigma$ is a rank-1 projector, so  $\Delta$ is traceless and has at most one negative eigenvalue.  
	Let $\mu_1, \mu_2, \ldots,\mu_d$ be the eigenvalues of $\Delta$ arranged in nonincreasing order, that is, $\mu_1\geq \mu_2\geq \cdots\geq \mu_d$. Then $\mu_1,\mu_2,\ldots, \mu_{d-1}\geq0$, $\mu_d\le 0$, and 
	$\mu_1+\mu_2+\cdots+\mu_{d-1}=-\mu_d$. Therefore,
	\begin{equation}
	\begin{gathered}
	\lVert\Delta\rVert_1=\sum_{j=1}^d |\mu_j|=2|\mu_d|, \quad \lVert\Delta\rVert_1^2=4\mu_d^2, \\
	\lVert\Delta\rVert_2^2=\sum_{j=1}^d \mu_j^2\le  2\mu_d^2=\frac{1}{2}\lVert\Delta\rVert_1^2,\quad \lVert\Delta\rVert_2^2\geq \mu_d^2+\frac{\left(\sum_{j=1}^{d-1} \mu_j \right)^2}{d-1} =\frac{d}{d-1}\mu_d^2=\frac{d}{4(d-1)}\lVert\Delta\rVert_1^2,
\end{gathered}
	\end{equation}
	which imply \eref{eq:Delta12Norm}. 
	Here the first inequality is saturated iff $\mu_1=-\mu_d$ and $\mu_j=0$ for $j=2,3,\ldots, d-1$, while the second inequality is saturated iff $\mu_j=-\mu_d/(d-1)$ for  $j=1,2,\ldots, d-1$. These observations further confirm the saturation conditions of the inequalities in \eref{eq:Delta12Norm}. The first inequality in \eref{eq:Delta12Norm} and its saturation condition also follow from \pref{pro:Delta12NormIFgen}.
	
	The inequalities in \eref{eq:Delta2NormIF} and their saturation conditions are 
	simple corollaries of \lref{lem:PurityRange} below and the following relation:
	\begin{align}\label{eq:Delta2NormFProof}
		\lVert\Delta\rVert_2^2=\tr\left[\left(\rho-\sigma\right)^2\right]=1+\tr\left(\rho^2\right)-2\tr(\rho\sigma)=2\epsilon-1+\tr\left(\rho^2\right). 
	\end{align}

The first inequality in  \eref{eq:Delta1NormIF}  can be proved as follows: 
\begin{align}
        \lVert \Delta \rVert_1^2&=4\mu_d^2=4\left[\min_{|\psi\>\in\caH}\tr\left(\Delta|\psi\>\<\psi|\right)\right]^2\ge4\left[\tr\left(\Delta\sigma\right)\right]^2=4\epsilon^2. \label{eq:Delta1NormlowProof}
    \end{align}
The inequality in \eref{eq:Delta1NormlowProof} holds because $\min_{|\psi\>\in\caH}\left[\tr\left(\Delta|\psi\>\<\psi|\right)\right]\le\tr\left(\Delta\sigma\right)=-\epsilon\leq 0$; it is saturated iff $\sigma$ is an eigenstate of $\Delta$ with eigenvalue $\mu_d$, that is, $\rho$ and $\sigma$ commute. So the first inequality in  \eref{eq:Delta1NormIF} is saturated iff $\rho$ and  $\sigma$ commute.

Finally, we turn to the second inequality in  \eref{eq:Delta1NormIF}. Let $\rho=\sum_l q_l|\psi_l\>\<\psi_l|$ be a convex decomposition of  $\rho$ into pure states, where $q_l>0$ for all $l$,  and let $\epsilon_l =1-\<\psi_l|\sigma|\psi_l\>$. Then $\epsilon = 1-\tr(\rho\sigma)=\sum_l q_l\epsilon_l$ and
\begin{align}
	\lVert\Delta\rVert_1 &= \left\lVert\sum_l q_l|\psi_l\>\<\psi_l|-\sigma \right\rVert_1\le\sum_l q_l \lVert|\psi_l\>\<\psi_l|-\sigma\rVert_1=2\sum_l q_l\sqrt{\epsilon_l }.\label{eq:Delta12NormUppConvex}
\end{align}
In conjunction with the Cauchy-Schwarz inequality we can deduce that
\begin{align}
	\lVert\Delta\rVert_1^2 &\le4 \left(\sum_l q_l\sqrt{\epsilon_l }\right)^2\le4\left(\sum_l q_l\right)\left(\sum_l q_l\epsilon_l \right)=4\sum_l q_l\epsilon_l =4\epsilon,\label{eq:Delta12NormUppCau}
\end{align}
which confirms the second inequality in  \eref{eq:Delta1NormIF}. Alternatively, this inequality follows from \pref{pro:Delta12NormIFgen}. If $\rho$ is a pure state, then $\lVert\Delta\rVert_1^2=2\lVert\Delta\rVert_2^2=4\epsilon$, so  the second inequality in  \eref{eq:Delta1NormIF} is saturated. If $\rho$ is orthogonal to $\sigma$, then $\epsilon=1$ and $\lVert\Delta\rVert_1=2$, so the second inequality in  \eref{eq:Delta1NormIF} is also saturated.

Conversely, if the second inequality in  \eref{eq:Delta1NormIF} is saturated, then the inequality in \eref{eq:Delta12NormUppCau} is saturated, which means all $\epsilon_i$ are equal to each other, and a similar conclusion holds for all convex decompositions of  $\rho$ into pure states. It follows that all pure states in the support of $\rho$ have the same fidelity with $\sigma$. This condition can hold iff $\rho$ is a pure state or is orthogonal to $\sigma$. Therefore,  the second inequality in  \eref{eq:Delta1NormIF} is saturated iff $\rho$ is a pure state or is orthogonal to $\sigma$, which completes the proof of \pref{pro:Delta12NormIF}.
\end{proof}

Next, we prove an auxiliary lemma employed in the proof of \pref{pro:Delta12NormIF}.
\begin{lemma}\label{lem:PurityRange}
	Suppose $\sigma,\rho\in \caD(\caH)$, where $\sigma$ is a pure state, and  let $\epsilon=1-\tr(\rho \sigma)$ be the infidelity between $\rho$ and $\sigma$. Then 
	\begin{gather}
		(1-\epsilon)^2+ \frac{\epsilon^2}{d-1}\le \tr\left(\rho^2\right)\le1,  \label{eq:PurityRange}
	\end{gather}
	where the first inequality is saturated iff $\rho =(1-\epsilon)\sigma+\epsilon (\bbone-\sigma)/(d-1)$ and the second inequality is saturated iff $\rho$ is a pure state. If in addition $\rho$ commutes with $\sigma$, then 
	\begin{gather} 
		\tr\left(\rho^2\right)\le1-2\epsilon+2\epsilon^2,  \label{eq:PurityRange2}
	\end{gather}	
	and the upper bound is saturated iff $\rank[(\bbone-\sigma) \rho(\bbone-\sigma)] \le 1$. 
	\end{lemma}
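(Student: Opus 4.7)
The plan is to work in a basis adapted to the pure state $\sigma=|\phi\>\<\phi|$, so that $\rho$ decomposes in block form as
\begin{equation*}
\rho=\begin{pmatrix} 1-\epsilon & v^\dag \\ v & A \end{pmatrix},
\end{equation*}
where $A\in\caL^\rmH(\caH_\perp)$ acts on the $(d-1)$-dimensional orthogonal complement of $|\phi\>$, is positive semidefinite with $\tr(A)=\epsilon$, and $v\in\bbC^{d-1}$ satisfies a Schur-complement-type PSD constraint inherited from $\rho\ge 0$. A direct computation then yields
\begin{equation*}
\tr(\rho^2)=(1-\epsilon)^2+2\lVert v\rVert^2+\tr(A^2).
\end{equation*}

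For the \emph{upper bound} $\tr(\rho^2)\le 1$, I would simply use that $\rho\ge 0$ and $\tr(\rho)=1$ imply all eigenvalues lie in $[0,1]$, hence $\tr(\rho^2)\le\tr(\rho)=1$, with equality iff $\rho$ is a rank-one projector. For the \emph{lower bound}, the Cauchy--Schwarz inequality on the $(d-1)$ eigenvalues of $A$ gives $\tr(A^2)\ge(\tr A)^2/(d-1)=\epsilon^2/(d-1)$, while $\lVert v\rVert^2\ge 0$ is trivial. Combining these produces $(1-\epsilon)^2+\epsilon^2/(d-1)\le\tr(\rho^2)$, saturated iff $v=0$ and $A=\epsilon\lsp\bbone_{\caH_\perp}/(d-1)$, which is exactly the condition $\rho=(1-\epsilon)\sigma+\epsilon(\bbone-\sigma)/(d-1)$.

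For the commuting case, the point is that $[\rho,\sigma]=0$ forces the off-diagonal block to vanish, i.e.\ $v=0$, so that $\tr(\rho^2)=(1-\epsilon)^2+\tr(A^2)$ and $A=(\bbone-\sigma)\rho(\bbone-\sigma)$ restricted to $\caH_\perp$. The task then reduces to maximizing $\tr(A^2)=\sum_i\lambda_i^2$ subject to $\lambda_i\ge 0$, $\sum_i\lambda_i=\epsilon$, and $\lambda_i\le 1$ (the last because each $\lambda_i$ is also an eigenvalue of $\rho$, and density-matrix eigenvalues lie in $[0,1]$). For $0\le\epsilon\le 1$, the function $\sum_i\lambda_i^2$ is maximized at an extreme point of this simplex-with-box-constraints, namely when one eigenvalue equals $\epsilon$ and the rest vanish, giving $\tr(A^2)\le\epsilon^2$ and thus $\tr(\rho^2)\le(1-\epsilon)^2+\epsilon^2=1-2\epsilon+2\epsilon^2$. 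Equality holds iff $A$ has rank at most one, i.e.\ $\rank[(\bbone-\sigma)\rho(\bbone-\sigma)]\le 1$.

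The main obstacle is simply justifying the constrained maximization cleanly; the easiest route is to note convexity of $x\mapsto x^2$ on $[0,1]$, which forces the optimum to be attained at a vertex of the feasible region, and then to observe that the vertices are permutations of $(\epsilon,0,\ldots,0)$ since $\epsilon\le 1$. All remaining steps are routine linear algebra.
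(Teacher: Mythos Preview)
Your proof is correct and follows essentially the same approach as the paper. The paper phrases the decomposition via the pinched state $\varrho=\sigma\rho\sigma+(\bbone-\sigma)\rho(\bbone-\sigma)=(1-\epsilon)\sigma+B$ and uses $\tr(\rho^2)\ge\tr(\varrho^2)$, which is exactly your block-matrix identity $\tr(\rho^2)=(1-\epsilon)^2+2\lVert v\rVert^2+\tr(A^2)$ with $B=A$; the bound $\tr(A^2)\le\epsilon^2$ in the commuting case is obtained directly (your box constraint $\lambda_i\le 1$ is redundant since $\lambda_i\le\sum_j\lambda_j=\epsilon$ already).
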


\begin{proof}[Proof of \lref{lem:PurityRange}]
	The second  inequality in \eref{eq:PurityRange} and the saturation condition are trivial. 
	To prove the first inequality in \eref{eq:PurityRange}, let 
	\begin{align}
		\varrho=\sigma \rho  \sigma +(\bbone -\sigma)\rho (\bbone -\sigma)=(1-\epsilon)\sigma+B, 
	\end{align}
	where $B=(\bbone -\sigma)\rho (\bbone -\sigma)$ is a positive operator supported in a $(d-1)$-dimension subspace of $\caH$. In addition, $B$ 
	satisfies  
	\begin{align}\label{eq:PurityLBproof1}
		\tr(B)=\epsilon,\quad \frac{\epsilon^2}{d-1}\le \tr\left(B^2\right)\le \epsilon^2,
	\end{align} 
	where the first inequality is saturated iff $B$ is proportional to $\bbone -\sigma$, and the second inequality is saturated iff $B$ is proportional to a rank-1 projector, that is, $\rank(B)\le 1$. 
	Therefore,
	\begin{align}\label{eq:PurityLBproof2}
		\tr\left(\rho^2\right)\geq \tr(\varrho^2)=(1-\epsilon)^2+\tr\left(B^2\right)\geq (1-\epsilon)^2+ \frac{\epsilon^2}{d-1},
	\end{align}
	which confirms the first inequality in \eref{eq:PurityRange}. If $\rho =(1-\epsilon)\sigma+\epsilon (\bbone-\sigma)/(d-1)$, then it is straightforward to verify that the first inequality in \eref{eq:PurityRange} is saturated. Conversely, if this inequality is saturated, then the two inequalities in \eref{eq:PurityLBproof2} are saturated, which means $\rho=\varrho$ and  $B$ is proportional to $\bbone -\sigma$, so $\rho =(1-\epsilon)\sigma+\epsilon (\bbone-\sigma)/(d-1)$.

	Next, suppose $\rho$ commutes with $\sigma$. Then $\rho=\varrho$ and 
	\begin{align}\label{eq:PurityLBproof3}
		\tr\left(\rho^2\right)= \tr(\varrho^2)=(1-\epsilon)^2+\tr\left(B^2\right)\le (1-\epsilon)^2+ \epsilon^2=1-2\epsilon+2\epsilon^2,
	\end{align}
	which confirms \eref{eq:PurityRange2}. Here the inequality follows from \eref{eq:PurityLBproof1} and is saturated iff $\rank(B)\le 1$. This observation completes the proof of \lref{lem:PurityRange}.
\end{proof}

\begin{proof}[Proof of \pref{pro:pchannel_haar}]
	According to Schur–Weyl duality, we have 
	\begin{equation}
		\bbE_{|\phi\>\sim\haar}(\sigma\otimes\sigma)=\frac{\bbone+\SWAP}{d(d+1)}.
	\end{equation}
Therefore, for any two Pauli operators $P_i$ and $P_j$ in $\bcaP_n$, we have
	\begin{align}
		\bbE_{|\phi\>\sim\haar} \tr(P_i \sigma P_j \sigma)&=\bbE_{|\phi\>\sim\haar}\left[\tr(P_i\sigma)\tr(P_j\sigma)\right]=\tr\left[ (P_i\otimes P_j)\bbE_{|\phi\>\sim\haar}(\sigma\otimes\sigma)\right]=\frac{d\delta_{i0}\delta_{j0}+\delta_{ij}}{d+1}, \label{eq:HaarPauli}
	\end{align}	
	which implies that
	\begin{align}
		\bbE_{|\phi\>\sim\haar} \tr(P_i P_j \sigma P_j P_i \sigma)&
		=\frac{d\delta_{ij}+1}{d+1}.\label{eq:HaarPauli2}
	\end{align}

	Next, by definition we have
	\begin{gather}
		\tr\left[\sigma\scrP(\sigma)\right] = \tr\left[(1-\lVert\bfp\rVert_1)\sigma^2+ \sum_{i=1}^{d^2-1} p_i P_i \sigma P_i\sigma\right]=(1-\lVert\bfp\rVert_1)+\sum_{i=1}^{d^2-1} p_i \tr(P_i \sigma P_i \sigma).
	\end{gather}
	Together with \eref{eq:HaarPauli}, this equation implies that
	\begin{equation}
	\begin{gathered}
	\bbE_{|\phi\>\sim\haar}\tr\left[\sigma\scrP(\sigma)\right]= (1-\lVert\bfp\rVert_1)+ \frac{1}{d+1}\sum_{i=1}^{d^2-1} p_i=  1 -\frac{d}{d+1}\lVert\bfp\rVert_1,\\
	\bepsilon=1-\bbE_{|\phi\>\sim\haar}\tr\left[\sigma\scrP(\sigma)\right] =\frac{d}{d+1}\lVert\bfp\rVert_1,
\end{gathered}
	\end{equation}
	which confirm \eref{eq:MeanIFpauli}.

	In addition, $\Delta$ and $\lVert\Delta\rVert_2^2$ can be expressed as follows:
	\begin{equation}
	\begin{aligned}
	\Delta&=\scrP(\sigma)-\sigma=\sum_{i=1}^{d^2-1}p_i (P_i \sigma P_i-\sigma),\\
	\lVert\Delta\rVert_2^2&=\lVert\scrP(\sigma)-\sigma\rVert_2^2=\sum_{i,j=1}^{d^2-1}p_ip_j \left[\tr\left(P_iP_j\sigma P_j P_i\sigma\right)-\tr(P_i\sigma P_i\sigma)-\tr\left(P_j\sigma P_j\sigma\right)+1\right]\\
	&=\lVert\bfp\rVert_1^2+\sum_{i,j=1}^{d^2-1}p_ip_j\tr\left(P_iP_j\sigma P_j P_i\sigma\right)	
	-	2\lVert\bfp\rVert_1 \sum_{i=1}^{d^2-1}p_i\tr(P_i\sigma P_i\sigma).
\end{aligned}
	\end{equation}
	In conjunction with \eqsref{eq:HaarPauli}{eq:HaarPauli2} we can derive that                                            
	\begin{align}
		\bbE_{|\phi\>\sim\haar}  \lVert\Delta\rVert_2^2&=\lVert\bfp\rVert_1^2+\sum_{i,j=1}^{d^2-1}\frac{p_ip_j(d\delta_{ij}+1)}{d+1}-\frac{2\lVert\bfp\rVert_1^2}{d+1}=\frac{d}{d+1}\lVert\bfp\rVert_1^2+\frac{d}{d+1}\lVert\bfp\rVert_2^2,
	\end{align}
	which confirms \eref{eq:MeanDelta2NormPauli}.

	Finally, we turn to \eqsref{eq:MeanDelta2NormIFpauli}{eq:MeanDelta1NormIFpaulitt}. The first two inequalities in \eref{eq:MeanDelta2NormIFpauli} are trivial; the third inequality follows from \pref{pro:Delta12NormIF}; the last inequality  is a simple corollary of \eqsref{eq:MeanIFpauli}{eq:MeanDelta2NormPauli} together with the following inequalities:
	\begin{equation}
		\frac{\lVert\bfp\rVert_1^2}{d^2-1}\le\lVert\bfp\rVert_2^2\le\lVert\bfp\rVert_1^2,
	\end{equation}
which also imply that $\bbE_{|\phi\>\sim\haar}  \lVert\Delta\rVert_2^2\geq d\bepsilon^2/(d-1)$.

\Eref{eq:MeanDelta1NormIFpaulitt} follows from \pref{pro:Delta12NormIF} and \eref{eq:MeanDelta2NormIFpauli}, which  completes the proof of \pref{pro:pchannel_haar}. 
\end{proof}

\section{\label{SM:CharProp}Properties of cross characteristic functions}
In this section, we clarify the key properties of cross characteristic functions defined in \eref{eq:CrossChar}, which are crucial to understanding the variance in  CRM shadow estimation based on the Clifford group.

\subsection{Basic properties}

\begin{lemma}\label{lem:CharProp}
	Suppose $\sigma,\rho\in \caD(\caH)$,  $\Delta=\rho-\sigma$,  $\epsilon$ is the infidelity between $\rho$ and $\sigma$, and $Q\in\caL^{\rmH}(\caH)$. Then 
	\begin{align}\label{eq:CharProp1}
		\tXi_{\Delta,Q}\cdot \Xi_{\Delta,Q}\le \lVert\tXi_{\Delta,Q}\rVert_2^2=\lVert \Xi_{\Delta,Q}\rVert_2^2\le \min\left\{d\lVert\Delta\rVert_1^2\lVert Q\rVert _2^2, \sqrt{d}\lsp \lVert\Delta\rVert_1\lVert\Delta\rVert_2\lVert\Xi_Q\rVert_4^2\right\}\leq 4d\epsilon \lVert Q\rVert _2^2. 
	\end{align}	
	If $\sigma$ is a pure state, then 
	\begin{align}
		\lVert\Xi_{\Delta,Q}\rVert_2^2&\le d\lVert \Delta\rVert_1^2\lVert Q\rVert _2^2\le \min\left\{4d\epsilon,4(d-1)\lVert\Delta\rVert_2^2\right\}\lVert Q\rVert_2^2, \label{eq:CharPropPure1}\\
		\lVert \Xi_{\Delta,Q}\rVert_2^2& \le  \sqrt{d}\lsp \lVert\Delta\rVert_1\lVert\Delta\rVert_2\lVert\Xi_Q\rVert_4^2\le \min\left\{ 2\sqrt{2d}\lsp \epsilon, 2\sqrt{d-1} \lsp\lVert\Delta\rVert_2^2  \right\}\lVert \Xi_Q\rVert_4^2. \label{eq:CharPropPure2}
	\end{align}	
	If $\sigma$ is a pure state and $Q=\sigma-\bbone/d$, then 	
	\begin{align}
		\lVert\Xi_{\Delta,Q}\rVert_2^2&= \lVert\Xi_{\Delta,\sigma}\rVert_2^2\le 	d\lVert\Delta\rVert_2^2 \leq 2d\epsilon,  \label{eq:CharPropFE0} \\
		\lVert \Xi_{\Delta,Q}\rVert_2^2&= \lVert \Xi_{\Delta,\sigma}\rVert_2^2 \le
		2^{-M_2(\sigma)/2} d \lVert\Delta\rVert_1 \lVert\Delta\rVert_2 \le 2^{1-M_2(\sigma)/2}d\min\left\{\sqrt{2}\lsp \epsilon, \lVert\Delta\rVert_2^2\right\}. \label{eq:CharPropFE}
	\end{align}	
\end{lemma}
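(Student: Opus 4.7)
The plan is to work entirely in the Pauli basis, exploiting that $A\mapsto \Xi_A$ is an isometry up to normalization, so that $\lVert \Xi_A\rVert_2^2 = d\lVert A\rVert_2^2$. The proof then reduces to a handful of H\"older and Cauchy--Schwarz estimates glued to the previously established norm inequalities $\lVert\Delta\rVert_1^2\le 4\epsilon$ (Proposition~\ref{pro:Delta12NormIFgen}) and, when $\sigma$ is pure, $\lVert\Delta\rVert_1\le 2\sqrt{(d-1)/d}\,\lVert\Delta\rVert_2$ together with $\lVert\Delta\rVert_2^2\le 2\epsilon$ (Proposition~\ref{pro:Delta12NormIF}).

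First, I would prove the key equality $\lVert\tXi_{\Delta,Q}\rVert_2^2 = \lVert\Xi_{\Delta,Q}\rVert_2^2$. Expanding $\Delta$ and $Q$ in the Pauli basis and using $PP''P = \chi_P(P'')P''$ with $\chi_P(P'')=\pm 1$ according to commutation, together with $\tr(P'P'')=d\lsp\delta_{P',P''}$, gives the pointwise identity $\tXi_{\Delta,Q}(P) = \frac{1}{d}\sum_{P'}\Xi_\Delta(P')\Xi_Q(P')\chi_P(P')$. Squaring and summing over $P$ reduces the claim to the orthogonality relation $\sum_P \chi_P(P')\chi_P(P'') = d^2\lsp \delta_{P',P''}$, which follows from the multiplicativity $\chi_P(P'P'')=\chi_P(P')\chi_P(P'')$ and the fact that the centralizer of any non-identity Pauli in $\bcaP_n$ has cardinality $d^2/2$. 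The Cauchy--Schwarz inequality then immediately yields $\tXi_{\Delta,Q}\cdot \Xi_{\Delta,Q}\le \lVert\tXi_{\Delta,Q}\rVert_2\lVert\Xi_{\Delta,Q}\rVert_2 = \lVert\Xi_{\Delta,Q}\rVert_2^2$.

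Next, I would derive the two upper bounds on $\lVert\Xi_{\Delta,Q}\rVert_2^2$. The first uses $|\Xi_\Delta(P)|\le \lVert\Delta\rVert_1\lVert P\rVert_\infty = \lVert\Delta\rVert_1$ to get $\lVert\Xi_{\Delta,Q}\rVert_2^2 \le \lVert\Delta\rVert_1^2 \lVert\Xi_Q\rVert_2^2 = d\lVert\Delta\rVert_1^2\lVert Q\rVert_2^2$. The second combines Cauchy--Schwarz in the Pauli sum, $\lVert\Xi_{\Delta,Q}\rVert_2^2 \le \lVert\Xi_\Delta\rVert_4^2 \lVert\Xi_Q\rVert_4^2$, with the estimate $\lVert\Xi_\Delta\rVert_4^4 \le (\max_P\Xi_\Delta(P)^2)\cdot\lVert\Xi_\Delta\rVert_2^2 \le \lVert\Delta\rVert_1^2\cdot d\lVert\Delta\rVert_2^2$, giving $\lVert\Xi_\Delta\rVert_4^2\le \sqrt{d}\lsp\lVert\Delta\rVert_1\lVert\Delta\rVert_2$. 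The terminal bound $4d\epsilon\lVert Q\rVert_2^2$ is then immediate from $\lVert\Delta\rVert_1^2\le 4\epsilon$.

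For the pure-$\sigma$ refinements \eqref{eq:CharPropPure1} and \eqref{eq:CharPropPure2}, I would substitute $\lVert\Delta\rVert_1\le 2\sqrt{\epsilon}$ and $\lVert\Delta\rVert_1\le 2\sqrt{(d-1)/d}\,\lVert\Delta\rVert_2$ (together with $\lVert\Delta\rVert_2^2\le 2\epsilon$) into the two bounds above. Finally, for the HPFE case $Q=\sigma-\bbone/d$, the identity $\tr(\Delta)=0$ and $\tr(QP)=\tr(\sigma P)$ for $P\neq \bbone$ give $\Xi_{\Delta,Q}=\Xi_{\Delta,\sigma}$ pointwise. The bound $\lVert\Xi_{\Delta,\sigma}\rVert_2^2\le d\lVert\Delta\rVert_2^2$ then follows from $|\Xi_\sigma(P)|\le 1$, while the 2-SRE bound exploits Cauchy--Schwarz $\lVert\Xi_{\Delta,\sigma}\rVert_2^2\le \lVert\Xi_\Delta\rVert_4^2\lVert\Xi_\sigma\rVert_4^2$ combined with the identity $\lVert\Xi_\sigma\rVert_4^2 = \sqrt{d}\cdot 2^{-M_2(\sigma)/2}$, which is just a rewriting of the definition $M_2(\sigma)=\log_2\!\left(d/\lVert\Xi_\sigma\rVert_4^4\right)$. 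The one delicate step is the sign-tracking needed to establish the Pauli orthogonality relation in the first paragraph; everything else chains standard norm inequalities to the previously proved relations between $\lVert\Delta\rVert_1$, $\lVert\Delta\rVert_2$, and $\epsilon$.
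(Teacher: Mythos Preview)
Your proposal is correct and follows essentially the same approach as the paper: both arguments bound $\lVert\Xi_{\Delta,Q}\rVert_2^2$ via the pointwise estimate $|\Xi_\Delta(P)|\le\lVert\Delta\rVert_1$ and via Cauchy--Schwarz against $\lVert\Xi_\Delta\rVert_4^2\lVert\Xi_Q\rVert_4^2$ (with $\lVert\Xi_\Delta\rVert_4^4\le d\lVert\Delta\rVert_1^2\lVert\Delta\rVert_2^2$), then feed in the norm relations from Propositions~\ref{pro:Delta12NormIFgen} and~\ref{pro:Delta12NormIF}, and finally specialize to $Q=\sigma-\bbone/d$ using $|\tr(\sigma P)|\le 1$ and the definition of $M_2(\sigma)$. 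The one place you go beyond the paper is the identity $\lVert\tXi_{\Delta,Q}\rVert_2^2=\lVert\Xi_{\Delta,Q}\rVert_2^2$: the paper simply cites \rcite{chen2024nonstab}, whereas you supply a direct character-sum argument via $\sum_{P}\chi_P(P')\chi_P(P'')=d^2\delta_{P',P''}$, which is correct and self-contained.
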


Next, suppose $\sigma$ is a pure state and $\rho$ is tied to $\sigma$ by a Pauli channel. Our goal is to derive a nontrivial upper bound for $\lVert\Xi_{\Delta,\sigma}\rVert_2^2$ in terms of the infidelity $\epsilon$ between $\rho$ and $\sigma$. To this end, we need to introduce some auxiliary notation and results. Given two  Pauli operators $P_i, P_j$ in $\bcaP_n$,  define
\begin{gather}
	\eta_i=1-\left[\tr(\sigma P_i)\right]^2,\quad 	K_{\sigma}(P_i):=\sum_{ k\, |\, \{P_k, P_i\}=0}\left[\tr(\sigma P_k)\right]^4,\quad  K_{\sigma}(P_i,P_j):=\sum_{ k\, |\, \{P_k, P_i\}=\{P_k, P_j\}=0}\left[\tr(\sigma P_k)\right]^4.\label{eq:DefKfunction}
\end{gather}
By definition $K_{\sigma}(P_j,P_i)=K_{\sigma}(P_i,P_j)$ and $K_{\sigma}(P_i,P_i)=K_{\sigma}(P_i)$; if $i=0$, then $K_{\sigma}(P_i,P_j)=K_{\sigma}(P_i)=\eta_i=0$.

\begin{lemma}\label{lem:KetaIneq}
	Suppose $\sigma$ is a pure state on $\caH$ and  $P_i, P_j\in\bcaP_n$ with $i,j\geq 1$. Then 
	\begin{align}
		K_{\sigma}(P_i)\le\frac{d\eta_i^2}{2} ,\quad K_{\sigma}(P_i,P_j)\le\frac{d \eta_i \eta_j}{2},\label{eq:KetaIneq}
	\end{align}
	where the first inequality is saturated when $\sigma$ is a stabilizer state, and the second inequality is saturated when  $\sigma$ is a stabilizer state and $P_i=P_j$.
\end{lemma}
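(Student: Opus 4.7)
\smallskip

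The plan is to expand $\sigma$ in the Pauli basis, $\sigma=\frac{1}{d}\sum_{k}c_k P_k$ with $c_k:=\tr(\sigma P_k)$, and exploit the identity $\sigma-P_i\sigma P_i=\frac{2}{d}\sum_{k\in S_i}c_k P_k$, where $S_i:=\{k\mid\{P_k,P_i\}=0\}$. Computing $\lVert\sigma-P_i\sigma P_i\rVert_2^{2}$ in two ways will be the first step: the Pauli expansion gives $(4/d)\sum_{k\in S_i}c_k^{2}$, while using purity and $|\<\phi|P_i|\phi\>|^{2}=c_i^{2}$ gives $2-2c_i^{2}=2\eta_i$. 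Equating the two yields the key identity
\begin{equation}
\sum_{k\in S_i}c_k^{2}=\frac{d\eta_i}{2}.\notag
\end{equation}

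Next I would upgrade this to a uniform pointwise bound. For any $k\in S_i$ the operator $R_{ik}:=\rmi P_iP_k$ is Hermitian, squares to $\bbone$, and anticommutes with both $P_i$ and $P_k$; hence the three Hermitian operators $P_i,P_k,R_{ik}$ mutually anticommute and the square of any real linear combination $aP_i+bP_k+cR_{ik}$ equals $(a^{2}+b^{2}+c^{2})\bbone$. Plugging in $a=\<P_i\>$, $b=\<P_k\>$, $c=\<R_{ik}\>$ and using the operator norm gives the Bloch-ball inequality $c_i^{2}+c_k^{2}+\<R_{ik}\>^{2}\le 1$, so in particular $c_k^{2}\le 1-c_i^{2}=\eta_i$ for every $k\in S_i$. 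Combining the pointwise bound with the identity above yields
\begin{equation}
K_\sigma(P_i)=\sum_{k\in S_i}c_k^{2}\cdot c_k^{2}\le\eta_i\sum_{k\in S_i}c_k^{2}=\frac{d\eta_i^{2}}{2},\notag
\end{equation}
which is the first inequality in \eqref{eq:KetaIneq}.

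For the second inequality, I would apply the Cauchy–Schwarz inequality to the sum restricted to $S_i\cap S_j$, writing $c_k^{4}=\bigl(c_k^{2}\mathbbm{1}[k\in S_i]\bigr)\bigl(c_k^{2}\mathbbm{1}[k\in S_j]\bigr)$ and summing over all $k$, which gives $K_\sigma(P_i,P_j)\le\sqrt{K_\sigma(P_i)K_\sigma(P_j)}\le d\eta_i\eta_j/2$ by the first inequality already proved. Finally, the saturation claims follow by inspection: for a stabilizer state $\sigma$ one has $c_k\in\{0,\pm 1\}$, whence $c_k^{4}=c_k^{2}$, so $K_\sigma(P_i)=\sum_{k\in S_i}c_k^{2}=d\eta_i/2$; this equals $d\eta_i^{2}/2$ since $\eta_i\in\{0,1\}$ in the stabilizer case, and the second saturation statement with $P_i=P_j$ reduces to the first.

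The computations are largely routine; the one step requiring care is the pointwise bound $c_k^{2}\le\eta_i$ for $k\in S_i$, which is where the pure-state assumption enters essentially through the three-anticommuting-operator argument. Everything else is bookkeeping with Pauli expansions and a single Cauchy–Schwarz step.
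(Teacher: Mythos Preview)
Your proof is correct, but it follows a different route from the paper's argument, so a brief comparison is worthwhile.

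The paper decomposes $|\phi\rangle=\alpha_i|\psi_i^+\rangle+\sqrt{1-\alpha_i^2}\,|\psi_i^-\rangle$ into eigenstates of $P_i$ and introduces the off-diagonal operator $M_i=\tfrac12(|\psi_i^+\rangle\langle\psi_i^-|+|\psi_i^-\rangle\langle\psi_i^+|)$. The single identity $\tr(\sigma P_k)=\sqrt{\eta_i}\,\tr(M_iP_k)$ for $k\in S_i$ then delivers everything at once: the pointwise bound $|\tr(M_iP_k)|\le\lVert M_i\rVert_1=1$ and the Parseval-type sum $\sum_k[\tr(M_iP_k)]^2=d\tr(M_i^2)=d/2$ combine to give $K_\sigma(P_i)=\eta_i^2\sum_{k\in S_i}[\tr(M_iP_k)]^4\le\eta_i^2\cdot d/2$. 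For the cross term the paper uses the trivial monotonicity $K_\sigma(P_i,P_j)\le\min\{K_\sigma(P_i),K_\sigma(P_j)\}$ followed by the geometric mean, which is equivalent to your Cauchy--Schwarz step.

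Your approach instead derives the two ingredients separately: the sum identity $\sum_{k\in S_i}c_k^2=d\eta_i/2$ from the $2$-norm of $\sigma-P_i\sigma P_i$, and the pointwise bound $c_k^2\le\eta_i$ from the Bloch-ball inequality for the anticommuting triple $P_i,P_k,\rmi P_iP_k$. This is a perfectly valid and arguably more elementary route, avoiding the auxiliary operator $M_i$ at the cost of running two independent arguments. One small misattribution: you say the pure-state assumption enters through the anticommuting-triple argument, but in fact that inequality holds for any state; purity is actually used in the $2$-norm identity step, where you need $\tr(\sigma^2)=1$ and $\tr(\sigma P_i\sigma P_i)=c_i^2$.
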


\begin{lemma}\label{lem:InfidCrossCharUB}
	Suppose $\sigma$ is a pure state on $\caH$, $\scrP$ is the Pauli channel characterized by the vector $\bfp=(p_1,p_2, \ldots, p_{d^2-1})$, $\rho=\scrP(\sigma)$,  $\Delta=\rho-\sigma$, and $\epsilon$ is the infidelity between $\rho$ and $\sigma$. Then 
	\begin{align}
		\epsilon &=\sum_{i=1}^{d^2-1}p_i\eta_i, \quad 
		\lVert\Xi_{\Delta,\sigma}\lVert_2^2=4\sum_{i,j=1}^{d^2-1}p_ip_j K_{\sigma}(P_i,P_j)\leq 2d\epsilon^2,  \label{eq:InfidCrossCharUB}	
	\end{align}
	where the inequality is saturated when $\sigma$ is a stabilizer state and $\scrP$ is a single-error Pauli channel.
\end{lemma}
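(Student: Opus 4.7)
The plan is to reduce both identities to standard Pauli-expansion computations for the pure state $\sigma$ and then invoke \lref{lem:KetaIneq} to bound the double sum.

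First I would establish the identity $\epsilon=\sum_{i=1}^{d^2-1}p_i\eta_i$. Since $\sigma=|\phi\>\<\phi|$ is pure, the key observation is
\begin{equation}
\tr(P_i\sigma P_i\sigma)=|\<\phi|P_i|\phi\>|^2=[\tr(\sigma P_i)]^2=1-\eta_i.
\end{equation}
Writing $\scrP(\sigma)=(1-\lVert\bfp\rVert_1)\sigma+\sum_{i\geq 1}p_iP_i\sigma P_i$ and taking the trace against $\sigma$ then immediately gives $\tr(\rho\sigma)=1-\sum_i p_i\eta_i$, hence $\epsilon=\sum_i p_i\eta_i$.

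Next I would compute $\tr(\Delta P)$ for each $P\in\bcaP_n$ using the fact that $P_i\sigma P_i$ has the same Pauli expansion coefficients as $\sigma$ up to a sign dictated by commutation:
\begin{equation}
\tr(P_i\sigma P_i P)=\chi_i(P)\tr(\sigma P),\quad \chi_i(P):=\begin{cases}+1 & [P_i,P]=0,\\ -1 & \{P_i,P\}=0,\end{cases}
\end{equation}
so that $\tr(\rho P)=\tr(\sigma P)\bigl(1-2\sum_{i:\{P_i,P\}=0}p_i\bigr)$ and therefore
\begin{equation}
\tr(\Delta P)=-2\tr(\sigma P)\sum_{i:\{P_i,P\}=0}p_i.
\end{equation}
Squaring and summing over $P\in\bcaP_n$ with the weight $[\tr(\sigma P)]^2$ yields
\begin{equation}
\lVert\Xi_{\Delta,\sigma}\rVert_2^2=4\sum_{P}[\tr(\sigma P)]^4\sum_{i,j:\{P_i,P\}=\{P_j,P\}=0}p_ip_j=4\sum_{i,j\geq 1}p_ip_jK_\sigma(P_i,P_j),
\end{equation}
after swapping the order of summation and using the definition of $K_\sigma(P_i,P_j)$ from \eref{eq:DefKfunction}. (Terms with $i=0$ or $j=0$ drop out automatically since $P_0=I$ anticommutes with nothing.)

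The upper bound $\lVert\Xi_{\Delta,\sigma}\rVert_2^2\leq 2d\epsilon^2$ then follows directly by applying the second inequality of \lref{lem:KetaIneq}, namely $K_\sigma(P_i,P_j)\leq d\eta_i\eta_j/2$, to the double sum, giving $4\sum_{i,j}p_ip_j\cdot d\eta_i\eta_j/2=2d\bigl(\sum_i p_i\eta_i\bigr)^2=2d\epsilon^2$. For saturation, if $\sigma$ is a stabilizer state and $\scrP$ is a single-error channel with parameter $p_k$, then $\epsilon=p_k\eta_k$ and only the $(i,j)=(k,k)$ term contributes, with $K_\sigma(P_k)=d\eta_k^2/2$ by the saturation clause of \lref{lem:KetaIneq}; this gives exactly $2dp_k^2\eta_k^2=2d\epsilon^2$. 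The main obstacle is bookkeeping the commutation signs and recognizing that the inner sum over $P$ at fixed $(i,j)$ is precisely $K_\sigma(P_i,P_j)$; everything else is elementary once \lref{lem:KetaIneq} is in hand.
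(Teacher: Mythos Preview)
Your proposal is correct and follows essentially the same route as the paper's proof: compute $\tr(\Delta P_k)=-2\tr(\sigma P_k)\sum_{i:\{P_i,P_k\}=0}p_i$ via the commutation sign, swap the order of summation to identify $K_\sigma(P_i,P_j)$, and apply \lref{lem:KetaIneq} for the bound and its saturation clause. The only cosmetic difference is that you introduce the sign function $\chi_i(P)$ explicitly, whereas the paper writes out the cancellation directly.
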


\begin{figure}
	\centering
	\includegraphics[width=0.9\columnwidth]{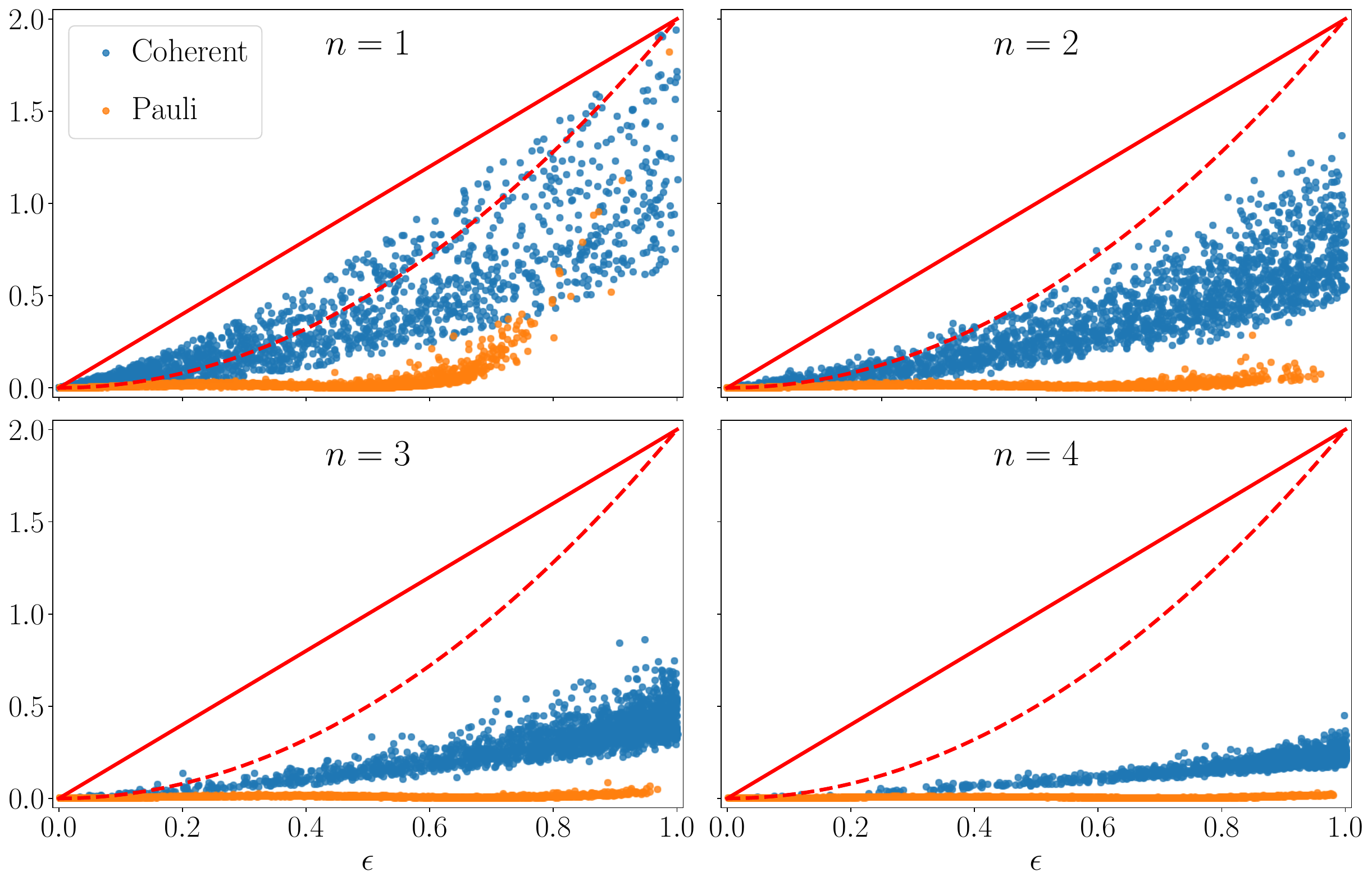}
	\caption{Scatter plots on the relation  between $\lVert\Xi_{\Delta,O}\rVert_2^2/d$ and the infidelity $\epsilon$, where $\Delta=\rho-\sigma$ and $O=\sigma-\bbone/d$. Here each $\sigma$ is an $n$-qubit Haar-random pure state with $n=1,2,3, 4$, $\rho=\scrN(\sigma)$, and $\scrN$ is either a random coherent channel  or a random Pauli channel as in \fref{fig:compare_three_cases}. For each plot, 2000 Haar-random pure states together with 2000 random coherent channels and 2000 random Pauli channels are generated. The red solid line denotes the upper bound $2\epsilon$ for $\lVert\Xi_{\Delta,O}\rVert_2^2$ as shown in \eref{eq:CharPropFE0}, while the red dashed line denotes the upper bound $2\epsilon^2$ for $\lVert\Xi_{\Delta,O}\rVert_2^2$ when $\scrN$ is a Pauli channel as shown in \lref{lem:InfidCrossCharUB}.
	}\label{fig:compare_three_cases_char}
\end{figure}

Next, we determine the expectation value of $\lVert\Xi_{\Delta,\sigma}\rVert_2^2$ when $\sigma = |\phi\>\<\phi|$ is a Haar-random pure state on $\caH$. 

\begin{proposition}\label{pro:pchannel_haar_char}
	Suppose $\sigma = |\phi\>\<\phi|$ is a Haar-random pure state on $\caH$, $\scrP$ is the Pauli channel characterized by the vector $\bfp=(p_1,p_2, \ldots, p_{d^2-1})$, $\rho=\scrP(\sigma)$,  $\Delta=\rho-\sigma$, and $\epsilon$ is the infidelity between $\rho$ and $\sigma$. Then  
	\begin{gather}
		\bbE_{|\phi\>\sim \haar}\lVert\Xi_{\Delta,\sigma}\rVert_2^2=\frac{3d^2}{(d+1)(d+3)}\left(\lVert\bfp\rVert_1^2+\lVert\bfp\rVert_2^2\right)=\frac{3d}{d+3}\bbE_{|\phi\>\sim \haar}\lVert\Delta\rVert_2^2,\label{eq:MeanChar2NormPauli}\\
		\frac{3d\bepsilon^2}{d+2}\le\bbE_{|\phi\>\sim \haar}\lVert\Xi_{\Delta,\sigma}\rVert_2^2\le\frac{6(d+1)}{d+3}\bepsilon^2.\label{eq:MeanChar2NormPauli2}
	\end{gather}
\end{proposition}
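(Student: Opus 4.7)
The plan is to reduce the Haar average to a counting problem in the Pauli group by combining the explicit formula from \Lref{lem:InfidCrossCharUB} with the fourth moment of the Haar measure. First, I would invoke \Lref{lem:InfidCrossCharUB} to write
\begin{equation*}
\bbE_{|\phi\>\sim\haar}\lVert\Xi_{\Delta,\sigma}\rVert_2^2=4\sum_{i,j=1}^{d^2-1}p_ip_j\,\bbE_{|\phi\>\sim\haar}K_{\sigma}(P_i,P_j),
\end{equation*}
so everything reduces to the averages $\bbE_{|\phi\>\sim\haar}K_{\sigma}(P_i,P_j)=\sum_{k:\{P_k,P_i\}=\{P_k,P_j\}=0}\bbE_{|\phi\>\sim\haar}[\tr(\sigma P_k)]^4$.

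Next, I would compute the single fourth moment via $\bbE_{|\phi\>\sim\haar}\sigma^{\otimes 4}=P_{[4]}/\binom{d+3}{4}$, which gives
\begin{equation*}
\bbE_{|\phi\>\sim\haar}[\tr(\sigma P_k)]^4=\frac{1}{d(d+1)(d+2)(d+3)}\sum_{\pi\in S_4}\tr\bigl(P_k^{\otimes 4}W_\pi\bigr).
\end{equation*}
Since $P_k$ is non-identity, $\tr(P_k)=0$, so only permutations whose cycle lengths are all even survive: the three double transpositions (each contributing $\tr(P_k^2)^2=d^2$) and the six $4$-cycles (each contributing $\tr(P_k^4)=d$), giving $3d^2+6d=3d(d+2)$ and hence $\bbE_{|\phi\>\sim\haar}[\tr(\sigma P_k)]^4=3/[(d+1)(d+3)]$. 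This value is independent of the particular $P_k\neq\bbone$, so the average $\bbE_{|\phi\>\sim\haar}K_\sigma(P_i,P_j)$ reduces to $3/[(d+1)(d+3)]$ times the cardinality of $\{P_k\in\bcaP_n:\{P_k,P_i\}=\{P_k,P_j\}=0\}$.

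Identifying $P_k\leftrightarrow v_k\in\bbZ_2^{2n}$ with anticommutation given by a nondegenerate symplectic form, this set is the solution set of one or two independent linear equations over $\bbZ_2$: the count is $d^2/2$ when $P_i=P_j$ and $d^2/4$ when $i\neq j$ (because $v_i,v_j$ are linearly independent in $\bbZ_2^{2n}$, so the two constraints $v\cdot v_i=v\cdot v_j=1$ are independent). Separating the diagonal and off-diagonal terms of the double sum then gives
\begin{equation*}
\bbE_{|\phi\>\sim\haar}\lVert\Xi_{\Delta,\sigma}\rVert_2^2=\frac{6d^2}{(d+1)(d+3)}\lVert\bfp\rVert_2^2+\frac{3d^2}{(d+1)(d+3)}\bigl(\lVert\bfp\rVert_1^2-\lVert\bfp\rVert_2^2\bigr)=\frac{3d^2}{(d+1)(d+3)}\bigl(\lVert\bfp\rVert_1^2+\lVert\bfp\rVert_2^2\bigr),
\end{equation*}
and comparison with $\bbE_{|\phi\>\sim\haar}\lVert\Delta\rVert_2^2=\frac{d}{d+1}(\lVert\bfp\rVert_1^2+\lVert\bfp\rVert_2^2)$ from \pref{pro:pchannel_haar} yields the second equality in \eref{eq:MeanChar2NormPauli}. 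For \eref{eq:MeanChar2NormPauli2}, I would substitute $\lVert\bfp\rVert_1=\frac{d+1}{d}\bepsilon$ from \eref{eq:MeanIFpauli}, use $\lVert\bfp\rVert_2^2\leq\lVert\bfp\rVert_1^2$ for the upper bound, and drop $\lVert\bfp\rVert_2^2\geq 0$ combined with the elementary estimate $\frac{3(d+1)}{d+3}\geq\frac{3d}{d+2}$ for the lower bound.

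The main obstacle is the symplectic count in the third step: one must correctly distinguish $i=j$ from $i\neq j$ and argue that in the latter case the two linear functionals $v\mapsto v\cdot v_i$, $v\mapsto v\cdot v_j$ on $\bbZ_2^{2n}$ are genuinely independent (which requires $P_i\neq P_j$, both non-identity, and relies on the nondegeneracy of the symplectic form), so the joint constraint halves the solution set once more relative to a single constraint. The Haar fourth-moment step and the final substitutions are essentially routine once this count is in hand.
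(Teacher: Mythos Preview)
Your proposal is correct and follows essentially the same route as the paper: compute $\bbE_{|\phi\>\sim\haar}[\tr(\sigma P_k)]^4=3/[(d+1)(d+3)]$ via the fourth Haar moment, reduce to the anticommutation count $C_{i,j}=d^2(1+\delta_{ij})/4$, and then combine with \pref{pro:pchannel_haar}. Your symplectic justification for the count is more explicit than the paper's ``straightforward to verify,'' and your lower-bound step (dropping $\lVert\bfp\rVert_2^2$ and checking $3(d+1)/(d+3)\ge 3d/(d+2)$) is a direct variant of the paper's route through \eref{eq:MeanDelta2NormIFpauli}, but the argument is otherwise the same.
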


\Fref{fig:compare_three_cases_char} illustrates the relation between $\lVert \Xi_{\Delta,O}\rVert_2^2/d$ and  the infidelity $\epsilon$,  where $\Delta=\rho-\sigma$, $O=\sigma-\bbone/d$, $\sigma$ is an $n$-qubit Haar-random pure state, $\rho=\scrN(\sigma)$, and $\scrN$ is either a coherent channel or a Pauli channel. Note that $\lVert \Xi_{\Delta,O}\rVert_2^2/d\leq 2\epsilon$ according to \eref{eq:CharPropFE0} in \lref{lem:CharProp}. When $\scrN$ is a Pauli channel, we have a tighter upper bound $\lVert \Xi_{\Delta,O}\rVert_2^2/d\leq 2\epsilon^2$  thanks to \lref{lem:InfidCrossCharUB} (see also \pref{pro:pchannel_haar_char}). When  $\scrN$ is  a coherent channel, the upper bound $2\epsilon^2$ is also applicable with a high probability, except when $n$ is very small.

\subsection{Technical proofs}

\begin{proof}[Proof of \lref{lem:CharProp}]
	The first inequality and equality in \eref{eq:CharProp1} follow from \rcite{chen2024nonstab}. The second inequality in \eref{eq:CharProp1} can be proved as follows:
	\begin{align}
		\lVert\Xi_{\Delta,Q}\rVert_2^2 &=\sum_{P\in \bcaP_n}\left[\tr(\Delta P)\right]^2 \left[\tr(Q P)\right]^2\le \lVert\Delta\rVert_1^2 \sum_{P\in \bcaP_n}\left[\tr(Q P)\right]^2=d\lVert\Delta\rVert_1^2\lVert Q\rVert_2^2,  \label{eq:CharPropProof1}\\
		\lVert \Xi_{\Delta,Q}\rVert_2^2&\le \sqrt{\sum_{P\in \bcaP_n}[\tr(\Delta P)]^4}\sqrt{\sum_{P\in \bcaP_n}[\tr(Q P)]^4}\le
		\sqrt{d}\lsp \lVert\Delta\rVert_1\lVert\Delta\rVert_2\lVert\Xi_Q\rVert_4^2.  \label{eq:CharPropProof2}
	\end{align}
	Here the first inequality in \eref{eq:CharPropProof1} and the second inequality in \eref{eq:CharPropProof2}  hold because $|\tr(\Delta P)|\le \lVert\Delta\rVert_1$ for all $P\in \bcaP_n$. The first inequality in \eref{eq:CharPropProof2} follows from the Cauchy-Schwarz inequality.  The last inequality in \eref{eq:CharProp1} follows from the fact that $\lVert\Delta\rVert_1^2\leq 4\epsilon$ by \pref{pro:Delta12NormIFgen}.
	
	If $\sigma$ is a pure state, then \eqsref{eq:CharPropPure1}{eq:CharPropPure2} follow from \eref{eq:CharProp1} and \pref{pro:Delta12NormIF}.
	
	Finally, suppose $\sigma$ is a pure state and $Q=\sigma-\bbone/d$. Then 
\begin{align}
		|\tr(QP)|\le |\tr(\sigma P)|\le 1,\quad |\tr(\Delta P)\tr(QP)|=|\tr(\Delta P)\tr(\sigma P)|\le 1  \quad \forall\,  P\in \bcaP_n. 
	\end{align}	
	 Therefore,
	\begin{align}
		\lVert\Xi_{\Delta,Q}\rVert_2^2 &= \lVert\Xi_{\Delta,\sigma}\rVert_2^2=\sum_{P\in \bcaP_n}\left[\tr(\Delta P)\right]^2 \left[\tr(\sigma P)\right]^2		
		\le  \sum_{P\in \bcaP_n}[\tr(\Delta P)]^2 = d\lVert\Delta\rVert_2^2\leq 2d\epsilon, 
	\end{align}
	which confirms \eref{eq:CharPropFE0} and the equality in \eref{eq:CharPropFE}. Here the last inequality follows from \pref{pro:Delta12NormIF}. 	
	The two inequalities in \eref{eq:CharPropFE} follow from \eref{eq:CharPropPure2} and the fact that $\lVert\Xi_\sigma\rVert_4^4=2^{-M_2(\sigma)}d$ according to  the definition of 2-SRE. This observation completes the proof of \lref{lem:CharProp}. 
\end{proof}

\begin{proof}[Proof of  \lref{lem:KetaIneq}]
	By assumption $\sigma$ is the projector onto a pure state $|\phi\>$ in $\caH$ that can be expressed as follows:
	\begin{align}
		|\phi\> = \alpha_i |\psi_i^+\> + \sqrt{1-\alpha_i^2} \lsp|\psi_i^-\>,
	\end{align}
	where $0\leq \alpha_i\leq 1$ and $|\psi_i^+\>$
	($|\psi_i^-\>$) is an eigenstate of $P_i$ with eigenvalue $+1$ ($-1$), so
	\begin{equation}
		\eta_i =1-[\tr(\sigma P_i)]^2= 4\alpha_i^2\left(1-\alpha_i^2\right).
	\end{equation}
	If $P_k\in \bcaP_n$ anticommutes with $P_i$, that is, $\{P_k,P_i\}=0$, then 
	\begin{align}
		\tr(\sigma P_k)=2\alpha_i\sqrt{1-\alpha_i^2}\lsp \tr(M_i P_k)=\sqrt{\eta_i}\lsp  \tr(M_i P_k),
	\end{align}
	where  
	\begin{gather}
		M_i =\frac{|\psi_i^+\>\<\psi_i^-|
			+ |\psi_i^-\>\<\psi_i^+|}{2}.
	\end{gather}
	Now, the first inequality in \eref{eq:KetaIneq} can be derived as follows:
	\begin{align}
		K_{\sigma}(P_i)&=\sum_{ k\, |\, \{P_k, P_i\}=0}\left[\tr(\sigma P_k)\right]^4=
		\eta_i^2\sum_{ k\, |\, \{P_k, P_i\}=0}	 \left[\tr(M_i P_k)\right]^4\leq
		\eta_i^2\sum_{ k=0}^{d^2-1}	
		\left[\tr(M_i P_k)\right]^4\nonumber\\
		&\leq \eta_i^2\sum_{ k=0}^{d^2-1}	
		\left[\tr(M_i P_k)\right]^2
		=d\eta_i^2\tr\left(M_i^2\right)=\frac{d\eta_i^2}{2},
	\end{align}	
	where the second inequality follows from the fact that $|\tr(M_i P_k)|\leq \lVert M_i\rVert _1 \lVert P_k\rVert =1$. As a corollary, we have
	\begin{align}
		K_\sigma(P_i,P_j)
		&\le \min\left[K_\sigma(P_i),K_\sigma(P_j)\right]
		\le \sqrt{K_\sigma(P_i) K_\sigma(P_j)}
		\le \frac{d \eta_i \eta_j}{2}, \label{eq:exactK2function}
	\end{align}	
	which confirms the second inequality in \eref{eq:KetaIneq}.

	Next, suppose $\sigma$ is a stabilizer state. Then $\tr(\sigma P)$ for $P\in \bcaP_n$ can take on three possible values, namely, 0, 1, and $-1$. Let 
	\begin{align}
		\caG_\sigma=\left\{P\in \bcaP_n \,|\, [\tr(\sigma P)]^2=1\right\};
	\end{align}
	then $\caG$ contains $d$ Pauli operators. 
	If in addition $P_i$ or $-P_i$ is a stabilizer of $\sigma$, then $[\tr(\sigma P_i)]^2=1$ and $\eta_i=0$. Meanwhile, all Pauli operators in $\caG_\sigma$ commute with $P_i$. In other words, any Pauli operator $P_k$ that anticommutes with $P_i$ cannot be a stabilizer of $\sigma$ and satisfies $\tr(P_k\sigma)=0$. So $	K_{\sigma}(P_i)=0=d\eta_i^2/2$. 
	If instead neither $P_i$ nor $-P_i$ is a stabilizer of $\sigma$, then $[\tr(\sigma P_i)]^2=0$ and $\eta_i=1$. Meanwhile, exactly one half of the Pauli operators in $\caG_\sigma$ anticommutes with $P_i$. So  $K_{\sigma}(P_i)=d/2=d\eta_i^2/2$. In both cases, the first inequality in \eref{eq:KetaIneq} is saturated. If in addition $i=j$, then $\eta_i=\eta_j$ and $K_{\sigma}(P_i,P_j)=K_{\sigma}(P_i)$, so the second inequality in \eref{eq:KetaIneq} is also saturated. This observation completes the proof of \lref{lem:KetaIneq}.
\end{proof}

\begin{proof}[Proof of \lref{lem:InfidCrossCharUB}]
	By assumption $\sigma$ is a pure state and $\rho =\scrP(\sigma)=\left(1-\lVert\bfp\rVert_1\right)\sigma+\sum_{i=1}^{d^2-1}p_i P_i \sigma P_i$, so we have
	\begin{align}
		\epsilon &= 1-\tr(\rho\sigma) =\sum_{i=1}^{d^2-1} p_i \left[1-\tr(P_i\sigma P_i \sigma)\right]=\sum_{i=1}^{d^2-1} p_i \left\{1-[\tr(\sigma P_i)]^2\right\}=\sum_{i=1}^{d^2-1} p_i \eta_i,
	\end{align}
	which confirms the first equality in  \eref{eq:InfidCrossCharUB}. In addition,
	\begin{gather}
		\Delta = \sum_{i=1}^{d^2-1}p_i\left(P_i\sigma P_i-\sigma\right), \;\; \tr(\Delta P_k)=\sum_{i=1}^{d^2-1}p_i \tr(P_i\sigma P_i P_k-\sigma P_k)=-2\tr(\sigma P_k)\sum_{i\geq 1\,|\,\{P_i,P_k\}=0}p_i\quad \forall\, k\geq 1.
	\end{gather}	
	Therefore,
	\begin{align}
		&\lVert\Xi_{\Delta,\sigma}\lVert_2^2 =\sum_{k=1} ^{d^2-1}[\tr(\Delta P_k)]^2[\tr(\sigma P_k)]^2=4\sum_{k=1}^{d^2-1}[\tr(\sigma P_k)]^4\sum_{i,j\geq 1\, |\, \{P_i, P_k\}=\{P_j,P_k\}=0}p_ip_j\notag\\
		&=4\sum_{i,j= 1}^{d^2-1} p_ip_j \sum_{k\,|\,\{P_k,P_i\}=\{P_k,P_j\}=0,}\left[\tr(\sigma P_k)\right]^4=4\sum_{i,j=1}^{d^2-1}p_ip_j K_{\sigma}(P_i,P_j)\leq 2d \left(\sum_{i= 1}^{d^2-1}p_i\eta_i\right)^2=2d\epsilon^2,
	\end{align}
	which confirms the second equality and the inequality in \eref{eq:InfidCrossCharUB}.  Here the inequality follows from \lref{lem:KetaIneq}.

	Next, suppose $\sigma$ is a stabilizer state and $\scrP$ is a single-error Pauli channel. Then 
	\begin{align}
		\epsilon=p_i\eta_i, \quad \lVert\Xi_{\Delta,\sigma}\lVert_2^2=4p_i^2 K_{\sigma}(P_i)=2dp_i^2\eta_i^2,
	\end{align}
	for some $i\in \{1,2,\ldots,d^2-1\}$,
	where the last equality follows from \lref{lem:KetaIneq}. So the inequality in  \eref{eq:InfidCrossCharUB}	is saturated, 	which completes the proof of \lref{lem:InfidCrossCharUB}.
\end{proof}

\begin{proof}[Proof of \pref{pro:pchannel_haar_char}]
	According to Schur–Weyl duality, we have 
	\begin{equation}
		\bbE_{|\phi\>\sim\haar}\left(\Tensor{\sigma}{4}\right)=\frac{\sum_{\pi\in S_4}W_{\pi}}{d(d+1)(d+2)(d+3)},
	\end{equation}
	where $W_\pi$ is the permutation operator on $\Tensor{\caH}{4}$ associated with the permutation $\pi\in S_4$.  Based on this observation, for any Pauli operator $P\in \bcaP_n$ that is not proportional to the identity operator, we can deduce that
	\begin{align}
		\bbE_{|\phi\>\sim\haar}\left[\tr(\sigma P)\right]^4&=\tr\left[\Tensor{P}{4}\bbE_{|\phi\>\sim\haar}\left(\Tensor{\sigma}{4}\right)\right]=\frac{\sum_{\pi\in S_4}\tr\left(W_{\pi}\Tensor{P}{4}\right)}{d(d+1)(d+2)(d+3)}
		=\frac{3\left[\tr\left(P^2\right)\right]^2+6\tr\left(P^4\right)}{d(d+1)(d+2)(d+3)} \nonumber\\
		&	
		=\frac{3d^2+6d}{d(d+1)(d+2)(d+3)}	
		=\frac{3}{(d+1)(d+3)}.
	\end{align}
	Furthermore,
	\begin{align}
		&\bbE_{|\phi\>\sim\haar}\lVert\Xi_{\Delta,\sigma}\lVert_2^2=	\bbE_{|\phi\>\sim\haar}\sum_{k=1} ^{d^2-1}[\tr(\Delta P_k)]^2[\tr(\sigma P_k)]^2\nonumber\\
		&=4	\bbE_{|\phi\>\sim\haar}\sum_{k=1}^{d^2-1}[\tr(\sigma P_k)]^4\sum_{i,j\geq 1\, |\, \{P_i, P_k\}=\{P_j,P_k\}=0}p_ip_j\notag\\
		&=4\sum_{i,j= 1}^{d^2-1} p_ip_j \sum_{k\,|\,\{P_k,P_i\}=\{P_k,P_j\}=0,}	\bbE_{|\phi\>\sim\haar}\left[\tr(\sigma P_k)\right]^4=\frac{12}{(d+1)(d+3)}\sum_{i,j=1}^{d^2-1}p_ip_jC_{i,j},
	\end{align}	
	where $C_{i,j}$ denotes the number of Pauli operators in $\bcaP_n$ that anti-commute with both  $P_i$ and $P_j$. It is straightforward to verify that 
	\begin{equation}
		C_{i,j}=\frac{d^2(1+\delta_{ij})}{4}\quad \forall\, i,j\geq 1. 
	\end{equation}
	Therefore,
	\begin{align}
		\bbE_{|\phi\>\sim\haar}\lVert\Xi_{\Delta,\sigma}\rVert_2^2&=\frac{3d^2}{(d+1)(d+3)}\sum_{i,j=1}^{d^2-1}p_ip_j(1+\delta_{ij})
		=\frac{3d^2\left(\lVert\bfp\rVert_1^2+\lVert\bfp\rVert_2^2\right)}{(d+1)(d+3)}=\frac{3d}{d+3}\bbE_{|\phi\>\sim \haar}\lVert\Delta\rVert_2^2,
	\end{align}	
	which confirms \eref{eq:MeanChar2NormPauli}. Here the last equality follows from \eref{eq:MeanDelta2NormPauli} in \pref{pro:pchannel_haar}. 
	
	Finally, \eref{eq:MeanChar2NormPauli2} follows from \eqsref{eq:MeanChar2NormPauli}{eq:MeanDelta2NormIFpauli}, which completes the proof of \pref{pro:pchannel_haar_char}. 
\end{proof}

\section{\label{SM:CliffordVar}Variances in CRM shadow estimation based on Clifford measurements}
In this section, we first derive an alternative formula for the variance $\bbV_*(O,\Delta)$ in CRM shadow estimation based on the Clifford group (see \lref{lem:V*CliffordAlt} below)
and then prove \thsref{thm:CliffordVstar}{thm:CliffordHPFEpn} in the main text. In the course of study, we also clarify  the variances $\bbV(O,\rho)$, $\bbV_*(O,\rho)$, and $\bbV_*(O,\Delta)$ in CRM shadow estimation in the presence of depolarizing noise.

\subsection{Alternative formula for the variance $\bbV_*(O,\Delta)$}

\begin{lemma}\label{lem:V*CliffordAlt}
	Suppose $O$ is a traceless observable on $\caH$. Then the variance $\bbV_*(O,\Delta)$ in $\CRM(\Cl_n,\sigma,R)$ reads
	\begin{align}
		V_*(O,\Delta)&=\frac{2(d+1)}{d^2(d+2)}\sum_{\substack{i,j\geq 1\,|\,i \ne j, \\ [P_i, P_j]=0}} \tr(OP_i) \tr(OP_j) \tr(\Delta P_i) \tr(\Delta P_j)  \notag \\
		&\equad + \frac{(d+1)}{d^2} \sum_{i=1}^{d^2-1} \left[\tr\left(OP_i\right)\right]^2 \left[\tr\left(\Delta P_i\right)\right]^2- \left[\tr(\Delta O)\right]^2.  \label{eq:V*CliffordAlt}
	\end{align}
\end{lemma}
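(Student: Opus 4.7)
The plan is to apply the general variance identity from \lref{lem:VRCRM} (established in \sref{SM:VRCRMproof}): $\bbV_*(O,\Delta) = \bbE_U[f(U)^2] - [\tr(O\Delta)]^2$ with $f(U) = \sum_{\bfs} P_\Delta(\bfs\mid U)\,[\caM^{-1}(O)](U,\bfs)$. Since $\Cl_n$ is a 2-design, the reconstruction map acts as $\caM^{-1}(O) = (d+1)O$ on any traceless $O$, so the task reduces to evaluating
\[
f(U) = (d+1)\sum_{\bfs} \<\bfs|U\Delta U^\dag|\bfs\>\<\bfs|UOU^\dag|\bfs\>
\]
and its second moment; the first moment $\bbE_U f(U) = \tr(O\Delta)$ just reproduces the subtracted term.

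I would next expand $O$ and $\Delta$ in the Pauli basis, so that every inner sum becomes $\sum_{\bfs}\<\bfs|UP_kU^\dag|\bfs\>\<\bfs|UP_lU^\dag|\bfs\>$ for $P_k,P_l \in \bcaP_n$. Two elementary facts collapse this sum drastically: (i) Clifford conjugation obeys $UP_kU^\dag = \sigma_k(U)P_{\pi_U(k)}$ with $\sigma_k(U)\in\{\pm 1\}$ and $\pi_U$ a bijection of $\bcaP_n/\{\pm 1\}$; (ii) $\<\bfs|P|\bfs\>$ is nonzero only when $P$ is diagonal (a tensor of $I$'s and $Z$'s), and for two diagonal Paulis one has $\sum_{\bfs}\<\bfs|P|\bfs\>\<\bfs|Q|\bfs\>=\tr(PQ)=d\delta_{P,Q}$. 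Combining (i) and (ii), the inner sum equals $d\,\delta_{kl}\,\bbone\{P_{\pi_U(k)}\text{ diagonal}\}$; the sign factors cancel because only $k=l$ contributes (so $\sigma_k^2=1$). Substituting back, $f(U)^2$ reduces to a double Pauli sum weighted by the joint indicator $\bbone\{P_{\pi_U(k)}\text{ and } P_{\pi_U(m)}\text{ both diagonal}\}$.

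The crux is averaging this joint indicator over $U\sim\Cl_n$. Here I would appeal to the fact that the Clifford group modulo phase Paulis surjects onto the symplectic group $\mathrm{Sp}(2n,\bbF_2)$, which by Witt's extension theorem acts transitively on nontrivial Paulis, and separately on ordered pairs of distinct \emph{commuting} and of \emph{anticommuting} nontrivial Paulis. Direct orbit counts then yield: probability $(d-1)/(d^2-1) = 1/(d+1)$ when $k=m$; probability $(d-1)(d-2)/[(d^2-1)(d^2-4)/2] = 2/[(d+1)(d+2)]$ when $k\neq m$ and $[P_k,P_m]=0$ (ratio of ordered pairs of distinct nontrivial diagonal Paulis to ordered pairs of distinct commuting nontrivial Paulis); and probability $0$ when $\{P_k,P_m\}=0$, since distinct nontrivial diagonal Paulis pairwise commute. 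Assembling the three contributions and subtracting $[\tr(O\Delta)]^2$ reproduces the claimed formula after elementary algebra. The main obstacle is this transitivity-and-counting step: while the symplectic orbit structure is standard, one must track the Clifford phase signs $\sigma_k(U)$ carefully to confirm they cancel in every surviving term.
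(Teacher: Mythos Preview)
Your proposal is correct and follows essentially the same route as the paper's proof. The only cosmetic differences are that the paper expands only $O$ (not $\Delta$) in the Pauli basis, using directly $\sum_\bfs\<\bfs|UP_iU^\dag|\bfs\>\<\bfs|U\Delta U^\dag|\bfs\>=\tr(\Delta P_i)$ whenever $UP_iU^\dag$ is diagonal, which sidesteps the sign-tracking entirely; and it packages your transitivity counts as a standalone lemma (\lref{lem:mixing}, the ``Pauli-mixing'' and ``Pauli $2$-mixing'' statements) rather than deriving them inline via Witt's theorem.
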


Before proving \lref{lem:V*CliffordAlt}, we first introduce an auxiliary result. Given two Pauli operators $P_i, P_j$ in $\bcaP_n$,  define $\scrD(P_i)$ as the set of Clifford unitaries in $\Cl_n$ that can diagonalize $P_i$ with respect to the computational basis; define $\scrD(P_i,P_j)$  as the set of Clifford unitaries  that can diagonalize $P_i$ and $P_j$ simultaneously. Note that $\scrD(P_i,P_j)$ is empty whenever $P_i$ and $P_j$ do dot commute. 

\begin{lemma}\label{lem:mixing}
  Suppose $P_i$ and $P_j$ are two distinct Pauli operators (with $i\neq j$) in $\bcaP_n$ that commute with each other. Then 
    \begin{align}
       \bbE_{U\sim\Cl_n}\left[\delta_{U \in \scrD(P)}\right] & = \frac{1}{d+1}, \quad 
        \bbE_{U\sim\Cl_n}\left[\delta_{U \in \scrD(P_1, P_2)}\right] = \frac{2}{(d+1)(d+2)}.
    \end{align}  
\end{lemma}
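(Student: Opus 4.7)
The plan is to reduce both probabilities to orbit-stabilizer computations for the conjugation action of $\Cl_n$ on the signed Pauli group. Two standard structural facts will drive the argument: (i) Clifford conjugation acts transitively on the set of $2(d^2-1)$ non-identity signed Hermitian Paulis, so for a uniformly random $U\sim\Cl_n$ the conjugate $UPU^\dag$ is uniformly distributed on this set; and (ii) the same conjugation action is transitive on ordered pairs $(Q_1,Q_2)$ of such Paulis that are mutually commuting and not proportional, which follows from the surjection $\Cl_n\to\mathrm{Sp}(2n,\bbF_2)$ (modulo the Pauli subgroup) combined with Witt's extension theorem for isotropic bases. In both cases, the desired probability is then obtained as the ratio of a ``good'' subset size to the orbit size.

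For the first identity, I will observe that $U\in\scrD(P)$ precisely when $UPU^\dag$ is a signed diagonal Pauli, i.e., lies in $\{\pm Q:Q\in\{I,Z\}^{\otimes n}\setminus\{\bbone\}\}$, a set of size $2(d-1)$. Fact~(i) then immediately yields $2(d-1)/[2(d^2-1)]=1/(d+1)$.

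For the second identity, $U\in\scrD(P_1,P_2)$ corresponds to both $UP_1U^\dag$ and $UP_2U^\dag$ being signed diagonal Paulis that are non-proportional (commutativity is automatic for diagonal Paulis). To apply fact~(ii), I will count the total number of ordered pairs of commuting, non-proportional, signed non-identity Paulis: once $Q_1$ is fixed among the $2(d^2-1)$ signed non-identity Paulis, its signed centralizer has $d^2$ elements, from which the four elements $\{\pm\bbone,\pm Q_1\}$ must be removed, leaving $d^2-4$ choices for $Q_2$; this gives a total of $2(d^2-1)(d^2-4)$. The subset with both components signed diagonal is $2(d-1)\cdot 2(d-2)$. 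Fact~(ii) then delivers $4(d-1)(d-2)/[2(d^2-1)(d^2-4)]=2/[(d+1)(d+2)]$.

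The main obstacle is justifying fact~(ii). Transitivity on signed non-identity Paulis in fact~(i) is elementary, as one can explicitly exhibit Cliffords mapping $X_1$ to any chosen signed non-identity Pauli. Transitivity on commuting pairs, however, requires two ingredients: the symplectic-group surjection to identify the quotient action, and Witt's theorem to extend any isomorphism of two-dimensional isotropic subspaces to a global symplectic automorphism. A final small sign-matching step—multiplying the resulting Clifford by a suitable Pauli operator that commutes with $Q_2$ but anticommutes with $Q_1$, which exists for $n\ge 2$ since $Q_1$ and $Q_2$ are non-proportional—is needed to upgrade the symplectic equality to an equality of signed Paulis. Once this transitivity is in place, the remaining work is the elementary counting shown above.
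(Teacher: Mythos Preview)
Your proposal is correct and follows the same route as the paper: reduce both probabilities to the transitivity of the Clifford conjugation action on nontrivial Paulis (Pauli-mixing) and on commuting pairs of distinct nontrivial Paulis (Pauli 2-mixing), then count favorable versus total orbit elements. The paper simply invokes these transitivity facts from the literature without spelling out the counting, whereas you additionally carry out the orbit-stabilizer count explicitly and sketch a proof of the pair-transitivity via the symplectic quotient $\Cl_n\to\mathrm{Sp}(2n,\bbF_2)$, Witt's extension theorem, and a sign-fixing Pauli multiplication.
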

\Lref{lem:mixing} follows from the fact that the Clifford group acts transitively on nontrivial Pauli operators  and also on  commuting pairs of distinct nontrivial Pauli operators \cite{Zhu20173Design}; in other words, 
the Clifford group is both Pauli-mixing and Pauli 2-mixing \cite{webb2016clifford}.

\begin{proof}[Proof of \lref{lem:V*CliffordAlt}]
Since the Clifford group $\Cl_n$ is a unitary 3-design  and $O$ is traceless, the reconstruction map $\caM^{-1}(\cdot)$  has a simple form:  $\caM^{-1}(O)=(d+1)O$ \cite{huang2020pred}. In conjunction with \eref{eq:V*} we can deduce that
\begin{align}\label{eq:V*CliffordProof1}
	V_*(O,\Delta)&=(d+1)^2\tr\left[\Omega(\Cl_n)\Tensor{\left(O\otimes\Delta\right)}{2}\right]- \left[\tr(\Delta O)\right]^2,
\end{align}
where $\Omega(\Cl_n)$ is defined according to  \eref{eq:CrossMoment}:
\begin{equation}
	\Omega(\Cl_n)=\sum_{\bfs,\bft}\bbE_{U\sim\Cl_n}\dagtensor{U}{4}\left[\Tensor{\left(|\bfs\>\<\bfs|\right)}{2}\otimes\Tensor{\left(|\bft\>\<\bft|\right)}{2}\right]\Tensor{U}{4}.
\end{equation}

Next, decompose $O$ in terms of Pauli operators: $O=\sum_{i=0}^{d^2-1}\tr(OP_i)P_i/d$. Then   $\tr\left[\Omega(\Cl_n)\Tensor{\left(O\otimes\Delta\right)}{2}\right]$ featured in \eref{eq:V*CliffordProof1} can be expressed as follows:
	\begin{align}
		&\tr\left[\Omega(\Cl_n)\Tensor{\left(O\otimes\Delta\right)}{2}\right]=\sum_{\bfs,\bft}\bbE_{U\sim\Cl_n}\left(\<\bfs|UOU^{\dag}|\bfs\>\<\bfs|U\Delta U^{\dag}|\bfs\>\<\bft|UOU^{\dag}|\bft\>\<\bft|U\Delta U^{\dag}|\bft\>\right)\notag \\
		&=\frac{1}{d^2}\sum_{\bfs,\bft}\sum_{i,j=1}^{d^2-1}\tr(OP_i)\tr(OP_j)\bbE_{U\sim\Cl_n}\left(\<\bfs|UP_iU^{\dag}|\bfs\>\<\bfs|U\Delta U^{\dag}|\bfs\>\<\bft|UP_jU^{\dag}|\bft\>\<\bft|U\Delta U^{\dag}|\bft\>\right).  \label{eq:V*CliffordProof2}
	\end{align}
If 	$U\notin\scrD(P_i)$, then $\<\bfs|UP_iU^{\dag}|\bfs\>=0$ for all $\bfs$. Otherwise, $UP_iU^{\dag}$ is diagonal in the computational basis, and 
\begin{align}
\sum_{\bfs} 	\<\bfs|UP_iU^{\dag}|\bfs\>\<\bfs|U\Delta U^{\dag}|\bfs\>=\tr\left(UP_iU^{\dag}U\Delta U^{\dag}\right)=\tr(\Delta P_i). 
\end{align}
Based on this observation, \eref{eq:V*CliffordProof2} can be simplified as follows:
\begin{align}
	\tr\left[\Omega(\caU)\Tensor{\left(O\otimes\Delta\right)}{2}\right]&=\frac{1}{d^2}\sum_{i,j=1}^{d^2-1}\tr(OP_i)\tr(OP_j)\bbE_{U\sim\Cl_n}\left[\delta_{U\in \scrD(P_i,P_j)}\right]\tr(\Delta P_i)\tr(\Delta P_j)\nonumber\\
&=\frac{1}{d^2}\sum_{\substack{i,j\geq 1\,|\,i \ne j, \\ [P_i, P_j]=0}} \tr(OP_i)\tr(O P_j)\tr(\Delta P_i)\tr(\Delta P_j)\bbE_{U\sim\Cl_n}\left[\delta_{U\in \scrD(P_i,P_j)}\right] \notag\\
&\equad+\frac{1}{d^2}\sum_{i=1}^{d^2-1}\left[\tr(OP_i)\right]^2\left[\tr(\Delta P_i)\right]^2\bbE_{U\sim\Cl_n}\left[\delta_{U\in \scrD(P_i)}\right]\notag\\
		&= \frac{2}{d^2(d+1)(d+2)}\sum_{\substack{i,j\geq 1\,|\,i \ne j, \\ [P_i, P_j]=0}} \tr(OP_i) \tr(OP_j) \tr(\Delta P_i) \tr(\Delta P_j)  \notag \\
		&\equad + \frac{1}{d^2(d+1)} \sum_{i=1}^{d^2-1} \left[\tr\left(OP_i\right)\right]^2 \left[\tr\left(\Delta P_i\right)\right]^2, \label{eq:omega_2terms}
	\end{align}
where the last equality follows from \lref{lem:mixing}. 	
Together with \eref{eq:V*CliffordProof1}, this equation implies \eref{eq:V*CliffordAlt} and completes the proof of	\lref{lem:V*CliffordAlt}. 
\end{proof}

\subsection{Proof of \thref{thm:CliffordVstar}}

\begin{proof}[Proof of \thref{thm:CliffordVstar}] First, we will show that the formula for $\bbV_*(O,\Delta)$ in \eref{eq:Clifford_variance} in \thref{thm:CliffordVstar}
is equivalent to the formula in \eref{eq:V*CliffordAlt} in \lref{lem:mixing}. To this end, we need to expand
$\lVert\Xi_{\Delta,O}\rVert_2^2$ and $\tXi_{\Delta,O}\cdot \Xi_{\Delta,O}$. According to the definitions in \eref{eq:CrossChar}, for $P\in \bcaP_n$ we have 
\begin{equation}
\begin{gathered}
	\Xi_{\Delta,O}(P)=\tr(O P)	\tr(\Delta P),\\
	\tXi_{\Delta,O}(P) = \tr(\Delta P O P)=\frac{1}{d}\sum_{j=1}^{d^2-1}\tr(OP_j) \tr(\Delta P P_j P)=\frac{1}{d}\sum_{j=1}^{d^2-1}f(P,P_j)\tr(OP_j) \tr(\Delta P_j ),
\end{gathered}
\end{equation}
where 
\begin{equation}
	f(P,P_j)=
	\begin{cases}
		1& \text{if }[P,P_j]=0, \\
		\\
		-1& \text{if } \{P,P_j\}=0.
	\end{cases}
\end{equation}
Note that $f(P,P_j)=f(P_j,P)$ by definition.
Therefore,
\begin{equation}
\begin{aligned}
	\lVert\Xi_{\Delta,O}\rVert_2^2&=\sum_{i=1}^{d^2-1}[\tr(OP_i)]^2\left[\tr\left(\Delta P_i\right)\right]^2,\\
	\tXi_{\Delta,O}\cdot\Xi_{\Delta,O}&=\frac{1}{d}\sum_{i,j=1}^{d^2-1}f(P_i,P_j)\tr(O P_i)\tr(O P_j)\tr(\Delta P_i)\tr(\Delta P_j)\\
	&=\frac{1}{d}\sum_i [\tr(OP_i)]^2\left[\tr\left(\Delta P_i\right)\right]^2+\frac{1}{d}\sum_{\substack{i,j\geq 1\,|\,i \ne j, \\ [P_i, P_j]=0}}\tr(OP_i)\tr(OP_j)\tr(\Delta P_i)\tr(\Delta P_j)\\
	&\equad-\frac{1}{d}\sum_{i,j\geq 1\,|\,\{P_i,P_j\}=0}\tr(OP_i)\tr(OP_j)\tr(\Delta P_i)\tr(\Delta P_j).\label{eq:character_prod}
\end{aligned}
\end{equation}
In addition, we have  
\begin{align}
  [\tr(\Delta O)]^2&=\frac{1}{d^2}\sum_{i,j=1}^{d^2-1}\tr(\Delta P_i)\tr(\Delta P_j)\tr(OP_i)\tr(OP_j)\notag\\
    &=\frac{1}{d^2}\sum_i [\tr(OP_i)]^2\left[\tr(\Delta P_i)\right]^2+    
    \frac{1}{d^2}\sum_{\substack{i,j\geq 1\,|\,i \ne j, \\ [P_i, P_j]=0}}\tr(OP_i)\tr(OP_j)\tr(\Delta P_i)\tr(\Delta P_j)\notag\\
    &\equad+\frac{1}{d^2}\sum_{i,j\geq 1\,|\,\{P_i,P_j\}=0}\tr(OP_i)\tr(OP_j)\tr(\Delta P_i)\tr(\Delta P_j).\label{eq:trace_square}
\end{align}
In conjunction with \lsref{lem:V*CliffordAlt} and \ref{lem:mixing} we can deduce that
\begin{align}
    \bbV_*(O,\Delta)&=\frac{2(d+1)}{d^2(d+2)}\sum_{\substack{i,j\geq 1\,|\,i \ne j, \\ [P_i, P_j]=0}} \tr(OP_i) \tr(OP_j) \tr(\Delta P_i) \tr(\Delta P_j)  \nonumber \\
    &\equad + \frac{d+1}{d^2} \sum_{i=1}^{d^2-1} \left[\tr\left(OP_i\right)\right]^2 \left[\tr\left(\Delta P_i\right)\right]^2- \left[\tr(\Delta O)\right]^2\notag\\
    &=\frac{d+1}{d(d+2)}\left(\lVert \Xi_{\Delta,O}\rVert_2^2+\tXi_{\Delta,O}\cdot \Xi_{\Delta,O}\right) -\frac{\left[\tr\left(\Delta O\right)\right]^2}{d+2}\leq   \frac{2}{d}\lVert\Xi_{\Delta,O}\rVert_2^2\le 2\lVert\Delta\rVert_1^2\lVert O\lVert_2^2\le 8\epsilon \lVert O \rVert_2^2, \label{V*Cliffordproof}
        \end{align}
which confirms  \eref{eq:Clifford_variance}.  Here the three inequalities follow from \eref{eq:CharProp1} in  \lref{lem:CharProp}.

If $\sigma$ is a pure state and $O=\sigma-\bbone/d$, then \eref{eq:CliffordVarFE1} follows from
\eref{V*Cliffordproof} above [or \eref{eq:Clifford_variance}] and \eref{eq:CharPropFE} in \lref{lem:CharProp}.  This observation completes the proof of \thref{thm:CliffordVstar}.
\end{proof}

\subsection{\label{SM:VariancesDepol}Variances in HPFE in the presence of depolarizing noise}
In this section we clarify the variances $\bbV(O,\rho)$, $\bbV_*(O,\rho)$, and $\bbV_*(O,\Delta)$ in HPFE in the presence of depolarizing noise and the impacts of the 2-SRE and the number $R$ of circuit reusing on these variances and circuit sample cost.  The discussions in this section also pave the way for proving \thref{thm:CliffordHPFEpn} in the main text.

\begin{lemma}\label{lem:depo_terms}
Suppose $\sigma$ is a pure state on $\caH$, $\rho=(1-p)\sigma+p\bbone/d$, $\Delta=\rho-\sigma$, $\epsilon=1-\tr(\rho\sigma)$, and $O=\sigma-\bbone/d$. Then the variances $\bbV(O,\rho)$, $\bbV_*(O,\rho)$, and $\bbV_*(O,\Delta)$ in $\CRM(\Cl_n,\sigma,R)$ read 
	\begin{align}
		\bbV(O,\rho)&=\left(\frac{d-1}{d}\right)^2\left[-p^2+\frac{4dp}{(d-1)(d+2)}+\frac{d^2-3d-2}{(d-1)(d+2)}\right]+\frac{d^2-1}{d^2} \le \frac{2d(d+1)}{(d+2)^2},\label{eq:hpfe_depo_thr}\\
		\bbV_*(O,\rho)&=\frac{(1-p)^2\left[2^{1-M_2(\sigma)}(d+1)-4\right]}{d+2}, \label{eq:hpfe_depo_thr*}\\
		\bbV_*(O,\Delta)&=\frac{p^2\left[2^{1-M_2(\sigma)}(d+1)-4\right]}{d+2}\le2^{1-M_2(\sigma)}\epsilon^2.\label{eq:hpfe_depo_crm*}
	\end{align}
\end{lemma}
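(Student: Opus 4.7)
My plan is to exploit the algebraic observation that, for a depolarizing channel, both $\rho$ and $\Delta$ are multiples of the same observable $O$. Explicitly, since $\rho = (1-p)\sigma + p\bbone/d = \sigma - pO$ with $O = \sigma - \bbone/d$, we have $\Delta = \rho - \sigma = -pO$. Hence for every nontrivial $P\in\bcaP_n$, $\tr(\rho P) = (1-p)\tr(OP)$ and $\tr(\Delta P) = -p\tr(OP)$, making the cross characteristic functions factor as $\Xi_{\rho,O} = (1-p)\Xi_{O,O}$ and $\Xi_{\Delta,O} = -p\,\Xi_{O,O}$. A brief check using $\sigma$ pure (so $\tr(\sigma P\sigma P) = [\tr(\sigma P)]^2$) together with $\tr(POP) = \tr(O) = 0$ establishes the analogous identities for the twisted cross functions, $\tXi_{\rho,O} = (1-p)\tXi_{O,O}$ and $\tXi_{\Delta,O} = -p\,\tXi_{O,O}$ (they could disagree only at $P=I$, but $\Xi_{\rho,O}(I) = \Xi_{\Delta,O}(I) = 0$, so the inner products are unaffected). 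In addition, $\epsilon = 1-\tr(\rho\sigma) = p(d-1)/d$.

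For the standard variance $\bbV(O,\rho)$ I would simply invoke the 3-design fidelity formula $\bbV(O,\rho) = -F^2 + d(2F+1)/(d+2)$ from Appendix~A, substitute $F = 1-\epsilon = 1-p(d-1)/d$, and rearrange to match the quoted polynomial. The universal bound $2d(d+1)/(d+2)^2$ is obtained by maximizing this quadratic in $F$ over $[0,1]$; the critical point $F^{\star} = d/(d+2)$ lies in the interval and evaluating there yields the stated value.

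For the two $\bbV_*$ quantities I would start from the master formula
\[
\bbV_*(O,\tau) = \frac{d+1}{d(d+2)}\bigl(\lVert\Xi_{\tau,O}\rVert_2^2 + \tXi_{\tau,O}\cdot\Xi_{\tau,O}\bigr) - \frac{[\tr(\tau O)]^2}{d+2},
\]
which is proved for $\tau = \Delta$ in Theorem~\ref{thm:CliffordVstar} of the main text but whose derivation (via the Pauli expansion in Lemma~\ref{lem:V*CliffordAlt}) is oblivious to the Hermitian $\tau$ chosen. Combining this with the factorizations above and $\tr(\rho O) = (1-p)\tr(O^2)$, $\tr(\Delta O) = -p\tr(O^2)$, the computation collapses to a single quantity: $\bbV_*(O,\rho) = (1-p)^2 \bbV_*(O,O)$ and $\bbV_*(O,\Delta) = p^2 \bbV_*(O,O)$.

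It then suffices to evaluate $\bbV_*(O,O)$ and prove the final inequality. Using $\lVert\Xi_\sigma\rVert_4^4 = d\cdot 2^{-M_2(\sigma)}$ (definition of 2-SRE) and $\lVert\Xi_\sigma\rVert_2^2 = d$, a direct Pauli sum yields $\lVert\Xi_{O,O}\rVert_2^2 = 2^{-M_2(\sigma)}d - 1$, $\tXi_{O,O}\cdot\Xi_{O,O} = 2^{-M_2(\sigma)}d - 2 + 1/d$, and $[\tr(O^2)]^2 = (d-1)^2/d^2$; substituting and simplifying (the stray $-3d+1$ combines with $-(d-1)^2$ to give exactly $-4d^2$) delivers the closed form $\bbV_*(O,O) = [2^{1-M_2(\sigma)}(d+1) - 4]/(d+2)$. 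The bound $\bbV_*(O,\Delta)\le 2^{1-M_2(\sigma)}\epsilon^2$ then follows in two elementary steps: first $2^{1-M_2(\sigma)}(d+1) - 4 \le 2^{1-M_2(\sigma)}(d-1)$ (equivalent to $M_2(\sigma)\ge 0$), then $(d-1)/(d+2) \le (d-1)^2/d^2$ via $d^2 \le (d-1)(d+2)$ for $d\ge 2$. The main obstacle is really the tidy cancellation in the evaluation of $\bbV_*(O,O)$; no deeper idea is needed once the reduction of $\bbV_*(O,\tau)$ to a scalar multiple of $\bbV_*(O,O)$ has been spotted.
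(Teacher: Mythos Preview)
Your proposal is correct and follows essentially the same route as the paper. The paper cites \rcite{chen2024nonstab} for \eqref{eq:hpfe_depo_thr} and \eqref{eq:hpfe_depo_thr*} and only works out \eqref{eq:hpfe_depo_crm*}, computing $\lVert\Xi_{\Delta,O}\rVert_2^2=p^2[2^{-M_2(\sigma)}d-1]$ and $\tXi_{\Delta,O}\cdot\Xi_{\Delta,O}=p^2[2^{-M_2(\sigma)}d-2+1/d]$ directly and then using exactly your two-step inequality chain. Your observation that $\Delta=-pO$ and $\rho=\bbone/d+(1-p)O$ forces $\bbV_*(O,\rho)=(1-p)^2\bbV_*(O,O)$ and $\bbV_*(O,\Delta)=p^2\bbV_*(O,O)$ is a neat unification: it lets you derive \eqref{eq:hpfe_depo_thr*} and \eqref{eq:hpfe_depo_crm*} from a single evaluation of $\bbV_*(O,O)$, and that evaluation is identical to the paper's computation up to the overall factor $p^2$. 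Two very minor remarks: the factorization $\tXi_{\rho,O}=(1-p)\tXi_{O,O}$ follows already from $\tr(POP)=\tr(O)=0$ and does not need $\sigma$ pure; and the physical range of $F$ here is $[1/d,1]$ rather than $[0,1]$, but since the maximizer $F^\star=d/(d+2)$ lies in $[1/d,1]$ for $d\ge 2$, your bound is unaffected.
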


Here \eqsref{eq:hpfe_depo_thr}{eq:hpfe_depo_thr*} are known before \cite{chen2024nonstab}. According to \lref{lem:VRCRM}, the variances in THR and CRM shadow estimation of the  observable $O$ are respectively given by 
\begin{align}
	\bbV_R^{(\rmT)}(\hO)&=\bbV_*(O,\rho)+\frac{\bbV(O,\rho)-\bbV_*(O,\rho)}{R},  \quad
	\bbV_R^{(\rmC)}(\hO)=\bbV_*(O,\Delta)+\frac{\bbV(O,\rho)-\bbV_*(O,\rho)}{R}.  \label{eq:VRCRM}
\end{align}
In HPFE, usually $\rho$ is close to the target state $\sigma$, that is, $p\ll1 $, in which case  \lref{lem:depo_terms} implies that 
	\begin{align}
	\bbV(O,\rho)&\approx \bbV(O,\sigma)\approx 2,\quad 
	\bbV_*(O,\rho)\approx \bbV_*(O,\sigma)\approx 2^{1-M_2(\sigma)},\quad 
	\bbV_*(O,\Delta)\approx 2^{1-M_2(\sigma)}\epsilon^2, \label{eq:VariancesHPapprox}
\end{align}
which in turn implies that
\begin{align}
	\bbV_R^{(\rmT)}(\hO)&\approx 2\left(2^{-M_2(\sigma)}+\frac{1-2^{-M_2(\sigma)}}{R}\right), \quad 
	\bbV_R^{(\rmC)}(\hO)\approx 2\left(2^{-M_2(\sigma)}\epsilon^2+\frac{1-2^{-M_2(\sigma)}}{R}\right).  \label{eq:VRTHRCRM}
\end{align}
In conjunction with \lref{lem:upperbound_samp} we can derive the circuit sample costs required in THR and CRM shadow estimation:
\begin{align}
	N_U^{(\rmT)}&= \caO\left(\frac{2^{-M_2(\sigma)}}{r^2\epsilon^2}+\frac{1-2^{-M_2(\sigma)}}{Rr^2\epsilon^2}\right), \quad 
	N_U^{(\rmC)}= \caO\left(\frac{2^{-M_2(\sigma)}}{r^2}+\frac{1-2^{-M_2(\sigma)}}{Rr^2\epsilon^2}\right).  \label{eq:NUTHRCRM}
\end{align}
In the large-$R$ limit ($R\gg 2^{M_2(\sigma)}/\epsilon^2$), the above formulas simplify to
\begin{align}
	N_U^{(\rmT)}&= \caO\left(\frac{2^{-M_2(\sigma)}}{r^2\epsilon^2}\right), \quad 
	N_U^{(\rmC)}= \caO\left(\frac{2^{-M_2(\sigma)}}{r^2}\right).  \label{eq:NUTHRCRMlim}
\end{align}
Both $N_U^{(\rmT)}$ and $N_U^{(\rmC)}$ decrease monotonically with the 2-SRE $M_2(\sigma)$. On the other hand,
 $N_U^{(\rmT)}$ scales as $1/\epsilon^2$, while $N_U^{(\rmC)}$ is independent of $\epsilon$. In THR, when $R\geq 5\times 2^{M_2(\sigma)}$, further increasing $R$ has negligible influence on $N_U^{(\rmT)}$. In CRM, by contrast, $R$ needs to scale with $1/\epsilon^2$ to guarantee that $N_U^{(\rmC)}$ is independent of $\epsilon$ as shown in \eref{eq:NUTHRCRMlim}. If $R$ is too small, then the second term in the expression of $N_U^{(\rmC)}$ in \eref{eq:NUTHRCRM} will dominate, and $N_U^{(\rmC)}$ will scale with $1/\epsilon^2$, just like $N_U^{(\rmT)}$. 
 Moreover, when $R<1/\epsilon^2$, $N_U^{(\rmC)}$ may increase (rather than decrease) with $M_2(\sigma)$, so the impact of the 2-SRE on the circuit sample cost of CRM shadow estimation is quite subtle.

When depolarizing noise is replaced by general Pauli noise, it is difficult to derive precise and informative formulas for the variances as shown above. Nevertheless, the above conclusions still hold approximately, which is confirmed by extensive numerical calculations presented in \sref{SM:SimulationClifford}.

\begin{proof}[Proof of \lref{lem:depo_terms}]
\Eqsref{eq:hpfe_depo_thr}{eq:hpfe_depo_thr*} follow from Propositions 3 and 4 in \rcite{chen2024nonstab}, so it remains to prove \eqref{eq:hpfe_depo_crm*}.
By assumption  $\Delta=\rho-\sigma = p(\bbone/d-\sigma)$ and $O=\sigma-\bbone/d$, where $\sigma$ is a pure state. Therefore, 
\begin{equation}
	\begin{aligned}
	\tr(\Delta \sigma) &=\tr(\Delta O) = 		-p\tr\left[\left(\sigma-\frac{\bbone}{d}\right)^2\right]
	=-p\left(1-\frac{1}{d}\right),\\
	\epsilon&=1-\tr(\rho \sigma)=-\tr(\Delta \sigma)=p\left(1-\frac{1}{d}\right).
\end{aligned} 
\end{equation} 
Now, from the definitions of cross characteristic functions in \eref{eq:CrossChar}  we can deduce that   
\begin{equation}
\begin{aligned}
	\lVert\Xi_{\Delta,O}\rVert_2^2&=\sum_{i=1}^{d^2-1}[\tr(\Delta P_i)]^2[\tr(OP_i)]^2=p^2\sum_{i=1}^{d^2-1} [\tr(\sigma P_i)]^4=p^2\left[ 2^{-M_2(\sigma)}d-1\right],\\
	\tXi_{\Delta,O}\cdot \Xi_{\Delta,O}&=\sum_{i=1}^{d^2-1} \tr(\Delta P_iOP_i)\tr(O P_i)\tr(\Delta P_i)=p^2\sum_{i=1}^{d^2-1}
	\left[\tr(\sigma P_i \sigma P_i)-\frac{1}{d}\right][\tr(\sigma P_i)]^2 \\
	&=p^2\sum_{i=1}^{d^2-1} \left[\tr(\sigma P_i)\right]^4-\frac{p^2}{d}\sum_{i=1}^{d^2-1} [\tr(\sigma P_i)]^2=p^2\left[2^{-M_2(\sigma)}d-2+\frac{1}{d}\right].
\end{aligned}
\end{equation}   
In conjunction with \eqsref{eq:Vcirc}{eq:Clifford_variance} we can prove \eref{eq:hpfe_depo_crm*} as follows:
	\begin{align}
		\bbV_*(O,\Delta)&=\frac{d+1}{d(d+2)}\left(\lVert\Xi_{\Delta,O}\rVert_2^2+\tXi_{\Delta,O}\cdot \Xi_{\Delta,O}\right)-\frac{[\tr(\Delta O)]^2}{d+2}\notag\\
		&=\frac{d+1}{d(d+2)}p^2\left[2^{1-M_2(\sigma)}d-3+\frac{1}{d}\right]-\frac{p^2(d-1)^2}{d^2(d+2)}=\frac{p^2\left[2^{1-M_2(\sigma)}(d+1)-4\right]}{d+2}
	\nonumber\\
	&\leq\frac{2^{1-M_2(\sigma)}p^2(d-1)}{d+2}
=\frac{2^{1-M_2(\sigma)}d^2}{(d-1)(d+2)}\epsilon^2	
\le 2^{1-M_2(\sigma)}\epsilon^2,
	\end{align}
where the first inequality holds because $M_2(\sigma)\geq 0$, and the second inequality holds because $d\geq 2$, which means $d^2\leq (d-1)(d+2)$. This observation completes the proof of \lref{lem:depo_terms}. 
\end{proof}

\subsection{\label{SM:PauliGeneral}Proof of \thref{thm:CliffordHPFEpn}}

\begin{proof}[Proof of \thref{thm:CliffordHPFEpn}]
	By assumption $\sigma$ is a pure state and  $\rho=\scrP(\sigma)$, where $\scrP$ is a Pauli channel. In conjunction with \thref{thm:CliffordVstar} and \lref{lem:InfidCrossCharUB} we can deduce that 
	\begin{align}
		\bbV_*(O,\Delta)& \le\frac{2}{d}\lVert\Xi_{\Delta,O}\rVert_2^2\le 4\epsilon^2.
	\end{align}
	So \eref{eq:hpfe_pauligen} in \thref{thm:CliffordHPFEpn} is a simple corollary of \lref{lem:upperbound_samp2}. 
	
	Next, suppose $\scrP$ is a depolarizing channel.  Then $\bbV_*(O,\Delta)\leq 2^{1-M_2(\sigma)}\epsilon^2$ according to \lref{lem:depo_terms}. So \eref{eq:hpfe_depo}  is also a simple corollary of \lref{lem:upperbound_samp2}, which completes the proof of \thref{thm:CliffordHPFEpn}.
\end{proof}

\section{Variances in CRM shadow estimation based on Pauli measurements}

To complement the results on 4-design and Clifford measurements, here we derive general analytical formulas for the variances $\bbV(O,\rho)$, $\bbV_*(O,\rho)$, and $\bbV_*(O,\Delta)$ in CRM shadow estimation based on  Pauli measurements. Although this topic has been studied before \cite{vermersch2024enhanced}, such general analytical formulas were not known before. Now, the underlying unitary ensemble $\caU$ corresponds to the uniform distribution on the $n$-qubit local Clifford group $\Tensor{\Cl_1}{n}$.

Let  $P_i$ and $P_j$ be two Pauli operators in $\bcaP_n$. The two Pauli operators  are locally commutative, denoted by 
$P_i\diamond P_j$, if their Pauli factors on each qubit are commutative. The weight (also known as the locality) of $P_i$, denoted by $w(P_i)$, is  the number of nontrivial (single-qubit) Pauli factors in $P_i$, that is, the number of qubits on which $P_i$ acts nontrivially. As generalization, we denote by $w(P_i\cap P_j)$  the number of common nontrivial  Pauli factors in $P_i$ and $P_j$  and denote by $w(P_i \cup P_j)$ the number of qubits on which $P_i$ or $P_j$ acts nontrivially. By definition it is easy to verify that  $w(P_i \cup P_j)=w(P_i)+w(P_j)-w(P_i\cap P_j)$. In addition, define $\scrD_\lc(P_i)$ as the set of unitaries in $\Tensor{\Cl_1}{n}$ that can diagonalize $P_i$ with respect to the computational basis; define $\scrD_\lc(P_i,P_j)$  as the set of unitaries in $\Tensor{\Cl_1}{n}$ that can diagonalize $P_i$ and $P_j$ simultaneously.  Note that $\scrD_\lc(P_i,P_j)$ is not empty only when $P_i \diamond P_j$.

\begin{theorem}\label{thm:PauliCRMVar}
    Suppose $\caU=\Tensor{\Cl_1}{n}$, $O$ is a traceless observable on $\caH$, and $\rho\in \caD(\caH)$ is the system state. Then the variances $\bbV(O,\rho)$, $\bbV_*(O,\rho)$, and $\bbV_*(O,\Delta)$ in $\CRM(\caU,\sigma,R)$ read
\begin{align}
    \bbV(O,\rho)&=\sum_{i,j>1\,|\, P_i\diamond P_j}^{d^2-1}\frac{3^{w(P_i \cap P_j )}}{d^2}\tr(OP_i)\tr(OP_j)\tr(\rho P_i P_j)-\left[\tr(O\rho)\right]^2,  \label{eq:PauliV} \\	
    \bbV_*(O,\rho)&=\sum_{i,j>1\,|\, P_i\diamond P_j}^{d^2-1}\frac{3^{w(P_i \cap P_j )}}{d^2}\tr(OP_i)\tr(OP_j)\tr(\rho P_i)\tr(\rho P_j)-\left[\tr(O\rho)\right]^2, \label{eq:PauliV*Rho}\\
     \bbV_*(O,\Delta)&=\sum_{i,j>1\,|\, P_i\diamond P_j}^{d^2-1}\frac{3^{w(P_i \cap P_j )}}{d^2}\tr(OP_i)\tr(OP_j)\tr(\Delta P_i)\tr(\Delta P_j)-\left[\tr(O\Delta)\right]^2. \label{eq:PauliV*Delta}  
\end{align}
\end{theorem}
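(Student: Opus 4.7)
The plan is to compute each variance by expanding $O$, and (for the CRM variances) $\rho$ or $\Delta$, in the Pauli basis, and then exploit the tensor-product structure of $U\in\Tensor{\Cl_1}{n}$ to reduce everything to a product of single-qubit Clifford averages. The reconstruction map for local Pauli measurements acts on the Pauli basis as $\caM^{-1}(P_i)=3^{w(P_i)}P_i$, so $\caM^{-1}(O)=d^{-1}\sum_{i\geq 1}3^{w(P_i)}\tr(OP_i)\lsp P_i$ generates a prefactor $3^{w(P_i)+w(P_j)}$ in the initial double sum over Pauli indices $i,j$.

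Next, following the same geometric argument as in the proof of \lref{lem:V*CliffordAlt}, the operator $UP_iU^\dag$ is either diagonal in the computational basis (exactly when $U\in\scrD_\lc(P_i)$) or has vanishing diagonal (because at least one single-qubit Pauli factor is off-diagonal). When it is diagonal, $\sum_\bfs\<\bfs|UP_iU^\dag|\bfs\>\<\bfs|U\tau U^\dag|\bfs\>=\tr(P_i\tau)$ for any operator $\tau$, so the computations for $\bbV_*(O,\rho)$ and $\bbV_*(O,\Delta)$ collapse to evaluating the single probability $\bbE_U\bigl[\delta_{U\in\scrD_\lc(P_i,P_j)}\bigr]$. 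The standard variance $\bbV(O,\rho)$ requires one extra step, as its basis sum contains the triple product $\<\bfs|U\rho U^\dag|\bfs\>\<\bfs|UP_iU^\dag|\bfs\>\<\bfs|UP_jU^\dag|\bfs\>$. When $U\in\scrD_\lc(P_i,P_j)$, the two Pauli factors are simultaneously diagonalized and $P_iP_j$ is a Hermitian Pauli without imaginary phase, so the pointwise-product identity for diagonal matrices gives $\<\bfs|UP_iU^\dag|\bfs\>\<\bfs|UP_jU^\dag|\bfs\>=\<\bfs|UP_iP_jU^\dag|\bfs\>$ and the triple sum collapses to $\tr(\rho P_iP_j)$.

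The remaining ingredient is the factorized probability $\bbE_{U\sim\Tensor{\Cl_1}{n}}\bigl[\delta_{U\in\scrD_\lc(P_i,P_j)}\bigr]=\prod_{m=1}^{n}\bbE_{U_m\sim\Cl_1}\bigl[\delta_{U_m\in\scrD_\lc(p_i^{(m)},p_j^{(m)})}\bigr]$, where $p_i^{(m)}$ and $p_j^{(m)}$ denote the single-qubit Pauli factors of $P_i,P_j$ on qubit $m$. A short case split on each qubit, using that $\Cl_1$ acts transitively on $\{X,Y,Z\}$, gives single-qubit probability $1$ when both factors equal $I$, probability $1/3$ when they are locally commutative with at least one nontrivial factor, and probability $0$ when they are both nontrivial and distinct (hence anticommuting on that qubit). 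Consequently the total probability vanishes unless $P_i\diamond P_j$, in which case it equals $3^{-[w(P_i)+w(P_j)-w(P_i\cap P_j)]}$. Multiplying by the Pauli prefactor $3^{w(P_i)+w(P_j)}$ yields the coefficient $3^{w(P_i\cap P_j)}/d^2$ appearing in all three formulas \eref{eq:PauliV}, \eref{eq:PauliV*Rho}, and \eref{eq:PauliV*Delta}.

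The main obstacle I anticipate is the triple-product reduction used for $\bbV(O,\rho)$: one must carefully verify that $P_iP_j$ enters with no residual $\pm\rmi$ phase from the Pauli algebra, which is exactly what local commutativity $P_i\diamond P_j$ guarantees, and one must separately check that the $\bfs$-sum genuinely vanishes whenever either $UP_iU^\dag$ or $UP_jU^\dag$ is off-diagonal (not merely when $P_i\not\diamond P_j$). Once this identity is secured, the three cases are handled by essentially the same single-qubit computation, and \thref{thm:PauliCRMVar} follows.
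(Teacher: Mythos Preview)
Your proposal is correct and follows essentially the same route as the paper: expand $\caM^{-1}(O)$ in the Pauli basis with coefficients $3^{w(P_i)}/d$, use the ``diagonal or zero-diagonal'' dichotomy for $UP_iU^\dag$ (which the paper also borrows from the Clifford case in \lref{lem:V*CliffordAlt}), and evaluate $\bbE_{U\sim\Tensor{\Cl_1}{n}}\bigl[\delta_{U\in\scrD_\lc(P_i,P_j)}\bigr]$. The only cosmetic difference is that the paper packages your qubit-by-qubit probability computation as the separate \lref{lem:PauliMappingPauli} (stating the result as $3^{-w(P_i\cup P_j)}$, which is your $3^{-[w(P_i)+w(P_j)-w(P_i\cap P_j)]}$), and your anticipated obstacle about the $\pm\rmi$ phase and the vanishing of the $\bfs$-sum is handled exactly as you describe.
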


Before proving \thref{thm:PauliCRMVar}, we first introduce an auxiliary lemma, which is straightforward to verify (cf. \lref{lem:mixing}). 

\begin{lemma}\label{lem:PauliMappingPauli}
Suppose $P_i$ and $P_j$ are two Pauli operators in $\bcaP_n$ and $P_i \diamond P_j$. Then
        \begin{align}
       \bbE_{U\sim\Tensor{\Cl_1}{n}}\left[\delta_{U \in \scrD_\lc(P_i)}\right] & = \frac{1}{3^{w(P_i)}}, \quad 
        \bbE_{U\sim\Tensor{\Cl_1}{n}}\left[\delta_{U \in \scrD_\lc(P_i, P_j)}\right] = \frac{1}{3^{w(P_i\cup P_j)}}.
    \end{align} 
\end{lemma}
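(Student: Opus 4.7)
The plan is to exploit the tensor-product structure of $\Tensor{\Cl_1}{n}$ and reduce both identities to an elementary single-qubit count. Writing $U = U_1 \otimes \cdots \otimes U_n$ with independent uniform $U_k \in \Cl_1$ and decomposing each Pauli as $P_i = \bigotimes_k P_i^{(k)}$, a tensor product of single-qubit operators is diagonal in the computational basis if and only if each tensor factor is diagonal. Hence the event $U \in \scrD_\lc(P_i)$ factorizes as the intersection of $n$ independent single-qubit events, and similarly the event $U \in \scrD_\lc(P_i,P_j)$ factorizes across qubits. Each target expectation therefore becomes a product of $n$ single-qubit expectations.

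The key single-qubit fact I would establish is that, for $P \in \{X, Y, Z\}$, a uniformly random element of $\Cl_1$ sends $P$ to $\pm Z$ with probability $1/3$, while $P = I$ is diagonalized trivially with probability $1$. This follows from the transitive conjugation action of $\Cl_1$ on the six signed nontrivial Paulis $\{\pm X, \pm Y, \pm Z\}$: under this action each signed Pauli is the image of $P$ with uniform density $1/6$, so the two-element target set $\{+Z, -Z\}$ has pull-back density $2/6 = 1/3$. Multiplying across qubits contributes $1/3$ on each of the $w(P_i)$ nontrivial factors and $1$ elsewhere, which gives the first identity.

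For the second identity, the locally commutative hypothesis $P_i \diamond P_j$ forces $P_i^{(k)}$ and $P_j^{(k)}$ to commute for every qubit $k$. Thus on each qubit either at least one factor is the identity, or both are nontrivial and equal. In every such case, the simultaneous diagonalization condition collapses to diagonalizing a single nontrivial single-qubit Pauli (or is vacuous when both factors are $I$). By the single-qubit analysis, the probability is therefore $1/3$ on each of the $w(P_i\cup P_j)$ qubits in the combined nontrivial support and $1$ on the remaining qubits. Multiplying yields the claimed $3^{-w(P_i \cup P_j)}$.

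The only mild technicality is the single-qubit density computation, which rests on the standard transitivity of $\Cl_1$ on signed Paulis; this is easily verified using $H$ and $S$ as generators. I anticipate no significant obstacle beyond this elementary group-theoretic input.
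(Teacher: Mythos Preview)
Your argument is correct. The paper does not actually supply a proof of this lemma; it merely states that the result is ``straightforward to verify (cf.\ \lref{lem:mixing}),'' and your factorization into independent single-qubit events together with the transitivity of $\Cl_1$ on the six signed nontrivial Paulis is exactly the natural verification the authors have in mind.
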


\begin{proof}[Proof of \thref{thm:PauliCRMVar}]
Let $\caM^{-1}(\cdot)$ be the reconstruction map associated with Pauli measurements. By assumption $O$ is traceless,  so $\caM^{-1}(O)$ can be expressed as follows:
    \begin{equation}
        \caM^{-1}(O)=\sum_{i=1}^{d^2-1}\frac{3^{w(P_i)}}{d}\tr(O P_i)P_i. \label{eq:ReconPauli}
    \end{equation}
In conjunction with the definition of $\bbV(O,\rho)$ \cite{huang2020pred} we can derive that
\begin{align}
	\bbV(O,\rho)&=\bbE_{U\sim \Tensor{\Cl_1}{n}}\left(\sum_{\bfs}P_{\rho}(\bfs\mid U)\left\{\left[\caM^{-1}(O)\right](U,\bfs)\right\}^2\right)-\left[\tr(O\rho)\right]^2\notag\\
	&=\sum_{i,j=1}^{d^2-1}\frac{3^{w(P_i)+w(P_j)}}{d^2}\tr(OP_i)\tr(OP_j)\sum_{\bfs}\bbE_{U\sim \Tensor{\Cl_1}{n}}\left(\<\bfs|U \rho U^\dag|\bfs\>\<\bfs|U P_i U^\dag|\bfs\>\<\bfs|U P_j U^\dag|\bfs\>\right)-\left[\tr(O\rho)\right]^2.
\end{align}
If 	$U\notin\scrD_\lc(P_i,P_j)$, then $\<\bfs|UP_iU^{\dag}|\bfs\>\<\bfs|UP_jU^{\dag}|\bfs\>=0$ for all $\bfs$; otherwise, $UP_iU^{\dag}$ and $UP_j U^{\dag}$ are diagonal in the computational basis, and we have 
\begin{equation}
	\sum_{\bfs} 	\<\bfs|U \rho U^\dag|\bfs\>\<\bfs|U P_i U^\dag|\bfs\>\<\bfs|U P_j U^\dag|\bfs\>=\tr\left(U\rho U^{\dag}UP_iU^{\dag}UP_jU^{\dag}\right)=\tr(\rho P_i P_j). 
\end{equation}
Based on this observation, $\bbV(O,\rho)$ can be expressed as follows:
\begin{align}
	\bbV(O,\rho)&=\sum_{i,j>1\,|\, P_i\diamond P_j}^{d^2-1}\frac{3^{w(P_i)+w(P_j)}}{d^2}\tr(OP_i)\tr(OP_j)\tr(\rho P_i P_j)\bbE_{U\sim \Tensor{\Cl_1}{n}}\left[\delta_{U\in\scrD_\lc(P_i,P_j)}\right]-\left[\tr(O\rho)\right]^2\notag\\
	&=\sum_{i,j>1\,|\, P_i\diamond P_j}^{d^2-1}\frac{3^{w(P_i \cap P_j )}}{d^2}\tr(OP_i)\tr(OP_j)\tr(\rho P_i P_j)-\left[\tr(O\rho)\right]^2,
\end{align}
which confirms \eref{eq:PauliV}. 
Here the last equality follows from \lref{lem:PauliMappingPauli}.

Next, let $\tau$ be an arbitrary linear operator on $\caH$. By virtue of \eqsref{eq:V*}{eq:ReconPauli} we can derive that 
\begin{align}
    &\bbV_*(O,\tau)=\sum_{\bfs,\bft}\bbE_{U\sim\Tensor{\Cl_1}{n}}\left[\<\bfs|U \caM^{-1}(O) U^\dag|\bfs\>\<\bfs|U \tau U^\dag|\bfs\>\<\bft|U \caM^{-1}(O)  U^\dag|\bft\>\<\bft|U \tau U^\dag|\bft\>\right]-[\tr(O\tau)]^2\notag\\
    \quad&=\sum_{i,j=1}^{d^2-1}\frac{3^{w(P_i)+w(P_j)}}{d^2}\tr(OP_i)\tr(OP_j)\sum_{\bfs,\bft}\bbE_{U\sim\Tensor{\Cl_1}{n}}\left[\<\bfs|U P_i U^\dag|\bfs\>\<\bfs|U \tau U^\dag|\bfs\>\<\bft|U P_j U^\dag|\bft\>\<\bft|U \tau U^\dag|\bft\>\right]\nonumber\\
    &\equad-[\tr(O\tau)]^2.
\end{align}
If 	$U\notin\scrD_\lc(P_i)$, then $\<\bfs|UP_iU^{\dag}|\bfs\>=0$ for all $\bfs$; otherwise, $UP_iU^{\dag}$ is diagonal in the computational basis, and 
\begin{align}
\sum_{\bfs} 	\<\bfs|UP_iU^{\dag}|\bfs\>\<\bfs|U\tau U^{\dag}|\bfs\>=\tr\left(UP_iU^{\dag}U\tau U^{\dag}\right)=\tr(\tau P_i). 
\end{align}
Based on this observation, $\bbV_*(O,\tau)$ can be expressed as follows:
\begin{align}
    \bbV_*(O,\tau)&=\sum_{i,j=1 \,|\, P_i\diamond P_j}^{d^2-1}\frac{3^{w(P_i)+w(P_j)}}{d^2}\tr(OP_i)\tr(OP_j)\tr(\tau P_i)\tr(\tau P_j)\bbE_{U\sim\Tensor{\Cl_1}{n}}\left[\delta_{U\in\scrD_\lc(P_i, P_j)}\right]-[\tr(O\tau)]^2 \notag\\
    &=\sum_{i,j>1\,|\, P_i\diamond P_j}^{d^2-1}\frac{3^{w(P_i \cap P_j )}}{d^2}\tr(OP_i)\tr(OP_j)\tr(\tau P_i)\tr(\tau P_j)-[\tr(O\tau)]^2,
\end{align}
where the last equality follows from \lref{lem:PauliMappingPauli}. This equation implies  \eqsref{eq:PauliV*Rho}{eq:PauliV*Delta} and completes the proof of  \thref{thm:PauliCRMVar}. 
\end{proof}

\section{\label{SM:NumericsAdd}Additional numerical results}

In this section, we first provide additional numerical results on the performance of THR and CRM shadow estimation based on Clifford measurements and discuss the impacts nonstabilizerness and the number $R$ of circuit reusing on the circuit sample cost. Then we compare the performance of Clifford measurements with  4-design and Pauli measurements.

Thanks to \lsref{lem:upperbound_samp} and \ref{lem:VRCRM}, the circuit sample cost $N_U$ is determined by the variance $\bbV_R(\hO)$, which in turn is  determined by $\bbV(O,\rho)$, $\bbV_*(O,\rho)$, and $\bbV_*(O,\Delta)$. When the measurement ensemble forms a {3-design} (including Clifford and 4-design measurements), $\bbV(O,\rho)$ is determined by \eref{eq:3designVar}. For Clifford  measurements, $\bbV_*(O,\Delta)$ is determined by \eref{eq:Clifford_variance}, and $\bbV_*(O,\rho)$ is determined by \eref{eq:Clifford_variance} with $\Delta$ replaced by $\rho$; in the presence of depolarizing noise, we can also use more explicit formulas in \lref{lem:depo_terms}. 
For 4-design measurements, $\bbV_*(O,\rho)$ and $\bbV_*(O,\Delta)$ are determined by
\lref{lem:V*4design}. For Pauli measurements, $\bbV(O,\rho)$, $\bbV_*(O,\rho)$, and $\bbV_*(O,\Delta)$ are determined by \thref{thm:PauliCRMVar} (see also \eref{eq:PauliCRMPauliOp} below).

\subsection{\label{SM:SimulationClifford}CRM shadow estimation based on  Clifford measurements}

\begin{figure}[b]
	\centering
	\includegraphics[width=0.6\columnwidth]{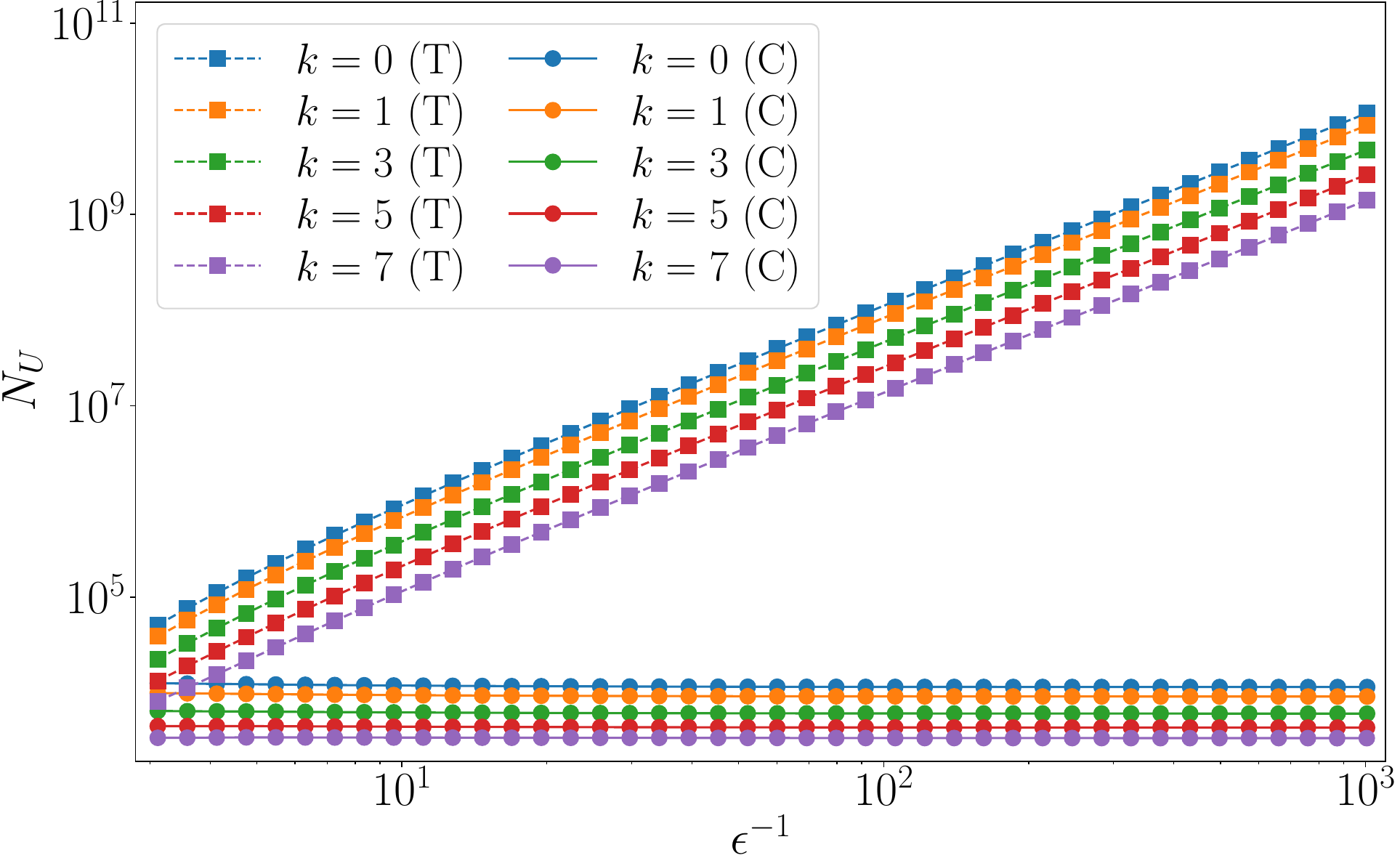}
	\caption{The  circuit sample costs  $N_U$ required for HPFE in THR (T) and  CRM (C) shadow estimation based on  Clifford measurements. Here, $ r = 0.25 $, $ \delta = 0.01$, and $R = \lceil 10/\epsilon^2 \rceil$. The target and prior state has the form $\sigma = |S_{7,k}\>\<S_{7,k}|$ with $k=0,1,3,5,7$, and different system states $\rho$ are generated by applying depolarizing noise to the target state and have the form $\rho= (1-p)\sigma+p\bbone/d$ with $p=\epsilon/(1-d^{-1})$. }\label{fig:depo_bench}
\end{figure}

\subsubsection{HPFE in the presence of depolarizing noise}

Here we consider HPFE in the presence of depolarizing noise. This case is particularly instructive because exact and informative formulas for the variances $\bbV(O,\rho)$, $\bbV_*(O,\rho)$, and $\bbV_*(O,\Delta)$ can be derived as shown in  \lref{lem:depo_terms}. Accordingly, we can derive a simple upper bound for the circuit sample cost as shown in \eref{eq:hpfe_depo} in \thref{thm:CliffordHPFEpn} [see also \eref{eq:NUTHRCRM}]. In numerical simulation, we choose $|\phi\>=|S_{7,k}\>$ as the target and prior state, where $k=0,1,3,5,7$.  \Fref{fig:depo_bench} illustrates the circuit sample costs associated with THR and  CRM shadow estimation as functions of the target infidelity $\epsilon$, where each circuit is reused $R = \lceil  10 /\epsilon^2 \rceil$ times. Notably, CRM only requires a constant circuit sample cost, while 
THR requires $\caO(1/\epsilon^2)$ circuits. This result highlights the significant advantages of CRM over THR. Meanwhile, in both CRM and THR, the circuit sample costs decrease exponentially with the 2-SRE $M_2(\phi)$ of the target state $|\phi\>$. Similar results also hold when depolarizing noise is replaced by general Pauli noise, as illustrated in the right plot in \fref{fig:hpfe_together} (see also \thref{thm:CliffordHPFEpn}), given that random Pauli noise has a similar effect to  the depolarizing noise.

\subsubsection{HPFE in the presence of a special type of coherent noise}

\begin{figure}[b]
	\centering
	\includegraphics[width=0.6\columnwidth]{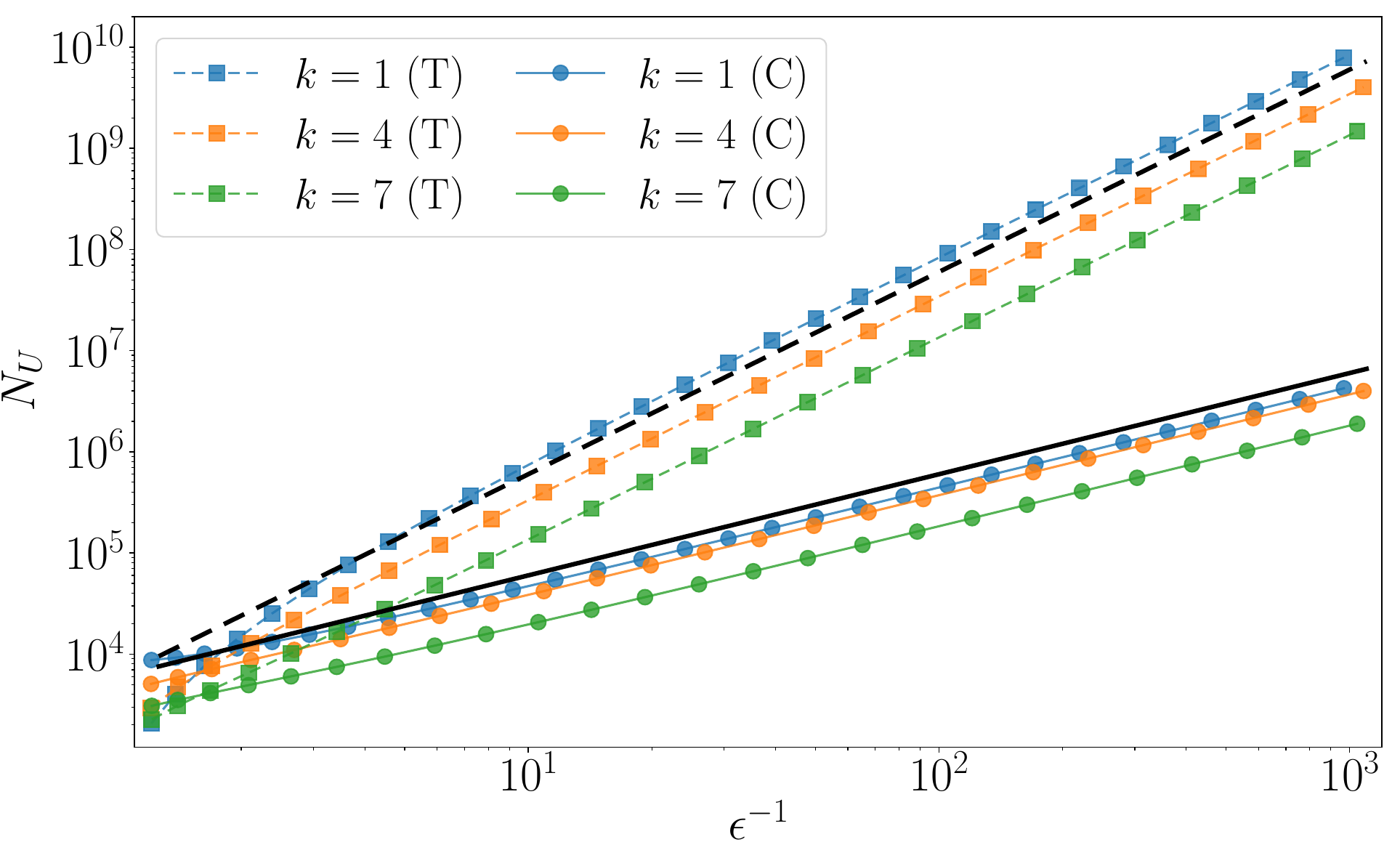}
	\caption{The  circuit sample costs  $N_U$ required for HPFE in THR (T) and CRM (C) shadow estimation based on  Clifford measurements. Here, $ r = 0.25 $, $ \delta = 0.01$, and $R = \lceil 10/\epsilon^2 \rceil$. The target and prior state has the form $\sigma=|S_{7,k}\> \<S_{7,k}|$ with $k=1,4,7$. Different system  states $\rho$ are generated by applying  random  local rotations described in \eref{eq:CollectiveRotations} to the target state $\sigma$. The black solid and dashed lines represent $N_U =6000/\epsilon $ and $N_U =6000 /\epsilon^2$, respectively. }\label{fig:hpfe_u_p3}
\end{figure}

To better understand the scaling behavior of the circuit sample cost in CRM shadow estimation based on Clifford measurements, here we consider HPFE in the presence of a special type of coherent noise, which corresponds to a collective rotation described as follows:
\begin{equation}
    U=\Tensor{\left[U(\theta)\right]}{n} , \quad U(\theta)=   \begin{pmatrix}
    \rme^{-\rmi\theta}\cos(\theta/2) & -\sin(\theta/2) \\
        \sin(\theta/2) & \rme^{\rmi\theta}\cos(\theta/2)
    \end{pmatrix}.\label{eq:CollectiveRotations}
\end{equation}
In the numerical simulation, we choose $|\phi\>=|S_{7,k}\>$ with $k=1,4,7$ as  the target and prior states. 
\Fref{fig:hpfe_u_p3} illustrates the circuit sample costs associated with THR and CRM shadow estimation based on Clifford measurements as functions of the infidelity $\epsilon$, 
where each circuit is reused $R = \lceil 10 /\epsilon^2 \rceil$ times. 
As expected, the circuit sample cost of THR shadow estimation scales as $1/\epsilon^2$. 
By contrast, the  cost of CRM shadow estimation scales as $1/\epsilon$,
which is compatible with the prediction of \thref{thm:CliffordHPFE} given that $\lVert\Delta\rVert_2^2 = 2\epsilon$ for coherent noise by \pref{pro:pchannel_haarSim}. 
Note that CRM shadow estimation is substantially better than THR shadow estimation, but fails to achieve the constant scaling behavior observed in \fref{fig:hpfe_together}. 
This result demonstrates that the $1/\epsilon$ scaling behavior is inevitable for certain  coherent noise although a constant scaling behavior can usually be achieved in the presence of random coherent noise.

\subsubsection{Impact of nonstabilizerness}\label{SM:linear_nonstab}

\begin{figure*}[t]
	\centering\includegraphics[width= 0.9\linewidth]{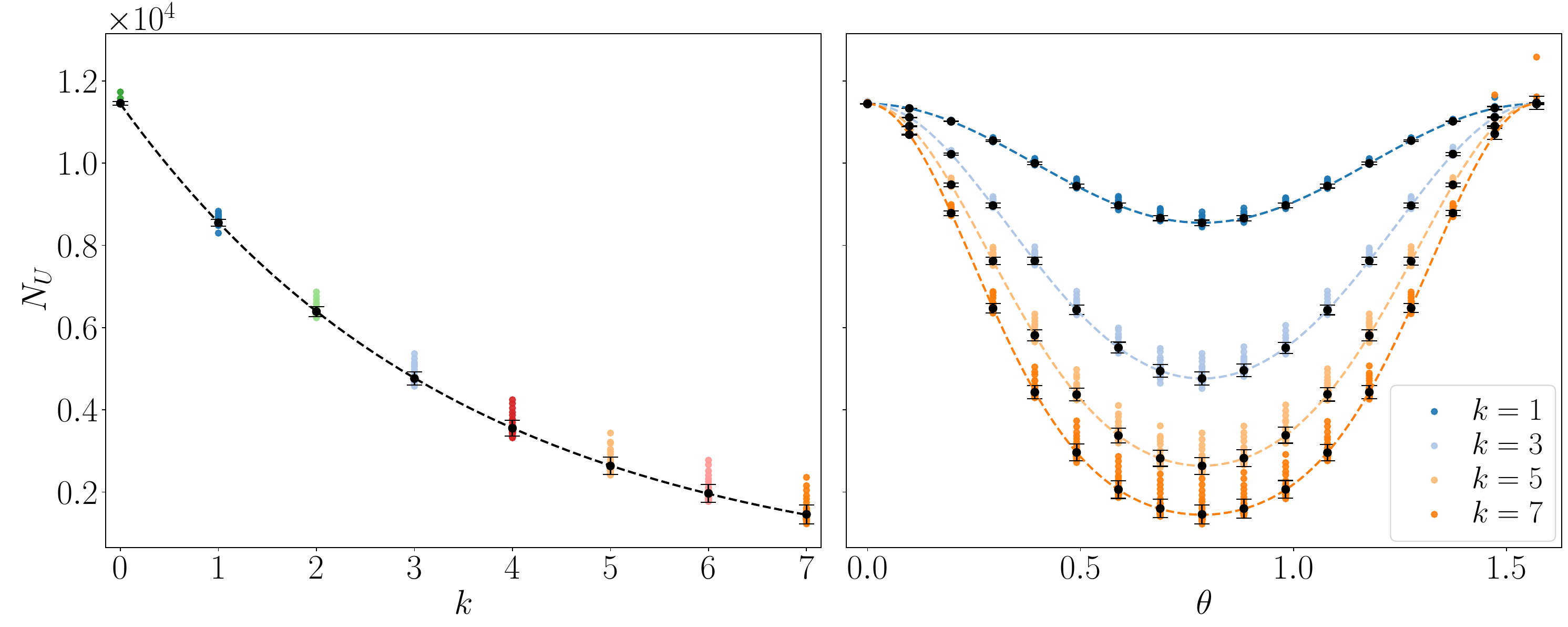}
	\caption{The  circuit sample cost  $N_U$ required for HPFE in CRM  shadow estimation based on  Clifford measurements. Here, $ r = 0.25 $, $ \delta = 0.01$, and $R = \lceil 10/\epsilon^2 \rceil$. The target and prior state has the form $\sigma = |S_{7,k}(\theta)\>\<S_{7,k}(\theta)|$, and different system  states $\rho$ are generated by applying random Pauli noise  (described in \sref{sec:DescriptRandomPauli}), sampled 50 times for given values of $k$ and $\theta$, to the target state $\sigma$. The fitting curves follow the formula in \eref{eq:NUfitting}.
In the left plot, we set $\theta = \pi/4$, and the fitting parameters read $a = 11538.21$ and $b = 94.87$; in the right plot, the fitting parameters depend on $k$ and satisfy $a\in[11537,11571]$ and $b\in[93,128]$. 
}\label{fig:SRE}
\end{figure*}

According to \Thsref{thm:CliffordHPFE}{thm:CliffordHPFEpn},
nonstabilizerness can significantly reduce the circuit sample cost required for HPFE in CRM shadow estimation based on the Clifford group, as illustrated in \fsref{fig:hpfe_together}{fig:depo_bench}.  To further clarify the impact of nonstabilizerness, here we consider HPFE in which the target and prior state has the form
\begin{equation}
    |S_{n,k}(\theta)\>:=\Tensor{|0\>}{(n-k)}\otimes\Tensor{\left[\frac{1}{\sqrt{2}}\left(|0\>+\rme^{\rmi\theta}|1\>\right)\right]}{k},
\end{equation}
whose 2-SRE reads
   \begin{equation}
       M_2(S_{n,k}(\theta))=-k\log_2\left[\frac{7+\cos(4\theta)}{8}\right].\label{eq:nonstab_theta}
   \end{equation}
In addition, the system state is affected by general Pauli noise. 
\Fref{fig:SRE} illustrates the dependence of $N_U$ on the parameters $k$ and $\theta$. When $r=0.25$, $\delta=0.01$, and $R = \lceil 10/\epsilon^2 \rceil$, a pretty good  fitting formula for $N_U$ can be expressed as follows: 
\begin{align}\label{eq:NUfitting}
	N_U = a \times 2^{-M_2(S_{7,k}(\theta))} - b,\quad a\approx 11500, \quad b\approx 100\ll a,
\end{align}
which is similar to \eref{eq:hpfe_depo} in \thref{thm:CliffordHPFEpn}. Notably, $N_U$ decreases exponentially with $M_2(S_{7,k}(\theta))$, and the effect of Pauli noise resembles the counterpart of depolarizing noise. This result also echoes the fact that nonstabilizerness can enhance the performance of THR shadow estimation \cite{chen2024nonstab}.

\subsubsection{Dependence of $N_U$ on the number $R$ of circuit reusing}\label{SM:scaling_with_R}

\begin{figure*}[t]
	\centering\includegraphics[width= 0.9\linewidth]{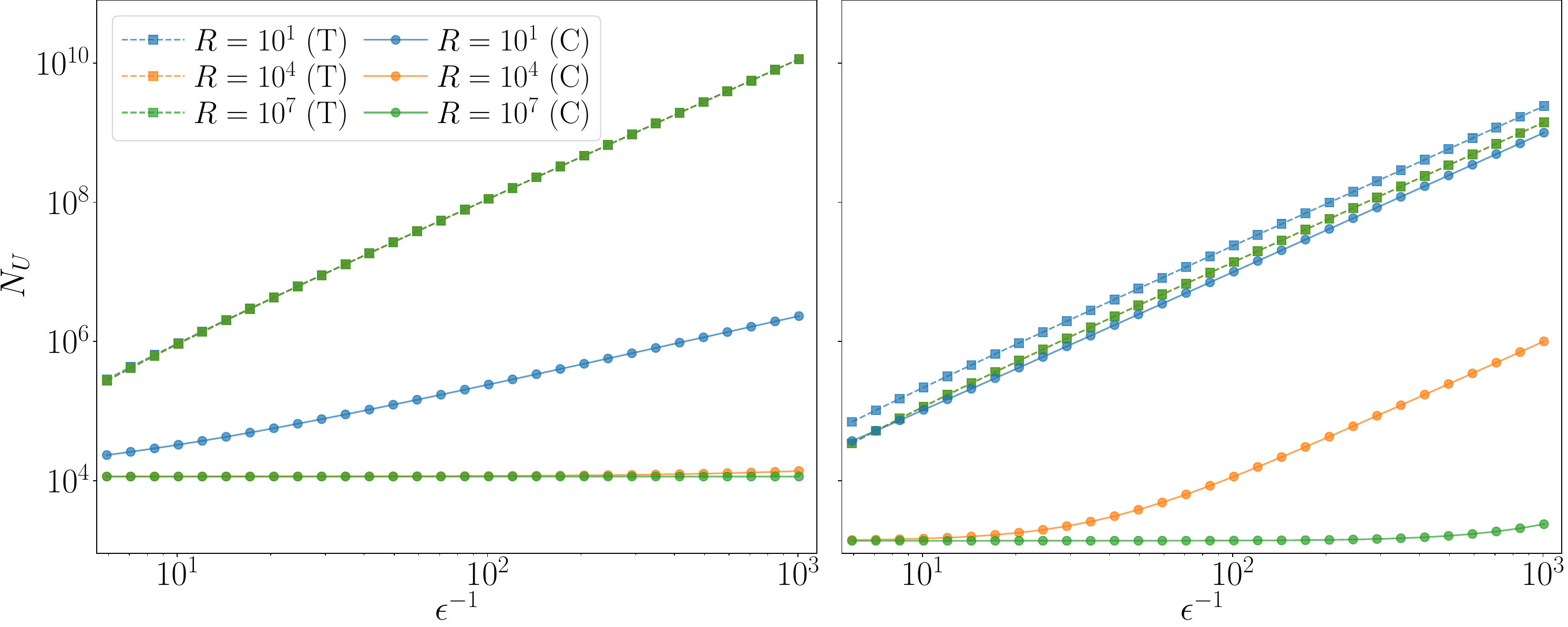}
	\caption{The  circuit sample costs  $N_U$ required for HPFE in THR (T) and CRM (C) shadow estimation based on  Clifford measurements. Here, $ r = 0.25 $ and $ \delta = 0.01 $. The target and prior state has the form $\sigma=|S_{7,k}\> \<S_{7,k}|$ with $k=0$ (left plot) and $k=7$ (right plot), and different system  states $\rho$ are generated by applying  random Pauli channels (described in \sref{sec:DescriptRandomPauli}) to the target state $\sigma$.   For THR, the dependence of  $N_U$ on $R$ is invisible for all three choices of $R$ in the left plot and for two choices of $R$ ($R=10^4$ and $R=10^7$) in the right plot. 
	}\label{fig:N_M7}
\end{figure*}

\begin{figure}
	\centering
	\includegraphics[width=0.6\columnwidth]{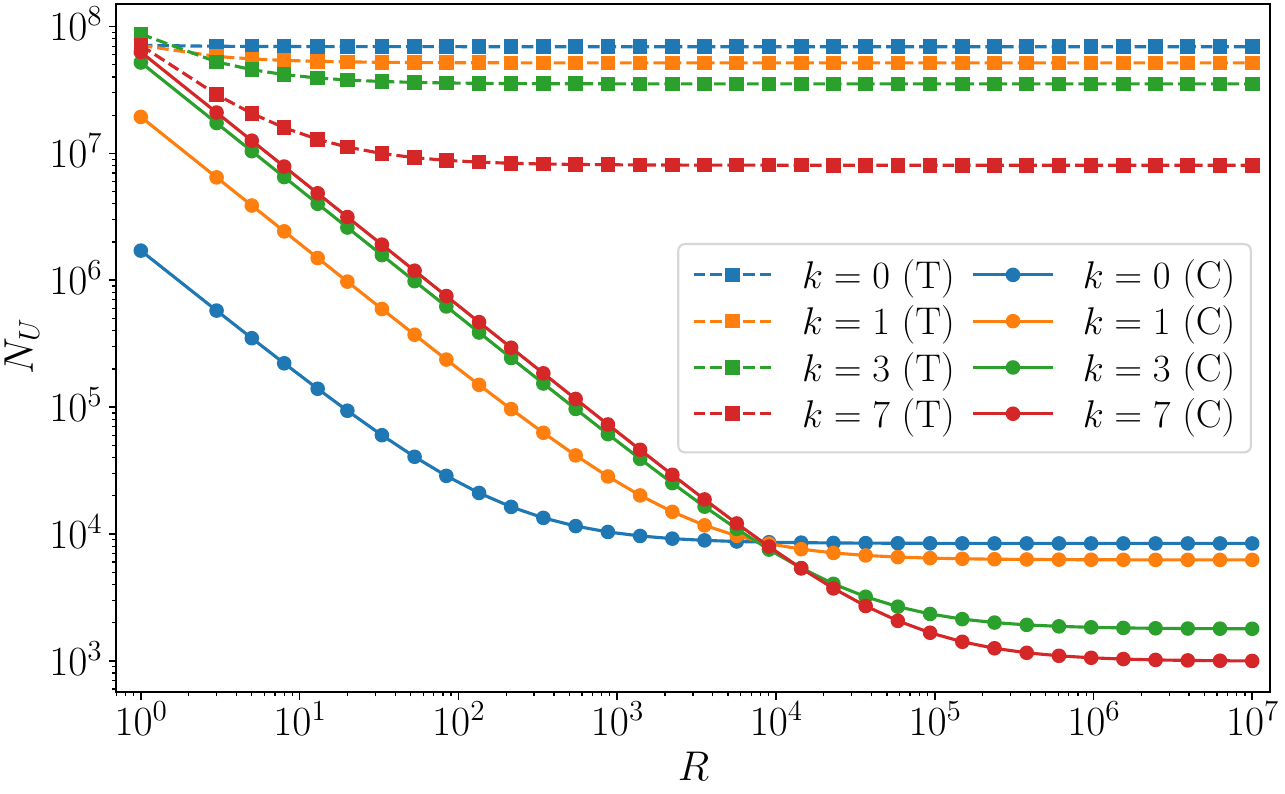}
	\caption{The  circuit sample costs  $N_U$ required for HPFE in THR (T) and CRM (C) shadow estimation based on Clifford measurements. Here, $ r = 0.25 $, $ \delta = 0.01 $, and $\epsilon=0.01$. The target and prior state is $\sigma=|S_{7,k}\> \<S_{7,k}|$ with  $k=0,1,3,7$. For each value of $k$, the system state $\rho$ is generated by applying a random Pauli channel (as described in \sref{sec:DescriptRandomPauli}) to the target state $\sigma$.}\label{fig:varing_NM}
\end{figure}

In this section, we turn to  the dependence of the circuit sample costs on the number $R$ of circuit reusing in THR and CRM shadow estimation based on the Clifford group. In the numerical simulation,  the target and prior state has the form $|\phi\>=|S_{7,k}\>$ with $k=0,1,3,7$, and the system state $\rho$ is generated by applying a random Pauli channel (as described in \sref{sec:DescriptRandomPauli}) to the target state. \Fref{fig:N_M7} illustrates the scaling behaviors of $N_U$ with respect to the infidelity $\epsilon$ for three different choices of $R$, and \fref{fig:varing_NM} illustrates the variations  of $N_U$ with $R$ directly for different choices of $k$. Not surprisingly, increasing $R$ tends to decrease $N_U$ in both THR and CRM shadow estimation. In THR, when $R\geq 5\times 2^{M_2(\phi)}$, however, further increasing $R$ has negligible influence on $N_U^{(\rmT)}$. In CRM, by contrast,  $N_U^{(\rmC)}$ decreases with $R$ until  $R\gtrsim 2^{M_2(\phi)}/\epsilon^2$, and $R$ needs to scale with $1/\epsilon^2$ to guarantee that $N_U^{(\rmC)}$ is almost independent of $\epsilon$. 
When $R$ is sufficiently large, we have
 $N_U^{(\rmC)}\approx N_U^{(\rmT)}\epsilon^2$,  so CRM can significantly reduce the circuit sample cost. In addition, both $N_U^{(\rmC)}$ and $ N_U^{(\rmT)}$ decrease exponentially with ${M_2(\phi)}$. When $R$ is small ($R\lesssim 1/\epsilon^2$), however,  $N_U^{(\rmC)}$ may increase (rather than decrease) with $M_2(\phi)$. These results resemble the counterparts on shadow estimation in the presence of depolarizing noise as discussed in  \sref{SM:VariancesDepol}.

\subsection{Comparison of different measurement ensembles}

In this section, we compare  the efficiencies of three  CRM shadow estimation protocols based on three measurement ensembles, namely, the Clifford measurements, Pauli measurements, and 4-design measurements.

\begin{figure*}[b]
	\centering
	\begin{minipage}[!t]{0.45\linewidth}
		\includegraphics[width=1\columnwidth]{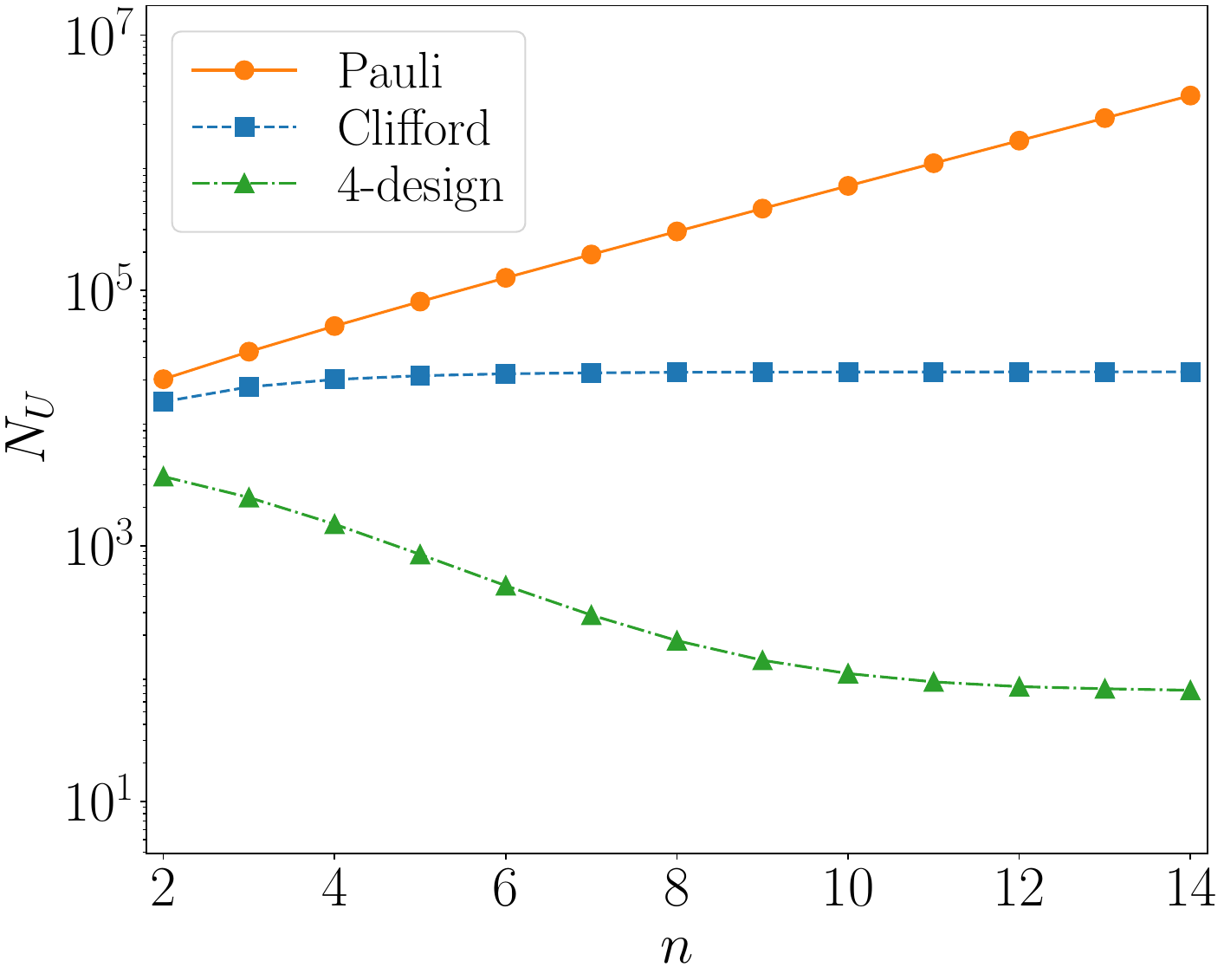}
		\caption{The  circuit sample costs  $N_U$ required for HPFE in CRM shadow estimation based on three measurement ensembles. Here, $r=0.25$, $\delta=0.01$, $\epsilon=0.01$, and $R=20000$. The target and prior state  is the $n$-qubit GHZ state  $\sigma=|\GHZ_n\>\<\GHZ_n|$ with $n=2,3,\ldots,14$, and the system state $\rho$ has the form $\rho = 0.99\sigma + 0.01Z_1 \sigma Z_1$, where $Z_1$ denotes the Pauli $Z$ operator acting on the first qubit.  } \label{fig:three_ensemble_fid}
	\end{minipage}
	\hspace{0.06\linewidth} 
	\begin{minipage}[!t]{0.45\linewidth}\includegraphics[width=1\columnwidth]{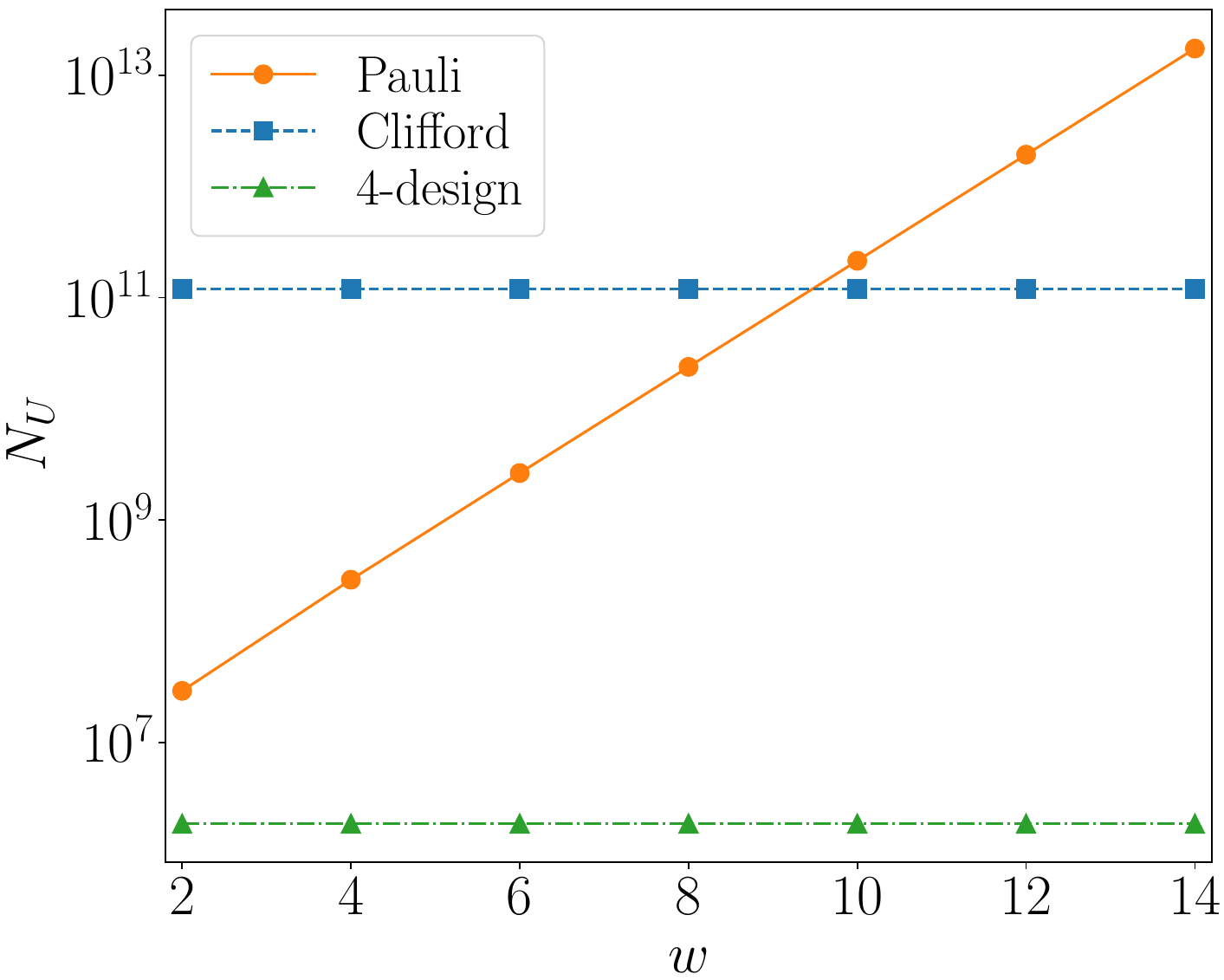}\caption{The circuit sample costs required to estimate the Pauli operator $O_i$ in \eref{eq:PauliEstimated}
			using CRM shadow estimation based on three  measurement ensembles, plotted as functions of the weight $w$. Here, $\varepsilon=0.01$, $\delta=0.01$,  and $R=100$. The prior state has the form $\sigma=|\GHZ_{14}\>\<\GHZ_{14}|$, and the system state reads $\rho =  \left(\sigma + X_1 \sigma X_1\right)/2$, where $X_1$ denotes the Pauli $X$ operator acting on the first qubit. 
		}\label{fig:three_ensemble_pauli}\end{minipage}
\end{figure*}

\subsubsection{HPFE of the GHZ state and estimation of Pauli operators}

First, we consider HPFE using  CRM shadow estimation based on three measurement ensembles. Here the target state is the $n$-qubit GHZ state, and the system state is affected by phase-flip noise on the first qubit. \Fref{fig:three_ensemble_fid} illustrates the circuit sample costs associated with three measurement ensembles as functions of~$n$. 
The sample cost $N_U$ increases exponentially with $n$ for Pauli measurements, is  almost independent of $n$ for Clifford measurements, and decreases exponentially with $n$ for 4-design measurements. These results are consistent with the behaviors of $V_*(O,\Delta)$ as clarified in \thssref{thm:4designVstar}{thm:CliffordVstar}{thm:PauliCRMVar}.

Next, we consider the task of estimating Pauli observables. According to \Lref{lem:VRCRM} and \thref{thm:PauliCRMVar}, for a Pauli operator $O$ with weight (locality) $w$, we have \cite{vermersch2024enhanced}
\begin{equation}\label{eq:PauliCRMPauliOp}
	\bbV_R(\hO)=\left(3^{w}-1\right)[\tr(\Delta O)]^2+\frac{3^w\left\{1-[\tr(\rho O)]^2\right\}}{R}\le 3^w\left\{[\tr(\Delta O)]^2+\frac{1}{R}\right\},
\end{equation}
which indicates that the variance tends to scale with $3^{w}$. In CRM shadow estimation based on Clifford measurements, by contrast, the variance in  estimating the Pauli operator $O$ scales
only with the system size:
\begin{equation}
	\bbV_R(\hO)=d[\tr(\Delta O)]^2+\frac{(d+1)\left\{1-\left[\tr(\rho O)\right]^2\right\}}{R},\label{eq:CliffordCRMPauliOp}
\end{equation}
which follows from \eqssref{eq:generalvar0}{eq:Vcirc}{eq:Clifford_variance}. Therefore, the relative efficiency of Pauli measurements versus Clifford measurements is determined by the competition between the operator weight and the system size. For CRM shadow estimation based on  4-design measurements, the variance scales with neither the weight nor the system size in the large-$R$ limit according to \lref{lem:upperbound_samp2} and \thref{thm:4designVstar}.

In the example we consider, the prior state $\sigma$ is the 14-qubit GHZ state, and the system state has the form  $\rho =  \left(\sigma + X_1 \sigma X_1\right)/2$, where $X_1$ denotes the Pauli $X$ operator acting on the first qubit. In addition, 
the  Pauli operators to be estimated have the form
\begin{equation}
    O_i=\bigotimes_{j=1}^{2i} Z_j,\quad i=1,2,\ldots,7,  \label{eq:PauliEstimated}
\end{equation}    
which have weights $w=2,4,\ldots,14$, respectively. \Fref{fig:three_ensemble_pauli} shows the circuit sample costs required to estimate the expectation values of these Pauli operators in CRM shadow estimation based on three measurement ensembles. Not surprisingly, 4-design measurements require the smallest circuit sample costs and are thus the most efficient; Clifford measurements are better (worse) than Pauli measurements when $w\geq 9$ ($w<9$).

\begin{figure}[t]
	\centering
	\includegraphics[width=0.6\columnwidth]{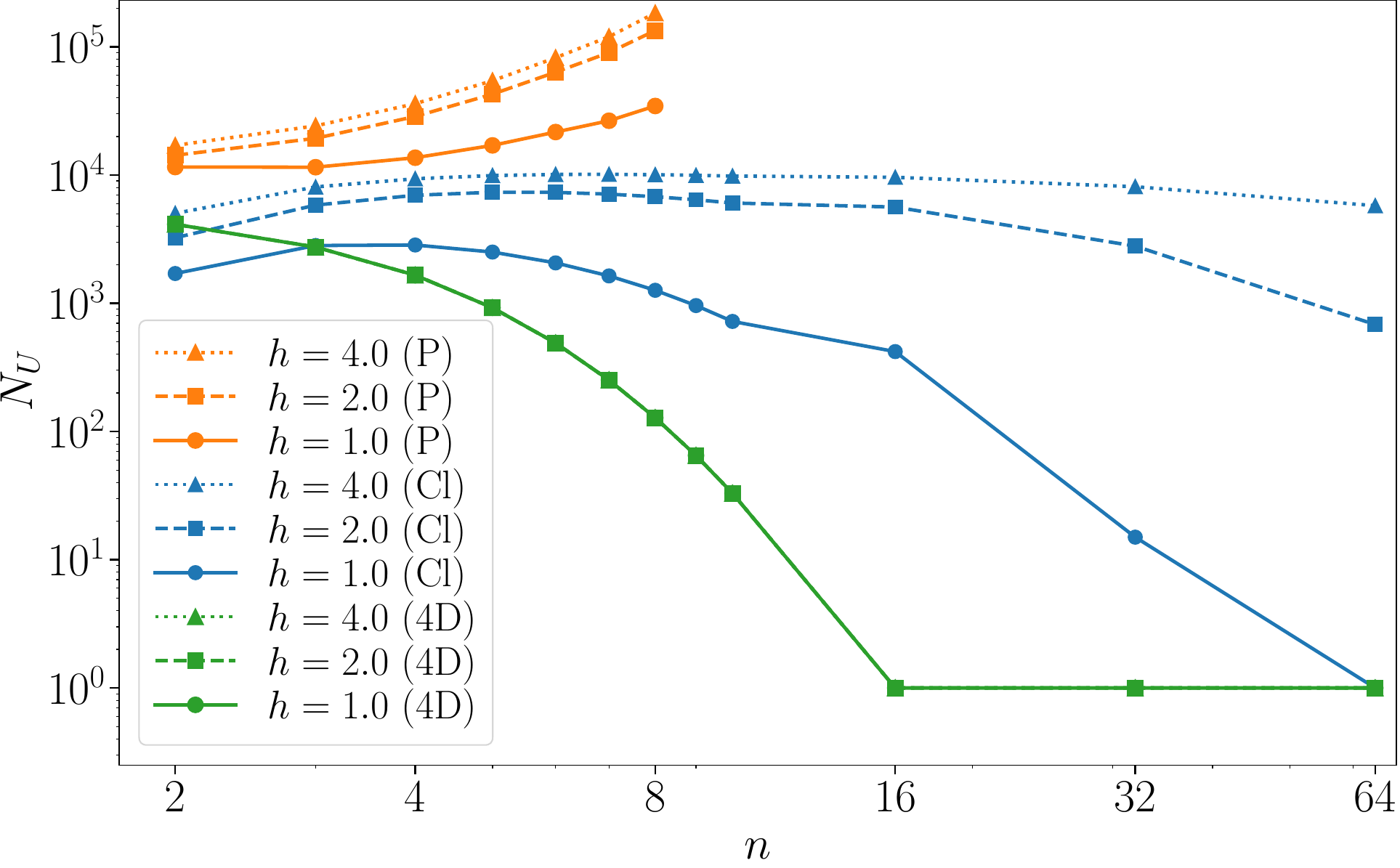}
	\caption{The  circuit sample costs  $N_U$ required for HPFE in CRM shadow estimation based on Pauli (P), Clifford (Cl), and 4-design (4D) measurements. Here $r = 0.25 $, $ \delta = 0.01 $, $\epsilon=0.01$, and $R = 100000$. For given values of $n$ and $h$, the target and prior state  is the ground state of the TFIM Hamiltonian  in \eref{eq:def_TFIM_H} with $J=1$, and the system state $\rho$ has the form $\rho = (1-p)\sigma+p\bbone/d$ with $p=\epsilon/(1-d^{-1})$. 
		The results on 4-design measurements are almost independent of $h$.}\label{fig:tfim_naxis}
\end{figure}

\begin{figure}
	\centering
	\includegraphics[width=0.6\columnwidth]{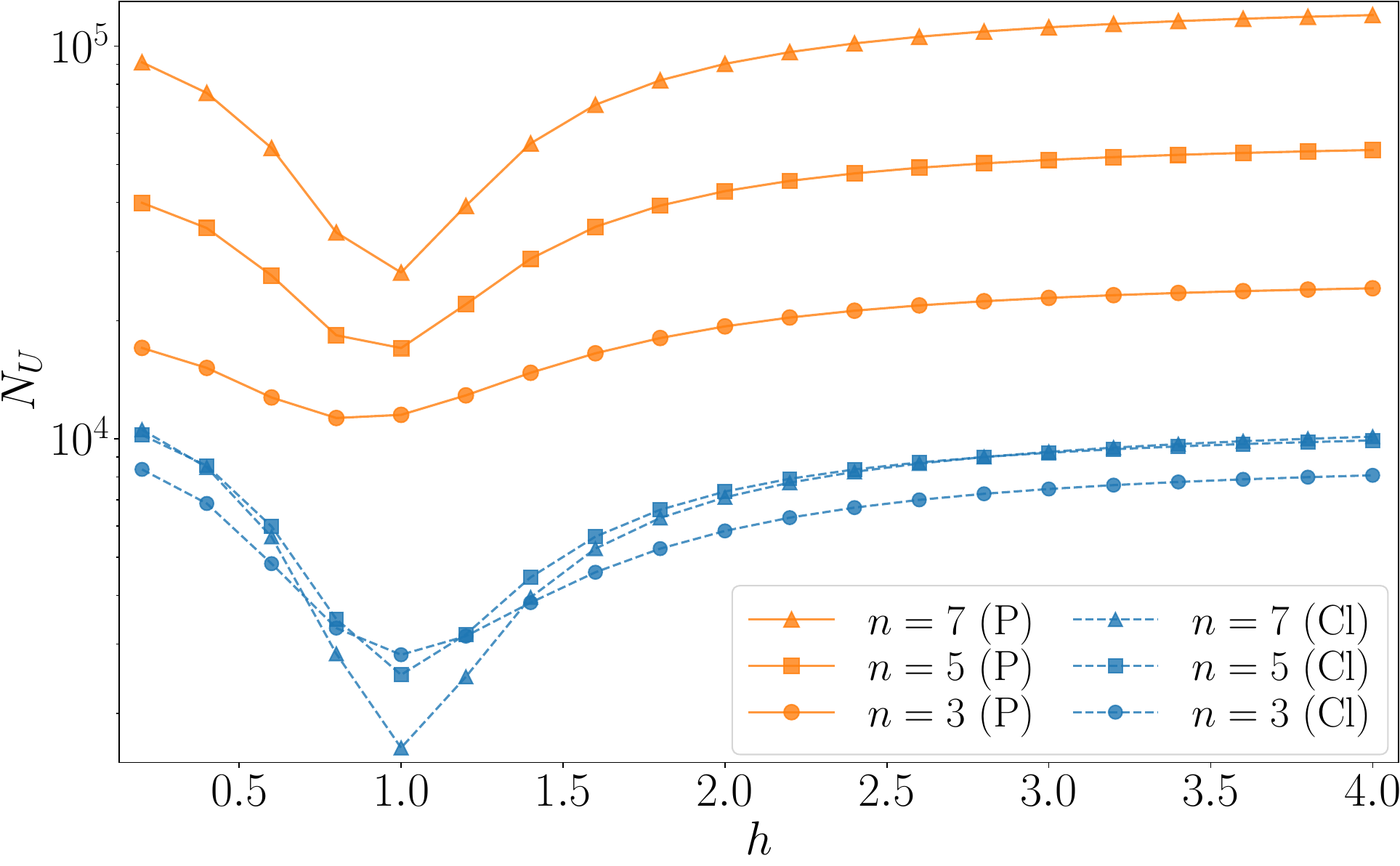}
	\caption{The  circuit sample costs  $N_U$ required for HPFE in CRM shadow estimation based on Pauli (P) and Clifford (Cl) measurements. Here $r = 0.25 $, $ \delta = 0.01 $, $\epsilon=0.01$, and $R = 100000$.  For given values of $n$ and $h$, the target and prior state  is the ground state of the TFIM Hamiltonian in \eref{eq:def_TFIM_H} with $J=1$, and the  system state $\rho$ has the form $\rho = (1-p)\sigma+p\bbone/d$ with $p=\epsilon/(1-d^{-1})$. }\label{fig:tfim_transition}
\end{figure}

\subsubsection{HPFE of the ground state of the transverse-field Ising model}

Here we consider HPFE of the ground state of the transverse-field Ising model (TFIM), which is a typical model widely used in condensed matter physics, quantum information, and quantum computation \cite{Pfeuty1970,sachdev2007quantum,Oliviero2022Magic,Tarabunga2023Many,Tarabunga2024Non,Tarabunga2024Non,Ding2025Eval}. For a one-dimensional system with a periodic boundary condition, the Hamiltonian of the model reads
\begin{equation}\label{eq:def_TFIM_H}
    H=-J\sum_{\<i,j\>}Z_iZ_j-h\sum_{i}X_i.
\end{equation}
Suppose $J=1$; as $h$ increases from $0$, the system undergoes a phase transition at $h=1$, changing from an ordered phase to a disordered phase and exhibits rich features at the critical point $h=1$. In the numerical calculation, the target state is the ground state of the above Hamiltonian, which can be obtained by exact diagonalization, and the system state is affected by  depolarizing noise.

\Fsref{fig:tfim_naxis} and \ref{fig:tfim_transition} illustrate the circuit sample costs associated with  three measurement ensembles as functions of the system size $n$ and the effective magnetic field $h$. 
The circuit sample cost required for HPFE increases exponentially with $n$ for CRM shadow estimation based on Pauli measurements, 
while it decreases exponentially with $n$ for 4-design measurements, 
resembling the behaviors observed in HPFE of the GHZ state as shown in \fref{fig:three_ensemble_fid}. For Clifford measurements, the circuit sample cost  also decreases exponentially with $n$, especially when $h$ is close to the critical point $h=1$. This behavior, in contrast with the constant scaling observed in \fref{fig:three_ensemble_fid}, is tied to the increase in the 2-SRE of the ground state \cite{Oliviero2022Magic,Tarabunga2023Many,Tarabunga2024Non,Ding2025Eval} (an open dataset on the 2-SRE is available at \rcite{Ding2025Eval}), which is expected according to  \eref{eq:hpfe_depo} (see also \lref{lem:depo_terms}). Interestingly, for Pauli measurements, the circuit sample cost also exhibits a similar dependence on $h$.

\appendix

\end{document}